\documentclass[12pt]{article}

\usepackage[english]{babel}
\usepackage[utf8]{inputenc}
\usepackage{amsmath}
\usepackage{graphicx}
\usepackage[colorinlistoftodos]{todonotes}
\usepackage{amssymb}
\usepackage{mathtools}
\usepackage{amsthm}
\usepackage{natbib}
\usepackage{tikz}
\usepackage{bbm}
\usepackage{adjustbox}
\usepackage{etoolbox}
\usepackage{enumerate}

\usepackage{pdflscape}
\usepackage{adjustbox}
\usepackage{scrextend}
\usepackage{titling}
\usepackage{setspace}
\usepackage{subcaption}
\usepackage{caption}
\usepackage{float}
\usepackage{bm}
\usepackage{mathrsfs}
\usepackage[margin=1.25in]{geometry}
\usepackage[normalem]{ulem}
\usepackage{xr-hyper}
\usepackage{hyperref}
\usepackage{mathtools}
\usepackage{arydshln}
\usepackage{xcolor}

\DeclarePairedDelimiter\floor{\lfloor}{\rfloor}
\hypersetup{pdfborder = 0 0 0,%
    pdftitle = Bootstrapping Risk Measures,%
    pdfauthor = Beutner Heinemann Smeekes,%
}
\makeatletter

\newcommand*{\addFileDependency}[1]{ 
	\typeout{(#1)}
	\@addtofilelist{#1}
	\IfFileExists{#1}{}{\typeout{No file #1.}}
}
\makeatother

\newcommand*{\myexternaldocument}[1]{%
	\externaldocument{#1}%
	\addFileDependency{#1.tex}%
	\addFileDependency{#1.aux}%
}
\myexternaldocument{Supplement-r4}

\usepackage{scalerel,stackengine}
\stackMath
\newcommand\reallywidehat[1]{%
\savestack{\tmpbox}{\stretchto{%
  \scaleto{%
    \scalerel*[\widthof{\ensuremath{#1}}]{\kern-.6pt\bigwedge\kern-.6pt}%
    {\rule[-\textheight/2]{1ex}{\textheight}}
  }{\textheight}%
}{0.5ex}}%
\stackon[1pt]{#1}{\tmpbox}%
}
\parskip 1ex

\numberwithin{equation}{section}

\newcommand\blfootnote[1]{%
  \begingroup
  \renewcommand\thefootnote{}\footnote{#1}%
  \addtocounter{footnote}{-1}%
  \endgroup
}

\newcommand{\EE}{\mathbb{E}}
\newcommand{\PP}{\mathbb{P}}
\newcommand{\Cov}{\mathbb{C}\mbox{ov}}
\newcommand{\Var}{\mathbb{V}\mbox{ar}}

\newcommand{\Z}{\mathbb{Z}}
\newcommand{\R}{\mathbb{R}}
\newcommand{\N}{\mathbb{N}}

\newcommand{\plim}{\text{p}\!\!\lim}

\theoremstyle{definition}
\newtheorem{assumption}{Assumption}
\newtheorem{assumption*}{Assumption}

\newtheorem{algorithm}{Algorithm}
\newtheorem{example}{Example}

\theoremstyle{plain}
\newtheorem{theorem}{Theorem}
\newtheorem{lemma}{Lemma}
\newtheorem{corollary}{Corollary}
\newtheorem{proposition}{Proposition}

\theoremstyle{remark}
\newtheorem{remark}{Remark}

\title{Journal of Econometrics}

\author{Alexander Heinemann, Eric Beutner and Stephan Smeekes}

\date{\today}

\begin{document}

\begin{center}

\scshape
\LARGE{\textbf{A Residual Bootstrap for Conditional Value-at-Risk}}
\normalfont
\end{center}
\begin{center}
\par\vspace{0.5cm}
\text{Eric Beutner$^*$}
\hspace{1.5cm}
\text{Alexander Heinemann$^{**}$}
\hspace{1.4cm}
\text{Stephan Smeekes$^{***}$}
\par\vspace{2cm} 
 
\par
 
\par
\text{\today}
\vspace{1cm}
\end{center}

\begin{abstract}
A fixed-design residual bootstrap method is proposed for the two-step estimator of \cite{francq2015risk} associated with the conditional Value-at-Risk. The bootstrap's consistency is proven for a general class of volatility models and intervals are constructed for the conditional Value-at-Risk. A simulation study reveals that the equal-tailed percentile bootstrap interval tends to fall short of its nominal value. In contrast, the reversed-tails bootstrap interval yields accurate coverage. We also compare the theoretically analyzed fixed-design bootstrap with the recursive-design bootstrap. It turns out that the fixed-design bootstrap performs equally well in terms of average coverage, yet leads on average to shorter intervals in smaller samples. An empirical application illustrates the interval estimation.\\ \\

\textbf{Key words:} Residual bootstrap; Value-at-Risk; GARCH\\
\textbf{JEL codes:} C14; C15; C58
\end{abstract}

\blfootnote{\hspace{-0.7cm}
$^{\textcolor{white}{**}*}$Department of Econometrics and Data Science, Vrije Universiteit Amsterdam, De Boelelaan 1105 
1081 HV Amsterdam,  Netherlands. E-mail address: \href{mailto:e.a.beutner@vu.nl}{e.a.beutner@vu.nl}\\
$^{\textcolor{white}{*}**}$a.s.r. Archimedeslaan 10, 3584 BA Utrecht. E-mail address: \href{mailto:alexander.heinemann@asr.nl}{alexander.heinemann@asr.nl}\\
$^{***}$Department of Quantitative Economics, Maastricht University, Tongersestraat 53, 6211 LM Maastricht, Netherlands. E-mail address: \href{mailto:s.smeekes@maastrichtuniversity.nl}{s.smeekes@maastrichtuniversity.nl}  (corresponding author)
}

\newpage

\doublespacing

\section{Introduction}
\label{sec:4.1}
Risk management has tremendously developed in past decades becoming an increasing practice. With minimum capital requirements being enforced by current legislation (Basel III and Solvency II), financial institutions and insurance companies monitor risk by using conventional measures such as Value-at-Risk (VaR). Typically, the volatility dynamics are specified by a (semi-)parametric model leading to conditional risk measure versions. For GARCH-type models the conditional VaR reduces to the conditional volatility scaled by a quantile of the innovations' distribution. The latter is conventionally treated as additional parameter and forms together with the others  the \textit{risk parameter} \citep{francq2015risk}. The true parameters are generally unknown and need to be estimated to obtain an estimate for the conditional VaR. 
Clearly, this VaR evaluation is subject to estimation risk that needs to be quantified for appropriate risk management.

Whereas an estimator based on a single step is available after re-parameterization \citep{francq2015risk}, a widely used approach is the following two-step estimation procedure. First, the parameters of the stochastic volatility model are estimated. Arguably the most popular estimation method in a GARCH-type setting is the Gaussian quasi-maximum-likelihood (QML) method. Based on the model's residuals the quantile is estimated by its empirical counterpart in a second step. For realistic sample sizes (e.g.\ $500$ or $1{,}000$ daily observations) the estimators are subject to considerable estimation risk. In particular, the estimation uncertainty associated with the quantile estimator is substantial for extreme quantiles (e.g.\  $\leq 5\%$). 

To quantify the uncertainty around the point estimators, one traditionally relies on asymptotic theory while replacing the unknown quantities in the limiting distribution by consistent estimates. An alternative approach -- frequently employed in practice -- is based on a bootstrap approximation. Regarding the estimators of the GARCH parameters, various bootstrap methods have been studied to approximate the estimators' finite sample distribution including the subsample bootstrap \citep{hall2003inference}, the block bootstrap \citep{corradi2008bootstrap}, the wild bootstrap \citep{shimizu2009bootstrapping} and the residual bootstrap. The residual bootstrap method is particularly popular and can be further divided into recursive \citep{pascual2006bootstrap,hidalgo2007goodness,jeong2017residual} and fixed \citep{shimizu2009bootstrapping,cavaliere2018fixed} design. Whereas in the former the bootstrap observations are generated recursively using the estimated volatility dynamics, the latter design keeps the dynamics of the bootstrap samples fixed  at the value of the original series. Further recent applications of fixed or recursive bootstrap designs or variants thereof to conditional volatility models can be found in \cite{HETLAND2021}, \cite{FRANCQ202247} and \cite{cavaliere2020bootstrap}.

The estimation of the quantile and the conditional VaR have received only selected attention in the bootstrap literature and proposed bootstrap methods have been, to the best of our knowledge, exclusively investigated by means of simulation. \cite{christoffersen2005estimation} examine various quantile estimators and construct intervals for the conditional VaR using a recursive-design residual bootstrap method. In addition, \cite{hartz2006accurate} presume the innovation distribution to be standard normal such that the quantile parameter is known; they propose a resampling method based on a residual bootstrap and a bias-correction step to account for deviations from the normality assumption. In contrast, \cite{spierdijk2016confidence} develops an $m$-out-of-$n$ without-replacement bootstrap to construct confidence intervals for ARMA-GARCH VaR. 

This paper proposes a fixed-design residual bootstrap method to mimic the finite sample distribution of the two-step estimator and provides an algorithm for the construction of bootstrap intervals for the conditional VaR. The proposed bootstrap method is proven to be consistent for a general class of volatility models. In particular, our framework does not only encompass GARCH but also several GARCH extensions such as the threshold GARCH (T-GARCH) of \cite{zakoian1994threshold} and the GJR-GARCH named after Glosten, Jagannathan and Runkle (\citeyear{glosten1993relation}). The bootstrap consistency is established under a set of mild assumptions, which relaxes moment conditions on the innovations imposed in the GARCH bootstrap literature. To the best of our knowledge this paper is the first to theoretically validate the residual bootstrap for the quantile and the conditional VaR.

The remainder of the paper is organized as follows. Section \ref{sec:4.2} specifies the model and the conditional VaR is derived. The two-step estimation procedure is described in Section \ref{sec:4.3} and asymptotic theory is provided under mild assumptions. In Section \ref{sec:4.4}, a fixed-design residual bootstrap method is proposed and proven to be consistent. Further, bootstrap intervals are constructed for the conditional VaR and extensions to the bootstrap methods presented here are discussed. A simulation study is conducted in Section \ref{sec:4.5} and an empirical application illustrates the interval estimation based on the fixed-design residual bootstrap. Section \ref{sec:4.6} concludes. Appendix \ref{app:1} contains proof of the main results, whereas Supplementary Appendix \ref{app:4.A} contains auxiliary results and their proofs. Finally, Supplementary Appendix \ref{app:4.B} is devoted to the related recursive-design residual bootstrap and Supplementary Appendix \ref{sec add simulation} contains additional simulation results.

\section{Model}
\label{sec:4.2}
We consider a conditional volatility model of the form
\begin{align}
\label{eq:4.2.1}
\epsilon_t = \sigma_t\eta_t
\end{align}
with $t\in \Z$, where $\{\epsilon_t\}$ denotes the sequence of log-returns, $\{\sigma_t\}$ is a volatility process and $\{\eta_t\}$ is a sequence of independent and identically distributed (iid) variables satisfying $\EE\big[\eta_t^2\big]=1$. The  volatility is presumed to be a measurable function of past observations
\begin{align}
\label{eq:4.2.2}
\sigma_{t}=\sigma_{t}(\theta_0)=\sigma(\epsilon_{t-1},\epsilon_{t-2},\dots;\theta_0)
\end{align}
with $\sigma:\R^\infty\times \Theta\to(0,\infty)$ and $\theta_0$ denotes the true parameter vector belonging to the parameter space $\Theta \subset \R^r$, $r \in \N$. Subsequently, we consider two examples for the functional form of \eqref{eq:4.2.2}: the well-known GARCH model (\citeauthor{engle1982autoregressive}, \citeyear{engle1982autoregressive}; \citeauthor{bollerslev1986generalized}, \citeyear{bollerslev1986generalized}) and the T-GARCH model of \cite{zakoian1994threshold}. Whereas the first is frequently applied in practice, the second is motivated by our empirical application (see Section \ref{sec:4.5.2}).
\begin{example}
\label{ex:4.1}
Suppose $\{\epsilon_t\}$ follows a GARCH$(1,1)$ process given by \eqref{eq:4.2.1} and $\sigma_{t}^2 = \omega_0 + \alpha_0 \epsilon_{t-1}^2+ \beta_0 \sigma_{t-1}^2$, where $\theta_0 = (\omega_0,\alpha_0,\beta_0)'\in (0,\infty)\times[0,\infty)\times[0,1)$. The recursive structure implies $\sigma_{t}=\sigma(\epsilon_{t-1},\epsilon_{t-2},\dots;\theta_0) = \sqrt{\sum_{k=1}^\infty \beta_0^{k-1}  \big(\omega_0 + \alpha_0 \epsilon_{t-k}^2\big)}$.
\end{example}
\begin{example}
\label{ex:4.2}
Suppose $\{\epsilon_t\}$ follows a T-GARCH$(1,1)$ process given by \eqref{eq:4.2.1} and $\sigma_{t} =\: \omega_0 + \alpha_0^+ \epsilon_{t-1}^+ +\alpha_0^- \epsilon_{t-1}^- + \beta_0 \sigma_{t-1}$ with parameters $\theta_0 =(\omega_0,\alpha_0^+,\alpha_0^- ,\beta_0)'\in (0,\infty)\times[0,\infty)\times[0,\infty)\times[0,1)$ and $ \epsilon_{t}^+=\max\{\epsilon_{t},0\}$ and $ \epsilon_{t}^-=\max\{-\epsilon_{t},0\}$. The model's recursive structure yields $\sigma_{t}=\sigma(\epsilon_{t-1},\epsilon_{t-2},\dots;\theta_0) =\! \sum_{k=1}^\infty \beta_0^{k-1}  \big(\omega_0 + \alpha_0^+ \epsilon_{t-k}^+ +\alpha_0^-\epsilon_{t-k}^-\big)$.
\end{example}
Throughout the paper, for any cumulative distribution function (cdf), say $G$, we define the generalized inverse by $G^{-1}(u)=\inf\big\{\tau \in \R :  G(\tau)\geq u\big\}$ and write $G(\cdot-)$ to denote its left limit. Generally, for an arbitrary real-valued random variable $X$ 
(e.g.~stock return) with cdf $F_X$, the VaR at level $\alpha \in (0,1)$, is given by $VaR_\alpha(X)=-F_X^{-1}(\alpha)$.

Let $\mathcal{F}_n$ denote the $\sigma$-algebra generated by $\{\epsilon_t, t\leq n\}$. It follows that the conditional VaR of $\epsilon_{n+1}$ given $\mathcal{F}_n$ at level $\alpha \in (0,1)$ is $VaR_{\alpha}(\epsilon_{n+1}|\mathcal{F}_n) = \sigma (\epsilon_n,\epsilon_{n-1},\dots;\theta_0) VaR_\alpha(\eta_{n+1})$.
For given $\alpha$, the quantile of $\eta_{n+1}$ is constant and can be treated as a parameter. Thus, denoting the cdf of the $\eta_t$'s by $F$ and setting $\xi_\alpha = F^{-1}(\alpha)$, the conditional VaR of $\epsilon_{n+1}$ given $\mathcal{F}_n$ at level $\alpha$ reduces to 
\begin{align}
\label{eq:4.2.4}
VaR_{\alpha}(\epsilon_{n+1}|\mathcal{F}_n) = -\xi_\alpha\: \sigma_{n+1}(\theta_0).
\end{align}
Typically, $\alpha$ is fixed at a sufficiently small level such that $\xi_\alpha<0$. Except for special cases (e.g. normality of $\eta_t$), $\xi_\alpha$ is unknown and needs to estimated just like $\theta_0$.

\section{Estimation}
\label{sec:4.3}

We estimate the parameters $\theta_0$ and $\xi_\alpha$ following the two-step procedure of \citeauthor{francq2015risk} (\citeyear{francq2015risk}, Section 4.2).
In the first step, we estimate the conditional volatility parameter $\theta_0$ by Gaussian QML. This approach is motivated as follows: if the innovations $\{\eta_t\}$ were Gaussian, the variables $\eta_t(\theta) = \epsilon_t/\sigma_t(\theta)$ would be iid $N(0,1)$ whenever $\theta=\theta_0$, where $\sigma_{t}(\theta) = \sigma(\epsilon_{t-1},\dots,\epsilon_{1}, \epsilon_{0},\epsilon_{-1},\dots;\theta)$.
The 'Q' in QML stands for 'quasi' and refers to the fact that $F$ does not need to be the standard normal distribution function. Obviously, given a sample $\epsilon_1, \dots, \epsilon_n$, we generally cannot determine $\sigma_{t}(\theta)$ completely. Replacing the unknown presample observations by arbitrary values, say $\tilde{\epsilon}_t$, $t\leq 0$, we obtain $\tilde{\sigma}_{t}(\theta) = \sigma(\epsilon_{t-1},\dots,\epsilon_{1}, \tilde{\epsilon}_{0},\tilde{\epsilon}_{-1},\dots;\theta)$, which serves as an approximation for $\sigma_{t}(\theta)$. The QML estimator of $\theta_0$ is defined by 
\begin{align}
\label{eq:4.3.3}
\hat{\theta}_n=\arg\max_{\theta \in \Theta} \frac{1}{n}\sum_{t=1}^n \tilde{\ell}_t(\theta) \qquad \text{with} \qquad \tilde{\ell}_t(\theta)=-\frac{1}{2}\bigg(\frac{\epsilon_t}{\tilde{\sigma}_t(\theta)}\bigg)^2-\log \tilde{\sigma}_t(\theta).
\end{align}
In the second step, we estimate $\xi_\alpha$ on the basis of the first-step residuals, i.e.~$\hat{\eta}_t=\epsilon_t/\tilde{\sigma}_t(\hat{\theta}_n)$. The empirical $\alpha$-quantile of $\hat{\eta}_1,\dots,\hat{\eta}_n$ is given by
\begin{align}
\label{eq:4.3.4}
\hat{\xi}_{n,\alpha}= \arg \min_{z \in \R} \frac{1}{n} \sum_{t=1}^n \rho_\alpha(\hat{\eta}_t-z),
\end{align}
where $\rho_\alpha(u)=u(\alpha-\mathbbm{1}_{\{u< 0\}})$ is the usual asymmetric absolute loss function (cf.\ \citeauthor{koenker2006quantile}, \citeyear{koenker2006quantile}). Equivalently, we can write $\hat{\xi}_{n,\alpha}=\hat{\mathbbm{F}}_n^{-1}(\alpha)$ with $\hat{\mathbbm{F}}_n(x)=\frac{1}{n}\sum_{t=1}^n \mathbbm{1}_{\{\hat{\eta}_t\leq x\}}$ being the empirical distribution function (edf) of the residuals. 

Having obtained estimators for $\theta_0$ and $\xi_\alpha$, we turn to the estimation of the conditional VaR of the one-period ahead observation at level $\alpha$. Whereas the notation $VaR_{\alpha}(\epsilon_{n+1}|\mathcal{F}_n)$ stresses the object's conditional nature, we henceforth proceed with the abbreviation $VaR_{n,\alpha}$ for notational convenience. Employing \eqref{eq:4.3.3} -- \eqref{eq:4.3.4} we can estimate $VaR_{n,\alpha}$ by
\begin{align}
\label{eq:4.3.5}
\reallywidehat{VaR}_{n,\alpha}=-\hat{\xi}_{n,\alpha}\: \tilde{\sigma}_{n+1}\big(\hat{\theta}_n\big).
\end{align}
Clearly,  the estimator's large sample properties cannot be studied using traditional tools such as consistency since \eqref{eq:4.3.5} does not permit a limit.

For the subsequent asymptotic analysis, we introduce the following assumptions.
\begin{assumption}{(Compactness)}
\label{as:4.1}
$\Theta$ is a compact subset of $\R^r$.
\end{assumption}

\begin{assumption}{(Stationarity \& Ergodicity)}
\label{as:4.2}
$\{\epsilon_t\}$ is a strictly stationary and ergodic solution of \eqref{eq:4.2.1} with \eqref{eq:4.2.2}.
\end{assumption}

\begin{assumption}{(Volatility process)}
\label{as:4.3}
The function $\sigma:\R^\infty\times \Theta\to(0,\infty)$ is known and for
any real sequence $\{x_i\}$, the function $\theta\to\sigma(x_1,x_2,\dots;\theta)$ is continuous.  Almost surely, $\sigma_t(\theta)>\underline{\omega}$ for any $\theta \in \Theta$ and some $\underline{\omega}>0$ and $\EE[\sigma_t^s(\theta_0)]<\infty$ for some $s>0$. Moreover, for any $\theta \in \Theta$, we assume $\sigma_t(\theta_0)/\sigma_t(\theta)=1$ almost surely (a.s.) if and only if $\theta=\theta_0$.
\end{assumption}

\begin{assumption}{(Initial conditions)}
\label{as:4.4}
There exists a constant $\rho \in (0,1)$ and a random variable $C_1$ measurable with respect to $\mathcal{F}_0$ and $\EE[|C_1|^s]<\infty$ for some $s>0$ such that
\begin{enumerate}[(i)]
\item \label{as:4.4.1} $\sup_{\theta \in \Theta}|\sigma_t(\theta)-\tilde{\sigma}_t(\theta)|\leq C_1 \rho^t$;

\item \label{as:4.4.2} $\theta\to \sigma(x_1, x_2, \dots;\theta)$ has continuous second-order derivatives satisfying
\begin{align*}
\sup_{\theta \in \Theta}\bigg|\bigg|\frac{\partial \sigma_t(\theta)}{\partial \theta}-\frac{\partial \tilde{\sigma}_t(\theta)}{\partial \theta}\bigg|\bigg|\leq  C_1 \rho^t, \qquad \quad \sup_{\theta \in \Theta}\bigg|\bigg|\frac{\partial^2 \sigma_t(\theta)}{\partial \theta \partial \theta'}-\frac{\partial^2 \tilde{\sigma}_t(\theta)}{\partial \theta\partial \theta'}\bigg|\bigg|\leq C_1 \rho^t,
\end{align*}
where $||\cdot||$ denotes the Euclidean norm.
\end{enumerate}
\end{assumption}

\begin{assumption}{(Innovation process)} 
\label{as:4.5}
The innovations $\{\eta_t\}$ satisfy

\begin{enumerate}[(i)]
\item  \label{as:4.5.1}  $\eta_t\overset{iid}{\sim}F$ with $F$ being continuous, $\EE\big[\eta_t^2\big]=1$ and $\eta_t$ is independent of $\{\epsilon_u:u<t\}$;

\item  \label{as:4.5.2} $\eta_t$ admits a density $f$ which is continuous and strictly positive around $\xi_\alpha<0$;
\item  \label{as:4.5.3}  $\EE\big[\eta_t^4\big]<\infty$.
\end{enumerate}
\end{assumption}

\begin{assumption}{(Interior)}
 \label{as:4.6}
$\theta_0$ belongs to the interior of $\Theta$ denoted by $\mathring{\Theta}$.
\end{assumption}

\begin{assumption}{(Non-degeneracy)}
\label{as:4.7}
There does not exist a non-zero $\lambda\in \R^r$ such that $\lambda'\frac{\partial \sigma_t(\theta_0)}{\partial \theta}=0$ a.s.
\end{assumption}

\begin{assumption}{(Monotonicity)}
\label{as:4.8}
For any real sequence $\{x_i\}$ and for any $\theta_1,\theta_2 \in \Theta$ satisfying $\theta_1\leq \theta_2$ componentwise, we have $\sigma(x_1,x_2,\dots;\theta_1)\leq \sigma(x_1,x_2,\dots;\theta_2)$.
\end{assumption}

\begin{assumption}{(Moments)}
\label{as:4.9}
There exists a neighborhood $\mathscr{V}(\theta_0)$ of $\theta_0$ such that the following variables have finite expectation
\begin{align*}
\text{(i)}\sup_{\theta \in \mathscr{V}(\theta_0)}\bigg|\frac{ \sigma_t(\theta_0)}{\sigma_t(\theta)}\bigg|^a, \qquad \;\;\: \text{(ii)} \sup_{\theta \in \mathscr{V}(\theta_0)}\bigg|\bigg|\frac{1}{\sigma_t(\theta)}\frac{\partial \sigma_t(\theta)}{\partial \theta}\bigg|\bigg|^{b}, \qquad \;\;\: \text{(iii)} \sup_{\theta \in \mathscr{V}(\theta_0)}\bigg|\bigg|\frac{1}{\sigma_t(\theta)}\frac{\partial^2 \sigma_t(\theta)}{\partial \theta \partial \theta'}\bigg|\bigg|^c
\end{align*}
for some $a$, $b$, $c$ (to be specified).\footnote{Note that the variables in (i)--(iii) are strictly stationary (\citeauthor{francq2011garch}, \citeyear{francq2011garch}, p.\ 181/406).}
\end{assumption}

\begin{assumption}{(Scaling Stability)}
\label{as:4.10}
 There exists a function $g$ such that for any $\theta \in \Theta$, for any $\lambda>0$, and any real sequence $\{x_i\}$
\begin{align*}
\lambda \sigma(x_1,x_2,\dots;\theta)=\sigma(x_1,x_2,\dots;\theta_\lambda),
\end{align*}
where $\theta_\lambda=g(\theta,\lambda)$ and $g$ is differentiable in $\lambda$.
\end{assumption}
The previous set of assumptions is comparable to the conditions imposed by \cite{francq2015risk}. Assumption \ref{as:4.3} calls for a correct specification of the volatility structure. If the researcher incorrectly specifies a volatility function $\varsigma(\dots;\vartheta)$ instead, the estimator of the misspecified conditional volatility model $\hat{\vartheta}_n$ will converge to a pseudo-true value, i.e.\ $\vartheta_0=\arg \min_\vartheta \EE\big[\frac{1}{2}\frac{\epsilon_t^2}{\varsigma_t^2(\vartheta)}+\log \varsigma_t(\vartheta)\big]$. The corresponding edf of the residuals $\frac{1}{n}\sum_{t=1}^n \mathbbm{1}_{\{\epsilon_t/\varsigma_t(\hat{\vartheta}_n)\leq x\}}$ converges to $\bar{F}(x)=\EE\big[F\big(x\frac{\varsigma_t(\vartheta_0)}{\sigma_t(\theta_0)}\big)\big]$ in view of Lemma \ref{lem:4.1} while the $\alpha$-quantile estimator converges to $\bar{F}^{-1}(\alpha)$, which is generally different from $F^{-1}(\alpha)$. Thus, the correct specification of the volatility function is crucial and one can test for it using the recently developed test by \cite{Maria2019}; for further recent results on goodness-of-fit testing for GARCH models see \cite{Bardet2020}. Regarding the bootstrap method to be developed below, it is plausible to expect that in the presence of a misspecified  conditional volatility model it will be consistent for the pseudo-true values, although we do not provide a rigorous proof.

Regarding the innovation process we do not need to assume $\EE[\eta_t]=0$ (cf.\ \citeauthor{francq2004maximum}, \citeyear{francq2004maximum}, Remark 2.5). 
The iid condition in Assumption \ref{as:4.5}\eqref{as:4.5.1} is vital for \eqref{eq:4.2.4} to hold and is the basis of the residual bootstrap in Section \ref{sec:4.4.1}.
Under correct specification of the volatility process the iid assumption imposed on the innovations can be tested for by considering the errors $\epsilon_t/ \sigma(\epsilon_{t-1},\epsilon_{t-2},\dots;\hat{\theta}_n)$, $t=1,..,n$, and applying the test of \cite{cho2011generalized}.

Whereas \cite{cavaliere2018fixed} assume the existence of the sixth moment of $\eta_t$ for the fixed-design bootstrap in ARCH($q$) models, we only require the fourth moment to be finite in Assumption \ref{as:4.5}(\ref{as:4.5.3}). 
We assume $\theta_0$ belongs to the interior of the parameter space in Assumption \ref{as:4.6}. Parameters on the boundary yield non-standard problems, which require special treatment. \cite{cavaliere2020bootstrap} provide bootstrap inference on the boundary of the parameter space with application to conditional volatility models. We return to this issue in our empirical application in Remark \ref{rem:5.1}.
In Assumption \ref{as:4.8} the function $\sigma(x_1,x_2,\dots;\theta)$ is presumed to be monotonically increasing in $\theta$, which is used to establish the strong consistency of the quantile estimator. While the monotonicity condition is a feature shared by various stochastic volatility models (cf.\ \citeauthor{berkes2003limit}, \citeyear{berkes2003limit}, Lemma 4.1), it excludes the exponential GARCH  \citep{nelson1991conditional} and the log-GARCH \citep{geweke1986comment,pantula1986modeling}. Further, we require higher order of moments in Assumption \ref{as:4.9} for the bootstrap, which does not seem to be restrictive for the classical GARCH-type models (cf.\ \citeauthor{francq2011garch}, \citeyear{francq2011garch}, p.\ 165; \citeauthor{hamadeh2011asymptotic}, \citeyear{hamadeh2011asymptotic}, p.\ 501). In particular, Assumption \ref{as:4.9} is presumed to hold with $a=\pm 12$, $b=12$ and $c=6$ for establishing the convergence of the bootstrap information matrix.

On the basis of the previous assumptions we extend the strong consistency result of \citeauthor{francq2015risk} (\citeyear{francq2015risk}, Theorem 1) to the quantile estimator.
\begin{theorem}\textit{(Strong Consistency)}\label{thm:4.1}
\begin{itemize}
    \item[(i)] \citep{francq2015risk}  Under Assumptions \ref{as:4.1}--\ref{as:4.3}, \ref{as:4.4}(\ref{as:4.4.1}) and \ref{as:4.5}(\ref{as:4.5.1}) the estimator  in \eqref{eq:4.3.3} is strongly consistent, i.e. $\hat{\theta}_n \overset{a.s.}{\to} \theta_0$.
\item[(ii)] If in addition Assumptions \ref{as:4.6} and \ref{as:4.9}(i) hold with $a=-1$, then the  estimator in \eqref{eq:4.3.4} satisfies $\hat{\xi}_{n,\alpha} \overset{a.s.}{\to} \xi_\alpha$.

\end{itemize}
\end{theorem}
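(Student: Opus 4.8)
The plan is to establish the two assertions in turn, using the first to bootstrap the second.

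For $\hat\theta_n\overset{a.s.}{\to}\theta_0$ I would run the by-now classical consistency argument for the Gaussian QMLE in volatility models (as in \citeauthor{francq2004maximum}, \citeyear{francq2004maximum}, and \citeauthor{francq2015risk}, \citeyear{francq2015risk}) in four moves. First, show the presample values are asymptotically negligible, $\sup_{\theta\in\Theta}|\tilde L_n(\theta)-L_n(\theta)|\overset{a.s.}{\to}0$, where $L_n(\theta)=\frac1n\sum_{t=1}^n\ell_t(\theta)$ with $\ell_t(\theta)=-\tfrac12(\epsilon_t/\sigma_t(\theta))^2-\log\sigma_t(\theta)$; this rests on Assumption \ref{as:4.4}(\ref{as:4.4.1}), the bound $\sigma_t\ge\underline\omega$ of Assumption \ref{as:4.3}, and a small-moment argument giving $\epsilon_t^2\rho^t\overset{a.s.}{\to}0$ (so the time averages $\frac1n\sum_t\epsilon_t^2\rho^t$ vanish), which uses only $\EE[\sigma_t^s(\theta_0)]<\infty$ and $\EE[\eta_t^2]=1$ from Assumptions \ref{as:4.3}, \ref{as:4.5}(\ref{as:4.5.1}). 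Second, a uniform law of large numbers $\sup_{\theta\in\Theta}|L_n(\theta)-L(\theta)|\overset{a.s.}{\to}0$, $L(\theta)=\EE[\ell_t(\theta)]\in[-\infty,\infty)$, via the ergodic theorem (Assumption \ref{as:4.2}), compactness (Assumption \ref{as:4.1}), continuity of $\theta\mapsto\sigma_t(\theta)$ (Assumption \ref{as:4.3}), and a truncation to handle the possibly $-\infty$ value of $L$. Third, the identification inequality: with $u_t=\sigma_t(\theta_0)/\sigma_t(\theta)$, $L(\theta)-L(\theta_0)=\tfrac12\EE[\,1-u_t^2+\log u_t^2\,]\le0$ by $\log y\le y-1$, with equality iff $u_t\equiv1$, i.e.\ iff $\theta=\theta_0$ (identifiability in Assumption \ref{as:4.3}). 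Fourth, the usual compactness argument that every accumulation point of $\hat\theta_n$ maximises $L$ and hence equals $\theta_0$.

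For $\hat\xi_{n,\alpha}\overset{a.s.}{\to}\xi_\alpha$, write $\hat\eta_t=\epsilon_t/\tilde\sigma_t(\hat\theta_n)=\eta_t\,r_t$ with $r_t=\sigma_t(\theta_0)/\tilde\sigma_t(\hat\theta_n)>0$, and let $\hat{\mathbbm F}_n$ be the edf of the residuals, so $\hat\xi_{n,\alpha}=\hat{\mathbbm F}_n^{-1}(\alpha)$. It suffices to prove $\hat{\mathbbm F}_n(x)\overset{a.s.}{\to}F(x)$ for $x$ in a neighbourhood of $\xi_\alpha$ (recall $\xi_\alpha<0$) and then use the continuity of $F$ together with its strict increase at $\xi_\alpha$: since $F(\xi_\alpha-\epsilon)<\alpha<F(\xi_\alpha+\epsilon)$, this forces $\xi_\alpha-\epsilon\le\hat\xi_{n,\alpha}\le\xi_\alpha+\epsilon$ eventually. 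To obtain the edf convergence I would squeeze $\hat{\mathbbm F}_n(x)$ between time averages of indicators of the \emph{true} innovations. Fix $x<0$ and $\delta>0$ with $[\theta_0-\delta\mathbf{1},\theta_0+\delta\mathbf{1}]\subset\Theta$ (possible by Assumption \ref{as:4.6}); since $\hat\theta_n\overset{a.s.}{\to}\theta_0$, eventually $\hat\theta_n$ lies in that box, and then by monotonicity (Assumption \ref{as:4.8}) $\sigma_t(\theta_0-\delta\mathbf{1})\le\sigma_t(\hat\theta_n)\le\sigma_t(\theta_0+\delta\mathbf{1})$, while $|\tilde\sigma_t(\hat\theta_n)-\sigma_t(\hat\theta_n)|\le C_1\rho^t$ and $\sigma_t\ge\underline\omega$ (Assumptions \ref{as:4.4}(\ref{as:4.4.1}), \ref{as:4.3}). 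With $U_t^\delta=\sigma_t(\theta_0+\delta\mathbf{1})/\sigma_t(\theta_0)$, $L_t^\delta=\sigma_t(\theta_0-\delta\mathbf{1})/\sigma_t(\theta_0)$ and $e_t=|x|C_1\rho^t/\underline\omega$, so that $e_t\overset{a.s.}{\to}0$, one gets, for all large $t$,
\[
\mathbbm{1}_{\{\eta_t\le x\,U_t^\delta-e_t\}}\;\le\;\mathbbm{1}_{\{\hat\eta_t\le x\}}\;\le\;\mathbbm{1}_{\{\eta_t\le x\,L_t^\delta+e_t\}}.
\]
Since $U_t^\delta,L_t^\delta$ are $\mathcal F_{t-1}$-measurable, hence independent of $\eta_t$ (Assumption \ref{as:4.5}(\ref{as:4.5.1})), and $\{(\eta_t,U_t^\delta)\}$, $\{(\eta_t,L_t^\delta)\}$ are strictly stationary and ergodic (Assumption \ref{as:4.2}), replacing $e_t$ by an arbitrary $\epsilon>0$ (valid past a finite a.s.\ threshold) and averaging with the ergodic theorem yields
\[
\EE\!\left[F\!\left(x\,U_1^\delta-\epsilon\right)\right]\;\le\;\liminf_n\hat{\mathbbm F}_n(x)\;\le\;\limsup_n\hat{\mathbbm F}_n(x)\;\le\;\EE\!\left[F\!\left(x\,L_1^\delta+\epsilon\right)\right].
\]
Letting first $\epsilon\downarrow0$ (continuity of $F$) and then $\delta\downarrow0$ (so $U_1^\delta,L_1^\delta\to1$ a.s.\ by continuity of $\theta\mapsto\sigma_1(\theta)$, with $F$ bounded for dominated convergence) collapses both sides to $F(x)$; Assumption \ref{as:4.9}(i) with $a=-1$ supplies the integrable envelope $\EE[\sup_{\theta\in\mathscr V(\theta_0)}\sigma_t(\theta)/\sigma_t(\theta_0)]<\infty$ that keeps the perturbation term under control, and would replace the use of monotonicity if one preferred a supremum over $\mathscr V_\delta(\theta_0)$ in place of the two corner ratios $U_t^\delta,L_t^\delta$.

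I expect the main obstacle to be precisely this squeeze in the second step: the empirical quantile reacts discontinuously both to the estimation error in $\hat\theta_n$ and to the presample truncation hidden in $\tilde\sigma_t$, so consistency of $\hat\theta_n$ cannot simply be plugged in. The care is in arranging matters so that the truncation enters only through an exponentially decaying, a.s.\ summable term $e_t$; the parameter perturbation is absorbed either monotonically (Assumption \ref{as:4.8}) or in an $L^1$/ergodic sense (Assumption \ref{as:4.9}(i)), exploiting the independence of $\eta_t$ from $\mathcal F_{t-1}$; and the two limits $\epsilon\downarrow0$, $\delta\downarrow0$ are taken in the right order, using only the continuity of $F$. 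The remaining ingredients — the QMLE steps and the final quantile inversion — are routine.
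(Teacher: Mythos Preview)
Your QMLE consistency argument is the standard one and matches the paper's citation of \cite{francq2015risk}. For the quantile part, however, your primary squeeze via the corner ratios $U_t^\delta,L_t^\delta$ relies on Assumption~\ref{as:4.8} (monotonicity of $\theta\mapsto\sigma$), which is \emph{not} among the hypotheses of Theorem~\ref{thm:4.1}: only Assumptions~\ref{as:4.1}--\ref{as:4.3}, \ref{as:4.4}(\ref{as:4.4.1}), \ref{as:4.5}(\ref{as:4.5.1}), \ref{as:4.6} and \ref{as:4.9}(i) are available for the second claim. Your own parenthetical alternative --- replacing the two corner ratios by $\sup_{\theta\in\mathscr V_\tau(\theta_0)}\sigma_t(\theta)/\sigma_t(\theta_0)$ (and the corresponding infimum), bounding the initial-condition error by $C_1\rho^t/\underline\omega$, applying the ergodic theorem, and letting $\tau\downarrow0$ with dominated convergence --- is exactly the paper's route (Lemma~\ref{lem:4.1}, following \citeauthor{berkes2003limit}, \citeyear{berkes2003limit}). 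So the fix is simply to commit to that version from the outset; Assumption~\ref{as:4.9}(i) with $a=-1$ is indeed what makes the envelope integrable.

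Structurally the paper goes slightly further than you need: Lemma~\ref{lem:4.1} upgrades the pointwise convergence to full Glivenko--Cantelli, $\sup_{x\in\R}|\hat{\mathbbm F}_n(x)-F(x)|\overset{a.s.}{\to}0$, via P\'olya's lemma, and then quotes \citeauthor{van2000asymptotic} (\citeyear{van2000asymptotic}, Theorem~21.2) for the quantile step. Your pointwise-near-$\xi_\alpha$ argument followed by the direct sandwich $F(\xi_\alpha-\epsilon)<\alpha<F(\xi_\alpha+\epsilon)$ is more economical and suffices here. One shared subtlety worth flagging: the right inequality $\alpha<F(\xi_\alpha+\epsilon)$ (equivalently, continuity of $F^{-1}$ at $\alpha$ in the paper's citation) requires $F$ to be strictly increasing to the right of $\xi_\alpha$, which is really Assumption~\ref{as:4.5}(\ref{as:4.5.2}) rather than~\ref{as:4.5}(\ref{as:4.5.1}).
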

To lighten notation, we henceforth write $D_t(\theta) =\frac{1}{\sigma_t(\theta)}\frac{\partial\sigma_t(\theta)}{\partial \theta}$  and drop the argument when evaluated at the true parameter, i.e.\ $D_t=D_t(\theta_0)$. The next result provides the joint asymptotic distribution of $\hat{\theta}_n$ and $\hat{\xi}_{n,\alpha}$ and is due to \cite{francq2015risk}. 
\begin{theorem} \label{thm:4.2} 
(Asymptotic Distribution, \citealp{francq2015risk}) Suppose Assumptions \ref{as:4.1}--\ref{as:4.7}, \ref{as:4.9} and \ref{as:4.10} hold with $a=b=4$ and $c=2$. Then, we have
\begin{align}
\label{eq:4.3.6}
      \begin{pmatrix}
      \sqrt{n}(\hat{\theta}_n-\theta_0)\\
    \sqrt{n}(\xi_{\alpha} - \hat{\xi}_{n,\alpha})
      \end{pmatrix}
\overset{d}{\to}N\big(0, \Sigma_\alpha\big) \qquad \mbox{with}\qquad
\Sigma_\alpha=
      \begin{pmatrix}
      \frac{\kappa-1}{4}J^{-1} & \lambda_\alpha J^{-1}\Omega\\
    \lambda_\alpha \Omega'J^{-1} & \zeta_\alpha
      \end{pmatrix},
\end{align}
where $\kappa = \EE[\eta_t^4]$, $\Omega = \EE[D_t]$, $J=\EE[D_tD_t']$, $\lambda_\alpha = \xi_\alpha\frac{\kappa-1}{4}+\frac{p_\alpha}{2f(\xi_{\alpha})}$, $\zeta_\alpha = \xi_\alpha^2\frac{\kappa-1}{4}+\frac{\xi_{\alpha} p_\alpha}{f(\xi_\alpha)}+\frac{\alpha(1-\alpha)}{f^2(\xi_\alpha)}$ and $p_\alpha =\EE[\eta_t^2 \mathbbm{1}_{\{\eta_t<\xi_\alpha\}}]-\alpha$.
\end{theorem}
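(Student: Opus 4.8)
The plan is to obtain first–order (Bahadur–type) expansions of both estimators and then read off the joint limit from a single multivariate central limit theorem for a martingale difference sequence.

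\emph{Step 1: the QML expansion.} Under Assumptions~\ref{as:4.1}--\ref{as:4.7} and \ref{as:4.9} with $a=b=4$ and $c=2$, the Gaussian QML estimator satisfies $\sqrt{n}(\hat\theta_n-\theta_0)=\tfrac12 J^{-1}n^{-1/2}\sum_{t=1}^n(\eta_t^2-1)D_t+o_{\PP}(1)$; here the score at $\theta_0$ is $(\eta_t^2-1)D_t$ and the expected negative Hessian is $2J$, the replacement of $\tilde\sigma_t$ by $\sigma_t$ is absorbed by Assumption~\ref{as:4.4} and the second–order remainder by the moment bounds of Assumption~\ref{as:4.9}. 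This is precisely the argument of \citet{francq2004maximum} as adapted in \citet{francq2015risk}.

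\emph{Step 2: expansion of the residual empirical distribution function.} This is the heart of the matter. Writing $\hat\eta_t=\eta_t\,\sigma_t(\theta_0)/\tilde\sigma_t(\hat\theta_n)$, discarding the initial–value error through Assumption~\ref{as:4.4}(\ref{as:4.4.1}) and Taylor–expanding $\log\sigma_t(\theta)$ at $\theta_0$, one gets $\sigma_t(\theta_0)/\tilde\sigma_t(\hat\theta_n)=1-D_t'(\hat\theta_n-\theta_0)+o_{\PP}(n^{-1/2})$ uniformly in $t\le n$, the remainder being controlled by Assumptions~\ref{as:4.9}(ii)--(iii). Hence $\{\hat\eta_t\le x\}$ equals, up to negligible terms, $\{\eta_t\le x+xD_t'(\hat\theta_n-\theta_0)\}$, and I would prove
\[
\hat{\mathbbm{F}}_n(x)=\tfrac1n\sum_{t=1}^n\mathbbm{1}\{\eta_t\le x\}+x\,f(x)\,\Omega'(\hat\theta_n-\theta_0)+o_{\PP}(n^{-1/2})
\]
uniformly for $x$ near $\xi_\alpha$, where $\Omega=\EE[D_t]$. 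This splits into (a) a stochastic–equicontinuity argument for the process $x\mapsto n^{-1/2}\sum_t\bigl(\mathbbm{1}\{\eta_t\le x(1+\delta_t)\}-F(x(1+\delta_t))-\mathbbm{1}\{\eta_t\le x\}+F(x)\bigr)$ over $\|\delta\|=O(n^{-1/2})$, and (b) the deterministic expansion $\tfrac1n\sum_t F\bigl(x(1+D_t'h)\bigr)=F(x)+x f(x)\bar D_n'h+o(\|h\|)$ together with $\bar D_n:=\tfrac1n\sum_t D_t\xrightarrow{a.s.}\Omega$ (ergodic theorem under Assumption~\ref{as:4.2}, finite first moment by Assumption~\ref{as:4.9}(ii)). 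Part (a) is the main obstacle: the ``design'' $D_t$ and the plug–in $\hat\theta_n$ are random and the observations are dependent, so standard i.i.d.\ empirical–process theory does not apply and one must exploit strict stationarity, ergodicity, and the fact that $D_t$ is $\mathcal{F}_{t-1}$–measurable while $\eta_t$ is independent of $\mathcal{F}_{t-1}$. Given the display and continuity plus strict positivity of $f$ at $\xi_\alpha$ (Assumption~\ref{as:4.5}(\ref{as:4.5.2})), the usual inversion of the empirical quantile yields $\sqrt{n}(\xi_\alpha-\hat\xi_{n,\alpha})=f(\xi_\alpha)^{-1}n^{-1/2}\sum_t(\mathbbm{1}\{\eta_t\le\xi_\alpha\}-\alpha)+\xi_\alpha\Omega'\sqrt{n}(\hat\theta_n-\theta_0)+o_{\PP}(1)$.

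\emph{Step 3: joint CLT and identification of $\Sigma_\alpha$.} Stacking the two expansions, $\bigl(\sqrt n(\hat\theta_n-\theta_0)',\,\sqrt n(\xi_\alpha-\hat\xi_{n,\alpha})\bigr)'=A\,n^{-1/2}\sum_{t=1}^n(U_t D_t',\,V_t)'+o_{\PP}(1)$ with $U_t=\eta_t^2-1$, $V_t=\mathbbm{1}\{\eta_t\le\xi_\alpha\}-\alpha$, and $A$ block lower–triangular with rows $(\tfrac12 J^{-1},\,0)$ and $(\tfrac{\xi_\alpha}{2}\Omega'J^{-1},\,f(\xi_\alpha)^{-1})$. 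Since $D_t$ is $\mathcal{F}_{t-1}$–measurable, $\eta_t$ is independent of $\mathcal{F}_{t-1}$ and $\EE[\eta_t^4]<\infty$ (Assumptions~\ref{as:4.5}(\ref{as:4.5.1}),(\ref{as:4.5.3})), the summands form a square–integrable stationary ergodic martingale difference sequence, so the multivariate martingale CLT gives asymptotic normality with covariance $S$ whose blocks are $\EE[U_t^2 D_tD_t']=(\kappa-1)J$, $\EE[U_tV_tD_t]=p_\alpha\Omega$ and $\EE[V_t^2]=\alpha(1-\alpha)$ — using independence of $\eta_t$ and $D_t$ and $\EE[(\eta_t^2-1)(\mathbbm{1}\{\eta_t\le\xi_\alpha\}-\alpha)]=\EE[\eta_t^2\mathbbm{1}\{\eta_t\le\xi_\alpha\}]-\alpha=p_\alpha$. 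The limiting covariance is $ASA'$; multiplying out reproduces the $(1,1)$ block $\tfrac{\kappa-1}{4}J^{-1}$ and the $(1,2)$ block $\lambda_\alpha J^{-1}\Omega$ directly, while the $(2,2)$ entry comes out as $\tfrac{\kappa-1}{4}\xi_\alpha^2\,\Omega'J^{-1}\Omega+\tfrac{\xi_\alpha p_\alpha}{f(\xi_\alpha)}\Omega'J^{-1}\Omega+\tfrac{\alpha(1-\alpha)}{f^2(\xi_\alpha)}$. To finish I would invoke Assumption~\ref{as:4.10}: differentiating $\lambda\,\sigma(x_1,x_2,\dots;\theta)=\sigma(x_1,x_2,\dots;g(\theta,\lambda))$ in $\lambda$ at $\lambda=1$ and dividing by $\sigma_t(\theta_0)$ gives $D_t'c=1$ a.s., where $c=\partial g(\theta_0,\lambda)/\partial\lambda\big|_{\lambda=1}$; hence $Jc=\EE[D_t(D_t'c)]=\EE[D_t]=\Omega$, so $c=J^{-1}\Omega$ and $\Omega'J^{-1}\Omega=\Omega'c=\EE[D_t'c]=1$. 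This collapses the $(2,2)$ entry to $\zeta_\alpha$ and identifies $\Sigma_\alpha$. (Since the imposed conditions match theirs, the statement also follows immediately by quoting Theorem~2 of \citet{francq2015risk}.)
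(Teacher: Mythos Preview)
Your proposal is correct and aligns with the paper's approach: the paper's proof is simply a citation of \citet[Theorem~4]{francq2015risk} together with the observation that Assumption~\ref{as:4.10} gives $\Omega'J^{-1}\Omega=1$ --- your Steps~1--3 reconstruct the content behind that citation, and your final parenthetical acknowledges the citation route directly. One minor correction: the relevant result in \citet{francq2015risk} is their Theorem~4, not Theorem~2.
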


\begin{remark}
It is worth mentioning that the asymptotics in this theorem for $\hat{\xi}_{n,\alpha}$ are for $\alpha$ fixed while $n$ goes to infinity. If, for instance, $\alpha$ is very small for moderate $n$ the distribution in the following theorem might not provide a good approximation. For such cases, approximations based on extreme value theory may provide better approximations to the unknown finite sample distribution. See, for example, \cite{McNeil2002} and the recent \cite{li_peng_song_2022} for an application of GARCH models with extreme value theory for VaR estimation. Note that the latter authors consider simultaneous inference for VaR and expected shortfall.   
\end{remark}

In applied work, to use Theorem \ref{thm:4.2} in order to conduct inference for $(\theta_0, \xi_{\alpha})$ one needs a consistent estimator for $\Sigma_{\alpha}$. It follows from Lemma \ref{lem:4.2} that $\kappa$, $\Omega$, and $J$ can be consistently estimated by
\begin{align}
\label{eq:4.3.7}
\begin{split}
\hat{\kappa}_n=\frac{1}{n}\sum_{t=1}^n\hat{\eta}_t^4,  \qquad 
\hat{\Omega}_n=\frac{1}{n}\sum_{t=1}^n\hat{D}_t,  \qquad \mbox{ and } \quad \hat{J}_n=\frac{1}{n}\sum_{t=1}^n\hat{D}_t\hat{D}_t',
\end{split}
\end{align}
respectively, with $\hat{D}_t = \tilde{D}_t(\hat{\theta}_n)$ and $\tilde{D}_t(\theta) = \frac{1}{\tilde{\sigma}_t(\theta)}\frac{\partial\tilde{\sigma}_t(\theta)}{\partial \theta}$. 
To estimate $\lambda_{\alpha}$ and $\zeta_{\alpha}$, which also appear in $\Sigma_{\alpha}$, one can use $\hat{\xi}_{n,\alpha}$ for $\xi_{\alpha}$ and 
\begin{align}\label{eq consistency matrix own} 
\hat{p}_{n,\alpha} = (1 \slash n) \sum_{t=1}^n \hat{\eta}_t^2 \mathbbm{1}_{\{\eta_t < \hat{\xi}_{n,\alpha}\}}-\alpha 
\end{align}
for $p_{\alpha}$ (see also Lemma \ref{alg:4.2} for the properties of $\hat{p}_{n,\alpha}$). To estimate the density $f$, which also appears in $\lambda_{\alpha}$ and $\zeta_{\alpha}$, kernel smoothing is commonly employed, i.e.\ 
\begin{align}\label{eq:4.3.8}
\hat{\mathbbm{f}}_n^S(x)  =\frac{1}{nh_n}\sum_{t=1}^n k\bigg(\frac{x-\hat{\eta}_t}{h_n}\bigg)
\end{align}
with kernel function $k$ and bandwidth $h_n>0$. Hence, whenever we use a consistent $\hat{\mathbbm{f}}_n^S$ for $f$ we obtain, combined with Equations \eqref{eq:4.3.7} and \eqref{eq:4.3.8}, a consistent estimator for $\Sigma_\alpha$ denoted by $\hat{\Sigma}_{n,\alpha}$. For the case that $\epsilon_t$ in Equation (\ref{eq:4.2.1}) follows a GARCH$(p,q)$ process, \cite{gao2008estimation} considered estimating $f$ by a kernel density estimator with Lipschitz-continuous kernels such as $k(x)=\phi(x)$, where $\phi$ is the standard normal density function. An alternative estimator is based on the uniform kernel $k(x)=\frac{1}{2}\mathbbm{1}_{\{|x|\leq 1\}}$ yielding $\hat{\mathbbm{f}}_n^S(\hat{\xi}_{n,\alpha})\overset{p}{\to}f(\xi_\alpha)$ whenever $h_n \sim n^{-\varrho}$ for some $\varrho \in (0,1/2]$.
 
At this point it is worth recalling that interest lies in $VaR_{\alpha}(\epsilon_{n+1}|\mathcal{F}_n)$ (cf.~Equation (\ref{eq:4.2.4})). More precisely, one is not so much interested in the distribution of this random variable or its moments but in inference for its possible realizations. Note also that $VaR_{\alpha}(\epsilon_{n+1}|\mathcal{F}_n)$ varies with $n$ and does not converge. This illustrates that inference for $VaR_{\alpha}(\epsilon_{n+1}|\mathcal{F}_n)$ is different from inference for  $(\xi_{\alpha},\theta_0)$ which is just a real-valued vector of dimension $r+1$. Nevertheless, it is intuitively clear how to conduct inference for $VaR_{\alpha}(\epsilon_{n+1}|\mathcal{F}_n)$, yet this difference results in two technical issues that need to be addressed to provide a thorough theoretical justification of inferential procedures for $VaR_{\alpha}(\epsilon_{n+1}|\mathcal{F}_n)$. These technical issues are known for quite some time; see, for instance, \cite{phillips1979sampling}, \citet{kreiss2015discussion} and the textbook \cite{pesaran2015time}[p.~389]. Several examples illustrating these two issues can be found in \citet{beutner2017justification}. In a nutshell the first issue stems from the fact that, as mentioned,  $VaR_{\alpha}(\epsilon_{n+1}|\mathcal{F}_n)$ varies over time (cf.~Eq.~(\ref{eq:4.2.4})) implying that a limiting distribution cannot exist.
The second issue is a result of $VaR_{\alpha}(\epsilon_{n+1}|\mathcal{F}_n)$ being a conditional quantity which on the one hand requires conditioning on the sample observed so far, yet on the other hand this conditioning eliminates all randomness, making it impossible to establish useful distributional results. We now briefly illustrate this here for the case that $VaR_{\alpha}(\epsilon_{n+1}|\mathcal{F}_n)$ is the object of interest; a very detailed illustration of this second issue by means of a GARCH(1,1) can be found in Example 2.1 in \cite{beutner2017justification}. Fixing some arbitrary starting values $\tilde{\epsilon}_0, \tilde{\epsilon}_{-1},\ldots$ and making the dependency of $\sigma_{n+1}$ on the realized values of the random variables $\epsilon_n,\epsilon_{n-1},\ldots,\epsilon_1$ and the starting values $\tilde{\epsilon}_0, \tilde{\epsilon}_{-1},\ldots$ explicit a feasible version of the quantity of interest is  
\begin{equation}\label{eq VaR for sample splitting}
VaR_{\alpha}(\epsilon_{n+1}|\mathcal{F}_n)= -\xi_{\alpha}\, \tilde{\sigma}_{n+1}(\epsilon_n^r,\epsilon_{n-1}^r,\ldots,\epsilon_1^r,\tilde{\epsilon}_0, \tilde{\epsilon}_{-1},\ldots;\theta_0),
\end{equation}
where $\epsilon_n^r,\epsilon_{n-1}^r,\ldots,\epsilon_{1}^r$ denote the realized values of $\epsilon_n,\epsilon_{n-1},\ldots,\epsilon_1$. Here feasible is in the sense that only parameter uncertainty remains. Now replacing, as at the beginning of Section 3, the unknown $\theta_0$ by the estimator $\hat{\theta}_{n}$ and doing the same with $\xi_{\alpha}$ we see that these estimators must enter Equation \eqref{eq VaR for sample splitting} \textit{given the realizations}, i.e.~as $\hat{\theta}_{n}(\epsilon_1^r,\epsilon_{2}^r,\ldots,\epsilon_n^r)$ and $\hat{\xi}_{\alpha,n}(\epsilon_1^r,\epsilon_{2}^r,\ldots,\epsilon_n^r)$, respectively, because otherwise we would end up with an inconsistency. Indeed, if they entered as $\hat{\theta}_n(\epsilon_1,\epsilon_{2},\ldots,\epsilon_n)$ and $\hat{\xi}_{n,\alpha}(\epsilon_1,\epsilon_{2},\ldots,\epsilon_n)$, respectively, we would treat $\epsilon_1,\epsilon_{2},\ldots,\epsilon_n$ as random and observed at the same time. Upon replacing $\theta_0$ and $\xi_{\alpha}$ in Equation \eqref{eq VaR for sample splitting} by $\hat{\theta}_{n}(\epsilon_1^r,\epsilon_{2}^r,\ldots,\epsilon_n^r)$ and $\hat{\xi}_{n,\alpha}(\epsilon_1^r,\epsilon_{2}^r,\ldots,\epsilon_n^r)$, respectively, no random variable appears in 
\begin{align*}
\reallywidehat{VaR}_{n,\alpha}= & -\hat{\xi}_{n,\alpha}(\epsilon_1^r,\epsilon_{2}^r,\ldots, \epsilon_n^r) \nonumber \\ & \times \sigma_{n+1}(\epsilon_n^r,\epsilon_{n-1}^r,\ldots,\epsilon_1^r,s_0,s_{-1},\ldots;\hat{\theta}_n(\epsilon_1^r,\epsilon_{2}^r,\ldots,\epsilon_n^r)),
\end{align*}
which is just the estimator of Equation (3.3) with all dependencies made explicit. As just observed no random variable appears in $\reallywidehat{VaR}_{n,\alpha}$ which therefore does not have a distribution which could be used to construct confidence intervals for $VaR_{\alpha}(\epsilon_{n+1}|\mathcal{F}_n)$. In the article \citet{beutner2017justification} merging is proposed as a solution to overcome the first issue and sample splitting as a means to solve the second. The results of this article can be used justify, through their Theorem 1, approximating the distribution of the VaR estimator, centered at $VaR_{n,\alpha}$ and inflated by $\sqrt{n}$, by
\begin{align}
\label{eq:4.3.9}
N\left(0, \begin{pmatrix}
    -\xi_\alpha \frac{\partial \sigma_{n+1}(\theta_0)}{\partial \theta}\\
    \sigma_{n+1}
      \end{pmatrix}' \Sigma_\alpha \begin{pmatrix}
    -\xi_\alpha \frac{\partial \sigma_{n+1}(\theta_0)}{\partial \theta}\\
         \sigma_{n+1}
      \end{pmatrix}\right),
\end{align}
and consequently to justify the intuitive (conditional) confidence interval 
\begin{align}\label{eq:4.3.10}
\reallywidehat{VaR}_{n,\alpha}\pm \frac{\Phi^{-1}(\gamma/2)}{\sqrt{n}}
\left\{\begin{pmatrix}
    -\hat{\xi}_{n,\alpha} \frac{\partial \tilde{\sigma}_{n+1}(\hat{\theta}_n)}{\partial \theta}\\
         \tilde{\sigma}_{n+1}(\hat{\theta}_n)
      \end{pmatrix}' \hat{\Sigma}_{n,\alpha} \begin{pmatrix}
    -\hat{\xi}_{n,\alpha} \frac{\partial \tilde{\sigma}_{n+1}(\hat{\theta}_n)}{\partial \theta}\\
    \tilde{\sigma}_{n+1}(\hat{\theta}_n)
      \end{pmatrix}\right\}^{1/2}
\end{align}
for ${VaR}_{n,\alpha}$. Here $\Phi$ denotes the standard normal cdf, and we used again the short-hand notation for $\tilde{\sigma}_{n+1}(\hat{\theta})$ and $\hat{\xi}_{n,\alpha}$, i.e.~did not make the dependency on the starting values etc.~explicit. Note that the interval \eqref{eq:4.3.10} lacks a proper theoretical justification as it treats the non-random $\reallywidehat{VaR}_{n,\alpha}$ as random because it assigns an asymptotic distribution to it. In order to use Theorem 3 and Corollary 2 of \citet{beutner2017justification} to provide a sound theoretical justification of \eqref{eq:4.3.10} let  $n_E: \N \to \N$ and $n_P: \N \to \N$ be such that for all $n \in \N$ we have $n_E(n) < n_P(n)$. First, in the sample split approach we will only use $\epsilon_1,\ldots,\epsilon_{n_E}$ to estimate $\xi_{\alpha}$ and $\theta_0$ which we denote by  $\hat{\xi}_{n_E,\alpha}^{SPL}(\epsilon_1,\epsilon_{2},\ldots,\epsilon_{n_E})$ and $\hat{\theta}_{n_E}^{SPL}(\epsilon_1,\epsilon_{2},\ldots,\epsilon_{n_E})$, respectively. To lighten the notation a bit we will also use the shorter notations   
$\hat{\xi}_{n_E,\alpha}^{SPL}$ and $\hat{\theta}_{n_E}^{SPL}$, respectively. Second, in the sample split approach conditioning on $\mathcal{F}_n$ on the left-hand side of  \eqref{eq VaR for sample splitting} is replaced by conditioning on $\epsilon_{n_P},\ldots,\epsilon_n$ only. The sample split version of (3.3) which we denote by $\reallywidehat{VaR}_{n,\alpha}^{SPL}$ becomes
\begin{align*}
\reallywidehat{VaR}_{n,\alpha}^{SPL}=
& -\hat{\xi}_{n_E,\alpha}^{SPL}(\epsilon_1,\epsilon_{2},\ldots,\epsilon_{n_E}) \tilde{\sigma}_{n+1}^{SPL}(\hat{\theta}^{SPL}_{n_E})
\nonumber \\ 
= & -\hat{\xi}_{n_E,\alpha}^{SPL}(\epsilon_1,\epsilon_{2},\ldots,\epsilon_{n_E}) \nonumber \\
& \times \tilde{\sigma}_{n+1}(\epsilon_n^r,\epsilon_{n-1}^r,\ldots,\epsilon_{n_P}^r,c_{n_P-1},\ldots,c_{1},\tilde{\epsilon}_{0},\tilde{\epsilon}_{-1}, \ldots;\hat{\theta}_{n_E}^{SPL}(\epsilon_1,\epsilon_{2},\ldots,\epsilon_{n_E})),
\end{align*}
where $\epsilon_n^r,\epsilon_{n-1}^r,\ldots,\epsilon_{n_P}^r$ and  $\tilde{\epsilon}_{0},\tilde{\epsilon}_{-1},\ldots$ are as before and $c_{n_P-1},\ldots,c_{1}$ are constants. These constants could be viewed as starting values but since they are not replacing unobserved value as $\tilde{\epsilon}_0, \tilde{\epsilon}_{-1},\ldots$ do we prefer to use a different letter for them. Note   
also that $\epsilon_1,\ldots,\epsilon_{n_E}$ enter the sample split version $\reallywidehat{VaR}_{n,\alpha}^{SPL}$ as random variables which is in contrast to  $\reallywidehat{VaR}_{n,\alpha}$. Therefore, $\reallywidehat{VaR}_{n,\alpha}^{SPL}$ does have a non-degenerate distribution which can be used to construct confidence intervals. The sample split (conditional) confidence interval is defined as 
\begin{align}\label{eq var spl interval}
\reallywidehat{VaR}_{n,\alpha}^{SPL}\pm \frac{\Phi^{-1}(\gamma/2)}{\sqrt{n_E}}
\left\{\begin{pmatrix}
    -\hat{\xi}_{n_E,\alpha}^{SPL} \frac{\partial \tilde{\sigma}_{n+1}^{SPL}(\hat{\theta}_{n_E}^{SPL})}{\partial \theta}\\
         \tilde{\sigma}_{n+1}^{SPL}(\hat{\theta}_{n_E}^{SPL})
      \end{pmatrix}' \hat{\Sigma}_{n_E,\alpha}^{SPL} \begin{pmatrix}
    -\hat{\xi}_{n_E,\alpha}^{SPL} \frac{\partial \tilde{\sigma}_{n+1}^{SPL}(\hat{\theta}_{n_E}^{SPL})}{\partial \theta}\\
    \tilde{\sigma}_{n+1}^{SPL}(\hat{\theta}_{n_E}^{SPL})
      \end{pmatrix}\right\}^{1/2}.
\end{align}
Here $\hat{\Sigma}_{n_E,\alpha}^{SPL}$ is as $\hat{\Sigma}_{n,\alpha}$ but based on the first $n_E$ observations only. Note that this interval is meaningful because $\reallywidehat{VaR}_{n,\alpha}^{SPL}$ is random and converges in distribution after centering and scaling. Intuitively, we would say that the statistically  meaningful interval in \eqref{eq var spl interval} provides a theoretical justification for the interval \eqref{eq:4.3.10} if 
\begin{equation}\label{eq convergence var's}
\reallywidehat{VaR}_{n,\alpha}^{SPL} \mbox{ converges to } \reallywidehat{VaR}_{n,\alpha}    
\end{equation}
 and if 
 \begin{equation}\label{eq convergence matrices I}
\left\{\begin{pmatrix}
    -\hat{\xi}_{n_E,\alpha}^{SPL} \frac{\partial \tilde{\sigma}_{n+1}^{SPL}(\hat{\theta}_{n_E}^{SPL})}{\partial \theta}\\
         \tilde{\sigma}_{n+1}^{SPL}(\hat{\theta}_{n_E}^{SPL})
      \end{pmatrix}' \hat{\Sigma}_{n_E,\alpha}^{SPL} \begin{pmatrix}
    -\hat{\xi}_{n_E,\alpha}^{SPL} \frac{\partial \tilde{\sigma}_{n+1}^{SPL}(\hat{\theta}_{n_E}^{SPL})}{\partial \theta}\\
    \tilde{\sigma}_{n+1}^{SPL}(\hat{\theta}_{n_E}^{SPL})
      \end{pmatrix}\right\}^{1/2}
 \end{equation}
converges to  
 \begin{equation}\label{eq convergence matrices II}
\left\{\begin{pmatrix}
    -\hat{\xi}_{n,\alpha} \frac{\partial \tilde{\sigma}_{n+1}(\hat{\theta}_n)}{\partial \theta}\\
         \tilde{\sigma}_{n+1}(\hat{\theta}_n)
      \end{pmatrix}' \hat{\Sigma}_{n,\alpha} \begin{pmatrix}
    -\hat{\xi}_{n,\alpha} \frac{\partial \tilde{\sigma}_{n+1}(\hat{\theta}_n)}{\partial \theta}\\
    \tilde{\sigma}_{n+1}(\hat{\theta}_n)
      \end{pmatrix}\right\}^{1/2}.
\end{equation}
This is also the concept employed in \citet{beutner2017justification}. Under the conditions of Theorem 3 of this article the convergence in \eqref{eq convergence var's} holds and \eqref{eq convergence matrices I} converges to 
\eqref{eq convergence matrices II} by Corollary 2 of this article. It is worth pointing out that convergence is in probability and that for this concept to be applicable $\reallywidehat{VaR}_{n,\alpha}$ and \eqref{eq convergence matrices II} are treated as random. Theorem 2 above ensures that one of the assumptions of Corollary 3 of \citet{beutner2017justification} holds so that it provides the basis for a sound theoretical justification of \eqref{eq:4.3.10} as a (conditional) confidence interval. Formally, we can state

\begin{corollary}\label{corollary supp} Assume that the assumptions of Theorem \ref{thm:4.1} are fulfilled. Define the function $\psi: \R^{\infty} \times \Upsilon$ with $\Upsilon=\R \times \Theta$ by 
$$\psi(x_1,x_2,\ldots;\upsilon)=\psi(x_1,x_2,\ldots;(\xi,\theta))=-\xi \sigma(x_1,x_2,\ldots;\theta),$$
and put $\upsilon_0=(\xi_{\alpha}, \theta_0)$. Assume further that 
\begin{enumerate}
\item $\Big|\Big|\frac{\partial \sigma (\epsilon_n, \epsilon_{n-1}, \ldots ;\theta_0)}{\partial \theta}\Big|\Big|=O_{p}(1)$;
\item $\sup_{\upsilon \in \mathscr{V}(\upsilon_0)}\Big|\Big|\frac{\partial^2 \psi (\epsilon_n, \epsilon_{n-1}, \ldots ;\upsilon)}{\partial \upsilon \partial \upsilon'}\Big|\Big|=O_{p}(1)$ for some open neighborhood $\mathscr{V}(\upsilon_0)$ around $\upsilon_0$;
\item Given sequences $\{\tilde{\epsilon}_t\}$ and $\{c_t\}$, we have
		\begin{align*}
	&	\sqrt{n}\big(\psi(\epsilon_n,\ldots,\epsilon_{t_1},c_{t_1-1},\ldots,c_1,\tilde{\epsilon}_0,\ldots; \upsilon_0) - \psi (\epsilon_n, \epsilon_{n-1}, \ldots; \upsilon_0)\big)=o_{p}(1),\\
&		\bigg|\bigg|\frac{\partial \psi(\epsilon_n,\ldots,\epsilon_{t_1},c_{t_1-1},\ldots,c_1,\tilde{\epsilon}_0,\ldots; \upsilon_0) }{\partial \upsilon} - \frac{\partial \psi (\epsilon_n, \epsilon_{n-1}, \ldots; \upsilon_0)}{\partial \upsilon}\bigg|\bigg|=o_{p}(1),\\
	& \sup_{\upsilon \in \mathscr{V}(\upsilon_0)} \bigg|\bigg|\frac{\partial^2 \psi(\epsilon_n,\ldots,\epsilon_{t_1},c_{t_1-1},\ldots,c_1,\tilde{\epsilon}_0,\ldots; \upsilon_0)}{\partial \upsilon \partial \upsilon'} - \frac{\partial^2 \psi (\epsilon_n, \epsilon_{n-1}, \ldots; \upsilon_0)}{\partial \upsilon \partial \upsilon'}\bigg|\bigg| \\
	& =o_{p}(1)
		\end{align*}
		for any $t_1 \geq 1$ such that $(n-t_1) / l_n \rightarrow \infty$ as $n \to \infty$ and for some model-specific $l_n$ with $l_n \rightarrow \infty$.
\end{enumerate}
Moreover, let $n_E$ and $n_P$ fulfill Assumption 3.a of \citet{beutner2017justification} and $\{\epsilon_t\}$ Assumption 3.c of the same article. Then the difference between $\reallywidehat{VaR}_{n,\alpha}^{SPL}$ and $ \reallywidehat{VaR}_{n,\alpha}$ converges to zero in probability and the same holds for the difference between \eqref{eq convergence matrices I} and \eqref{eq convergence matrices II}.
\end{corollary} 
\begin{proof}
The claims made are proved if the assumptions of Theorem 3 and Corollary 2 in \citet{beutner2017justification} are met. Assumption 1.a of \citet{beutner2017justification} holds by Theorem \ref{thm:4.2}. Note that the scaling sequence $m_T$ in \cite{beutner2017justification} equals $\sqrt{n}$ here. Moreover according to Theorem \ref{thm:4.2} $\hat{\upsilon}_n=(\hat{\xi}_{n,\alpha},\hat{\theta}_n)$ scaled by $\sqrt{n}$ and centered at $\upsilon_0$ converges to a multivariate normal distribution so that Assumption 5 of \citet{beutner2017justification} also holds. We now turn to Assumption 1.b of \citet{beutner2017justification}. Clearly $\upsilon \to \psi(\cdot;\upsilon$) is continuous because by Assumption 3 above $\theta \to \sigma(\cdot,;\theta)$ is continuous. Furthermore, because $\upsilon \to \psi(\cdot;\upsilon)$ is the product of the functions $\xi \to \xi$ and $\theta \to \sigma(\cdot;\theta)$ which are both twice differentiable (for $\theta \to \sigma(\cdot;\theta)$ this holds by Assumption 4 (ii) above)  the same holds for $\upsilon \to \psi(\cdot;\upsilon)$. Therefore, Assumption 3.b of \citet{beutner2017justification} holds. The gradient of $\psi(\epsilon_n,\epsilon_{n-1},\ldots;\upsilon_0)$ is given by 
$$
\left(-\sigma(\epsilon_n,\epsilon_{n-1},\ldots;\theta_0), -\xi_{\alpha}\frac{\partial \sigma (\epsilon_n, \epsilon_{n-1}, \ldots ;\theta_0)}{\partial \theta}\right).
$$
By Assumption 3 above and Markov's inequality $\sigma(\epsilon_n,\epsilon_{n-1},\ldots;\theta_0)$ is bounded in probability. Hence, together with Assumption 1.~in the statement of the corollary it follows that Assumption 1.c in \citet{beutner2017justification} holds.  
Assumption 1.d and 1.e in \citet{beutner2017justification} are just the Assumptions 2.~and 3.~of the corollary. Obviously, Assumptions 3.a and 3.c of \cite{beutner2017justification} hold under the assumptions of the corollary. Assumption 3.b of this article is met by Assumption 2 above which is nothing else than Assumption 3.b of \citet{beutner2017justification} for Value-at-Risk. As explained in the proof of Theorem 3 in \citet{beutner2017justification} the Assumption 2.b of that article is irrelevant for the quantity on the right-hand side of \eqref{eq convergence var's} and for \eqref{eq convergence matrices II}. Assumption 2.a of that article is obviously met. This finishes the proof.      
\end{proof}
\begin{remark} 
 Like some of the Assumptions 1-10 will hold or not depending on the specification of $\sigma$, the same is true for the assumptions of Corollary \ref{corollary supp} that ensure applicability of the results of \citet{beutner2017justification}. For popular time series models like a GARCH(1,1) they have been verified in \citet{beutner2019technical}.  
\end{remark}

Although the interval in \eqref{eq:4.3.10} is, as just outlined, well-justified, it may perform poorly since the density estimation appears rather sensitive regarding the choice of bandwidth (see \citeauthor{gao2008estimation}, \citeyear{gao2008estimation}, Section 4). Bootstrap methods offer an alternative way to quantify the uncertainty around the estimators.     

\section{Bootstrap}
\label{sec:4.4}

Bootstrap approximations frequently provide better insight into the actual distribution than the asymptotic approximation, yet they require a careful set-up. \cite{hall2003inference} show that conventional bootstrap methods are inconsistent in a GARCH model lacking finite fourth moment in the case of the squared innovations' distribution not being in the domain of attraction of the normal distribution. They consider a subsample bootstrap instead and study its asymptotic properties. In correspondence, an $m$-out-of-$n$ without-replacement bootstrap is proposed by \cite{spierdijk2016confidence} to construct confidence intervals for ARMA-GARCH VaR.

\cite{pascual2006bootstrap} present a residual bootstrap in a GARCH($1,1$) setting and assess its finite sample properties by means of simulation. Their bootstrap scheme follows a recursive design in which the bootstrap observations are generated iteratively using the estimated volatility dynamics. Building upon their results, \cite{christoffersen2005estimation} construct bootstrap confidence intervals for (conditional) VaR and Expected Shortfall and compare them to competitive methods within the GARCH($1,1$) model. Theoretical results on the recursive-design residual bootstrap are provided by \cite{hidalgo2007goodness} and  \cite{jeong2017residual} for the ARCH($\infty$) and  GARCH($p,q$) model, respectively. 

In contrast, \cite{shimizu2009bootstrapping} considers fixed-design variants of the wild and the residual bootstrap in which the ARMA-GARCH dynamics of the bootstrap samples are kept fixed at the values of the original series. The bootstrap estimators are based on a single Newton-Raphson iteration simplifying the proofs of first-order asymptotic validity. \citeauthor{shimizu2009bootstrapping}'s approach for the residual bootstrap is also employed in a multivariate GARCH setting by \cite{francq2016variance}. Recently, \cite{cavaliere2018fixed} study the fixed-design residual bootstrap in the context of ARCH($q$) models and propose a bootstrap Wald statistic based on a QML bootstrap estimator. While their theory has been developed independently to ours, their simulation study indicates that the fixed-design bootstrap performs as well as the recursive-design bootstrap.

\subsection{Fixed-design Residual Bootstrap}
\label{sec:4.4.1}

We propose a fixed-design residual bootstrap procedure, described in Algorithm \ref{alg:4.1}, to approximate the distribution of the estimators in \eqref{eq:4.3.3} -- \eqref{eq:4.3.5}.

\begin{algorithm}\textit{(Fixed-design residual bootstrap)}
\label{alg:4.1}
\begin{enumerate}

\item For $t=1,\dots, n$, generate  $\eta_t^* \overset{iid}{\sim} \hat{\mathbbm{F}}_n$ and the bootstrap observation $\epsilon_t^* = \tilde{\sigma}_t(\hat{\theta}_n) \eta_t^*$.
\item Calculate the bootstrap estimator 
\begin{align}
\label{eq:4.4.1}
\hat{\theta}_n^* = \arg \max_{\theta \in \Theta} \frac{1}{n}\sum_{t=1}^n \ell_t^*(\theta) \qquad \text{with} \qquad \ell_t^*(\theta)=-\frac{1}{2}\bigg(\frac{\epsilon_t^{*}}{\tilde{\sigma}_t(\theta)}\bigg)^2-\log \tilde{\sigma}_t(\theta).
\end{align}

\item For $t=1,\dots,n$ compute the bootstrap residual $\hat{\eta}_t^* = \epsilon_t^*/\tilde{\sigma}_t(\hat{\theta}_n^*)$
and obtain 
\begin{align}
\label{eq:4.4.2}
\hat{\xi}_{n,\alpha}^* = \arg\min_{z \in \R} \frac{1}{n}\sum_{t=1}^n\rho_\alpha(\hat{\eta}_t^*-z).
\end{align}

\item Obtain the bootstrap estimator of the conditional VaR
\begin{align}
\label{eq:4.4.3}
\reallywidehat{VaR}_{n,\alpha}^{*}=-\hat{\xi}_{n,\alpha}^{*}\: \tilde{\sigma}_{n+1}\big(\hat{\theta}_n^{*}\big).
\end{align}
\end{enumerate}
\end{algorithm}

\begin{remark}
\label{rem:4.1}
In contrast to the literature, the bootstrap errors are drawn with replacement from the residuals rather than the standardized residuals. In fact, re-centering would be inappropriate in the case of $\EE[\eta_t]\neq 0$. 
In addition, re-scaling of the residuals is typically redundant as $\frac{1}{n}\sum_{t=1}^n \hat{\eta}_t^2=1$ 
is implied by $\hat{\theta}_n \in \mathring{\Theta}$
under Assumption \ref{as:4.10}; see \citeauthor{francq2011garch}, \citeyear{francq2011garch}, p.\ 182/406 and note that the solution requires $\hat{\theta}_n$ belonging to the interior (\citeauthor{francq2011garch}, Oct.\ 2018, personal communication).
\end{remark}

\begin{remark}
\label{rem:4.2}
The term `fixed-design' refers to the fact that the bootstrap observations are generated using $\tilde{\sigma}_t(\hat{\theta}_n)=\sigma(\epsilon_{t-1},\dots,\epsilon_1,\tilde{\epsilon}_0,\tilde{\epsilon}_{-1},\dots;\hat{\theta}_n)$. In contrast, a recursive-design scheme replicates the model's dynamic structure, i.e.\  $\epsilon_t^\star = \sigma_t^\star \eta_t^\star$ with $\sigma_t^\star = \sigma(\epsilon_{t-1}^\star,\dots,\epsilon_1^\star,\tilde{\epsilon}_0,\tilde{\epsilon}_{-1},\dots;\hat{\theta}_n)$ and $\eta_t^\star \overset{iid}{\sim} \hat{\mathbbm{F}}_n$, which is computationally more demanding. We refer to Appendix \ref{app:4.B} for a complete description. See also \cite{cavaliere2018fixed} for more theoretical insights on the difference in the design in an ARCH($q$). 
\end{remark}

\begin{remark}
\label{rem:4.3}
Whereas \eqref{eq:4.4.1} involves a nonlinear optimization, \cite{shimizu2009bootstrapping} proposes a Newton-Raphson type bootstrap estimator instead. The Newton-Raphson bootstrap estimator corresponding to \eqref{eq:4.4.1} is given by
\begin{align*}
\hat{\theta}_n^{*NR} = \hat{\theta}_n+\hat{J}_n^{-1}\frac{1}{2n}\sum_{t=1}^n \hat{D}_t \big(\eta_t^{*2}-1\big),
\end{align*}
which can considerably speed up computations.
\end{remark}

Proposition \ref{prop:4.1} establishes the asymptotic validity of the bootstrap for the volatility parameters.
\begin{proposition}
\label{prop:4.1}
Suppose Assumptions \ref{as:4.1}--\ref{as:4.4}, \ref{as:4.5}(\ref{as:4.5.1}), \ref{as:4.5}(\ref{as:4.5.3}), \ref{as:4.6}, \ref{as:4.7}, \ref{as:4.9} and \ref{as:4.10} hold with $a=\pm 12$, $b=12$ and $c=6$. Then, we have
\begin{align*}
\sqrt{n}\big(\hat{\theta}_n^* - \hat{\theta}_n\big)
\overset{d^*}{\to}N\bigg(0,\frac{\kappa-1}{4}J^{-1}\bigg)
\end{align*}
almost surely.
\end{proposition}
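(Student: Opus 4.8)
The plan is to treat $\hat{\theta}_n^*$ as an M-estimator and carry out, conditionally on (almost every) realization of the original sample, the classical two-stage consistency-then-asymptotic-normality argument for extremum estimators, now under the bootstrap law $P^*$.

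First I would establish bootstrap consistency, $\hat{\theta}_n^*\overset{p^*}{\to}\theta_0$ a.s. Writing $\ell_t^*(\theta)=-\tfrac{1}{2}\big(\tilde{\sigma}_t(\hat{\theta}_n)/\tilde{\sigma}_t(\theta)\big)^2\eta_t^{*2}-\log\tilde{\sigma}_t(\theta)$ and using $\EE^*[\eta_t^{*2}]=\tfrac{1}{n}\sum_{t=1}^n\hat{\eta}_t^2=1$ (Remark~\ref{rem:4.1}), the averaged conditional-mean criterion $\bar{L}_n^*(\theta)=\tfrac{1}{n}\sum_{t=1}^n\EE^*[\ell_t^*(\theta)]$ is maximized at $\hat{\theta}_n$ by the standard Gaussian-QML identification argument (Assumption~\ref{as:4.3}). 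Combining this with a bootstrap uniform law of large numbers $\sup_{\theta\in\Theta}|L_n^*(\theta)-\bar{L}_n^*(\theta)|\overset{p^*}{\to}0$ a.s.\ — which follows from compactness (Assumption~\ref{as:4.1}), continuity (Assumption~\ref{as:4.3}), the moment bounds of Assumption~\ref{as:4.9}, and the initial-value control of Assumption~\ref{as:4.4} used to replace $\sigma_t$ by $\tilde{\sigma}_t$ — together with $\hat{\theta}_n\overset{a.s.}{\to}\theta_0$ from Theorem~\ref{thm:4.1}, a standard argmax argument yields $\hat{\theta}_n^*\overset{p^*}{\to}\theta_0$ a.s.

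Given consistency, Assumption~\ref{as:4.6} places $\hat{\theta}_n^*$ in $\mathring{\Theta}$ eventually (a.s.\ $P^*$), so $\partial L_n^*(\hat{\theta}_n^*)/\partial\theta=0$ and the second-order Taylor expansion of the score around $\hat{\theta}_n$ displayed before the statement is legitimate; the intermediate point $\breve{\theta}_n$ then also satisfies $\breve{\theta}_n\overset{p^*}{\to}\theta_0$ a.s. Two appendix results finish the job: Lemma~\ref{lem:4.6} gives $\tfrac{1}{n}\sum_{t=1}^n\partial^2\ell_t^*(\breve{\theta}_n)/\partial\theta\partial\theta'\overset{p^*}{\to}-2J$ a.s.\ (hence this Hessian is invertible with $P^*$-probability tending to one, with inverse converging to $-\tfrac{1}{2}J^{-1}$), and Lemma~\ref{lem:4.7} gives $\tfrac{1}{\sqrt{n}}\sum_{t=1}^n\hat{D}_t(\eta_t^{*2}-1)\overset{d^*}{\to}N(0,(\kappa-1)J)$ a.s. The latter is a Lindeberg/Lyapunov central limit theorem for the array $\hat{D}_t(\eta_t^{*2}-1)$, which is independent across $t$ and mean zero conditionally on the data (again by $\EE^*[\eta_t^{*2}]=1$); its limiting covariance is pinned down by $\hat{J}_n\to J$ and $\hat{\kappa}_n\to\kappa$ a.s.\ (Theorem~\ref{thm:4.1} and Lemma~\ref{lem:4.2}), and the Lindeberg condition follows from the moment bounds on $D_t$ in Assumption~\ref{as:4.9}. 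Solving the Taylor identity for $\sqrt{n}(\hat{\theta}_n^*-\hat{\theta}_n)$ and invoking the bootstrap analogues of Slutsky's lemma and the continuous mapping theorem produces \eqref{eq:4.4.4} and then
\[
\sqrt{n}\big(\hat{\theta}_n^*-\hat{\theta}_n\big)\overset{d^*}{\to}N\Big(0,\tfrac{1}{4}J^{-1}(\kappa-1)J\,J^{-1}\Big)=N\Big(0,\tfrac{\kappa-1}{4}J^{-1}\Big)
\]
a.s., which is the claim.

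I expect the main obstacle to be the consistency step together with the remainder and Hessian control: one must push the full Gaussian-QML extremum-estimator machinery (cf.\ \citeauthor{francq2004maximum}, \citeyear{francq2004maximum}) through the fixed-design bootstrap, where the summands building $L_n^*$ and its first two derivatives are neither identically distributed (the weights $\tilde{\sigma}_t(\hat{\theta}_n)$ and $\hat{D}_t$ depend on $t$) nor formed from the exact volatility (the discrepancy $\sigma_t-\tilde{\sigma}_t$ must be absorbed through Assumption~\ref{as:4.4}), and all of it has to hold for almost every trajectory of the original sample so as to be compatible with the conditional-distributional statements. Once Lemmas~\ref{lem:4.6} and \ref{lem:4.7} are available, assembling Proposition~\ref{prop:4.1} is essentially bookkeeping.
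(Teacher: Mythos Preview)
Your proposal is correct and follows essentially the same route as the paper: the discussion immediately preceding Proposition~\ref{prop:4.1} is precisely the Taylor expansion of the bootstrap score around $\hat{\theta}_n$, with Lemma~\ref{lem:4.6} handling the Hessian, Lemma~\ref{lem:4.7} supplying the conditional CLT for the score $\tfrac{1}{\sqrt{n}}\sum_{t=1}^n\hat{D}_t(\eta_t^{*2}-1)$, and Lemma~\ref{lem:4.5} providing the prerequisite bootstrap consistency $\hat{\theta}_n^*\overset{p^*}{\to}\theta_0$ a.s.\ needed to control $\breve{\theta}_n$. The only noteworthy difference is in how the consistency step is executed: you sketch a ULLN $\sup_{\theta\in\Theta}|L_n^*(\theta)-\bar{L}_n^*(\theta)|\overset{p^*}{\to}0$ plus argmax, whereas the paper's Lemma~\ref{lem:4.5} avoids a full uniform law (which would have to cope with the unbounded ratio $\sigma_t^2(\hat{\theta}_n)/\sigma_t^2(\theta)$ over all of $\Theta$) and instead bounds the criterion difference $L_n^*(\theta)-L_n^*(\hat{\theta}_n)$ directly via a truncation device $\mathbbm{1}_{\{\sigma_t^2(\hat{\theta}_n)/\sigma_t^2(\theta)>M\}}$ and a finite cover of the complement of a neighborhood of $\theta_0$.
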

Establishing the asymptotic validity of the bootstrap for the second part appears challenging since the bootstrap innovations are drawn from the discrete distribution $\hat{\mathbbm{F}}_n$. To overcome this issue we rely on arguments employed by \cite{bahadur1966note} and \cite{berkes2003limit}. The following theorem states the paper's main result.
\begin{theorem}\textit{(Bootstrap consistency)}
\label{thm:4.3}
Suppose Assumptions \ref{as:4.1}--\ref{as:4.10} hold with $a=\pm 12$, $b=12$ and $c=6$. Then, we have
\begin{align*}
      \begin{pmatrix}
      \sqrt{n}(\hat{\theta}_n^*-\hat{\theta}_n)\\
    \sqrt{n}(\hat{\xi}_{n,\alpha} - \hat{\xi}_{n,\alpha}^*)
      \end{pmatrix}
\overset{d^*}{\to}N\big(0, \Sigma_\alpha\big)
\end{align*}
in probability.
\end{theorem}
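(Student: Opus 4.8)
The plan is to combine Proposition~\ref{prop:4.1}, the asymptotic linearization of $\sqrt{n}(\hat\xi_{n,\alpha}-\hat\xi_{n,\alpha}^*)$ derived in the preceding discussion, and the joint conditional CLT from Lemma~\ref{lem:4.7}. Concretely, the text has already reduced the proof to showing that the $(r+1)$-dimensional random vector
\begin{align*}
\begin{pmatrix}
\frac{1}{\sqrt{n}}\sum_{t=1}^n \hat{D}_t\big(\eta_t^{*2}-1\big)\\
\frac{1}{\sqrt{n}}\sum_{t=1}^n\big(\mathbbm{1}_{\{\eta_t^*< \hat{\xi}_{n,\alpha}\}}-\alpha\big)
\end{pmatrix}
\end{align*}
converges in conditional distribution (in probability) to a centred Gaussian law with a specific covariance matrix, and then applying the continuous linear map
\begin{align*}
M=\begin{pmatrix}
\frac{1}{2}J^{-1}&O_{r\times 1}\\
\frac{1}{2}\xi_\alpha\Omega'J^{-1} & \frac{1}{f(\xi_\alpha)}
\end{pmatrix}.
\end{align*}
So the first step is simply to invoke Lemma~\ref{lem:4.7} to get this joint convergence with limiting covariance
\begin{align*}
\Gamma=\begin{pmatrix}
(\kappa-1)J & 2p_\alpha\,\Omega\\
2p_\alpha\,\Omega' & \alpha(1-\alpha)
\end{pmatrix},
\end{align*}
where the off-diagonal block records the conditional covariance between $\eta_t^{*2}$ and $\mathbbm{1}_{\{\eta_t^*<\hat\xi_{n,\alpha}\}}$, which converges a.s.\ to $\mathrm{Cov}(\eta_t^2,\mathbbm{1}_{\{\eta_t<\xi_\alpha\}})=\EE[\eta_t^2\mathbbm{1}_{\{\eta_t<\xi_\alpha\}}]-\alpha=p_\alpha$ (using $\EE[\eta_t^2]=1$ and $\frac1n\sum\hat\eta_t^2\to1$ from Remark~\ref{rem:4.1}).

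The second step is the computation $M\Gamma M' = \Sigma_\alpha$. This is the routine but essential algebraic verification: the top-left block gives $\frac14 J^{-1}(\kappa-1)J J^{-1}=\frac{\kappa-1}{4}J^{-1}$; the off-diagonal block produces $\frac12 J^{-1}\big((\kappa-1)J\cdot\frac12\xi_\alpha J^{-1}\Omega + 2p_\alpha\Omega\cdot\frac{1}{f(\xi_\alpha)}\big)=\xi_\alpha\frac{\kappa-1}{4}J^{-1}\Omega+\frac{p_\alpha}{f(\xi_\alpha)}J^{-1}\Omega$, which one must then identify with $\lambda_\alpha J^{-1}\Omega$ after recalling $\Omega'J^{-1}\Omega=1$ (Assumption~\ref{as:4.10}); and the bottom-right entry expands to $\xi_\alpha^2\frac{\kappa-1}{4}\Omega'J^{-1}\Omega + \frac{\xi_\alpha p_\alpha}{f(\xi_\alpha)}\Omega'J^{-1}\Omega\cdot 2\cdot\frac12 + \frac{\alpha(1-\alpha)}{f^2(\xi_\alpha)}$, which collapses to $\zeta_\alpha$ upon using $\Omega'J^{-1}\Omega=1$. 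Care is needed to keep track of the factor-of-two bookkeeping between $p_\alpha$ in the covariance and $\lambda_\alpha$, $\zeta_\alpha$ in $\Sigma_\alpha$.

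The third step is to assemble the pieces rigorously: the display preceding the theorem states
\begin{align*}
\begin{pmatrix}
\sqrt{n}(\hat\theta_n^*-\hat\theta_n)\\
\sqrt{n}(\hat\xi_{n,\alpha}-\hat\xi_{n,\alpha}^*)
\end{pmatrix}
= M\begin{pmatrix}
\frac{1}{\sqrt{n}}\sum_{t=1}^n \hat{D}_t(\eta_t^{*2}-1)\\
\frac{1}{\sqrt{n}}\sum_{t=1}^n(\mathbbm{1}_{\{\eta_t^*<\hat\xi_{n,\alpha}\}}-\alpha)
\end{pmatrix}+o_{p^*}(1)
\end{align*}
in probability, so by the bootstrap analogue of Slutsky's lemma (the $o_{p^*}(1)$ term is negligible in probability) and the bootstrap continuous-mapping theorem applied to the linear map $M$, the left side converges in conditional distribution to $N(0,M\Gamma M')=N(0,\Sigma_\alpha)$ in probability. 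One should briefly note that ``$\overset{d^*}{\to}$ almost surely'' (as in Proposition~\ref{prop:4.1}) is stronger than, hence implies, ``$\overset{d^*}{\to}$ in probability,'' so the two input modes of convergence are compatible and the weakest common mode — convergence in probability — is what propagates to the conclusion.

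The main obstacle is not in this final assembly, which is short, but is entirely outsourced to the lemmas: establishing the joint conditional CLT of Lemma~\ref{lem:4.7} (in particular verifying a Lindeberg condition for the triangular array of bootstrap summands, using the finite fourth moment of $\eta_t$ and a.s.\ convergence of the sample moments of the residuals) and, upstream, controlling the perturbation terms $Y_n^*$, $I_n^*(z)$, $J_{n,1}^*(z)$, $J_{n,2}^*(z)$ via Lemmas~\ref{lem:4.8}--\ref{lem:4.10}, where the discreteness of $\hat{\mathbbm{F}}_n$ and the Berkes–Horváth–Kokoszka-style monotonicity argument (Assumption~\ref{as:4.8}) do the heavy lifting. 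Granting those lemmas, the proof of Theorem~\ref{thm:4.3} itself is just the covariance bookkeeping plus one invocation each of the bootstrap continuous-mapping and Slutsky results.
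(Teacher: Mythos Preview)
Your approach is exactly the paper's: the text preceding the theorem already establishes the linear representation with matrix $M$ and remainder $o_{p^*}(1)$, and the proof of Theorem~\ref{thm:4.3} is then literally one line---``Employing Lemma~\ref{lem:4.7} leads to the paper's main result''---i.e.\ apply the conditional CLT and read off $M\Upsilon_\alpha M'=\Sigma_\alpha$.

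There is, however, a computational slip in your covariance bookkeeping. Lemma~\ref{lem:4.7} gives the off-diagonal block of the limiting covariance as $p_\alpha\Omega$, not $2p_\alpha\Omega$; you yourself correctly identify $\Cov(\eta_t^2,\mathbbm{1}_{\{\eta_t<\xi_\alpha\}})=p_\alpha$ and then insert an unexplained extra factor of~$2$. With the correct $\Upsilon_\alpha$ the off-diagonal of $M\Upsilon_\alpha M'$ is
\[
\tfrac{1}{2}J^{-1}\Big((\kappa-1)J\cdot\tfrac{1}{2}\xi_\alpha J^{-1}\Omega+p_\alpha\Omega\cdot\tfrac{1}{f(\xi_\alpha)}\Big)
=\xi_\alpha\tfrac{\kappa-1}{4}J^{-1}\Omega+\tfrac{p_\alpha}{2f(\xi_\alpha)}J^{-1}\Omega
=\lambda_\alpha J^{-1}\Omega
\]
directly, with no appeal to $\Omega'J^{-1}\Omega=1$ (that identity is only needed for the scalar bottom-right entry). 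Your version yields coefficient $\frac{p_\alpha}{f(\xi_\alpha)}$ instead of $\frac{p_\alpha}{2f(\xi_\alpha)}$, which is not $\lambda_\alpha$, and invoking $\Omega'J^{-1}\Omega=1$ cannot repair a scalar mismatch in front of $J^{-1}\Omega$. The same error would double the cross term in the bottom-right entry, giving $\frac{2\xi_\alpha p_\alpha}{f(\xi_\alpha)}$ rather than $\frac{\xi_\alpha p_\alpha}{f(\xi_\alpha)}$, so your ``$2\cdot\tfrac12$'' cancellation there is not internally consistent with your stated $\Gamma$. Once you drop the spurious factor of~$2$, the algebra goes through cleanly and the proof is complete.
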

Theorem \ref{thm:4.3} is useful to validate the bootstrap for the conditional VaR estimator. For the asymptotic behavior of the conditional VaR estimator we refer to \eqref{eq:4.3.9} and the text around it. The following corollary is established.
\begin{corollary}
\label{cor:4.1}
Under the assumptions of Theorem \ref{thm:4.3} the conditional distribution of $\sqrt{n}\big(\reallywidehat{VaR}_{n,\alpha}^{*}-\reallywidehat{VaR}_{n,\alpha}\big)$ given $\mathcal{F}_n$ and \eqref{eq:4.3.9} given $\mathcal{F}_n$ merge in probability.
\end{corollary}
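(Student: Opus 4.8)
The plan is to obtain a stochastic expansion of $\sqrt n\big(\widehat{VaR}_{n,\alpha}^{*}-\widehat{VaR}_{n,\alpha}\big)$ by applying a conditional delta method to the map $(\theta,\xi)\mapsto-\xi\,\tilde\sigma_{n+1}(\theta)$, which is twice continuously differentiable in $\theta$ by Assumption \ref{as:4.4}(\ref{as:4.4.2}) and linear in $\xi$, and then to invoke Theorem \ref{thm:4.3}. Expanding $\tilde\sigma_{n+1}$ to second order around $\hat\theta_n$ and collecting terms gives
\begin{align*}
\sqrt n\big(\widehat{VaR}_{n,\alpha}^{*}-\widehat{VaR}_{n,\alpha}\big)
=\hat c_n'\begin{pmatrix}\sqrt n(\hat\theta_n^{*}-\hat\theta_n)\\ \sqrt n(\hat\xi_{n,\alpha}-\hat\xi_{n,\alpha}^{*})\end{pmatrix}+\sqrt n\,R_n,
\qquad
\hat c_n=\begin{pmatrix}-\hat\xi_{n,\alpha}\,\partial\tilde\sigma_{n+1}(\hat\theta_n)/\partial\theta\\ \tilde\sigma_{n+1}(\hat\theta_n)\end{pmatrix},
\end{align*}
where $\sqrt n\,R_n$ collects the curvature term $-\tfrac12\hat\xi_{n,\alpha}\sqrt n\,(\hat\theta_n^{*}-\hat\theta_n)'\,\partial^2\tilde\sigma_{n+1}(\breve{\theta}_n)/\partial\theta\partial\theta'\,(\hat\theta_n^{*}-\hat\theta_n)$ with $\breve{\theta}_n$ between $\hat\theta_n$ and $\hat\theta_n^{*}$, and the cross term $-\sqrt n(\hat\xi_{n,\alpha}^{*}-\hat\xi_{n,\alpha})\,\partial\tilde\sigma_{n+1}(\hat\theta_n)/\partial\theta'(\hat\theta_n^{*}-\hat\theta_n)$ (up to strictly smaller-order pieces).

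First I would verify $\sqrt n\,R_n=o_{p^{*}}(1)$ in probability. By Proposition \ref{prop:4.1}, $\sqrt n(\hat\theta_n^{*}-\hat\theta_n)=O_{p^{*}}(1)$ a.s.; by Theorem \ref{thm:4.3}, $\sqrt n(\hat\xi_{n,\alpha}^{*}-\hat\xi_{n,\alpha})=O_{p^{*}}(1)$ in probability, and since $\hat\xi_{n,\alpha}\to\xi_\alpha$ a.s.\ by Theorem \ref{thm:4.1} also $\hat\xi_{n,\alpha}^{*}=O_{p^{*}}(1)$. It then remains to note that $\partial\tilde\sigma_{n+1}(\hat\theta_n)/\partial\theta$ and $\sup_{\theta\in\mathscr{V}(\theta_0)}\|\partial^2\tilde\sigma_{n+1}(\theta)/\partial\theta\partial\theta'\|$ form tight families in $n$: the analogous quantities built from $\sigma_{n+1}$ are strictly stationary with finite first moment under Assumption \ref{as:4.9} (with $b=12$, $c=6$) together with Assumptions \ref{as:4.2}--\ref{as:4.3}, while the gap between $\tilde\sigma_{n+1}$, $\sigma_{n+1}$ and their derivatives is bounded by $C_1\rho^{n+1}\to0$ by Assumption \ref{as:4.4}. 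Multiplying an $O_{p^{*}}(n^{-1/2})$ bootstrap factor by such $O_p(1)$ terms makes each summand of $\sqrt n\,R_n$ negligible, so $\sqrt n(\widehat{VaR}_{n,\alpha}^{*}-\widehat{VaR}_{n,\alpha})=\hat c_n'V_n+o_{p^{*}}(1)$ in probability, where $V_n=\big(\sqrt n(\hat\theta_n^{*}-\hat\theta_n)',\ \sqrt n(\hat\xi_{n,\alpha}-\hat\xi_{n,\alpha}^{*})\big)'$.

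Next, since $\hat c_n$ is $\mathcal F_n$-measurable and, by Theorem \ref{thm:4.1}, a mean value expansion of $\sigma_{n+1}$ in $\theta$, and the same tightness/approximation arguments, one has $\hat c_n-c_{n+1}=o_p(1)$ with $c_{n+1}=\big(-\xi_\alpha\,\partial\sigma_{n+1}(\theta_0)/\partial\theta,\ \sigma_{n+1}\big)'$, I would combine Theorem \ref{thm:4.3} with a conditional Slutsky / continuous-mapping step: conditionally on $\mathcal F_n$ the functional $v\mapsto\hat c_n'v$ is deterministic and Lipschitz with constant $\|\hat c_n\|$, and $\{\hat c_n\}$ is tight, so merging of $\mathcal L(V_n\mid\mathcal F_n)$ with $N(0,\Sigma_\alpha)$ (Theorem \ref{thm:4.3}) transfers to merging of $\mathcal L(\hat c_n'V_n\mid\mathcal F_n)$ with $N\big(0,\hat c_n'\Sigma_\alpha\hat c_n\big)$, hence---using $\hat c_n-c_{n+1}=o_p(1)$ and boundedness of the variances involved---with $N\big(0,c_{n+1}'\Sigma_\alpha c_{n+1}\big)$, which is precisely \eqref{eq:4.3.9}. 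Adding the $o_{p^{*}}(1)$ remainder leaves the bounded--Lipschitz distance unaffected (a conditional $o_{p^{*}}(1)$ perturbation contributes at most $\EE^{*}[\min\{2,|\sqrt n\,R_n|\}]\overset{p}{\to}0$ to $d_{BL}$), and the triangle inequality for $d_{BL}$ then delivers the claim.

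The step I expect to be the main obstacle is the conditional continuous-mapping argument for the triangular array $\hat c_n$: one must be careful that merging in $d_{BL}$ is preserved under pushforward by the random, $n$-dependent but $\mathcal F_n$-measurable Lipschitz map $v\mapsto\hat c_n'v$, which is where tightness of $\{\hat c_n\}$ (to control the Lipschitz constant on a high-probability event) and the plug-in consistency $\hat c_n-c_{n+1}=o_p(1)$ (to replace the sample coefficients by the population ones in \eqref{eq:4.3.9}) both enter. The remainder bound itself is routine once the uniform-in-$n$ tightness of the second-derivative term is established, which is exactly what the higher-moment Assumption \ref{as:4.9} and the geometric decay in Assumption \ref{as:4.4} are designed to provide.
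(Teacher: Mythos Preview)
Your proposal is correct and is precisely the argument the paper defers to when it writes that the proof ``is along the lines of \citeauthor{beutner2017justification} (\citeyear{beutner2017justification}, proof of Theorem 2)'': a Taylor/delta-method expansion of $(\theta,\xi)\mapsto-\xi\,\tilde\sigma_{n+1}(\theta)$, negligibility of the second-order remainder via $O_{p^*}(1)$ bounds and tightness of the (stationary) $\sigma_{n+1}$-derivatives, and then a conditional continuous-mapping/Slutsky step transferring Theorem~\ref{thm:4.3} through the $\mathcal F_n$-measurable linear map $v\mapsto\hat c_n'v$, followed by $\hat c_n-c_{n+1}=o_p(1)$ to land on \eqref{eq:4.3.9}. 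The only point worth flagging is terminological: the ``tightness'' you invoke for $\partial\tilde\sigma_{n+1}(\hat\theta_n)/\partial\theta$ and the Hessian bound is really identical-distribution of the $\sigma_{n+1}(\theta_0)$-quantities under Assumption~\ref{as:4.2} combined with the $C_1\rho^{n+1}$ approximation from Assumption~\ref{as:4.4}, which is exactly what you say but is the substantive input rather than a generic $O_p(1)$.
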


\subsection{Bootstrap Confidence Intervals for VaR}
\label{sec:4.4.3}

Clearly, the VaR evaluation in \eqref{eq:4.3.5} is subject to estimation risk that needs to be quantified. We propose the following algorithm to obtain approximately $100(1-\gamma)\%$ confidence intervals.
\begin{algorithm}{\textit{(Fixed-design Bootstrap Confidence Intervals for VaR)}}
\label{alg:4.2}
\begin{enumerate}
\item Acquire a set of $B$ bootstrap replicates, i.e.  $\reallywidehat{VaR}_{n,\alpha}^{*(b)}$ for $b=1,\dots,B$, by repeating Algorithm \ref{alg:4.1}.

\item[2.1.] Obtain the \textit{equal-tailed percentile} (EP) interval
\begin{align}
\label{eq:4.4.8}
\bigg[\reallywidehat{VaR}_{n,\alpha}-\frac{1}{\sqrt{n}}\hat{G}_{n,B}^{*-1}(1-\gamma/2),\:\reallywidehat{VaR}_{n,\alpha}-\frac{1}{\sqrt{n}}\hat{G}_{n,B}^{*-1}(\gamma/2)\bigg]
\end{align}
with $\hat{G}_{n,B}^{*-1}(\cdot)$ being the quantile function (generalized inverse) of $\hat{G}_{n,B}^*(x)=\frac{1}{B}\sum_{b=1}^B \mathbbm{1}_{\big\{\sqrt{n}\big(\reallywidehat{VaR}_{n,\alpha}^{*(b)}-\reallywidehat{VaR}_{n,\alpha}\big)\leq x\big\}}$. 
\item[2.2.] Calculate the \textit{reversed-tails} (RT) interval
\begin{align}
\label{eq:4.4.9}
\bigg[\reallywidehat{VaR}_{n,\alpha}+\frac{1}{\sqrt{n}}\hat{G}_{n,B}^{*-1}(\gamma/2),\reallywidehat{VaR}_{n,\alpha}+\frac{1}{\sqrt{n}}\hat{G}_{n,B}^{*-1}(1-\gamma/2)\bigg].
\end{align}
\item[2.3.] Compute the \textit{symmetric} (SY) interval
\begin{align}
\label{eq:4.4.10}
\bigg[\reallywidehat{VaR}_{n,\alpha}-\frac{1}{\sqrt{n}}\hat{H}_{n,B}^{*-1}(1-\gamma),\:\reallywidehat{VaR}_{n,\alpha}+\frac{1}{\sqrt{n}}\hat{H}_{n,B}^{*-1}(1-\gamma)\bigg]
\end{align}
with $\hat{H}_{n,B}^{*-1}(\cdot)$ being the quantile function (generalized inverse) of $\hat{H}_{n,B}^*(x)=\frac{1}{B}\sum_{b=1}^B \mathbbm{1}_{\big\{\sqrt{n}\big|\reallywidehat{VaR}_{n,\alpha}^{*(b)}-\reallywidehat{VaR}_{n,\alpha}\big|\leq x\big\}}$.
\end{enumerate}
\end{algorithm}
The interval in \eqref{eq:4.4.8} is obtained by the EP method, that is frequently encountered in the bootstrap literature. It is obtained from the (typically) infeasible equal-tailed confidence interval
\begin{align*}
\bigg[\reallywidehat{VaR}_{n,\alpha}-\frac{1}{\sqrt{n}}G_{n}^{-1}(1-\gamma/2),\:\reallywidehat{VaR}_{n,\alpha}-\frac{1}{\sqrt{n}}G_{n}^{-1}(\gamma/2)\bigg],
\end{align*}
where $G_{n}^{-1}$ is  the (unknown) quantile function of  $\sqrt{n}(\reallywidehat{VaR}_{n,\alpha}-{VaR}_{n,\alpha})$, which is replaced by its bootstrap analogue $\hat{G}_{n,B}^{*-1}$. The same reasoning leads to the SY interval but with test statistic $\sqrt{n}|(\reallywidehat{VaR}_{n,\alpha}-{VaR}_{n,\alpha})|$ instead of $\sqrt{n}(\reallywidehat{VaR}_{n,\alpha}-{VaR}_{n,\alpha})$ which makes it also clear that the interval in \eqref{eq:4.4.10} presumes symmetry for rationalizing its construction.
``Flipping around" the tails of the ET interval leads to the RT interval given in \eqref{eq:4.4.9}. Clearly, the RT and the EP have equal length. Whereas \eqref{eq:4.4.9} in its current form emphasizes the interval's name, RT type intervals are frequently reported in their reduced form, i.e.~the lower and upper bound of \eqref{eq:4.4.9} simplify to the $\gamma/2$ and $1-\gamma/2$ quantiles of $\frac{1}{B}\sum_{b=1}^B \mathbbm{1}_{\big\{\reallywidehat{VaR}_{n,\alpha}^{*(b)}\leq x\big\}}$, respectively. RT intervals can either be motivated by the results of \cite{falk1991coverage}\footnote{In a random sample setting \cite{falk1991coverage} prove that the RT bootstrap interval for quantiles has asymptotically greater coverage than the corresponding EP bootstrap interval. For additional insights we refer to \cite{hall1988bootstrap}.} or as the bootstrap analogue of 
the (uncentered) statistic $\reallywidehat{VaR}_{n,\alpha}$. 
It is worth mentioning that RT type bootstrap intervals for the VaR are also constructed in reduced form by \cite{christoffersen2005estimation}. Regardless of whether we use an EP, RT or SY interval the meaning is always the same: Given the past up to and including time $n$ the probability that the conditional VaR for period $n+1$ is contained in the intervals is approximately equal to $100(1-\gamma)\%$.

\subsection{Bootstrap Extensions} 
\label{sec:4.4.4}

The asymptotic normality result in Theorem \ref{thm:4.2} as well as the bootstrap consistency in Theorem \ref{thm:4.3} are derived, inter alia, under the assumption that the innovations are iid. In case this is not believed to be true -- e.g.~if the suggested specification tests mentioned in Section \ref{sec:4.3} indicate otherwise -- asymptotic normality of $\sqrt{n}(\hat{\theta}_n-\theta_0)$ can still be established under regularity assumptions.  \cite{Escanciano2009} studies the QML estimator under some dependence among the $\eta_t$'s while imposing  slightly stronger (moment) conditions, whereas the related paper of \cite{Linton2010} investigates estimators in a GARCH(1,1) with dependent errors but under weaker moment conditions. A multivariate version of the dependence condition in \cite{Escanciano2009} can be found in \cite{FrancqZakoian2016}.

The bootstrap method presented in Algorithm \ref{alg:4.1} is contingent on the iid assumption. In general, one cannot expect bootstrap procedures that are based on an iid assumption to be robust against deviations from this assumption; for a well-known example, one may refer to \cite{GONCALVES2004}. Alternative bootstrap techniques may be used if the iid condition is thought to be unrealistic. A variety of bootstrap methods exist that can capture dependence and non-identical random variables; see e.g.~\cite{Lahiri03} for a broad overview. The wild or multiplier bootstrap \citep{Mammen93, DavidsonFlachaire08} is particularly suited for dealing with non-identical variables, but does not capture dependence, unless it is properly modified \citep{Shao10, FSU20}. However, it remains an open question which bootstrap method combined with the fixed design approach leads to valid bootstrap procedures.

\section{Numerical Illustration}
\label{sec:4.5}

\subsection{Monte Carlo Experiment}
\label{sec:4.5.1}

In order to evaluate the finite sample performance of the proposed bootstrap procedure a Monte Carlo experiment is conducted. We confine ourselves to four conditional volatility specifications related to Examples \ref{ex:4.1} and \ref{ex:4.2} in Section \ref{sec:4.2}.
The first two are GARCH($1,1$) parameterizations with 
\begin{enumerate}[(i)]
   \item high persistence: $\theta_0 = (\omega_0,\alpha_0,\beta_0)’= \big(0.05\times 20^2/252,0.15,0.8\big)'$;
   \item low persistence: $\theta_0  = (\omega_0,\alpha_0,\beta_0)’ = \big(0.05\times 20^2/252,0.4,0.55\big)'$,
\end{enumerate}
which are similar to the specifications of \citeauthor{gao2008estimation} (\citeyear{gao2008estimation}, Section 4) and \citeauthor{spierdijk2016confidence} (\citeyear{spierdijk2016confidence}, Section 4.2). In addition, we study two T-GARCH($1,1$) scenarios likewise associated with high and low persistence:
\begin{enumerate}[(i)]
\setcounter{enumi}{2}
   \item high persistence: $\theta_0 = (\omega_0,\alpha_0^+, \alpha_0^-, \beta_0)’ = \big(0.05\times 20/\sqrt{252},0.05,0.10,0.8\big)'$;
   \item low persistence: $\theta_0 = (\omega_0,\alpha_0^+, \alpha_0^-, \beta_0)’ = \big(0.05\times 20/\sqrt{252},0.1,0.3,0.55\big)'$.
\end{enumerate}
Within the experiment the VaR level takes value $\alpha =0.05$ and there are two possible innovation distributions: a Student-$t$ distribution with $6$ degrees of freedom (df) and the standard normal distribution.\footnote{The Student-t innovations are appropriately standardized to satisfy $\EE\eta_t^2=1$.} We consider four estimation sample sizes, $n \in \{250; 500; 1{,}000; 5{,}000\}$, whereas the number of bootstrap replicates is fixed and equal to $B=999$. For each model version we simulate $S=10{,}000$ independent Monte Carlo trajectories. 

The numerical optimization of the log-likelihood function is carried out employing the build-in function \textit{fmincon} and running time is reduced by parallel computing using \textit{parfor}. The code is available on the website of the third author, and simulation results for a VaR level of $\alpha=0.01$ can be found in the working paper version.

\captionsetup[table]{labelformat=simple, labelsep=space}

\bgroup
\def\arraystretch{0.95}
\begin{table}[tbp] 
\begin{center}
\caption{\textbf{Fixed-design} bootstrap confidence intervals and asymptotic interval for \textbf{GARCH($1$,$1$)} with Student-t innovations} 
\label{tab:4.1}
\end{center}
\centering
\resizebox{\textwidth}{!}{\begin{tabular}{rc:ccc:ccc}
\hline \hline
\multicolumn{1}{c}{\begin{tabular}[c]{@{}c@{}}Sample\\ Size\end{tabular}} &                         & \begin{tabular}[c]{@{}c@{}}Average \\ coverage\end{tabular} & \begin{tabular}[c]{@{}c@{}}Av. coverage\\ below/above\end{tabular} & \begin{tabular}[c]{@{}c@{}}Average\\ length\end{tabular} & \begin{tabular}[c]{@{}c@{}}Average\\ coverage\end{tabular} & \begin{tabular}[c]{@{}c@{}}Av. coverage\\ below/above\end{tabular} & \begin{tabular}[c]{@{}c@{}}Average\\ length\end{tabular} \\ \hline
\multicolumn{1}{c}{} & & & & & & & \\[-9pt]
\multicolumn{1}{c}{} & & \multicolumn{6}{c}{} \\[-8pt]
\multicolumn{1}{c}{} & & \multicolumn{3}{c}{low persistence} & \multicolumn{3}{c}{high persistence} \\
250 & EP & 81.2 & 6.81/11.99 & 0.595 & 79.57 & 7.28/13.15 & 0.797 \\
& RT & 90.73 & 2.98/6.29 & 0.595 & 90.22 & 3.28/6.50 & 0.797 \\
& SY & 88.87 & 3.38/7.75 & 0.620 & 88.28 & 3.48/8.24 & 0.829 \\
& AS & 86.35 & 3.37/10.28 & 0.617 & 85.85 & 3.93/10.22 & 0.803 \\
\hdashline
500 & EP & 84.15 & 6.13/9.72 & 0.431 & 84.07 & 6.20/9.73 & 0.581 \\
& RT & 90.65 & 3.83/5.52 & 0.431 & 90.82 & 3.86/5.32 & 0.581 \\
& SY & 89.71 & 3.84/6.45 & 0.443 & 89.52 & 3.98/6.50 & 0.595 \\
& AS & 88.17 & 3.80/8.03 & 0.445 & 87.4 & 4.34/8.26 & 0.573 \\
\hdashline
1,000 & EP & 85.53 & 5.88/8.59 & 0.306 & 85.73 & 5.64/8.63 & 0.419 \\
& RT & 90.56 & 4.10/5.34 & 0.306 & 90.41 & 4.07/5.52 & 0.419 \\
& SY & 89.68 & 4.10/6.22 & 0.311 & 89.72 & 4.08/6.20 & 0.425 \\
& AS & 88.57 & 4.11/7.32 & 0.314 & 88.09 & 4.43/7.48 & 0.412 \\
\hdashline
5,000 & EP &87.59 & 5.72/6.69 & 0.144 & 87.97 & 5.35/6.68 & 0.191 \\
& RT & 90.43 & 4.58/4.99 & 0.144 & 89.77 & 4.82/5.41 & 0.191 \\
& SY & 89.69 & 4.74/5.57 & 0.145 & 89.69 & 4.56/5.75 & 0.192 \\
& AS & 89.62 & 4.57/5.81 & 0.146 & 88.86 & 4.84/6.30 & 0.188 \\
\hline \hline
\end{tabular}
}
\vspace{0.15cm}
\caption*{Table \ref{tab:4.1} reports distinct features of the \textbf{fixed-design} bootstrap confidence intervals and the asymptotic interval for the conditional VaR at \textbf{level} $\bm{\alpha=0.05}$ with \textbf{nominal coverage} $\bm{1-\gamma=90\%}$. For each interval type and different sample sizes ($n$), the interval's average coverage rates (in $\%$), the average rate of the conditional VaR being below/above the interval (in $\%$) and the interval's average length are tabulated. The bootstrap intervals are based on $B=999$ bootstrap replications and the averages are computed using $S=10{,}000$ simulations. The results are for the low (left part) and high (right part) persistence parametrization of a \textbf{GARCH(1,1)} with (normalized) Student-t innovations ($6$ df).
}
\end{table}
 \egroup

\captionsetup[table]{labelformat=simple, labelsep=space}

\bgroup
\def\arraystretch{0.95}
\begin{table}[tbp] 
\begin{center}
\caption{\textbf{Fixed-design} bootstrap confidence intervals and asymptotic interval for \textbf{T-GARCH($1$,$1$)} with Student-t innovations} 
\label{tab:4.1a}
\end{center}
\centering
\resizebox{\textwidth}{!}{\begin{tabular}{rc:ccc:ccc}
\hline \hline
\multicolumn{1}{c}{\begin{tabular}[c]{@{}c@{}}Sample\\ Size\end{tabular}} &                         & \begin{tabular}[c]{@{}c@{}}Average \\ coverage\end{tabular} & \begin{tabular}[c]{@{}c@{}}Av. coverage\\ below/above\end{tabular} & \begin{tabular}[c]{@{}c@{}}Average\\ length\end{tabular} & \begin{tabular}[c]{@{}c@{}}Average\\ coverage\end{tabular} & \begin{tabular}[c]{@{}c@{}}Av. coverage\\ below/above\end{tabular} & \begin{tabular}[c]{@{}c@{}}Average\\ length\end{tabular} \\ \hline
\multicolumn{1}{c}{} & & & & & & & \\[-9pt]
\multicolumn{1}{c}{} & & \multicolumn{6}{c}{} \\[-8pt]
& \multicolumn{1}{l}{} & \multicolumn{3}{c}{low persistence} & \multicolumn{3}{c}{high persistence} \\
250 & EP &
79.74 & 7.16/13.10 & 0.140 & 78.93 & 7.64/13.43 & 0.289 \\
& RT & 
90.22 & 3.42/6.36 & 0.140 & 90.34 & 2.98/6.68 & 0.289 \\
& SY & 
88.37 & 3.64/7.99 & 0.145 & 88.53 & 3.51/7.96 & 0.302 \\
& AS & 
87.59 & 3.27/9.14 & 0.146 & 87.87 & 3.14/8.99 & 0.304 \\
\hdashline
500 & EP &
82.41 & 6.51/11.08 & 0.104 & 82.12 & 6.66/11.22 & 0.211 \\
& RT & 
90.13 & 4.12/5.75 & 0.104 & 90.83 & 3.40/5.77 & 0.211 \\
& SY & 
88.83 & 4.26/6.91 & 0.106 & 89.35 & 3.50/7.15 & 0.218 \\
& AS & 
89.08 & 3.57/7.35 & 0.108 & 89.39 & 3.28/7.33 & 0.221 \\
\hdashline
1,000 & EP & 
84.98 & 6.09/8.93 & 0.076 & 83.46 & 6.84/9.70 & 0.155 \\
& RT & 
90.16 & 4.63/5.21 & 0.076 & 90.29 & 4.41/5.30 & 0.155 \\
& SY & 
89.37 & 4.33/6.30 & 0.077 & 89.22 & 4.34/6.44 & 0.159 \\
& AS & 
89.53 & 4.06/6.41 & 0.079 & 89.35 & 4.09/6.56 & 0.161 \\
\hdashline
5,000 & EP &
88.37 & 5.06/6.57 & 0.036 & 87.95 & 5.08/6.97 & 0.074 \\
& RT & 
90.72 & 4.72/4.56 & 0.036 & 90.16 & 4.84/5.0 & 0.074 \\
& SY & 
90.52 & 4.44/5.04 & 0.036 & 89.95 & 4.41/5.64 & 0.074 \\
& AS & 
91.23 & 4.03/4.74 & 0.037 & 90.54 & 4.11/5.35 & 0.076 \\
\hline \hline
\end{tabular}
}
\vspace{0.15cm}
\caption*{Table \ref{tab:4.1a} is exactly as Table \ref{tab:4.1} but for a \textbf{T-GARCH(1,1)} with (normalized) Student-t innovations ($6$ df) instead of a GARCH(1,1) with Student-t innovations ($6$ df).}
\end{table}
 \egroup

Table \ref{tab:4.1} and \ref{tab:4.1a} report the results of the three $90\%$--bootstrap intervals for the $5\%$--VaR when the innovation distribution is Student-t (henceforth referred to as baseline) and the model is a GARCH(1,1) and a T-GARCH(1,1), respectively. In both tables the results of the interval \eqref{eq:4.3.10} based on asymptotic (AS) theory are included for comparison, where a Gaussian kernel is utilized together with a bandwidth following \citeauthor{silverman1986density}'s (\citeyear{silverman1986density}) rule-of-thumb. In the GARCH($1,1$) high persistence case (right part of Table 1), we see that the average coverage varies around $90\%$ across all sample sizes for the RT and the SY interval. In contrast, the EP and the AS interval fall short of the nominal $90\%$ by $10.43$ and $4.15$ percentage points (pp), respectively, for small sample size ($n=250$). Nevertheless, their average coverage approaches the nominal value as the sample size increases. Remarkably, for all four intervals the average rate of the conditional VaR being below the interval is considerably less than the average rate of the conditional VaR being above the interval when the sample size is rather small ($n \leq 500$). Regarding the intervals' length, we observe that the SY interval is on average larger than the EP/RT interval. As the sample size increases this gap diminishes and the intervals' average lengths shrink. Considering the low persistent case (left part of Table \ref{tab:4.1}) we find similar results regarding the intervals' average coverage, yet their average lengths turn out to be smaller compared to the high persistent case. This is intuitive as the conditional volatility tends to vary less in the low persistent case. Regarding the T-GARCH($1,1$) in Table \ref{tab:4.1a}, the overall picture is similar as in the GARCH(1,1) case, however the under-coverage in small and medium-sized samples appears to be more extreme for the EP and reduced for the AS interval.

Simulation results for the scenario when the $\eta_t$'s follow a standard normal distribution and when the model is a GARCH(1,1) and a T-GARCH(1,1), respectively, are tabulated in Tables \ref{tab:4.2} and \ref{tab:4.2a} which are given in Appendix \ref{sec add simulation}. Here we only note that, although the error distribution underlying the QMLE is correctly specified in this case, the qualitative results stated above with regard to Table \ref{tab:4.1} persist: the RT and the SY intervals possess accurate coverage rates across sample sizes, whereas the EP and the AS interval exhibit under-coverage in samples of rather small size with different extent. 
Moreover, we observe that the intervals are on average shorter in the Gaussian case than in the baseline case. This seems partially driven by a smaller variance of $\hat{\xi}_{n,\alpha}$; for $\alpha=0.05$ the asymptotic variance $\zeta_\alpha$ in \eqref{eq:4.3.6} is equal to $3.11$ in the Gaussian case compared to $5.72$ in the Student-t case with $6$ degrees of freedom. 

While the small-sample-performance of the AS interval can be explained by its embodied density estimation, the question arises why the EP interval performs worse than the other bootstrap intervals, which seems counter-intuitive at first.  Howbeit the results are in line with the theoretical findings of \citeauthor{falk1991coverage} (\citeyear{falk1991coverage}, unnumbered Corollary, p.\ 488). In a random sample setting they prove that 
the RT bootstrap interval for quantiles has asymptotically greater coverage than the corresponding EP bootstrap interval. The emerging gap\footnote{We neglect their $o(n^{-1/2})$ term. Take note that the theoretical results of \cite{falk1991coverage} are not directly applicable in our setting due to GARCH-type effects.}
\begin{enumerate}
 \item[(i)] tends to be smaller for larger sample sizes,
 \item[(ii)] tends to be larger for more extreme quantiles, and 
 \item[(iii)] tends to vary with the nominal coverage rate in a non-monotonic way.
\end{enumerate}
Table \ref{tab:4.5} presents the average coverage gap between the EP and the RT bootstrap interval of the conditional VaR for the baseline specification as well as for  three deviations from the baseline (change in $F$, $\alpha$ and $\gamma$). For example, in the low persistence GARCH($1,1$) case of the baseline with $n=250$, the average coverage gap amounts to $90.73\%-81.20\%=9.53$pp (see also Table \ref{tab:4.1}). It is striking that all values are positive within Table \ref{tab:4.5}, which highlights the superiority of the RT bootstrap interval over the EP bootstrap interval. Further, it is eminent that average coverage gap tends to decrease with increasing sample size, which supports (i). Comparing columns (1) and (3) we also find that the average coverage gap tends to be larger for the $1\%$--VaR than for the $5\%$--VaR, which gives rise to (ii). Regarding (iii), the result of \citeauthor{falk1991coverage} (\citeyear{falk1991coverage}) suggests that the gap slightly decreases when increasing the nominal coverage from $90\%$ to $95\%$. Such tendency is precisely observed when comparing columns (1) and (4) of Table \ref{tab:4.5}.

\begin{table}[h]
\caption{\textbf{Average gap} between the \textbf{RT and the EP} fixed-design bootstrap intervals for different settings}\label{tab:4.5}
\centering
\begin{tabular}{rcccccccc}
\hline \hline
\begin{tabular}[c]{@{}c@{}}Sample\\ size\end{tabular} & 
(1) & (2) & (3) & (4) & (1) & (2) & (3) & (4) \\ 
\hline
\\
& \multicolumn{8}{c}{Panel I: GARCH($1,1$)} \\
& \multicolumn{4}{c}{low persistence} & \multicolumn{4}{c}{high persistence} \\
250 & 9.53 & 9.07 & 13.26 & 8.41 & 10.65 & 9.63 & 14.79 & 9.30 \\
500 & 6.50 & 6.86 & 10.09 & 5.55 & 6.75 & 6.88 & 9.87 & 5.68 \\
1,000 & 5.03 & 5.02 & 8.87 & 4.38 & 4.68 & 4.85 & 8.60 & 4.24 \\
5,000 & 2.84 & 2.67 & 5.67 & 2.32 & 1.80 & 1.48 & 5.07 & 1.74 \\
& \multicolumn{8}{c}{Panel II: T-GARCH($1,1$)} \\
& \multicolumn{4}{c}{low persistence} & \multicolumn{4}{c}{high persistence} \\
250 & 10.48 & 9.39 & 15.22 & 9.07 & 11.41 & 10.41 & 15.03 & 10.23 \\
500 & 7.72 & 6.11 & 10.84 & 6.73 & 8.71 & 7.94 & 11.55 & 7.86 \\
1,000 & 5.18 & 4.23 & 8.55 & 4.40 & 6.83 & 5.54 & 9.20 & 5.82 \\
5,000 & 2.35 & 1.63 & 4.63 & 1.89 & 2.21 & 2.02 & 5.10 & 1.73 \\
\hline \hline
\end{tabular}
\vspace{0.15cm}
\caption*{Table \ref{tab:4.5} reports the average coverage gap between the RT and the EP fixed-design bootstrap interval in percentage points for different settings and sample sizes. For varying sample sizes ($n$) Panel I presents the results for the low and high persistence parameterization of a GARCH($1,1$), whereas Panel II displays the results for the corresponding T-GARCH($1,1$) processes.\\ 
($1$)  $5\%$--VaR, Student-t innovations and $90\%$ nominal coverage (baseline)\\
($2$)  $5\%$--VaR, Gaussian innovations and $90\%$ nominal coverage\\
($3$) $1\%$--VaR, Student-t innovations and $90\%$ nominal coverage\\
($4$)  $5\%$--VaR, Student-t innovations and $95\%$ nominal coverage}
\end{table}

\bgroup
\def\arraystretch{0.95}
\begin{table}[tbp]
\caption{\textbf{Recursive-design} bootstrap confidence intervals for \textbf{GARCH(1,1)} with Student-t innovations}\label{tab:4.6}  
\centering
\resizebox{\textwidth}{!}{\begin{tabular}{rc:ccc:ccc}
\hline \hline
\multicolumn{1}{c}{\begin{tabular}[c]{@{}c@{}}Sample\\ Size\end{tabular}} &                         & \begin{tabular}[c]{@{}c@{}}Average \\ coverage\end{tabular} & \begin{tabular}[c]{@{}c@{}}Av. coverage\\ below/above\end{tabular} & \begin{tabular}[c]{@{}c@{}}Average\\ length\end{tabular} & \begin{tabular}[c]{@{}c@{}}Average\\ coverage\end{tabular} & \begin{tabular}[c]{@{}c@{}}Av. coverage\\ below/above\end{tabular} & \begin{tabular}[c]{@{}c@{}}Average\\ length\end{tabular} \\ \hline
\multicolumn{1}{c}{} & & & & & & & \\[-9pt]
\multicolumn{1}{c}{} & & \multicolumn{3}{c}{low persistence} & \multicolumn{3}{c}{high persistence} \\
250 & EP &
81.65 & 5.99/12.36 & 0.619 & 80.54 & 5.87/13.59 & 0.859 \\
& RT & 
90.49 & 3.78/5.73 & 0.619 & 90.09 & 4.08/5.83 & 0.859 \\
& SY & 
90.20 & 2.78/7.02 & 0.653 & 90.49 & 2.93/6.58 & 0.918 \\
\hdashline
500 & EP & 
84.66 & 5.55/9.79 & 0.441 & 84.50 & 5.38/10.12 & 0.604 \\
& RT & 
90.39 & 4.29/5.32 & 0.441 & 90.25 & 4.60/5.15 & 0.604 \\
& SY & 
90.62 & 3.36/6.02 & 0.459 & 90.88 & 3.27/5.85 & 0.628 \\
\hdashline
1,000 & EP &
85.78 & 5.40/8.82 & 0.310 & 86.19 & 5.08/8.73 & 0.427 \\
& RT & 
90.24 & 4.47/5.29 & 0.310 & 90.33 & 4.28/5.39 & 0.427 \\
& SY & 
90.30 & 3.76/5.94 & 0.318 & 90.69 & 3.57/5.74 & 0.438 \\
\hdashline
5,000 & EP &
87.68 & 5.60/6.72 & 0.144 & 87.75 & 5.35/6.90 & 0.191 \\
& RT & 
90.28 & 4.72/5.00 & 0.144 & 89.96 & 4.84/5.20 & 0.191 \\
& SY & 
90.14 & 4.49/5.37 & 0.146 & 89.69 & 4.59/5.72 & 0.193 \\
\hline \hline
\end{tabular}
}
\vspace{0.15cm}
\caption*{Table \ref{tab:4.6} reports distinct features of the \textbf{recursive-design} bootstrap confidence intervals for the conditional VaR at \textbf{level} $\bm{\alpha=0.05}$ with \textbf{nominal coverage} $\bm{1-\gamma=90\%}$. For each interval type and different sample sizes ($n$), the interval's average coverage rates (in $\%$), the average rate of the conditional VaR being below/above the interval (in $\%$) and the interval's average length are tabulated. The bootstrap intervals are based on $B=999$ bootstrap replications and the averages are computed using $S=10{,}000$ simulations. The results are for the low (left part) and the high (right part) persistence parametrization of a \textbf{GARCH(1,1)} with Student-$t$ innovations ($6$ df). 
}
\end{table}
 \egroup

\bgroup
\def\arraystretch{0.95}
\begin{table}[tbp]
\caption{\textbf{Recursive-design} bootstrap confidence intervals for \textbf{T-GARCH(1,1)} with Student-t innovations}\label{tab:4.6a}
\centering
\resizebox{\textwidth}{!}{\begin{tabular}{rc:ccc:ccc}
\hline \hline
\multicolumn{1}{c}{\begin{tabular}[c]{@{}c@{}}Sample\\ Size\end{tabular}} &                         & \begin{tabular}[c]{@{}c@{}}Average \\ coverage\end{tabular} & \begin{tabular}[c]{@{}c@{}}Av. coverage\\ below/above\end{tabular} & \begin{tabular}[c]{@{}c@{}}Average\\ length\end{tabular} & \begin{tabular}[c]{@{}c@{}}Average\\ coverage\end{tabular} & \begin{tabular}[c]{@{}c@{}}Av. coverage\\ below/above\end{tabular} & \begin{tabular}[c]{@{}c@{}}Average\\ length\end{tabular} \\ \hline
\multicolumn{1}{c}{} & & & & & & & \\[-9pt]
& \multicolumn{1}{l}{} & \multicolumn{3}{c}{low persistence} & \multicolumn{3}{c}{high persistence} \\
250 & EP &
79.62 & 6.79/13.59 & 0.143 & 79.12 & 7.35/13.53 & 0.294 \\
& RT & 
90.54 & 3.49/5.97 & 0.143 & 90.98 & 2.67/6.35 & 0.294 \\
& SY & 
89.31 & 3.33/7.36 & 0.150 & 89.17 & 3.10/7.73 & 0.309 \\
\hdashline
500 & EP & 
82.33 & 6.25/11.42 & 0.105 & 82.28 & 6.56/11.16 & 0.214 \\
& RT & 
90.27 & 4.26/5.47 & 0.105 & 91.29 & 3.37/5.34 & 0.214 \\
& SY & 
89.39 & 3.97/6.64 & 0.109 & 89.90 & 3.36/6.74 & 0.223 \\
\hdashline
1,000 & EP & 
85.22 & 5.81/8.97 & 0.077 & 83.79 & 6.58/9.63 & 0.157 \\
& RT & 
89.97 & 4.88/5.15 & 0.077 & 90.29 & 4.37/5.34 & 0.157 \\
& SY & 
90.09 & 4.12/5.79 & 0.079 & 89.69 & 4.13/6.18 & 0.162 \\
\hdashline
5,000 & EP & 
88.53 & 4.82/6.65 & 0.036 & 88.10 & 4.93/6.97 & 0.075 \\
& RT & 
90.40 & 4.95/4.65 & 0.036 & 90.12 & 4.99/4.89 & 0.075 \\
& SY & 
90.91 & 4.23/4.86 & 0.037 & 90.61 & 4.13/5.26 & 0.076 \\
\hline \hline
\end{tabular}
}
\vspace{0.15cm}
\caption*{Table \ref{tab:4.6a} is exactly as Table \ref{tab:4.6} but  for a \textbf{T-GARCH(1,1)} with Student-$t$ innovations ($6$ df).
}
\end{table}
 \egroup

With regard to Remark \ref{rem:4.2} in Section \ref{sec:4.4.1}, Tables \ref{tab:4.6} and \ref{tab:4.6a} report the simulation results for the recursive-design bootstrap for the DGPs of Tables \ref{tab:4.1} and \ref{tab:4.1a}, respectively. We refer to Appendix \ref{app:4.B} for computational details. In comparison to the fixed-design approach (see Tables \ref{tab:4.1} and \ref{tab:4.1a}) we find that the recursive-design method performs similarly in terms of average coverage for each interval type, which corresponds to the simulation results of \cite{cavaliere2018fixed}. It is striking, however, that the intervals' average lengths are larger in the recursive-design than in the fixed-design set-up. For example, in the high persistence GARCH(1,1) case (right part of Table \ref{tab:4.6}) for $n=500$ the average length in the recursive-design approach is $0.604$ for the EP/RT interval compared to $0.581$ in the fixed-design. As the sample size increases this difference disappears. 

In summary, the simulations suggest that the RT and the SY bootstrap interval work well for both bootstrap designs and that they outperform in smaller samples the AS interval in terms of average coverage even though their tails are unequally represented. In contrast, for both bootstrap designs the EP interval falls short of its nominal coverage, which is in line with the theoretical findings of \cite{falk1991coverage}. Since the fixed RT method leads on average to shorter intervals than the corresponding SY method and its recursive-design counterpart, this suggests to favor the fixed-design RT bootstrap interval in \eqref{eq:4.4.9}.

\subsection{Empirical Application}
\label{sec:4.5.2}

We analyze the French stock market index CAC 40 for the period  January 1, 2015 -- January 1, 2020. The index values for the period are retrieved from Yahoo Finance and daily (log-) returns (expressed in $\%$) are computed using $\epsilon_t = 100\log(p_t/p_{t-1})$, where $p_t$ denotes the closing value of the index at trading day $t$. 
\begin{figure}[tbp]
\centering
\begin{subfigure}[b]{0.45\textwidth}
\label{fig:2a}
	\centering
	\includegraphics[width=\textwidth]{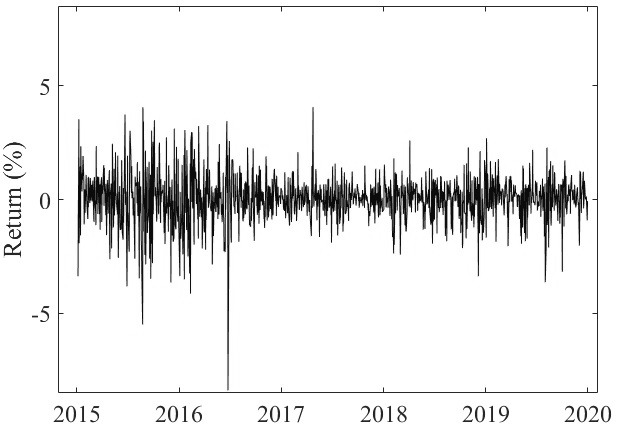}
            \caption{Returns of CAC 40 }    
\end{subfigure}
\quad
\begin{subfigure}[b]{0.45\textwidth}  
\label{fig:2b}
	\centering 
	\includegraphics[width=\textwidth]{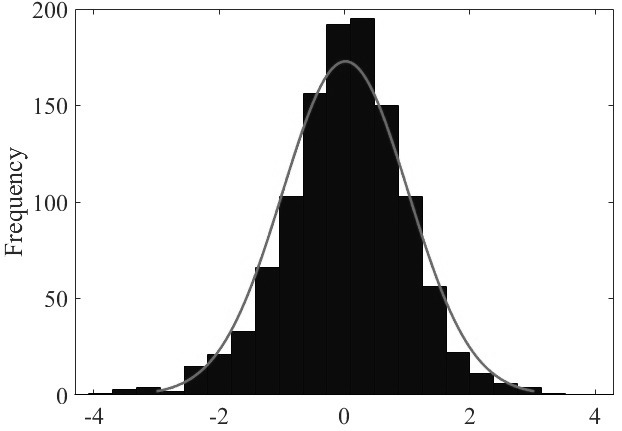}
	\caption{Histogram of the residuals $\hat{\eta}_t$'s} 
\end{subfigure}
\caption{The returns of the French stock market index CAC 40 are plotted in (a) for the period January 1, 2015 -- January 1, 2020. The histogram of the residuals is plotted in (b) after fitting a T-GARCH($1,1$) model to the subperiod January 1, 2015 -- July 1, 2019. A scaled normal density is superimposed.} 
\label{fig:4.2}
\end{figure}
Figure \ref{fig:4.2}(a) displays the resulting series of returns. We disregard the observations  from July 1, 2019 onwards, which we leave for the out-of-sample evaluation, yielding $n=1{,}146$ remaining observations (i.e.\ January 1, 2015 - July 1, 2019). For the volatility process we consider the T-GARCH($1,1$) model specified in Example \ref{ex:4.2}.\footnote{We also
consider an Asymmetric Power GARCH model \citep{ding1993long}, i.e.\ $\sigma_{t+1}^\delta= \omega_0 + \alpha_0^+ (\epsilon_{t}^+)^\delta +\alpha_0^-(\epsilon_{t}^-)^\delta + \beta_0 \sigma_{t}^\delta$ with $\delta>0$, which nests the GARCH($1,1$) model ($\delta=2$, $\alpha_0^+=\alpha_0^-$) and the T-GARCH($1,1$) model ($\delta=1$) of Examples \ref{ex:4.1} and \ref{ex:4.2}. In practice, the impact of the power $\delta$ on the volatility is minor and the QML approach of \cite{hamadeh2011asymptotic} suggests a $\delta$ close to $1$ in favor for the T-GARCH specification.} 
Table \ref{tab:4.7} reports the corresponding point estimates with standard errors obtained by bootstrapping based on Algorithm \ref{alg:4.1}. As documented in numerous studies we find that the volatility persistence is close to unity. Further, we observe that $\hat{\alpha}^-_n$ is considerably larger than $\hat{\alpha}^+_n$ indicating a strong leverage effect, i.e.\  negative returns tend to increase volatility by more than positive returns of the same magnitude.
\begin{table}[tbp]
\caption{T-GARCH($1,1$) estimates for CAC 40}\label{tab:4.7}
\centering
\begin{tabular}{lcccc}
\hline \hline
                     & $\hat{\omega}_n$ & $\hat{\alpha}^+_n$ & $\hat{\alpha}^-_n$ & $\hat{\beta}_n$ \\ \hline
point estimate       & $0.0292$         & $0.0046$           & $0.1798$           & $0.9026$        \\
std. error &  $0.0109$                &  $0.0215$                  & $0.0339$                   &  $0.0234$               \\ \hline \hline
\end{tabular}
\vspace{0.15cm}
\caption*{T-GARCH($1,1$) estimates for the subperiod January 1, 1998 -- December 31, 2017. The standard errors are obtained by applying the fixed-design residual bootstrap with $B=2{,}000$ bootstrap replications.}
\end{table}
Figure \ref{fig:4.2}(b) plots the histogram of the residuals with the normal distribution superimposed. Further, we test the condition that the innovations are iid (see Assumption \ref{as:4.5}\eqref{as:4.5.1}) with the generalized run tests of \cite{cho2011generalized}.\footnote{The implementation of the tests is available on the website of the first author.} These tests are particularly suitable in this case since they can be based on the residuals and are sensitive against a wide range of alternatives. The test statistic of the sup-norm based test is $0.40$, which corresponds to a p-value of $0.27$. consequently, one cannot reject the null hypothesis of iid innovations at any common significance level. Similarly, the generalized run test based on the $L_1$-norm cannot be rejected at a $10\%$ significance level.

Next, we perform a rolling window analysis starting with subperiod January 1, 2015 -- July 1, 2019 and ending with subperiod July 8, 2015 -- January 1, 2020. We have $130$ subperiods each consisting of $1{,}146$ observations. For each rolling window period we fit a T-GARCH($1,1$) model and estimate the one-period-ahead conditional VaR associated with level $\alpha=0.05$. 
For example, for the first window the T-GARCH($1,1$) estimates are reported in Table \ref{tab:4.7} and the conditional $5\%$-VaR of the one-period ahead (i.e.\ July 1, 2019) is estimated by $1.11$.
Further, we obtain the associated $95\%$-confidence intervals based on bootstrap and asymptotic normality. In addition to the RT intervals of the fixed- and residual-design bootstrap, we also computed an interval based on the asymptotic distribution. 
The corresponding intervals are $[0.850,1.136]$ (fixed-design), $[0.834,1.115]$ (recursive-design), and $[0.828,1.106]$ (asymp.\ normality). Although the intervals are fairly similar, the asymptotic and recursive bootstrap intervals are shorter than the fixed-design interval. Given its tendency to underestimate variability in finite samples, this result is unsurprising for the asymptotic interval, although for the recursive bootstrap this contrasts the simulation findings. Note that the fixed-design iid and block bootstraps produce very similar interval, which is not surprising as our conducted specification tests did not indicate any violation of the iid assumption on the innovations.

The results of the rolling window analysis are visualized in Figure \ref{fig:4.3}. It plots the realized return together with (the opposite of) the estimated conditional VaR. For clarity we only indicate the lower and upper bound of the $95\%$ RT fixed-design bootstrap interval.
\begin{figure}[tbp]
\centering
\includegraphics[width=\textwidth]{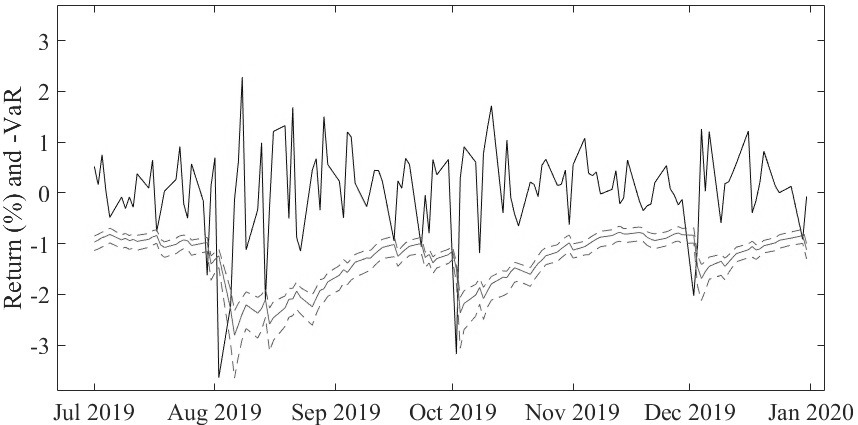}
\caption{Returns and the estimated conditional VaR (solid) for the period June 2, 2019 -- December 31, 2019. The estimation rests on the $1{,}146$ preceding observations. Lower and upper bounds for the conditional VaR (dashed) are based on the fixed-design bootstrap scheme using the RT method with $1-\gamma = 95\%$.} 
\label{fig:4.3}
\end{figure}
We observe that in more turbulent times (e.g.\ August, 2019), the estimated VaR amplifies. In such volatile periods we expect the estimation risk to increase and, accordingly, we find wider bootstrap confidence intervals. In this regard, although not directly related to the bootstrap procedures considered it is worth mentioning that a proposal for monitoring VaR estimates over time can be found in \citet{HogaDemetrescu}. 
\begin{remark}
\label{rem:5.1}
The point estimate $\hat{\alpha}^+_n$ in Table \ref{tab:4.7} is close to zero, indicating that the parameter $\alpha^+_0$ may lie on the boundary. In view of Assumption \ref{as:4.6}, the proposed bootstrap method becomes invalid when a parameter lies on the boundary of the parameter space. \cite{cavaliere2020bootstrap} modify the fixed volatility bootstrap to account for nuisance parameters on the boundary by shrinking the estimates for bootstrap sampling toward the boundary at an appropriate rate. Following their suggestion and using instead $\theta^*_n=(\hat{\omega}_n,0,\hat{\alpha}^-_n,\hat{\beta}_n)'$ to generate the bootstrap sample, does not notably alter the bootstrap results presented in this subsection.
\end{remark}

\section{Concluding Remarks}
\label{sec:4.6}

In this paper we study the two-step estimation procedure of \cite{francq2015risk} associated with the conditional VaR. In the first step, the conditional volatility parameters are estimated by QMLE, while the second step corresponds to  approximating the quantile of the innovations' distribution  by the empirical quantile of the residuals. A fixed-design residual bootstrap method is proposed to mimic the finite sample distribution of the two-step estimator and its consistency is proven under mild assumptions. In addition, an algorithm is provided for the construction of bootstrap intervals for the conditional VaR to take into account the uncertainty induced by estimation. Three interval types are suggested and a large-scale simulation study is conducted to investigate their performance in finite samples. We find that the equal-tailed percentile interval based on the fixed-design residual bootstrap tends to fall short of its nominal value, whereas the corresponding interval based on reversed tails yields accurate average coverage combined with the shortest average length. Although the result seems counter-intuitive at first, it is in line with the theoretical findings of \cite{falk1991coverage}. In the simulation study we also consider the recursive-design residual bootstrap. It turns out that the recursive-design and the fixed-design bootstrap perform similar in terms of average coverage. Yet in smaller samples the fixed-design scheme leads on average to shorter intervals. Further, the interval estimation by means of the fixed-design residual bootstrap is illustrated in an empirical application to daily returns of the French stock index CAC 40.

Natural extensions of this work are encompassing other risk measures such as Expected Shortfall \citep{heinemann2018residual} and developing a bootstrap procedure for the one-step estimator of \cite{francq2015risk}. Further, it is worthwhile to consider a smoothed bootstrap version in the spirit of \cite{hall1989smoothing}, which offers potential gains in accuracy. The latter two extensions are left for future research, as is the question how to extend the fixed-design bootstrap in order to give valid bootstrap inference when the iid assumption does not hold.

\section*{Acknowledgements}
The authors thank Franz Palm, Hanno Reuvers, Jean-Michel Zako\"ian and Christian Francq for useful comments and suggestions
as well as Dewi Peerlings and Benoit Duvocelle for computational support during the revision stage. In addition, the authors are grateful to the editors Oliver Linton and Torben Andersen, one of the associate editors and to two anonymous referees for their constructive remarks.

This research was financially supported by the Netherlands Organisation for Scientific Research (NWO).


\appendix

\section{Proofs of Main Results} \label{app:1}
\begin{proof}[Proof of Theorem \ref{thm:4.3}]
Following the steps of \citeauthor{francq2015risk} (\citeyear{francq2015risk}, Theorem 4), we standardize equation \eqref{eq:4.4.2} such that the bootstrap quantile estimator satisfies
\begin{align*}
\sqrt{n}(\hat{\xi}_{n,\alpha}^*-\hat{\xi}_{n,\alpha}) 
=&\arg\min_{z \in \R} \underbrace{\sum_{t=1}^n\rho_\alpha\Big(\hat{\eta}_t^*-\hat{\xi}_{n,\alpha}-\frac{z}{\sqrt{n}}\Big)-\sum_{t=1}^n\rho_\alpha(\eta_t^*-\hat{\xi}_{n,\alpha})}_{Q_n^*(z)}.
\end{align*}
Employing the identity of \citeauthor{koenker2006quantile} (\citeyear{koenker2006quantile}, Eq.\ (A.3)) we obtain\footnote{Note that the identity  holds not only for $u \neq 0$ but also for $u=0$.}
\begin{align}
\label{eq:4.4.5}
Q_n^*(z)=&zX_n^*+Y_n^*+I_n^*(z)+J_{n,1}^*(z)+J_{n,2}^*(z)
\end{align}
with 
\begin{align}
\nonumber
X_n^*&=\frac{1}{\sqrt{n}}\sum_{t=1}^n\big(\mathbbm{1}_{\{\eta_t^*< \hat{\xi}_{n,\alpha}\}}-\alpha\big),\\
\nonumber
Y_n^*&=\sum_{t=1}^n \big(\eta_t^*-\hat{\eta}_t^*\big)\big(\mathbbm{1}_{\{\eta_t^*< \hat{\xi}_{n,\alpha}\}}-\alpha\big),\\
\label{eq:219849862}
I_n^*(z)&=\sum_{t=1}^n \int_0^{\frac{z}{\sqrt{n}}}\big(\mathbbm{1}_{\{\eta_t^*\leq \hat{\xi}_{n,\alpha}+s\}}-\mathbbm{1}_{\{\eta_t^*< \hat{\xi}_{n,\alpha}\}}\big)ds,\\
\label{eq:4.4.6}
J_{n,1}^*(z)=&\sum_{t=1}^n \int_{0}^{\eta_t^*-\hat{\eta}_t^*}\big(\mathbbm{1}_{\{\eta_t^*\leq\hat{\xi}_{n,\alpha}+\frac{z}{\sqrt{n}}+ s\}}-\mathbbm{1}_{\{\eta_t^*-\hat{\xi}_{n,\alpha}-z/\sqrt{n}<0\}}\big)ds\\
\label{eq:4.4.7}
J_{n,2}^*(z)=&\sum_{t=1}^n \big(\eta_t^*-\hat{\eta}_t^*\big)\big(\mathbbm{1}_{\{\eta_t^*<\hat{\xi}_{n,\alpha}+\frac{z}{\sqrt{n}}\}}-\mathbbm{1}_{\{\eta_t^*<\hat{\xi}_{n,\alpha}\}}\big).
\end{align}
Lemma \ref{lem:4.7} yields $X_n^*\overset{d^*}{\to}N\big(0,\alpha(1-\alpha)\big)$ almost surely. Further, $Y_n^*$ neither depends on $z$ nor interacts with it; therefore it can be disregarded. The term $I_n^*(z)$ converges in conditional probability to $\frac{z^2}{2}f(\xi_\alpha)$ in probability by Lemma \ref{lem:4.8}.
Lemma \ref{lem:4.9} shows that $J_{n,1}^*(z)$ converges in conditional distribution to a random variable, which does not depend on $z$, in probability.  Moreover, we have $J_{n,2}^*(z)=z\xi_\alpha f(\xi_\alpha)\Omega'\sqrt{n}\big(\hat{\theta}_n^*-\hat{\theta}_n\big)+o_{p^*}(1)$ in probability by Lemma \ref{lem:4.10}. Thus, we obtain
\begin{align*}
Q_n^*(z) &= \frac{z^2}{2}f(\xi_\alpha)+z\Big(X_n^*+\xi_\alpha f(\xi_\alpha)\Omega'\sqrt{n}\big(\hat{\theta}_n^*-\hat{\theta}_n\big)\Big)+J_{n,1}^*(z)+Y_n^*+o_{p^*}(1)
\end{align*}
in probability. Employing \citeauthor{xiong2008some} (\citeyear{xiong2008some}, Theorem 3.3) and the basic corollary of \cite{hjort2011asymptotics}, we obtain\footnote{Matching notation, we take $A_n(z)=Q_n^*(z)$, which is convex, and set $B_n(z)=\frac{z^2}{2}V+zU_n+C_n$, where $V=f(\xi_\alpha)$, $U_n =X_n^*+\xi_\alpha f(\xi_\alpha)\Omega'\sqrt{n}\big(\hat{\theta}_n^*-\hat{\theta}_n\big)$ and $C_n+r_n(z) = J_{n,1}^*(z)+Y_n^*+o_{p^*}(1)$ with $r_n(z)\overset{p}{\to}0$ for each $z\in \R$. The minimizers of $A_n(z)$ and $B_n(z)$ are $\alpha_n=\sqrt{n}(\hat{\xi}_{n,\alpha}-\hat{\xi}_{n,\alpha}^*)$ and $\beta_n=-V^{-1}U_n$, respectively. The basic corollary of \cite{hjort2011asymptotics} states $\alpha_n-\beta_n=o_p(1)$, which implies $\alpha_n-\beta_n=o_{p^*}(1)$ in probability (\citeauthor{xiong2008some}, \citeyear{xiong2008some}, Theorem 3.3).}
\begin{align*}
 \sqrt{n}(\hat{\xi}_{n,\alpha}-\hat{\xi}_{n,\alpha}^*)= \xi_\alpha \Omega'\sqrt{n}\big(\hat{\theta}_n^*-\hat{\theta}_n\big)+\frac{1}{f(\xi_\alpha)}\frac{1}{\sqrt{n}}\sum_{t=1}^n(\mathbbm{1}_{\{\eta_t^*< \hat{\xi}_{n,\alpha}\}}-\alpha)+o_{p^*}(1)
\end{align*}
in probability. Together with \eqref{eq:4.4.4} we have 
\begin{align*}
      \begin{pmatrix}
      \sqrt{n}(\hat{\theta}_n^*-\hat{\theta}_n)\\
    \sqrt{n}(\hat{\xi}_{n,\alpha} - \hat{\xi}_{n,\alpha}^*)
      \end{pmatrix}
      = 
      \begin{pmatrix}
      \frac{1}{2}J^{-1}&O_{r\times 1}\\
    \frac{1}{2}\xi_\alpha\Omega'J^{-1} & \frac{1}{f(\xi_\alpha)}
      \end{pmatrix}
        \begin{pmatrix}
      \frac{1}{\sqrt{n}}\sum_{t=1}^n \hat{D}_t\big(\eta_t^{*2}-1\big)\\
    \frac{1}{\sqrt{n}}\sum_{t=1}^n(\mathbbm{1}_{\{\eta_t^*< \hat{\xi}_{n,\alpha}\}}-\alpha)
      \end{pmatrix}+o_{p^*}(1).
\end{align*}
Employing Lemma \ref{lem:4.7} completes the proof.
\end{proof}

\begin{proof}[Proof of Corollary \ref{cor:4.1}] 
The proof is similar to \citeauthor{beutner2017justification}  (\citeyear{beutner2017justification}, proof of Corollary 4) and given for completeness. A Taylor expansion yields
\begin{align}
\sqrt{n}\big(\reallywidehat{VaR}_{n,\alpha}^{*}-\reallywidehat{VaR}_{n,\alpha}\big) = \underbrace{\begin{pmatrix}
    -\xi_\alpha \frac{\partial \sigma_{n+1}(\theta_0)}{\partial \theta}\\
    \sigma_{n+1}
      \end{pmatrix}'}_{w_n} \underbrace{\begin{pmatrix}
      \sqrt{n}(\hat{\theta}_n^*-\hat{\theta}_n)\\
    \sqrt{n}(\hat{\xi}_{n,\alpha} - \hat{\xi}_{n,\alpha}^*)
      \end{pmatrix}}_{Z_n^*} +R_n^*
\end{align}
with
\begin{align*}
R_n^* =&  \bigg(\xi_\alpha  \frac{\partial \sigma_{n+1}(\theta_0)}{\partial \theta'}-\hat{\xi}_{n,\alpha} \frac{\partial \tilde{\sigma}_{n+1}(\hat{\theta}_n)}{\partial \theta'}-\frac{1}{2}\bar{\xi}_{n,\alpha}(\hat{\theta}_n^*-\hat{\theta}_n)' \frac{\partial^2 \tilde{\sigma}_{n+1}(\bar{\theta}_n)}{\partial \theta \partial \theta'}\bigg) \sqrt{n}(\hat{\theta}_n^*-\hat{\theta}_n)
\\
&+\bigg(\tilde{\sigma}_{n+1}(\hat{\theta}_n)-\sigma_{n+1}(\theta_0)+ \frac{\partial \tilde{\sigma}_{n+1}(\bar{\theta}_n)}{\partial \theta'}(\hat{\theta}_n^*-\hat{\theta}_n)\bigg) \sqrt{n}(\hat{\xi}_{n,\alpha} - \hat{\xi}_{n,\alpha}^*),
\end{align*}
where $\bar{\theta}_n$ lies between $\hat{\theta}_n^*$ and $\hat{\theta}_n$ while $\bar{\xi}_{n,\alpha}$ lies between   $\hat{\xi}_{n,\alpha}^*$ and $\hat{\xi}_{n,\alpha}$. Note that $R_n^* =o_{p^*}(1)$ in probability, which can easily be shown using Theorems \ref{thm:4.1} and \ref{thm:4.3} together with Assumptions \ref{as:4.4} and \ref{as:4.9}. Further, let $Z\sim N(0,\Sigma_\alpha)$ be generated independently of $\{\epsilon_t, -\infty< t <\infty\}$ such that $w_n Z$ given $\mathcal{F}_n$ follows the conditional distribution in \eqref{eq:4.3.9}. Take $\varepsilon>0$ arbitrarily small and $K\geq 1$ sufficiently large such that with probability close to one $||w_n||\leq K$. In that case
\begin{align*}
&\sup\limits_{||g||_{BL}\leq 1}\Big|\EE^* \big[g(w_n Z_n^*+R_n^*)\big]- \EE_Z \big[g(w_n Z)|\mathcal{F}_n\big]\Big|\\
\leq&
 \sup\limits_{||g||_{BL}\leq 1}\EE^*\Big[ \big|g(w_n Z_n^*+R_n^*)-g(w_n Z_n^*)\big|\Big]\\
 &\qquad + K \sup\limits_{||g||_{BL}\leq 1} \Big|\EE^* \big[g(w_n Z_n^*)/K\big]- \EE_Z \big[g(w_n Z)/K|\mathcal{F}_n\big]\Big|\\
\leq&
 \sup\limits_{||g||_{BL}\leq 1}\EE^*\Big[ \big|g(w_n Z_n^*+R_n^*)-g(w_n Z_n^*)\big|\big(\mathbbm{1}_{\{|R_n^*|\leq \varepsilon\}}+\mathbbm{1}_{\{|R_n^*|> \varepsilon\}}\big)\Big]\\
 &\qquad + K \sup\limits_{||h||_{BL}\leq 1}\Big|\EE^* \big[h( Z_n^*)\big]- \EE_Z \big[h(Z)|\mathcal{F}_n\big]\Big|\\
\leq& \varepsilon+ 2  \:\EE^*\big[\mathbbm{1}_{\{|R_n^*|> \varepsilon\}}\big]+K \sup\limits_{||h||_{BL}\leq 1}\Big|\EE^* \big[h( Z_n^*)\big]- \EE_Z \big[h(Z)\big]\Big|,
\end{align*}
with $||g||_{BL}=\sup_x \big|g(x)\big|+\sup_{x \neq y}\frac{|g(x)-g(y)|}{||x-y||}$ being the bounded Lipschitz norm and $\EE_Z$ denoting the expectation operator corresponding to $Z$. Together with Theorem \ref{thm:4.3} and $R_n^* =o_{p^*}(1)$ in probability, we obtain
\begin{align*}
&\sup\limits_{||g||_{BL}\leq 1}\Big|\EE^* \big[g(w_n Z_n^*+R_n^*)\big]- \EE_Z \big[g(w_n Z)|\mathcal{F}_n\big]\Big|\overset{p}{\to}0,
\end{align*}
which completes the proof.
\end{proof}

\section{Auxiliary Results and Proofs}
\label{app:4.A}

In analogy to $D_t(\theta)$ and $\hat{D}_t$ we write 
 $H_t(\theta) =\frac{1}{\sigma_t(\theta)}\frac{\partial^2\sigma_t(\theta)}{\partial \theta \partial \theta'}$ and $\hat{H}_t = \tilde{H}_t(\hat{\theta}_n)$ with $\tilde{H}_t(\theta) = \frac{1}{\tilde{\sigma}_t(\theta)}\frac{\partial^2\tilde{\sigma}_t(\theta)}{\partial \theta \partial \theta}$. Further, we introduce $S_t =\sup_{\theta \in \mathscr{V}(\theta_0)}\frac{\sigma_t(\theta_0)}{\sigma_t(\theta)}$, $T_t =\sup_{\theta \in \mathscr{V}(\theta_0)}\frac{\sigma_t(\theta)}{\sigma_t(\theta_0)}$, $U_t = \sup_{\theta \in \mathscr{V}(\theta_0)}||D_t(\theta)||$ and $V_t = \sup_{\theta \in \mathscr{V}(\theta_0)}||H_t(\theta)||$ and stress that $\{S_t\}$, $\{T_t\}$, $\{U_t\}$ and $\{V_t\}$ are strictly stationary and ergodic processes (cf.\ \citeauthor{francq2011garch}, \citeyear{francq2011garch}, p.\ 182/405).

\subsection{Non-bootstrap Lemmas}
\label{app:4.A.1}

\begin{lemma}
\label{lem:4.1}
 Suppose Assumptions \ref{as:4.1}, \ref{as:4.2}, \ref{as:4.3}, \ref{as:4.4}(\ref{as:4.4.1}), \ref{as:4.5}(\ref{as:4.5.1}), \ref{as:4.6} and \ref{as:4.9}(i) hold with $a=-1$. Then, we have $\sup_{x \in \R}|\hat{\mathbbm{F}}_n(x)-F(x)|\overset{a.s.}{\to}0$.
\end{lemma}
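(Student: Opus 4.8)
The natural plan is to compare the residual edf $\hat{\mathbbm{F}}_n$ with the edf of the \emph{true} innovations, $\mathbbm{F}_n^\eta(x):=\frac1n\sum_{t=1}^n\mathbbm{1}_{\{\eta_t\le x\}}$. Since $\{\eta_t\}$ is iid with continuous cdf $F$ (Assumption \ref{as:4.5}(\ref{as:4.5.1})), the classical Glivenko--Cantelli theorem gives $\sup_{x\in\R}|\mathbbm{F}_n^\eta(x)-F(x)|\overset{a.s.}{\to}0$, so by the triangle inequality it suffices to show $\sup_{x\in\R}|\hat{\mathbbm{F}}_n(x)-\mathbbm{F}_n^\eta(x)|\overset{a.s.}{\to}0$. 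Writing $\hat\eta_t=\eta_t/b_{nt}$ with $b_{nt}:=\tilde{\sigma}_t(\hat\theta_n)/\sigma_t(\theta_0)>0$, we have $\hat{\mathbbm{F}}_n(x)=\frac1n\sum_{t=1}^n\mathbbm{1}_{\{\eta_t\le x b_{nt}\}}$, so everything reduces to showing that the multiplicative perturbations $b_{nt}$ are close to $1$ for \emph{most} indices $t$.

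\textbf{The reduction.} The key intermediate claim I would isolate is: for every $\delta\in(0,1)$, $\frac1n\#\{t\le n:|b_{nt}-1|>\delta\}\overset{a.s.}{\to}0$. Granting this, I would conclude by a sandwiching argument. Splitting the sum defining $\hat{\mathbbm{F}}_n(x)$ into the ``good'' indices $\{t:|b_{nt}-1|\le\delta\}$ and the rest, monotonicity of $x\mapsto\mathbbm{1}_{\{\eta_t\le x\}}$ gives, for $x\ge 0$ and good $t$, $\mathbbm{1}_{\{\eta_t\le x(1-\delta)\}}\le\mathbbm{1}_{\{\eta_t\le xb_{nt}\}}\le\mathbbm{1}_{\{\eta_t\le x(1+\delta)\}}$ (with the obvious sign reversal for $x<0$), while the bad indices contribute at most $\frac1n\#\{t:|b_{nt}-1|>\delta\}$ in absolute value. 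Hence $\limsup_n\sup_x|\hat{\mathbbm{F}}_n(x)-\mathbbm{F}_n^\eta(x)|\le\sup_x|F(x(1+\delta))-F(x)|+\sup_x|F(x(1-\delta))-F(x)|$, and the right-hand side tends to $0$ as $\delta\downarrow 0$: since $F$ is continuous with $F(\pm\infty)\in\{0,1\}$, a short case split over $|x|$ small, moderate, and large (to absorb the unbounded factor $\delta x$) shows the perturbed cdf converges to $F$ uniformly in $x$.

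\textbf{Proof of the claim.} I would decompose $b_{nt}-1=\dfrac{\tilde{\sigma}_t(\hat\theta_n)-\sigma_t(\hat\theta_n)}{\sigma_t(\theta_0)}+\dfrac{\sigma_t(\hat\theta_n)-\sigma_t(\theta_0)}{\sigma_t(\theta_0)}$. By Assumption \ref{as:4.4}(\ref{as:4.4.1}) and $\sigma_t(\theta_0)\ge\underline{\omega}$ (Assumption \ref{as:4.3}), the first term is bounded by $C_1\rho^t/\underline{\omega}$, which exceeds $\delta/2$ for only finitely many $t$ almost surely, hence is negligible after dividing by $n$. For the second term I would use $\hat\theta_n\overset{a.s.}{\to}\theta_0$ (Theorem \ref{thm:4.1}, whose hypotheses are all assumed here) together with $\theta_0\in\mathring{\Theta}$ (Assumption \ref{as:4.6}): on a probability-one event, $\hat\theta_n\in B(\theta_0,r)\subset\mathscr{V}(\theta_0)$ for all large $n$, so
\[
\tfrac1n\#\{t\le n:|\sigma_t(\hat\theta_n)/\sigma_t(\theta_0)-1|>\delta/2\}\;\le\;\tfrac1n\sum_{t=1}^n\mathbbm{1}_{\{\sup_{\theta\in B(\theta_0,r)}|\sigma_t(\theta)/\sigma_t(\theta_0)-1|>\delta/2\}},
\]
and the ergodic theorem (Assumption \ref{as:4.2}) identifies the limit of the right-hand side as $\PP\big(\sup_{\theta\in B(\theta_0,r)}|\sigma_1(\theta)/\sigma_1(\theta_0)-1|>\delta/2\big)$. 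Letting $r\downarrow 0$ and using continuity of $\theta\mapsto\sigma_t(\theta)$ (Assumption \ref{as:4.3}) with $\sigma_1(\theta_0)>0$, this probability decreases to $0$ by dominated convergence; the integrability of $\sup_{\theta\in\mathscr{V}(\theta_0)}\sigma_t(\theta)/\sigma_t(\theta_0)$ guaranteed by Assumption \ref{as:4.9}(i) with $a=-1$ is what I would invoke to control the truncated sums and justify passing to the limit. I expect the main obstacle to be precisely this intermediate claim: because only continuity---not differentiability---of $\sigma$ in $\theta$ is available and $\hat\theta_n$ depends on the whole sample, one cannot control $\max_{t\le n}|b_{nt}-1|$; only the \emph{proportion} of bad indices can be driven to zero, which forces the shrinking-neighbourhood-plus-ergodic-theorem device rather than a direct uniform estimate, and the uniform-in-$x$ behaviour of $F$ near $\pm\infty$ is a secondary technical point to handle with care.
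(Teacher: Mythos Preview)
Your proposal is correct and follows a route that differs in packaging from the paper's proof, though the underlying ingredients are the same. The paper, following \cite{berkes2003limit}, establishes \emph{pointwise} convergence $\hat{\mathbbm{F}}_n(x)\overset{a.s.}{\to}F(x)$ by bounding $\frac1n\sum_t\mathbbm{1}_{\{\eta_t\le x\tilde\sigma_t(\theta)/\sigma_t(\theta_0)\}}$ uniformly over a shrinking neighbourhood $\mathscr{V}_\tau(\theta_0)$, applying the ergodic theorem to pass to $\EE F(\sup_\theta x\sigma_t(\theta)/\sigma_t(\theta_0)+\varepsilon|x|)$, and then letting $\tau\downarrow 0$; uniformity in $x$ is obtained at the end via P\'olya's lemma. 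You instead import Glivenko--Cantelli for the true innovations up front and reduce the problem to showing the \emph{proportion} of indices with $|b_{nt}-1|>\delta$ vanishes, which you control by the same shrinking-neighbourhood-plus-ergodic-theorem device, and then finish by sandwiching and the elementary fact that $\sup_x|F(x(1\pm\delta))-F(x)|\to 0$. Both arguments use the initial-condition decay, consistency of $\hat\theta_n$, and continuity of $\theta\mapsto\sigma_t(\theta)$ in the same way; your version is perhaps slightly more modular in that it isolates the ``bad index'' count as a single lemma. One minor remark: since indicator functions are bounded, the passage to the limit in your claim (and the analogous step in the paper) does not actually require the integrability from Assumption~\ref{as:4.9}(i); dominated convergence with the constant $1$ suffices, so that assumption is not doing real work at this point in either argument.
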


\begin{proof}
The proof  follows \citeauthor{berkes2003limit} (\citeyear{berkes2003limit}, Theorem 2.1 \& Lemma 5.1) and consists of three parts.  First, we show that for any $\varepsilon>0$ there is a $\tau>0$ such that
\begin{align}
\label{eq:4.A.2}
\begin{split}
&\limsup_{n \to \infty} \sup_{\theta \in \mathscr{V}_\tau(\theta_0)} \bigg|\frac{1}{n}\sum_{t=1}^n \mathbbm{1}_{\{\eta_t\leq x\tilde{\sigma}_t(\theta)/\sigma_t(\theta_0)\}}-F(x)\bigg|\\ 
&\qquad \qquad \qquad \qquad \qquad \leq 2\Big(F\big(x+\varepsilon |x|\big)-F\big(x-\varepsilon|x|\big)\Big)
\end{split}
\end{align}
almost surely for any $x \in \R$, where $\mathscr{V}_\tau(\theta_0) = \big\{ \theta \in \Theta: ||\theta-\theta_0||\leq \tau\big\}$. In the second step, we show $\hat{\mathbbm{F}}_n(x)\overset{a.s.}{\to}F(x)$ for any $x \in \R$ using \eqref{eq:4.A.2} and thereafter prove $\sup_{x \in \R}|\hat{\mathbbm{F}}_n(x)-F(x)|\overset{a.s.}{\to}0$.

Let $\varepsilon>0$ and note that $\sigma_t \geq \underline{\omega}$ by Assumption \ref{as:4.3}. Together with  Assumption \ref{as:4.4}(\ref{as:4.4.1}), there exists a random variable $n_0$ such that $C_1 \rho^t/\sigma_t(\theta_0)\leq \varepsilon$ for all $t>n_0$. Then 
\begin{align*}
 \frac{1}{n}\sum_{t=1}^n \mathbbm{1}_{\{\eta_t\leq x\tilde{\sigma}_t(\theta)/\sigma_t(\theta_0)\}}\leq &\frac{1}{n}\sum_{t=1}^n \mathbbm{1}_{\{\eta_t\leq x\sigma_t(\theta)/\sigma_t(\theta_0)+|x|C_1\rho^t/\sigma_t(\theta_0)\}}\\
 \leq & \frac{n_0}{n}+\frac{1}{n}\sum_{t=1}^n \mathbbm{1}_{\{\eta_t\leq x\sigma_t(\theta)/\sigma_t(\theta_0)+\varepsilon|x|\}}
\end{align*}
holds almost surely. Let $\tau>0$ (to be specified); for any $\theta \in \mathscr{V}_\tau(\theta_0)$ we get
\begin{align*}
\frac{1}{n}\sum_{t=1}^n \mathbbm{1}_{\{\eta_t\leq x\sigma_t(\theta)/\sigma_t(\theta_0)+\varepsilon|x|\}}\leq \frac{1}{n}\sum_{t=1}^n \mathbbm{1}_{\{\eta_t\leq \sup_{\theta \in \mathscr{V}_\tau(\theta_0)}x \sigma_t(\theta)/\sigma_t(\theta_0)+\varepsilon|x|\}}
\end{align*}
almost surely. The uniform ergodic theorem for strictly stationary sequences (cf.\ \citeauthor{francq2011garch}, \citeyear{francq2011garch}, p.\ 181), henceforth called the uniform ergodic theorem, and Assumptions \ref{as:4.2}, \ref{as:4.3} and \ref{as:4.5}(\ref{as:4.5.1})  yield
\begin{align*}
 \frac{1}{n}\sum_{t=1}^n \mathbbm{1}_{\{\eta_t\leq  \sup_{\theta \in \mathscr{V}_\tau(\theta_0)}x \sigma_t(\theta)/\sigma_t(\theta_0)+\varepsilon|x|\}} \overset{a.s.}{\to}&\EE \mathbbm{1}_{\{\eta_t\leq  \sup_{\theta \in \mathscr{V}_\tau(\theta_0)}x \sigma_t(\theta)/\sigma_t(\theta_0)+\varepsilon|x|\}}\\
 =& \EE F\Big( \sup_{\theta \in \mathscr{V}_\tau(\theta_0)}x \sigma_t(\theta)/\sigma_t(\theta_0)+\varepsilon|x|\Big).
\end{align*}
Further, Assumptions \ref{as:4.3} and \ref{as:4.9}(i) with $a=-1$ imply $\lim_{\tau\to 0}\sup_{\theta \in \mathscr{V}_\tau(\theta_0)}x \sigma_t(\theta)/\sigma_t(\theta_0)=x$ almost surely. Thus, the dominated convergence theorem entails
\begin{align*}
\lim_{\tau\to 0} \EE F\Big(\sup_{\theta \in \mathscr{V}_\tau(\theta_0)}x \sigma_t(\theta)/\sigma_t(\theta_0)+\varepsilon|x|\Big)=F( x+\varepsilon|x|).
\end{align*}
Putting the results together, we get that for every $\varepsilon>0$, there is a $\tau>0$ such that
\begin{align*}
&\limsup_{n \to \infty} \sup_{\theta \in \mathscr{V}_\tau(\theta_0)} \frac{1}{n}\sum_{t=1}^n \mathbbm{1}_{\{\eta_t\leq x\tilde{\sigma}_t(\theta)/\sigma_t(\theta_0)\}}\leq F(x)+2\Big(F\big(x+\varepsilon|x|\big)-F(x)\Big)
\end{align*}
almost surely for any $x\in\R$. Similarly it can be shown that for every $\varepsilon>0$, there is a $\tau>0$ such that
\begin{align*}
&\liminf_{n \to \infty} \inf_{\theta \in \mathscr{V}_\tau(\theta_0)} \frac{1}{n}\sum_{t=1}^n \mathbbm{1}_{\{\eta_t\leq x\tilde{\sigma}_t(\theta)/\sigma_t(\theta_0)\}}\geq F(x)-2\Big(F(x)-F\big(x-\varepsilon|x|\big)\Big).
\end{align*}
almost surely for any $x\in\R$. Combining both results, we establish \eqref{eq:4.A.2}.

Next, we show $\hat{\mathbbm{F}}_n(x)\overset{a.s.}{\to}F(x)$ for any $x \in \R$. Let $\delta>0$; by continuity of $F$ (see Assumption \ref{as:4.5}(\ref{as:4.5.1})), there is a $\varepsilon>0$ such that $\big|F\big(x+\varepsilon|x|\big)-F\big(x-\varepsilon|x|\big)\big|<\delta/2$. Employing equation \eqref{eq:4.A.2}, there are $\tau>0$ and a random variable $n_1$ such that 
\begin{align*}
 \sup_{\theta \in \mathscr{V}_\tau(\theta_0)} \bigg|\frac{1}{n}\sum_{t=1}^n \mathbbm{1}_{\{\eta_t\leq x \tilde{\sigma}_t(\theta)/\sigma_t(\theta_0)\}}-F(x)\bigg|<\delta
\end{align*}
for all $n\geq n_1$. Since $\hat{\theta}_n \overset{a.s.}{\to}\theta_0$ by Theorem \ref{thm:4.1} there is a random variable $n_2$ such that $\hat{\theta}_n  \in \mathscr{V}_\tau(\theta_0)$ for all $n\geq n_2$. Thus, 
\begin{align*}
\big|\hat{\mathbbm{F}}_n(x)-F(x)\big|\leq  \sup_{\theta \in \mathscr{V}_\tau(\theta_0)} \bigg|\frac{1}{n}\sum_{t=1}^n \mathbbm{1}_{\{\eta_t\leq x \tilde{\sigma}_t(\theta)/\sigma_t(\theta_0)\}}-F(x)\bigg|<\delta
\end{align*}
for all $n\geq \max\{n_1,n_2\}$, which establishes $\hat{\mathbbm{F}}_n(x)\overset{a.s.}{\to}F(x)$ for any $x \in \R$. Using P\'olya's lemma (cf.\ \citeauthor{roussas1997course}, \citeyear{roussas1997course}, p.\ 206), we establish $\sup_{x \in \R}|\hat{\mathbbm{F}}_n(x)-F(x)|\overset{a.s.}{\to}0$ completing the proof.
\end{proof}

\begin{lemma} 
\label{lem:4.2}
Suppose Assumptions \ref{as:4.1}--\ref{as:4.3}, \ref{as:4.4}(\ref{as:4.4.1}) and \ref{as:4.5}(\ref{as:4.5.1}) hold.
\begin{enumerate}[(i)]
\item If in addition Assumptions \ref{as:4.4}(\ref{as:4.4.2}) and \ref{as:4.9}(ii) hold with $b=1$, then $\hat{\Omega}_n\overset{a.s.}{\to}\Omega$.

\item If in addition Assumptions \ref{as:4.4}(\ref{as:4.4.2}) and \ref{as:4.9}(ii) hold with $b=2$, then $\hat{J}_n\overset{a.s.}{\to}J$.

\item If in addition Assumptions \ref{as:4.4}(\ref{as:4.4.2}) and \ref{as:4.9}(iii) hold with $c=1$, then $\frac{1}{n}\sum\limits_{t=1}^n \hat{H}_t \overset{a.s.}{\to}\EE [H_t]$.

\item If in addition Assumptions \ref{as:4.5}(\ref{as:4.5.3}) and \ref{as:4.9}(i) hold with $a=4$, then we have $ \frac{1}{n}\sum\limits_{t=1}^n \hat{\eta}_t^m \mathbbm{1}_{\{l\leq \hat{\eta}_t<u\}}\overset{a.s.}{\to}\EE\big[\eta_t^m\mathbbm{1}_{\{l\leq \eta_t<u\}}\big]$ for $ m \in \{0,1,2,3,4\}$ and $l< u$.

\item If in addition Assumptions \ref{as:4.4} and \ref{as:4.9}(i)-(ii) hold with $a=\pm2$ and $b=4$, then
\begin{align*}
    \frac{1}{n}\sum\limits_{t=1}^n \mathbbm{1}_{\{ l\leq \sqrt{n}(\tilde{\psi}_t-1)<u\}}  \left(\sqrt{n}\big(\tilde{\psi}_t-1\big)\right)^m\overset{a.s.}{\to}\EE\Big[ \mathbbm{1}_{\{ l \leq D_t'(v_1-v_2)<u\}}\big(D_t'(v_1-v_2)\big)^m\Big]
\end{align*}
for $v_1,v_2\in \R^r$, $ m \in \{0,1,2,3,4\}$ and $l< u$
with $\tilde{\psi}_t= \frac{\tilde{\sigma}_t(\hat{\theta}_n+n^{-1/2}v_1)}{\tilde{\sigma}_t(\hat{\theta}_n+n^{-1/2}v_2)}$.
\end{enumerate}
\end{lemma}


\begin{proof}
Consider the first statement and expand 
\begin{align*}
&\frac{1}{n}\sum_{t=1}^n \hat{D}_t
= \underbrace{\frac{1}{n}\sum_{t=1}^n D_t(\hat{\theta}_n)}_{I}+\underbrace{\frac{1}{n}\sum_{t=1}^n \Big(\tilde{D}_t(\hat{\theta}_n)-D_t(\hat{\theta}_n)\Big)}_{II}.
\end{align*}
Focusing on $I$, we take $\varepsilon>0$ and let $e_1,\dots,e_r$ denote the unit vectors spanning $\R^r$. 
\textcolor{black}{
Since $D_t(\theta)$ is continuous in $\theta$ we can take $\mathscr{V}_\varepsilon(\theta_0)\subseteq \mathscr{V}(\theta_0)$ such that
}
\begin{align*}
\EE\big[e_i'D_t \big]-\varepsilon<\EE\Big[\inf_{\theta \in \mathscr{V}_\varepsilon(\theta_0)}e_i' D_t(\theta)\Big]\leq \EE\Big[\sup_{\theta \in \mathscr{V}_\varepsilon(\theta_0)}e_i' D_t(\theta)\Big]<\EE\big[e_i'D_t\big]+ \varepsilon
\end{align*}
for all $i = 1,\dots,r$. Since $\hat{\theta}_n\overset{a.s.}{\to}\theta_0$ (Theorem \ref{thm:4.1}), we have
 $\hat{\theta}_n \in \mathscr{V}_\varepsilon(\theta_0)$ almost surely. Together with the uniform ergodic theorem we obtain
\begin{align*}
\frac{1}{n}\sum_{t=1}^n e_i' D_t(\hat{\theta}_n) \overset{a.s.}{\leq}& \frac{1}{n}\sum_{t=1}^n \sup_{\theta \in \mathscr{V}_\varepsilon(\theta_0)} e_i' D_t(\theta)  \overset{a.s.}{\to} \EE\Big[\sup_{\theta \in \mathscr{V}_\varepsilon(\theta_0)}e_i' D_t(\theta)\Big]<\EE\big[e_i'D_t \big]+ \varepsilon\\
\frac{1}{n}\sum_{t=1}^n e_i' D_t(\hat{\theta}_n) \overset{a.s.}{\geq}& \frac{1}{n}\sum_{t=1}^n \inf_{\theta \in \mathscr{V}_\varepsilon(\theta_0)} e_i' D_t(\theta) \overset{a.s.}{\to} \EE\Big[\inf_{\theta \in \mathscr{V}_\varepsilon(\theta_0)}e_i' D_t(\theta)\Big]>\EE\big[e_i'D_t \big]- \varepsilon.
\end{align*}
Taking $\varepsilon \searrow 0$ establishes $\frac{1}{n}\sum_{t=1}^n e_i' D_t(\hat{\theta}_n) \overset{a.s.}{\to}\EE[e_i'D_t]$ for all $i$ yielding $I \overset{a.s.}{\to}\EE[D_t ] = \Omega$. Regarding $II$, we note that for each $\theta \in \Theta$, Assumption \ref{as:4.4} implies
\begin{align}
\label{eq:4.A.3}
\begin{split}
&\big|\big|\tilde{D}_t(\theta)-D_t(\theta)\big|\big| = \bigg|\bigg|\frac{1}{\tilde{\sigma}_t(\theta)}\frac{\partial \tilde{\sigma}_t(\theta)}{\partial \theta}-\frac{1}{\sigma_t(\theta)}\frac{\partial \sigma_t(\theta)}{\partial \theta}\bigg|\bigg|\\
=& \bigg|\bigg|\frac{1}{\tilde{\sigma}_t(\theta)}\bigg(\frac{\partial \tilde{\sigma}_t(\theta)}{\partial \theta}-\frac{\partial \sigma_t(\theta)}{\partial \theta}\bigg)+\frac{\sigma_t(\theta)-\tilde{\sigma}_t(\theta)}{\tilde{\sigma}_t(\theta)}\frac{1}{\sigma_t(\theta)}\frac{\partial \sigma_t(\theta)}{\partial \theta}\bigg|\bigg|\\ 
\leq& \frac{1}{\tilde{\sigma}_t(\theta)}\bigg|\bigg|\frac{\partial \tilde{\sigma}_t(\theta)}{\partial \theta}-\frac{\partial \sigma_t(\theta)}{\partial \theta}\bigg|\bigg|+ \frac{|\sigma_t(\theta)-\tilde{\sigma}_t(\theta)|}{\tilde{\sigma}_t(\theta)}\:\bigg|\bigg|\frac{1}{\sigma_t(\theta)}\frac{\partial \sigma_t(\theta)}{\partial \theta}\bigg|\bigg|\\
\leq& \frac{C_1 \rho^t}{\underline{\omega}}+ \frac{C_1 \rho^t}{\underline{\omega}}\big|\big|D_t(\theta)\big|\big|=\frac{C_1 \rho^t}{\underline{\omega}}\Big(1+ \big|\big|D_t(\theta)\big|\big|\Big).
\end{split}
\end{align}
We obtain
\begin{align*}
||II|| \leq& \frac{1}{n}\sum_{t=1}^n \big|\big|\tilde{D}_t(\hat{\theta}_n)-D_t(\hat{\theta}_n)\big|\big|\leq  \frac{C_1}{\underline{\omega}} \frac{1}{n}\sum_{t=1}^n \rho^{t}\Big(1+ \big|\big|D_t(\hat{\theta}_n)\big|\big|\Big)
\overset{a.s.}{\leq}  \frac{C_1}{\underline{\omega}} \frac{1}{n}\sum_{t=1}^n \rho^{t}(1+U_t).
\end{align*}
For each $\varepsilon>0$, Markov's inequality entails
\begin{align*}
&\sum_{t=1}^\infty \PP\Big[\rho^{t}(1+U_t)>\varepsilon\Big] \leq \sum_{t=1}^\infty \rho^t \frac{1+\EE[U_t]}{\varepsilon} 
= \frac{1+\EE[U_t]}{\varepsilon(1-\rho)}<\infty
\end{align*}
since $\rho \in(0,1)$ and $\EE[U_t]<\infty$ by Assumption \ref{as:4.9}(ii). The Borel-Cantelli lemma implies
\begin{align}
\label{eq:4.A.4}
0 = \PP\bigg[\lim_{t\to \infty}\bigcup_{s=t}^\infty \Big\{\rho^{s}(1+U_s)>\varepsilon\Big\}\bigg]\geq \PP\bigg[\lim_{t\to \infty}\rho^{t}(1+U_t)>\varepsilon\bigg]
\end{align}
and hence $\rho^{t}(1+U_t)\to 0$ almost surely. Ces\'aro's lemma yields $\frac{1}{n}\sum_{t=1}^n \rho^{t}(1+U_t)\overset{a.s.}{\to}0$ and hence $||II|| \overset{a.s.}{\to}0$, which validates the first statement.

Consider the second statement and expand
\begin{align*}
\frac{1}{n}\sum_{t=1}^n \hat{D}_t\hat{D}_t'
=&  \underbrace{\frac{1}{n}\sum_{t=1}^n D_t(\hat{\theta}_n)D_t'(\hat{\theta}_n)}_{III}+\underbrace{\frac{1}{n}\sum_{t=1}^n \Big(\tilde{D}_t(\hat{\theta}_n)\tilde{D}_t'(\hat{\theta}_n)-D_t(\hat{\theta}_n)D_t'(\hat{\theta}_n)\Big)}_{IV}.
\end{align*}
We focus on $III$ and let $\varepsilon>0$. Since $D_t(\theta)D_t(\theta)'$ is continuous in $\theta$ we can take $\mathscr{V}_\varepsilon(\theta_0)\subseteq \mathscr{V}(\theta_0)$ such that 
\begin{align*}
\EE\big[e_i'D_tD_t' e_j \big]-\varepsilon<&\EE\Big[\inf_{\theta \in \mathscr{V}_\varepsilon(\theta_0)}e_i' D_t(\theta)D_t'(\theta) e_j\Big]\\
\leq& \EE\Big[\sup_{\theta \in \mathscr{V}_\varepsilon(\theta_0)}e_i' D_t(\theta)D_t'(\theta) e_j\Big]<\EE\big[e_i'D_t D_t' e_j\big]+ \varepsilon
\end{align*}
for all $i,j = 1,\dots,r$. Since $\hat{\theta}_n\overset{a.s.}{\to}\theta_0$ by Theorem \ref{thm:4.1}, we have
 $\hat{\theta}_n \in \mathscr{V}_\varepsilon(\theta_0)$ almost surely. Together with the uniform ergodic theorem we obtain
\begin{align*}
\frac{1}{n}\sum_{t=1}^n e_i' D_t(\hat{\theta}_n) D_t'(\hat{\theta}_n) e_j \overset{a.s.}{\leq}& \frac{1}{n}\sum_{t=1}^n \sup_{\theta \in \mathscr{V}_\varepsilon(\theta_0)} e_i' D_t(\theta) D_t'(\theta) e_j\\
\overset{a.s.}{\to}& \EE\Big[\sup_{\theta \in \mathscr{V}_\varepsilon(\theta_0)}e_i' D_t(\theta) D_t'(\theta) e_j \Big]<\EE\big[e_i'D_t D_t' e_j \big]+ \varepsilon\\
\frac{1}{n}\sum_{t=1}^n e_i' D_t(\hat{\theta}_n) D_t'(\hat{\theta}_n) e_j \overset{a.s.}{\geq}& \frac{1}{n}\sum_{t=1}^n \inf_{\theta \in \mathscr{V}_\varepsilon(\theta_0)} e_i' D_t(\theta) D_t'(\theta) e_j\\
\overset{a.s.}{\to}& \EE\Big[\inf_{\theta \in \mathscr{V}_\varepsilon(\theta_0)}e_i' D_t(\theta) D_t'(\theta) e_j \Big]>\EE\big[e_i'D_t D_t' e_j \big]- \varepsilon
\end{align*}
Taking $\varepsilon \searrow 0$ establishes  $\frac{1}{n}\sum_{t=1}^n e_i' D_t(\hat{\theta}_n) D_t'(\hat{\theta}_n) e_j \overset{a.s.}{\to}\EE[e_i'D_t D_t'e_j]$ for all pairs $(i,j)$ yielding $III \overset{a.s.}{\to}\EE[D_t D_t' ] = J$. Consider $IV$; using \eqref{eq:4.A.3} and the elementary inequality
\begin{align}
\label{eq:4.A.5}
||xx'-yy'||\leq ||x-y||^2+2||x-y||\:||y|| 
\end{align}
for all $x,y\in \R^m$ with $m \in \N$, we obtain for $\theta \in \Theta$
\begin{align}
\label{eq:4.A.6}
\begin{split}
&\Big|\Big|\tilde{D}_t(\theta)\tilde{D}_t'(\theta)-D_t(\theta)D_t'(\theta)\Big|\Big|\\
\leq&   \big|\big|\tilde{D}_t(\theta)-D_t(\theta)\big|\big|^2+ 2\big|\big|\tilde{D}_t(\theta)-D_t(\theta)\big|\big|\:\big|\big|D_t(\theta)\big|\big|\\
\leq&  \frac{C_1^2 }{\underline{\omega}^2}  \rho^{2t}\Big(1+ \big|\big|D_t(\theta)\big|\big|\Big)^2+ \frac{2C_1}{\underline{\omega}} \rho^t\Big(1+ \big|\big|D_t(\theta)\big|\big|\Big)\:\big|\big|D_t(\theta)\big|\big|\\
\leq & \frac{C_1^2 }{\underline{\omega}^2} \rho^{t}\Big(1+ \big|\big|D_t(\theta)\big|\big|\Big)^2+ \frac{2C_1}{\underline{\omega}} \rho^t\Big(1+ \big|\big|D_t(\theta)\big|\big|\Big)^2\\
=& \bigg(\frac{C_1^2}{\underline{\omega}^2}+\frac{2C_1}{\underline{\omega}}\bigg) \rho^{t}\Big(1+ \big|\big|D_t(\theta)\big|\big|\Big)^2.
\end{split}
\end{align}
Hence, we get
\begin{align}
\nonumber
||IV||\leq&\frac{1}{n}\sum_{t=1}^n \Big|\Big|\tilde{D}_t(\hat{\theta}_n)\tilde{D}_t'(\hat{\theta}_n)-D_t(\hat{\theta}_n)D_t'(\hat{\theta}_n)\Big|\Big|
\leq  \bigg(\frac{C_1^2}{\underline{\omega}^2}+\frac{2C_1}{\underline{\omega}}\bigg) \frac{1}{n}\sum_{t=1}^n \rho^{t}\Big(1+ \big|\big|D_t(\hat{\theta}_n)\big|\big|\Big)^2\\
\label{eq:4.A.7}
\overset{a.s.}{\leq}&  \bigg(\frac{C_1^2}{\underline{\omega}^2}+\frac{2C_1}{\underline{\omega}}\bigg) \frac{1}{n}\sum_{t=1}^n \rho^{t}(1+ U_t)^2.
\end{align}
For each $\varepsilon>0$, Markov's inequality yields
\begin{align*}
&\sum_{t=1}^\infty \PP\Big[\rho^{t}(1+U_t)^2>\varepsilon\Big]  \leq \sum_{t=1}^\infty \rho^{t/2} \frac{1+\EE[U_t]}{\sqrt{\varepsilon}} 
= \frac{1+\EE[U_t]}{\sqrt{\varepsilon}(1-\sqrt{\rho})}<\infty
\end{align*}
and $\frac{1}{n}\sum_{t=1}^n \rho^{t}(1+U_t)^2\overset{a.s.}{\to}0$ follows from combining the Borel-Cantelli lemma  with Ces\'aro's lemma. Hence, $||IV|| \overset{a.s.}{\to}0$, which validates the second statement.

Consider the third statement and expand
\begin{align*}
&\frac{1}{n}\sum_{t=1}^n \hat{H}_t
= \underbrace{\frac{1}{n}\sum_{t=1}^n H_t(\hat{\theta}_n)}_{V}+\underbrace{\frac{1}{n}\sum_{t=1}^n \Big(\tilde{H}_t(\hat{\theta}_n)-H_t(\hat{\theta}_n)\Big)}_{VI}
\end{align*}
We focus on $V$ and let $\varepsilon>0$. 
\textcolor{black}{
Since $H_t(\theta)$ is continuous in $\theta$ we can take $\mathscr{V}_\varepsilon(\theta_0)\subseteq \mathscr{V}(\theta_0)$ such that 
}
\begin{align*}
\EE\big[e_i'H_t e_j\big]-\varepsilon<\EE\Big[\inf_{\theta \in \mathscr{V}_\varepsilon(\theta_0)}e_i' H_t(\theta)e_j\Big]\leq \EE\Big[\sup_{\theta \in \mathscr{V}_\varepsilon(\theta_0)}e_i' H_t(\theta)e_j\Big]<\EE\big[e_i'H_t e_j\big]+ \varepsilon
\end{align*}
for all $i,j \in\{1,\dots,r\}$. As $\hat{\theta}_n\overset{a.s.}{\to}\theta_0$ by Theorem \ref{thm:4.1}, we have
 $\hat{\theta}_n \in \mathscr{V}_\varepsilon(\theta_0)$ almost surely. Together with the uniform ergodic theorem we obtain
\begin{align*}
\frac{1}{n}\sum_{t=1}^n e_i' H_t(\hat{\theta}_n) e_j \overset{a.s.}{\leq}& \frac{1}{n}\sum_{t=1}^n \sup_{\theta \in \mathscr{V}_\varepsilon(\theta_0)} e_i' H_t(\theta) e_j \overset{a.s.}{\to} \EE\Big[\sup_{\theta \in \mathscr{V}_\varepsilon(\theta_0)}e_i' H_t(\theta)e_j\Big]<\EE\big[e_i'H_t e_j\big]+ \varepsilon\\
\frac{1}{n}\sum_{t=1}^n e_i' H_t(\hat{\theta}_n) e_j \overset{a.s.}{\geq}& \frac{1}{n}\sum_{t=1}^n \inf_{\theta \in \mathscr{V}_\varepsilon(\theta_0)} e_i' H_t(\theta) e_j \overset{a.s.}{\to} \EE\Big[\inf_{\theta \in \mathscr{V}_\varepsilon(\theta_0)}e_i' H_t(\theta)e_j\Big]>\EE\big[e_i'H_t e_j\big]- \varepsilon
\end{align*}
Taking $\varepsilon \searrow 0$ establishes  $\frac{1}{n}\sum_{t=1}^n e_i' H_t(\hat{\theta}_n)  e_j \overset{a.s.}{\to}\EE[e_i'H_t e_j]$ for all pairs $(i,j)$ yielding $V \overset{a.s.}{\to}\EE[H_t]$. Regarding $VI$, we note that 
\begin{align}
\label{eq:4.A.8}
\begin{split}
&\big|\big|\tilde{H}_t(\theta)-H_t(\theta)\big|\big| = \bigg|\bigg|\frac{1}{\tilde{\sigma}_t(\theta)}\frac{\partial^2 \tilde{\sigma}_t(\theta)}{\partial \theta \partial \theta'}-\frac{1}{\sigma_t(\theta)}\frac{\partial^2 \sigma_t(\theta)}{\partial \theta \partial \theta'}\bigg|\bigg|\\
= & \bigg|\bigg|\frac{1}{\tilde{\sigma}_t(\theta)}\bigg(\frac{\partial^2 \tilde{\sigma}_t(\theta)}{\partial \theta \partial \theta'}-\frac{\partial^2 \sigma_t(\theta)}{\partial \theta \partial \theta'}\bigg)+\frac{\sigma_t(\theta)-\tilde{\sigma}_t(\theta)}{\tilde{\sigma}_t(\theta)}\frac{1}{\sigma_t(\theta)}\frac{\partial^2 \sigma_t(\theta)}{\partial \theta \partial \theta'}\bigg|\bigg|\\
\leq& \frac{1}{\tilde{\sigma}_t(\theta)}\bigg|\bigg|\frac{\partial^2 \tilde{\sigma}_t(\theta)}{\partial \theta \partial \theta'}-\frac{\partial^2 \sigma_t(\theta)}{\partial \theta \partial \theta'}\bigg|\bigg|+ \frac{|\sigma_t(\theta)-\tilde{\sigma}_t(\theta)|}{\tilde{\sigma}_t(\theta)}\bigg|\bigg|\frac{1}{\sigma_t(\theta)}\frac{\partial^2 \sigma_t(\theta)}{\partial \theta \partial \theta'}\bigg|\bigg|\\
\leq& \frac{C_1 \rho^t}{\underline{\omega}}+ \frac{C_1 \rho^t}{\underline{\omega}}\big|\big|H_t(\theta)\big|\big|=\frac{C_1 \rho^t}{\underline{\omega}}\Big(1+ \big|\big|H_t(\theta)\big|\big|\Big)
\end{split}
\end{align}
for each $\theta \in \Theta$. We obtain
\begin{align*}
||VI||\leq  \frac{1}{n}\sum_{t=1}^n \big|\big|\tilde{H}_t(\hat{\theta}_n)-H_t(\hat{\theta}_n)\big|\big| \leq  \frac{C_1 }{\underline{\omega}} \frac{1}{n}\sum_{t=1}^n \rho^{t}\Big(1+ \big|\big|H_t(\hat{\theta}_n)\big|\big|\Big)
\overset{a.s.}{\leq}  \frac{C_1 }{\underline{\omega}} \frac{1}{n}\sum_{t=1}^n \rho^{t}\big(1+ V_t\big).
\end{align*}
For each $\varepsilon>0$, Markov's inequality yields
\begin{align*}
&\sum_{t=1}^\infty \PP\Big[\rho^{t}(1+V_t)>\varepsilon\Big]  \leq \sum_{t=1}^\infty \rho^{t} \frac{1+\EE[V_t]}{\varepsilon} 
= \frac{1+\EE[V_t]}{\varepsilon(1-\rho)}<\infty
\end{align*}
and $\frac{1}{n}\sum_{t=1}^n \rho^{t}(1+V_t)\overset{a.s.}{\to}0$ follows from combining the Borel-Cantelli lemma  with Ces\'aro's lemma. Hence, $||VI|| \overset{a.s.}{\to}0$, which validates the third statement.

Consider the fourth statement; let $m \in \{0,1,2,3,4\}$ and take $l,u \in \R$ such that $l< u$. We employ the partial integration formula 
\begin{align}
\label{eq:4.A.9}
G(u-)H(u-)-G(l-)H(l-)  =  \int_{[l,u)}G(t-)\,dH(t)+ \int_{[l,u)} H(s)\,dG(s)
\end{align}
with $G$ and $H$ both right-continuous functions being locally of bounded variation to expand
\begin{align*}
&\frac{1}{n}\sum\limits_{t=1}^n \hat{\eta}_t^m \mathbbm{1}_{\{l\leq \hat{\eta}_t<u\}} - \EE\big[\eta_t^m \mathbbm{1}_{\{l\leq \eta_t<u\}}\big] = \int_{[l,u)} x^m d \hat{\mathbbm{F}}_{n}(x)- \int_{[l,u)} x^m d F(x)\\
=& u^m \Big(\hat{\mathbbm{F}}_{n}(u-) - F(u)\Big) - l^m \Big(\hat{\mathbbm{F}}_{n}(l-) - F(l)\Big)+ \int_{[l,u)} \Big(\hat{\mathbbm{F}}_{n}(x)-F(x) \Big) d x^m.
\end{align*}
Lemma \ref{lem:4.1} implies $\hat{\mathbbm{F}}_{n}(u-) \overset{a.s.}{\to} F(u)$ and $\hat{\mathbbm{F}}_{n}(l-) \overset{a.s.}{\to} F(l)$ and together with the dominated convergence theorem yields $\int_{[l,u)} \big(\hat{\mathbbm{F}}_{n}(x)-F(x) \big) d x^m\overset{a.s.}{\to}0$. Thus,
\begin{align*}
\frac{1}{n}\sum\limits_{t=1}^n \hat{\eta}_t^m \mathbbm{1}_{\{l\leq \hat{\eta}_t<u\}} \overset{a.s.}{\to} \EE\big[\eta_t^m \mathbbm{1}_{\{l\leq \eta_t<u\}}\big] 
\end{align*}
for $m \in \{0,1,2,3,4\}$ and $l,u \in \R$. Since $\EE\big[|\eta_t|^m\big]<\infty$ and $\EE\big[\eta_t^m \mathbbm{1}_{\{l\leq \eta_t<u\}}\big] = \int_l^u x^m f(x) dx$ is continuous in $l$ and $u$ it is easy to see that the result extends to $l= -\infty$ and $u= \infty$, which validates the fourth statement.

Consider the fifth statement, whose proof follows the general steps of the proof of Lemma \ref{lem:4.1} and the fourth statement. Define 
\begin{align*}
    \hat{\mathbbm{G}}_n(x)=\frac{1}{n}\sum_{t=1}^n \mathbbm{1}_{\{\sqrt{n}(\tilde{\psi}_t-1)\leq x\}} \qquad \text{and} \qquad G(x)=\PP[D_t'(v_1-v_2)\leq x].
\end{align*}
First, we show that for any $\varepsilon>0$ there is a $\tau>0$ such that almost surely
\begin{align}
\label{eq:235768935}
\begin{split}
&\limsup_{n \to \infty} \sup_{\theta_1,\theta_2 \in \mathscr{V}_\tau(\theta_0)} \bigg|\frac{1}{n}\sum_{t=1}^n \mathbbm{1}_{\left\{\frac{\tilde{\sigma}_t(\theta_1)}{\tilde{\sigma}_t(\theta_2)}\left(\tilde{D}_t'(\theta_1)v_1-\tilde{D}_t'(\theta_2)v_2\right)\leq x\right\}}-G(x)\bigg|\\ 
& \qquad \qquad \qquad \qquad \qquad \qquad \qquad \qquad \qquad\leq 2\big(G(x+ \Delta\varepsilon)-G(x- \Delta\varepsilon)\big)
\end{split}
\end{align}
 for any $x \in \R$, where  $\Delta=|x|+||v_1||+||v_2||$ and $\mathscr{V}_\tau(\theta_0) = \big\{ \theta \in \Theta: ||\theta-\theta_0||\leq \tau\big\}$. Then, we show $\hat{\mathbbm{G}}_n(x)\overset{a.s.}{\to}G(x)$ for any $x \in \R$ and  $\sup_{x \in \R}|\hat{\mathbbm{G}}_n(x)-G(x)|\overset{a.s.}{\to}0$. Last, we prove $\frac{1}{n}\sum_{t=1}^n \mathbbm{1}_{\{l \leq \sqrt{n}(\tilde{\psi}_t-1)<u\}}  \big(\sqrt{n}(\tilde{\psi}_t-1)\big)^m\overset{a.s.}{\to}\EE\left[ \mathbbm{1}_{\{l \leq D_t'(v_1-v_2)< u\}}\left(D_t'(v_1-v_2)\right)^m\right]$.

Let $\varepsilon>0$ and set $\tau>0$ sufficiently small such that $\mathscr{V}_\tau(\theta_0)\subset \mathscr{V}(\theta_0)$. Regarding the initial conditions Assumption \ref{as:4.4}(\ref{as:4.4.1}) implies
\begin{align}
\label{eq:4.A.51b}
\begin{split}
\bigg|\frac{\tilde{\sigma}_t(\theta_1)}{\tilde{\sigma}_t(\theta_2)}-\frac{\sigma_t(\theta_1)}{\sigma_t(\theta_2)}\bigg|=&\bigg|\frac{\tilde{\sigma}_t(\theta_1)-\sigma_t(\theta_1)}{\tilde{\sigma}_t(\theta_2)}+\frac{\sigma_t(\theta_1)}{\sigma_t(\theta_2)}\:\frac{\sigma_t(\theta_2)-\tilde{\sigma}_t(\theta_2)}{\tilde{\sigma}_t(\theta_2)}\bigg|\\
\leq & \frac{|\tilde{\sigma}_t(\theta_1)-\sigma_t(\theta_1)|}{\tilde{\sigma}_t(\theta_2)}+\frac{\sigma_t(\theta_1)}{\sigma_t(\theta_2)}\:\frac{|\sigma_t(\theta_2)-\tilde{\sigma}_t(\theta_2)|}{\tilde{\sigma}_t(\theta_2)}\\
\leq& \frac{C_1\rho^t}{\underline{\omega}}+ \frac{\sigma_t(\theta_1)}{\sigma_t(\theta_2)}\frac{C_1\rho^t}{\underline{\omega}}= \frac{C_1\rho^t}{\underline{\omega}}\bigg(1+\frac{\sigma_t(\theta_1)}{\sigma_t(\theta_2)}\bigg)
\end{split}
\end{align}
for any $\theta_1,\theta_2 \in \Theta$ and together with \eqref{eq:4.A.3} we find 
\begin{align*}
&\frac{1}{n}\sum_{t=1}^n \mathbbm{1}_{\left\{\frac{\tilde{\sigma}_t(\theta_1)}{\tilde{\sigma}_t(\theta_2)}\left(\tilde{D}_t'(\theta_1)v_1-\tilde{D}_t'(\theta_2)v_2\right)\leq x\right\}}\\
=&\frac{1}{n}\sum_{t=1}^n \mathbbm{1}_{\left\{D_t'(\theta_1)v_1-D_t'(\theta_2)v_2-x\frac{\sigma_t(\theta_2)}{\sigma_t(\theta_1)}\leq x\left(\frac{\tilde{\sigma}_t(\theta_2)}{\tilde{\sigma}_t(\theta_1)}-\frac{\sigma_t(\theta_2)}{\sigma_t(\theta_1)}\right)+\left(D_t(\theta_1)-\tilde{D}_t(\theta_1)\right)'v_1+\left(\tilde{D}_t(\theta_2)-D_t(\theta_2)\right)'v_2\right\}}\\
\leq&\frac{1}{n}\sum_{t=1}^n \mathbbm{1}_{\left\{D_t'(\theta_1)v_1-D_t'(\theta_2)v_2-x\frac{\sigma_t(\theta_2)}{\sigma_t(\theta_1)}\leq |x|\frac{C_1\rho^t}{\underline{\omega}}\left(1+\frac{\sigma_t(\theta_2)}{\sigma_t(\theta_1)}\right)+||v_1||\frac{C_1\rho^t}{\underline{\omega}}\left(1+||D_t(\theta_1)||\right)+||v_2||\frac{C_1\rho^t}{\underline{\omega}}\left(1+||D_t(\theta_2)||\right)\right\}}\\
\leq&\frac{1}{n}\sum_{t=1}^n \mathbbm{1}_{\left\{D_t'(\theta_1)u-D_t'(\theta_2)v-x\frac{\sigma_t(\theta_2)}{\sigma_t(\theta_1)}\leq |x|\frac{C_1\rho^t}{\underline{\omega}}\left(1+S_tT_t\right)+\left(||v_1||+||v_2||\right)\frac{C_1\rho^t}{\underline{\omega}}\left(1+U_t\right)\right\}}
\end{align*}
for all $\theta_1,\theta_2 \in \mathscr{V}_\tau(\theta_0)$.
We have $\rho^{t}(1+U_t)\overset{a.s.}{\to} 0$ by \eqref{eq:4.A.4}. Further, for each $\varepsilon>0$, Markov's and H\"older's inequality together with Assumption \ref{as:4.9}(i) entail 
\begin{align*}
&\sum_{t=1}^\infty \PP\Big[\rho^{t}(1+S_tT_t)>\varepsilon\Big] \leq \sum_{t=1}^\infty \rho^t \frac{1+\EE[S_tT_t]}{\varepsilon} 
\leq \frac{1+\EE\left[S_t^2\right]^{\frac{1}{2}}\EE\left[T_t^2\right]^{\frac{1}{2}}}{\varepsilon(1-\rho)}<\infty.
\end{align*}
The Borel-Cantelli lemma implies 
%
%
$\rho^{t}(1+S_tT_t)\overset{a.s.}{\to} 0$. Hence, there exists a random variable $n_0$ such that $\frac{C_1\rho^t}{\underline{\omega}}(1+U_t)\leq \varepsilon$ and $\frac{C_1\rho^t}{\underline{\omega}}(1+S_tT_t)\leq \varepsilon$ for all $t>n_0$. It follows that almost surely
\begin{align*}
&\frac{1}{n}\sum_{t=1}^n \mathbbm{1}_{\left\{\frac{\tilde{\sigma}_t(\theta_1)}{\tilde{\sigma}_t(\theta_2)}\left(\tilde{D}_t'(\theta_1)v_1-\tilde{D}_t'(\theta_2)v_2\right)\leq x\right\}} \leq \frac{n_0}{n}+\frac{1}{n}\sum_{t=1}^n \mathbbm{1}_{\Big\{\inf\limits_{\theta_1,\theta_2 \in \mathscr{V}_\tau(\theta_0)}\left(D_t'(\theta_1)v_1-D_t'(\theta_2)v_2-x\frac{\sigma_t(\theta_2)}{\sigma_t(\theta_1)}\right)\leq \Delta\varepsilon\Big\}}
\end{align*}
 for all $\theta_1,\theta_2 \in \mathscr{V}_\tau(\theta_0)$. The uniform ergodic theorem and Assumptions \ref{as:4.2} and \ref{as:4.3} yield
\begin{align*}
&\frac{1}{n}\sum_{t=1}^n \mathbbm{1}_{\left\{\inf\limits_{\theta_1,\theta_2 \in \mathscr{V}_\tau(\theta_0)}\left(D_t'(\theta_1)v_1-D_t'(\theta_2)v_2-x\frac{\sigma_t(\theta_2)}{\sigma_t(\theta_1)}\right)\leq \Delta\varepsilon\right\}}\\
&\qquad \qquad \qquad \qquad \qquad \overset{a.s.}{\to} \EE\bigg[ \mathbbm{1}_{\left\{\inf\limits_{\theta_1,\theta_2 \in \mathscr{V}_\tau(\theta_0)}\left(D_t'(\theta_1)v_1-D_t'(\theta_2)v_2-x\frac{\sigma_t(\theta_2)}{\sigma_t(\theta_1)}\right)\leq \Delta\varepsilon\right\}}\bigg].
\end{align*}
The dominated convergence theorem entails
\begin{align*}
& \lim_{\tau\to 0}\EE\bigg[ \mathbbm{1}_{\left\{\inf\limits_{\theta_1,\theta_2 \in \mathscr{V}_\tau(\theta_0)}\left(D_t'(\theta_1)v_1-D_t'(\theta_2)v_2-x\frac{\sigma_t(\theta_2)}{\sigma_t(\theta_1)}\right)\leq \Delta\varepsilon\right\}}\bigg] =\EE\big[ \mathbbm{1}_{\left\{D_t'(v_1-v_2)-x\leq \Delta\varepsilon\right\}}\big]=G\left(x+ \Delta\varepsilon\right).
\end{align*}
Putting the results together, we get that for every $\varepsilon>0$, there is a $\tau>0$ such that
\begin{align*}
&\limsup_{n \to \infty} \sup_{\theta_1,\theta_2 \in \mathscr{V}_\tau(\theta_0)} \frac{1}{n}\sum_{t=1}^n \mathbbm{1}_{\left\{\frac{\tilde{\sigma}_t(\theta_1)}{\tilde{\sigma}_t(\theta_2)}\left(\tilde{D}_t'(\theta_1)v_1-\tilde{D}_t'(\theta_2)v_2\right)\leq x\right\}} \leq G(x)+2\big(G(x+ \Delta\varepsilon)-G(x)\big)
\end{align*}
almost surely for any $x\in\R$. Similarly it can be shown that for every $\varepsilon>0$, there is a $\tau>0$ such that
\begin{align*}
&\liminf_{n \to \infty} \sup_{\theta_1,\theta_2 \in \mathscr{V}_\tau(\theta_0)} \frac{1}{n}\sum_{t=1}^n \mathbbm{1}_{\left\{\frac{\tilde{\sigma}_t(\theta_1)}{\tilde{\sigma}_t(\theta_2)}\left(\tilde{D}_t'(\theta_1)v_1-\tilde{D}_t'(\theta_2)v_2\right)\leq x\right\}} \geq G(x)-2\big(G(x)-G(x- \Delta\varepsilon)\big)
\end{align*}
almost surely for any $x\in\R$. Combining both results establishes \eqref{eq:235768935}.

Next, we show $\hat{\mathbbm{G}}_n(x)\overset{a.s.}{\to}G(x)$ for any $x \in \R$. Let $\delta>0$; by continuity of $G$, there is a $\varepsilon>0$ such that $\big|G(x+ \Delta\varepsilon)-G(x- \Delta\varepsilon)\big|<\delta/2$. Employing equation \eqref{eq:235768935}, there are $\tau>0$ and a random variable $n_1$ such that 
\begin{align*}
 \sup_{\theta \in \mathscr{V}_\tau(\theta_0)} \bigg|\frac{1}{n}\sum_{t=1}^n \mathbbm{1}_{\left\{\frac{\tilde{\sigma}_t(\theta_1)}{\tilde{\sigma}_t(\theta_2)}\left(\tilde{D}_t'(\theta_1)v_1-\tilde{D}_t'(\theta_2)v_2\right)\leq x\right\}}-G(x)\bigg|<\delta
\end{align*}
for all $n\geq n_1$. In addition, the mean value theorem implies
\begin{align}
\label{eq:2309247}
 \frac{1}{n}\sum_{t=1}^n \mathbbm{1}_{\{\sqrt{n}(\tilde{\psi}_t-1)\leq x\}} = \frac{1}{n}\sum_{t=1}^n \mathbbm{1}_{\left\{\frac{\tilde{\sigma}_t(\dot{\theta}_n)}{\tilde{\sigma}_t(\ddot{\theta}_n)}\left(\tilde{D}_t'(\dot{\theta}_n)v_1-\tilde{D}_t'(\ddot{\theta}_n)v_2\right)\leq x\right\}}
\end{align}
with $\dot{\theta}_n$ lying between $\hat{\theta}_n$ and $\hat{\theta}_n+n^{-1/2}v_1$ and $\ddot{\theta}_n$ lying between $\hat{\theta}_n$ and $\hat{\theta}_n+n^{-1/2}v_2$. Since $\hat{\theta}_n \overset{a.s.}{\to}\theta_0$ by Theorem \ref{thm:4.1} there is a random variable $n_2$ such that $\dot{\theta}_n  ,\ddot{\theta}_n  \in \mathscr{V}_\tau(\theta_0)$ for all $n\geq n_2$. Thus, 
\begin{align*}
\big|\hat{\mathbbm{G}}_n(x)-G(x)\big|\leq  \sup_{\theta \in \mathscr{V}_\tau(\theta_0)} \bigg|\frac{1}{n}\sum_{t=1}^n \mathbbm{1}_{\left\{\frac{\tilde{\sigma}_t(\theta_1)}{\tilde{\sigma}_t(\theta_2)}\left(\tilde{D}_t'(\theta_1)v_1-\tilde{D}_t'(\theta_2)v_2\right)\leq x\right\}}-G(x)\bigg|<\delta
\end{align*}
for all $n\geq \max\{n_1,n_2\}$, which establishes $\hat{\mathbbm{G}}_n(x)\overset{a.s.}{\to}G(x)$ for any $x \in \R$. Using P\'olya's lemma (cf.\ \citeauthor{roussas1997course}, \citeyear{roussas1997course}, p.\ 206), we establish $\sup_{x \in \R}|\hat{\mathbbm{G}}_n(x)-G(x)|\overset{a.s.}{\to}0$. Next, let $l, u \in \R$ with $l < u$.
We use the partial integration formula \eqref{eq:4.A.9} to expand
\begin{align*}
&\frac{1}{n}\sum\limits_{t=1}^n   \mathbbm{1}_{\{l\leq \sqrt{n}(\tilde{\psi}_t-1)<u\}} \Big(\sqrt{n}\big(\tilde{\psi}_t-1\big)\Big)^m  - \EE\left[ \mathbbm{1}_{\{l\leq D_t'(v_1-v_2)<u\}}\left(D_t'(v_1-v_2)\right)^m\right]\\
=& \int_{[l,u)} x^m d \hat{\mathbbm{G}}_{n}(x)- \int_{[l,u)} x^m d G(x)\\
=& u^m \big(\hat{\mathbbm{G}}_{n}(u-) - G(u)\big)  - l^m \big(\hat{\mathbbm{G}}_{n}(l-) - G(l)\big) + \int_{[l,u)} \big(\hat{\mathbbm{G}}_{n}(x)-G(x) \big) d x^m.
\end{align*}
We have $\hat{\mathbbm{G}}_{n}(u-) \overset{a.s.}{\to} G(u)$ and $\hat{\mathbbm{G}}_{n}(l-) \overset{a.s.}{\to} G(l)$ and together with the dominated convergence theorem yields $\int_{[l,u)} \big(\hat{\mathbbm{G}}_{n}(x)-G(x) \big) d x^m\overset{a.s.}{\to}0$. Thus, we establish
\begin{align*}
\frac{1}{n}\sum\limits_{t=1}^n   \mathbbm{1}_{\{l\leq \sqrt{n}(\tilde{\psi}_t-1)<u\}} \big(\sqrt{n}(\tilde{\psi}_t-1)\big)^m\overset{a.s.}{\to}\EE\left[ \mathbbm{1}_{\{l\leq D_t'(v_1-v_2)<u\}}\left(D_t'(v_1-v_2)\right)^m\right].
\end{align*}
Let $g(x)$ be the corresponding density of $G(x)$. As $\EE\big[|D_t'(v_1-v_2)|^m\big]\leq ||v_1-v_2||^m\EE\big[U_t^m\big]<\infty$ and $\EE\left[ \mathbbm{1}_{\{l\leq D_t'(v_1-v_2)<u\}}\left(D_t'(v_1-v_2)\right)^m\right] = \int_l^u x^m g(x) dx$ is continuous in $l$ and $u$ it is easy to see that the result extends to $l= -\infty$ and  $u= \infty$, which validates the fifth statement and completes the proof.
\end{proof}

\begin{lemma}
\label{lem:4.3}
Suppose Assumptions \ref{as:4.1}--\ref{as:4.9} hold with $a=\pm 6$, $b=6$ and $c=2$ and let $\mathcal{I}_n=(\xi_\alpha-a_n,\xi_\alpha+a_n)$ with $a_n \sim n^{-\varrho} \log n$ for some $\varrho\in(0,1)$. Then, we have
\begin{align*}
\sup_{x,y \in I_n} \Big| \sqrt{n} \big(\hat{\mathbbm{F}}_n(x)-\hat{\mathbbm{F}}_n(y)\big)- \sqrt{n} \big(F(x)-F(y)\big)\Big|\overset{p}{\to}0.
\end{align*}
Replacing any $\hat{\mathbbm{F}}_n(\cdot)$ by $\hat{\mathbbm{F}}_n(\cdot\:-)$  does not alter the result.
\end{lemma}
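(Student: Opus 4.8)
The plan is to write $G_n(x)=\sqrt{n}\big(\hat{\mathbbm{F}}_n(x)-F(x)\big)$, so that the quantity to be bounded equals $G_n(x)-G_n(y)$ and it suffices to prove $\sup_{x\in\mathcal{I}_n}|G_n(x)-G_n(\xi_\alpha)|\overset{p}{\to}0$. Using $\hat{\eta}_t=\eta_t\,\sigma_t(\theta_0)/\tilde{\sigma}_t(\hat{\theta}_n)$ and setting $c_{n,t}=\tilde{\sigma}_t(\hat{\theta}_n)/\sigma_t(\theta_0)$, so that $\mathbbm{1}_{\{\hat{\eta}_t\le x\}}=\mathbbm{1}_{\{\eta_t\le xc_{n,t}\}}$, I would split
\[
G_n(x)=\underbrace{\tfrac{1}{\sqrt{n}}\sum_{t=1}^n\big(\mathbbm{1}_{\{\eta_t\le x\}}-F(x)\big)}_{B_n(x)}+\underbrace{\tfrac{1}{\sqrt{n}}\sum_{t=1}^n\Big(\big[\mathbbm{1}_{\{\eta_t\le xc_{n,t}\}}-F(xc_{n,t})\big]-\big[\mathbbm{1}_{\{\eta_t\le x\}}-F(x)\big]\Big)}_{\mathcal{E}_n(x)}+\underbrace{\tfrac{1}{\sqrt{n}}\sum_{t=1}^n\big(F(xc_{n,t})-F(x)\big)}_{\mathcal{B}_n(x)},
\]
and control the increments of $B_n$, $\mathcal{B}_n$, $\mathcal{E}_n$ over $\mathcal{I}_n$ in turn. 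The term $B_n$ is the ordinary empirical process of the iid sequence $\{\eta_t\}$, whose cdf $F$ is continuous with a density bounded near $\xi_\alpha$ (Assumption \ref{as:4.5}(\ref{as:4.5.1})--(\ref{as:4.5.2})); since $\PP(\eta_t\in\mathcal{I}_n)=O(a_n)\to 0$, a standard modulus-of-continuity (asymptotic equicontinuity) bound for empirical processes of iid data gives $\sup_{x\in\mathcal{I}_n}|B_n(x)-B_n(\xi_\alpha)|\overset{p}{\to}0$.

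For the drift term $\mathcal{B}_n$, a mean-value expansion together with Assumptions \ref{as:4.3}--\ref{as:4.4} yields $c_{n,t}-1=D_t'(\hat{\theta}_n-\theta_0)+r_{n,t}$, where $r_{n,t}$ collects a second-order term (bounded via Assumption \ref{as:4.9}(iii) with $c=2$ on $\mathscr{V}(\theta_0)$) and the initialization error $O(C_1\rho^t/\underline{\omega})$ from Assumption \ref{as:4.4}(\ref{as:4.4.1}). Expanding $F$ around $x$ and using continuity of $f$ near $\xi_\alpha$, $\tfrac{1}{n}\sum_t D_t\to\Omega$ (Lemma \ref{lem:4.2}), $\sqrt{n}(\hat{\theta}_n-\theta_0)=O_p(1)$ (Theorem \ref{thm:4.2}), $\tfrac{1}{\sqrt{n}}\sum_t C_1\rho^t=o_p(1)$, and $\sup_{x\in\mathcal{I}_n}|x|\,|c_{n,t}-1|\to 0$ to absorb the Taylor remainder, one obtains $\mathcal{B}_n(x)=xf(x)\,\Omega'\sqrt{n}(\hat{\theta}_n-\theta_0)+o_p(1)$ uniformly in $x\in\mathcal{I}_n$. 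Hence $\sup_{x\in\mathcal{I}_n}|\mathcal{B}_n(x)-\mathcal{B}_n(\xi_\alpha)|\le\big(\sup_{x\in\mathcal{I}_n}|xf(x)-\xi_\alpha f(\xi_\alpha)|\big)\,O_p(1)+o_p(1)\overset{p}{\to}0$, because $x\mapsto xf(x)$ is continuous in a neighbourhood of $\xi_\alpha$ and $\mathcal{I}_n\downarrow\{\xi_\alpha\}$.

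The main obstacle is $\mathcal{E}_n$: it is an empirical process built from differences of centered indicators whose thresholds $xc_{n,t}$ and $x$ are within $O_p(n^{-1/2})$ of each other, but in which $c_{n,t}$ depends on the whole sample through $\hat{\theta}_n$, so the summands are not $\mathcal{F}_{t-1}$-measurable and do not form a martingale array. I would therefore freeze the parameter: for fixed $M>0$ put $\mathscr{V}_n=\{\theta:\|\theta-\theta_0\|\le M/\sqrt{n}\}$, $c_t(\theta)=\tilde{\sigma}_t(\theta)/\sigma_t(\theta_0)$, and establish
\[
\sup_{x\in\mathcal{I}_n}\ \sup_{\theta\in\mathscr{V}_n}\ \Big|\tfrac{1}{\sqrt{n}}\sum_{t=1}^n\Big(\big[\mathbbm{1}_{\{\eta_t\le xc_t(\theta)\}}-F(xc_t(\theta))\big]-\big[\mathbbm{1}_{\{\eta_t\le x\}}-F(x)\big]\Big)\Big|\overset{p}{\to}0,
\]
after which evaluating at $\theta=\hat{\theta}_n$ and using $\sqrt{n}$-consistency of $\hat{\theta}_n$ (so that $\PP(\hat{\theta}_n\in\mathscr{V}_n)$ is arbitrarily close to $1$ for $M$ large) transfers the bound to $\mathcal{E}_n$. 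For the frozen display the decisive point is that $c_t(\theta)$ is $\mathcal{F}_{t-1}$-measurable while $\eta_t$ is independent of $\mathcal{F}_{t-1}$ (Assumption \ref{as:4.5}(\ref{as:4.5.1})), so the summands form a martingale-difference array whose conditional variances sum to at most $\tfrac{1}{n}\sum_t|F(xc_t(\theta))-F(x)|\le C\,\tfrac{1}{n}\sum_t|x|\,|c_t(\theta)-1|=O_p(n^{-1/2})$ uniformly over $\mathcal{I}_n\times\mathscr{V}_n$, the bound on $\tfrac{1}{n}\sum_t|c_t(\theta)-1|$ coming from Assumption \ref{as:4.4}(\ref{as:4.4.1}) and Assumption \ref{as:4.9}(ii) with $b=6$; this gives pointwise (in $(x,\theta)$) convergence to $0$, which a chaining/maximal inequality over the VC-type class of indicator increments indexed by $(x,\theta)\in\mathcal{I}_n\times\mathscr{V}_n$ promotes to the uniform statement, the initialization error ($\tilde{\sigma}_t$ versus $\sigma_t$) and the few initial indices at which $c_t(\theta)-1$ is not yet small being absorbed exactly as in the proofs of Lemmas \ref{lem:4.1} and \ref{lem:4.2} through Assumption \ref{as:4.4}(\ref{as:4.4.1}) and Borel--Cantelli/Ces\`aro arguments. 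Carrying out this uniform-in-$(x,\theta)$ control while decoupling from the data-dependent $\hat{\theta}_n$ is the crux; everything else is routine. Finally, since $F$ is continuous, $F(x-)=F(x)$ and replacing $\hat{\mathbbm{F}}_n(\cdot)$ by $\hat{\mathbbm{F}}_n(\cdot\,-)$ merely turns every ``$\le$'' into ``$<$'' in the indicators above, so the identical argument applies verbatim.
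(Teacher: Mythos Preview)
Your decomposition $G_n=B_n+\mathcal{E}_n+\mathcal{B}_n$, the freezing of $\hat{\theta}_n$ at a deterministic $\theta\in\mathscr{V}_n$, and the treatments of $B_n$ and $\mathcal{B}_n$ are exactly the structure the paper uses (their Steps 4--7 correspond to your $\mathcal{B}_n$ analysis, your $\mathcal{E}_n$ transfer via $\sqrt{n}$-consistency, your Bahadur-type bound for $B_n$, and your final assembly). The one place where your proposal diverges from the paper and becomes genuinely vague is the \emph{uniformity} of $\mathcal{E}_n$ over $(x,\theta)\in\mathcal{I}_n\times\mathscr{V}_n$.

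You invoke ``a chaining/maximal inequality over the VC-type class of indicator increments'' without specifying which one. The difficulty is that the thresholds $xc_t(\theta)$ are $\mathcal{F}_{t-1}$-measurable and random, so the function class is data-dependent and the summands are martingale differences, not an iid empirical process; off-the-shelf VC/chaining bounds do not apply directly, and martingale empirical-process theory is not routine here. The paper avoids this entirely by a more elementary route: (i) a \emph{fourth}-moment bound $\EE|S_n(x,u)|^4=O(n)$ for the frozen martingale sum (via \cite{hall1980martingale}, Theorem 2.11), which is strong enough for union bounds over polynomial grids; (ii) a grid in $x$ of size $O(n^{3/4})$, interpolating between grid points via monotonicity of the indicators in $x$ and handling the residual oscillation through a KMT/Brownian-bridge strong approximation (\cite{csorgo1981strong}); and (iii) a grid in $u=\sqrt{n}(\theta-\theta_0)$ of \emph{fixed} finite size, interpolating between hypercube vertices via Assumption~\ref{as:4.8} (monotonicity of $\sigma(\cdot;\theta)$ in $\theta$), which gives $\tilde{\gamma}_t(u_\bullet)\le\tilde{\gamma}_t(u)\le\tilde{\gamma}_t(u^\bullet)$ and hence a sandwich for the indicators. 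You never invoke Assumption~\ref{as:4.8}, although it is among the hypotheses of the lemma, and it is precisely what makes the paper's Step~3 work without any abstract chaining.

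In short: your plan is correct in architecture and in the easy pieces, but the hard uniformity step is left at the level of a slogan. Either make the martingale-empirical-process chaining precise (nontrivial), or follow the paper: upgrade your conditional-variance control to a fourth-moment bound and replace ``chaining/VC'' by explicit grids in $x$ and in $\theta$, using Assumption~\ref{as:4.8} for the $\theta$-direction. Your treatment of the left limits $\hat{\mathbbm{F}}_n(\cdot\,-)$ is also a touch too brisk; the paper sandwiches $\hat{\mathbbm{F}}_n(x-)$ between $\hat{\mathbbm{F}}_n(x-n^{-1})$ and $\hat{\mathbbm{F}}_n(x)$ and re-applies the result on a slightly enlarged interval, which is safer than asserting ``verbatim''.
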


\begin{proof}
We follow \cite{berkes2003limit} and define
\begin{align*}
\tilde{\gamma}_t(u) =& \tilde{\sigma}_t(\theta_0+n^{-1/2}u)/\sigma_t(\theta_0)\\
\gamma_t(u) =& \sigma_t(\theta_0+n^{-1/2}u)/\sigma_t(\theta_0)\\
\zeta_t(x,u) =& \mathbbm{1}_{\{\eta_t\leq x\tilde{\gamma}_t(u)\}}- F\big(x\tilde{\gamma}_t(u)\big)-\big(\mathbbm{1}_{\{\eta_t\leq x\}}- F(x)\big)\\
S_n(x,u) =& \sum_{t=1}^n \zeta_t(x,u)\\
\mathbbm{F}_n(x)=&\frac{1}{n}\sum_{t=1}^n \mathbbm{1}_{\{\eta_t\leq x\}}.
\end{align*}
%
%
Let $A>0$ and write $\mathscr{V}(\xi_\alpha)$ to denote the neighborhood around $\xi_\alpha$ on which $f$ is continuous; see Assumption \ref{as:4.5}(\ref{as:4.5.2}). Since $\xi_\alpha<0$, we can take a compact neighborhood $\mathcal{X}=[\underline{x},\bar{x}]\subset \mathscr{V}(\xi_\alpha)$ such that $\xi_\alpha \in \mathcal{X}$ and $\bar{x}<0$. We establish the result in seven steps:

\begin{enumerate}
\item[]\textit{Step 1:} $\EE\big[|S_n(x,u)|^4\big]=O(n)$ for all $ x \in \mathcal{X}$ and for all $u \in \{u \in \R^r:||u||\leq A\}$; 

\item[]\textit{Step 2:} $\sup\limits_{x \in \mathcal{X}}|S_n(x,u)|=o_p(\sqrt{n})$ for all $u \in \{u \in \R^r:||u||\leq A\}$; 

\item[]\textit{Step 3:}  $\sup\limits_{||u||\leq A}\sup\limits_{x \in \mathcal{X}}|S_n(x,u)|=o_p(\sqrt{n})$;

\item[]\textit{Step 4:} $\sup\limits_{||u||\leq A}\sup\limits_{x \in \mathcal{X}}\Big|\frac{1}{\sqrt{n}}\sum_{t=1}^n\big(F(x\tilde{\gamma}_t(u))-F(x)\big)-xf(x)\Omega'u\Big|=o_p(1)$;

\item[] \textit{Step 5:} $\sup\limits_{x \in \mathcal{X}}\Big|\sqrt{n}\big(\hat{\mathbbm{F}}_n(x)-\mathbbm{F}_n(x)\big)-xf(x)\Omega'\sqrt{n}\big(\hat{\theta}_n-\theta_0\big)\Big|=o_p(1)$;

 \item[] \textit{Step 6:} $
\sup\limits_{x,y \in \mathcal{I}_n} \Big| \sqrt{n} \big(\mathbbm{F}_n(x)-\mathbbm{F}_n(y)\big)- \sqrt{n} \big(F(x)-F(y)\big)\Big|=O\big(n^{-\varrho/2} \log n\big)$ a.s.;

\item[] \textit{Step 7:} $\sup\limits_{x,y \in \mathcal{I}_n} \Big| \sqrt{n} \big(\hat{\mathbbm{F}}_n(x)-\hat{\mathbbm{F}}_n(y)\big)- \sqrt{n} \big(F(x)-F(y)\big)\Big|\overset{p}{\to}0$.
\end{enumerate}
\textit{Step 1} to \textit{Step 5} are similar to the proofs of \cite{berkes2003limit}, whereas \textit{Step 6} resembles \citeauthor{bahadur1966note} (\citeyear{bahadur1966note}, Lemma 1).

Throughout \textit{Step 1} to \textit{Step 4} we take $\delta \in (0,1/2)$ such that $ \mathcal{X}_\delta=[\underline{x}(1+2\delta),\bar{x}(1-2\delta)]$ satisfies $\mathcal{X}\subset \mathcal{X}_\delta\subset \mathscr{V}(\xi_\alpha)$. Because $f$ is continuous on $\mathcal{X}_\delta$ and $\mathcal{X}_\delta$ is compact, $f$ is uniformly continuous on $\mathcal{X}_\delta$ and there exists a finite $M>0$ such that 
\begin{align}
\label{eq:4.A.10}
\sup_{x \in \mathcal{X}_\delta}f(x)\leq M.
\end{align}

Consider \textit{Step 1}; let $\mathscr{F}_t$ be the $\sigma$-algebra generated by $\zeta_t,\zeta_{t-1},\dots$ and note that $\{S_t(x,u),\mathscr{F}_t\}$ is a martingale given $x$ and $u$. Theorem 2.11 of \cite{hall1980martingale} yields
\begin{align*}
\EE\Big[|S_n(x,u)|^4\Big]\leq C\Bigg(\EE\Big[\max_{1\leq t\leq n}\zeta_t^4(x,u)\Big]+\EE\bigg[\bigg(\sum_{t=1}^n\EE_{t-1}\big[\zeta_t^2(x,u)\big]\bigg)^2\bigg]\Bigg),
\end{align*}
for some absolute constant $C>0$ independent of $x$ and $u$, where $\EE_{t-1}=\EE[\:\cdot\:|\mathscr{F}_{t-1}]$ is the expectation given $\mathscr{F}_{t-1}$.
As $\big|\zeta_t(x,u)\big|\leq 2$ for all $t$ such that $\EE\big[\max_{ 1\leq t\leq n}\zeta_t^4(x,u)\big]\leq 16$, it suffices to show that
\begin{align}
\label{eq:4.A.11}
\EE\bigg[\bigg(\sum_{t=1}^n\EE_{t-1}\big[\zeta_t^2(x,u)\big]\bigg)^2\bigg]=O(n).
\end{align}
First, we focus on the inner part $\EE_{t-1}\big[\zeta_t^2(x,u)\big]$ and decompose $\zeta_t(x,u)$ into 
\begin{align*}
\zeta_t(x,u) =& \zeta_{t,1}(x,u) +\zeta_{t,2}(x,u)
\end{align*}
with
\begin{align*}
\zeta_{t,1}(x,u) =& \mathbbm{1}_{\{\eta_t\leq x\tilde{\gamma}_t(u)\}}- F\big(x\tilde{\gamma}_t(u)\big)-\mathbbm{1}_{\{\eta_t\leq x\gamma_t(u)\}}+F\big(x\gamma_t(u)\big)\\
\zeta_{t,2}(x,u) =& \mathbbm{1}_{\{\eta_t\leq x\gamma_t(u)\}}- F\big(x\gamma_t(u)\big)- \mathbbm{1}_{\{\eta_t\leq x\}}+ F(x).
\end{align*}
The elementary inequality
\begin{align}
\label{eq:4.A.12}
\Big(\sum_{i=1}^m x_i\Big)^2\leq m \sum_{i=1}^m x_i^2
\end{align}
for all $x_1,\dots,x_m \in \R$ with $m \in \N$ implies that
\begin{align*}
\EE_{t-1}\big[\zeta_t^2(x,u)\big]\leq 2\Big(\EE_{t-1}\big[\zeta_{t,1}^2(x,u)\big]+\EE_{t-1}\big[\zeta_{t,2}^2(x,u)\big]\Big).
\end{align*}
Moreover, the inequality $\Var[\mathbbm{1}_{\{X\leq y\}}- \mathbbm{1}_{\{X\leq z\}}]
\leq  |F_X(y)-F_X(z)|$ for $y,z \in \R$ and $X\sim F_X$ gives
\begin{align*}
\EE_{t-1}\big[\zeta_{t,1}^2(x,u)\big]=&\Var_{t-1}\big[\mathbbm{1}_{\{\eta_t\leq x\tilde{\gamma}_t(u)\}}-\mathbbm{1}_{\{\eta_t\leq x\gamma_t(u)\}}\big]
\leq \big|F\big(x\tilde{\gamma}_t(u)\big)-F\big(x\gamma_t(u)\big)\big|\\
\EE_{t-1}\big[\zeta_{t,2}^2(x,u)\big]=&\Var_{t-1}\big[\mathbbm{1}_{\{\eta_t\leq x\gamma_t(u)\}}-\mathbbm{1}_{\{\eta_t\leq x\}}\big] 
\leq \big|F\big(x\gamma_t(u)\big)-F(x)\big|.
\end{align*}
Combining results, it follows that 
\begin{align}
\label{eq:4.A.13}
\EE_{t-1}\big[\zeta_t^2(x,u)\big]\leq&2\Big(\big|F\big(x\gamma_t(u)\big)-F(x)\big|+\big|F\big(x\tilde{\gamma}_t(u)\big)-F\big(x\gamma_t(u)\big)\big|\Big).
\end{align}
Employing \eqref{eq:4.A.13}, we obtain that the left-hand side in \eqref{eq:4.A.11} is bounded by
\begin{align*}
&4\EE\bigg[\bigg(\sum_{t=1}^n\Big|F\big(x\gamma_t(u)\big)-F(x)\Big|+\sum_{t=1}^n\Big|F\big(x\tilde{\gamma}_t(u)\big)-F\big(x\gamma_t(u)\big)\Big|\bigg)^2\bigg]\\
\leq&8\Bigg(\underbrace{\EE\bigg[\bigg(\sum_{t=1}^n\Big|F\big(x\gamma_t(u)\big)-F(x)\Big|\bigg)^2\bigg]}_{I}+\underbrace{\EE\bigg[\bigg(\sum_{t=1}^n\Big|F\big(x\tilde{\gamma}_t(u)\big)-F\big(x\gamma_t(u)\big)\Big|\bigg)^2\bigg]}_{II}\Bigg),
\end{align*}
where the last inequality follows from applying \eqref{eq:4.A.12} once more. It suffices to show that both terms are $O(n)$. Consider $I$; The Cauchy-Schwarz inequality yields
\begin{align}
\label{eq:4.A.14}
\begin{split}
I=&\sum_{t=1}^n \sum_{\tau=1}^n\EE\bigg[\Big|F\big(x\gamma_t(u)\big)-F(x)\Big|\: \Big|F\big(x\gamma_\tau(u)\big)-F(x)\Big|\bigg]\\
\leq & \sum_{t=1}^n \sum_{\tau=1}^n\Bigg(\EE\bigg[\Big(F\big(x\gamma_t(u)\big)-F(x)\Big)^2\bigg]\Bigg)^{\frac{1}{2}}\Bigg(\EE\bigg[\Big(F\big(x\gamma_\tau(u)\big)-F(x)\Big)^2\bigg]\Bigg)^{\frac{1}{2}}.
\end{split}
\end{align}
Henceforth, we take $n$ sufficiently large such that $\big \{\theta:||\theta-\theta_0||\leq A/\sqrt{n}\big \} \subseteq \mathscr{V}(\theta_0)$. The mean value theorem implies
\begin{align}
\label{eq:4.A.15}
\begin{split}
&\sup_{||u||\leq A} \big|\gamma_t(u)-1\big| = \sup_{||u||\leq A}\bigg|\frac{\sigma_t(\theta_0+u/\sqrt{n})-\sigma_t(\theta_0)}{\sigma_t(\theta_0)}\bigg|\\
= & \sup_{||u||\leq A}\bigg|\frac{1}{\sigma_t(\theta_0)}\frac{\partial \sigma_t(\bar{\theta}_n)}{\partial \theta'}\frac{1}{\sqrt{n}}u\bigg|= \frac{1}{\sqrt{n}}  \sup_{||u||\leq A}\bigg|\frac{\sigma_t(\bar{\theta}_n)}{\sigma_t(\theta_0)}D_t'(\bar{\theta}_n)\:u\bigg|\\
\leq & \frac{1}{\sqrt{n}}  \sup_{||\theta-\theta_0||\leq An^{-1/2}}\frac{\sigma_t(\theta)}{\sigma_t(\theta_0)} \sup_{||\theta-\theta_0||\leq An^{-1/2}}\big|\big|D_t(\theta)\big|\big|  \sup_{||u||\leq A}||u||\leq \frac{A}{\sqrt{n}} T_t U_t ,
\end{split}
\end{align}
where $T_t$ and $U_t$ are defined on page 49 and $\bar{\theta}_n$ lies between $\theta_0$ and $\theta_0+u/\sqrt{n}$. Define the event
\begin{align}
\label{eq:4.A.16}
\mathscr{A}_{n,t}=\bigg\{\frac{ A}{\sqrt{n}}T_t U_t\leq \delta\bigg\},
\end{align}
where $\delta$ is given in the text preceding \eqref{eq:4.A.10}. The inner term of \eqref{eq:4.A.14} can be bounded by
\begin{align}
\label{eq:4.A.17}
\begin{split}
&\EE\bigg[\Big(F\big(x\gamma_t(u)\big)-F(x)\Big)^2\bigg]=\EE\bigg[\underbrace{\Big(F\big(x\gamma_t(u)\big)-F(x)\Big)^2}_{\leq 1}\big(\mathbbm{1}_{\{\mathscr{A}_{n,t}^c\}}+\mathbbm{1}_{\{\mathscr{A}_{n,t}\}}\big)\bigg]\\
\leq & \underbrace{\PP\big[\mathscr{A}_{n,t}^c\big]}_{I_1}+\underbrace{\EE\bigg[\Big(F\big(x\gamma_t(u)\big)-F(x)\Big)^2\mathbbm{1}_{\{\mathscr{A}_{n,t}\}}\bigg]}_{I_2},
\end{split}
\end{align}
where the superscript $c$ denotes the event's complement. Using Markov's inequality, the Cauchy-Schwarz inequality and Assumption \ref{as:4.9}, $I_1$ can be bounded by
\begin{align}
\label{eq:4.A.18}
I_1=&  \PP\bigg[ \frac{ A}{\sqrt{n}}T_t U_t> \delta \bigg]\leq \frac{ A^2}{n\delta^2} \EE\big[T_t^2 U_t^2\big] \leq \frac{ A^2}{n\delta^2} \Big(\underbrace{\EE\big[T_t^4\big]}_{<\infty}\Big)^{\frac{1}{2}}\Big(\underbrace{\EE\big[U_t^4\big]}_{<\infty}\Big)^{\frac{1}{2}}
\end{align}
and, thus, $I_1=O(n^{-1})$. Regarding $I_2$, the mean value theorem implies
\begin{align*}
I_2=&\EE\bigg[x^2 f^2\big(x\bar{\gamma}_t\big) \big(\gamma_t(u)-1\big)^2\mathbbm{1}_{\{\mathscr{A}_{n,t}\}}\bigg]
\end{align*}
with $\bar{\gamma}_t$ being between $\gamma_t(u)$ and $1$. Since $|\bar{\gamma}_t-1|\leq |\gamma_t(u)-1| \leq \delta$ in the event of $\mathscr{A}_{n,t}$, we have $x\bar{\gamma}_t \in \mathcal{X}_\delta$. Employing  \eqref{eq:4.A.10}, \eqref{eq:4.A.15}, the Cauchy-Schwarz inequality and Assumption \ref{as:4.9}, we establish
\begin{align}
\label{eq:4.A.19}
I_2\leq&\EE\bigg[\underline{x}^2 M^2 \frac{A^2 }{n}T_t^2 U_t^2 \mathbbm{1}_{\{\mathscr{A}_{n,t}\}}\bigg]\leq \frac{\underline{x}^2 M^2A^2}{n}\Big(\underbrace{\EE\big[T_t^4\big]}_{<\infty}\Big)^{\frac{1}{2}}\Big(\underbrace{\EE\big[U_t^4\big]}_{<\infty}\Big)^{\frac{1}{2}} =O(n^{-1}).
\end{align}
Combining \eqref{eq:4.A.17} to \eqref{eq:4.A.19} yields
\begin{align*}
\EE\bigg[\Big(F\big(x\gamma_t(u)\big)-F(x)\Big)^2\bigg]\leq I_1+I_2=O(n^{-1})
\end{align*}
and, together with \eqref{eq:4.A.14}, we get
\begin{align*}
I\leq & \sum_{t=1}^n \sum_{r=1}^n O(n^{-1/2})O(n^{-1/2})=O(n).
\end{align*}
Next, we consider $II$, which can be bounded analogously to \eqref{eq:4.A.14} by
\begin{align}
\label{eq:4.A.20}
II\leq & \sum_{t=1}^n \sum_{\tau=1}^n\Bigg(\EE\bigg[\Big(F\big(x\tilde{\gamma}_t(u)\big)-F\big(x\gamma_t(u)\big)\Big)^2\bigg]\Bigg)^{\frac{1}{2}} \\
\nonumber
&\qquad \qquad \qquad \qquad \times\Bigg(\EE\bigg[\Big(F\big(x\tilde{\gamma}_\tau(u)\big)-F\big(x\gamma_\tau(u)\big)\Big)^2\bigg]\Bigg)^{\frac{1}{2}} .
\end{align}
Assumption \ref{as:4.4}(\ref{as:4.4.1}) gives
\begin{align}
\label{eq:4.A.21}
\begin{split}
\sup_{||u||\leq A} \big|\tilde{\gamma}_t(u)-\gamma_t(u)\big|
=&\sup_{||u||\leq A}  \frac{|\tilde{\sigma}_t(\theta_0+n^{-1/2}u)-\sigma_t(\theta_0+n^{-1/2}u)|}{\sigma_t(\theta_0)}\leq  \rho^t \frac{C_1}{\underline{\omega}}.
\end{split}
\end{align} 
We define the events 
\begin{align}
\label{eq:4.A.22}
\mathscr{B}_{t}=\bigg\{\rho^{t}\frac{C_1}{\underline{\omega}}\leq \delta \rho^{t/2}\bigg\}\quad \text{and}\quad \mathscr{C}_{n,t}=\mathscr{A}_{n,t}\cap \mathscr{B}_{t}.
\end{align}
In analogy to \eqref{eq:4.A.17}, the inner part of \eqref{eq:4.A.20} can be bounded by 
\begin{align*}
\EE\bigg[\Big(F\big(x\tilde{\gamma}_t(u)\big)-F\big(x\gamma_t(u)\big)\Big)^2\bigg]\leq & \underbrace{\PP\big[\mathscr{C}_{n,t}^c\big]}_{II_1}+\underbrace{\EE\bigg[\Big(F\big(x\tilde{\gamma}_t(u)\big)-F\big(x\gamma_t(u)\big)\Big)^2\mathbbm{1}_{\{\mathscr{C}_{n,t}\}}\bigg]}_{II_2}.
\end{align*}
Employing \eqref{eq:4.A.18} and Markov's inequality yields
\begin{align}
\label{eq:4.A.23}
\begin{split}
II_1=& \PP\big[\mathscr{A}_{n,t}^c \cup \mathscr{B}_t^c \big]\leq \PP\big[\mathscr{A}_{n,t}^c\big] + \PP\big[\mathscr{B}_t^c \big]=\PP\big[\mathscr{A}_{n,t}^c\big]+\PP\bigg[\rho^{t/2}\frac{C_1}{\underline{\omega}}> \delta\bigg]\\
\leq& \frac{ A^2}{n\delta^2} \Big(\EE\big[T_t^4\big]\Big)^{\frac{1}{2}}\Big(\EE\big[U_t^4\big]\Big)^{\frac{1}{2}} + (\rho^{s/2})^t\frac{\EE[C_1^s]}{\delta^s\underline{\omega}^{s}}=O(n^{-1})+O\big((\rho^{s/2})^t\big).
\end{split}
\end{align}
Regarding $II_2$, the mean value theorem implies
\begin{align*}
II_2=&\EE\bigg[x^2 f^2\big(x\breve{\gamma}_t\big)\big(\tilde{\gamma}_t(u)-\gamma_t(u)\big)^2\mathbbm{1}_{\{\mathscr{C}_{n,t}\}}\bigg]
\end{align*}
with $\breve{\gamma}_t$ between $\tilde{\gamma}_t(u)$ and $\gamma_t(u)$.  Since 
\begin{align*}
|\breve{\gamma}_t-1|\leq |\breve{\gamma}_t-\gamma_t(u)|+ |\gamma_t(u)-1| \leq |\tilde{\gamma}_t(u)-\gamma_t(u)|+ |\gamma_t(u)-1|\leq 2\delta
\end{align*}
in the event of $\mathscr{C}_{n,t}=\mathscr{A}_{n,t}\cap \mathscr{B}_{t}$, we have $x\breve{\gamma}_t \in \mathcal{X}_\delta$. Employing  \eqref{eq:4.A.10} and \eqref{eq:4.A.21} we obtain
\begin{align}
\label{eq:4.A.24}
II_2\leq \EE\bigg[\underline{x}^2 M^2\underbrace{\bigg(\rho^t\frac{C_1}{\underline{\omega}}\bigg)^2\mathbbm{1}_{\{\mathscr{C}_{n,t}\}}}_{\leq \delta^2 \rho^t}\bigg] \leq   \underline{x}^2 M^2 \delta^2 \rho^t=O(\rho^t).
\end{align}
Equations \eqref{eq:4.A.23} and \eqref{eq:4.A.24} imply
\begin{align*}
&\EE\bigg[\Big(F\big(x\tilde{\gamma}_t(u)\big)-F\big(x\gamma_t(u)\big)\Big)^2\bigg]\leq C  \big(n^{-1}+\rho^t+(\rho^{s/2})^t\big)
\end{align*}
for some constant $C>0$. Inserting this result into \eqref{eq:4.A.20}, we conclude
\begin{align*}
II\leq & C\sum_{t=1}^n \sum_{\tau=1}^n \Big(n^{-1}+\rho^t+(\rho^{s/2})^t\Big)^{\frac{1}{2}}  \Big(n^{-1}+\rho^\tau+(\rho^{s/2})^\tau\big)\Big)^{\frac{1}{2}}  =O(n),
\end{align*}
which completes \textit{Step 1}.

In \textit{Step 2} we divide $\mathcal{X}$ into intervals with the points $\underline{x}=x_1<x_2<\dots<x_N<x_{N+1}=\bar{x}$ 
satisfying $0.5\:n^{-3/4}\leq x_{j+1}-x_j\leq n^{-3/4}$ for all $j=1,\dots,N$ and $N \in \N$. It follows that $N=O(n^{3/4})$. We obtain 
\begin{align}
\label{eq:4.A.25}
\begin{split}
&\sup_{x \in \mathcal{X}}\big|S_n(x,u)\big|
= \max_{1 \leq j\leq N} \sup_{x_j\leq x\leq x_{j+1}} \big|S_n(x,u)\big|\\
\leq & \max_{1 \leq j\leq N} \sup_{x_j\leq x\leq x_{j+1}} \Big(\big|S_n(x_{j+1},u)\big|+\big|S_n(x,u)-S_n(x_{j+1},u)\big|\Big)\\
\leq & \max_{1 \leq j\leq N}  \big|S_n(x_{j+1},u)\big|+\max_{1 \leq j\leq N} \sup_{x_j\leq x\leq x_{j+1}} \big|S_n(x,u)-S_n(x_{j+1},u)\big|.
\end{split}
\end{align}
We bound the second term using the elementary inequality
\begin{align}
\label{eq:4.A.26}
|x-y|\leq \max\{x,y\}
\end{align}
for all $x,y\geq 0$. For $j=1\dots,N$, we have
\begin{align}
\label{eq:4.A.27}
\begin{split}
&\sup_{x_j\leq x\leq x_{j+1}}\big|S_n(x,u)-S_n(x_{j+1},u)\big|\\
=&\sup_{x_j\leq x\leq x_{j+1}}\bigg|\sum_{t=1}^n\Big(\mathbbm{1}_{\{\eta_t\leq x_{j+1}\}}-\mathbbm{1}_{\{\eta_t\leq x\}}+F\big(x_{j+1}\tilde{\gamma}_t(u)\big)-F\big(x \tilde{\gamma}_t(u)\big)\Big)\\
& \qquad \qquad\qquad  -\sum_{t=1}^n\Big(\mathbbm{1}_{\{\eta_t\leq x_{j+1}\tilde{\gamma}_t(u)\}}-\mathbbm{1}_{\{\eta_t\leq x\tilde{\gamma}_t(u)\}}+ F(x_{j+1})-F(x)\Big)\bigg|\\
\leq&\sup_{x_j\leq x\leq x_{j+1}}\max\bigg\{\sum_{t=1}^n\Big(\mathbbm{1}_{\{\eta_t\leq x_{j+1}\}}-\mathbbm{1}_{\{\eta_t\leq x\}}+ F\big(x_{j+1} \tilde{\gamma}_t(u)\big)-F\big(x \tilde{\gamma}_t(u)\big)\Big),\\
& \qquad \qquad\qquad \qquad \sum_{t=1}^n\Big(\mathbbm{1}_{\{\eta_t\leq x_{j+1}\tilde{\gamma}_t(u)\}}-\mathbbm{1}_{\{\eta_t\leq x\tilde{\gamma}_t(u)\}}+ F(x_{j+1})-F(x)\Big)\bigg\}\\
\leq &  \max\bigg\{\underbrace{\sum_{t=1}^n\Big(\mathbbm{1}_{\{\eta_t\leq x_{j+1}\}}-\mathbbm{1}_{\{\eta_t\leq x_{j}\}}+ F\big(x_{j+1} \tilde{\gamma}_t(u)\big)-F\big(x_{j} \tilde{\gamma}_t(u)\big)\Big)}_{=A_n},\\
& \qquad \qquad\quad  \underbrace{\sum_{t=1}^n\Big(\mathbbm{1}_{\{\eta_t\leq x_{j+1}\tilde{\gamma}_t(u)\}}-\mathbbm{1}_{\{\eta_t\leq x_{j}\tilde{\gamma}_t(u)\}}+ F(x_{j+1})-F(x_{j} )\Big)}_{=B_n}\bigg\}.
\end{split}
\end{align}
Note that $A_n$ and $B_n$ are positive, where the later can be rewritten as
\begin{align}
\label{eq:4.A.28}
\begin{split}
B_n =&\sum_{t=1}^n\Big(\mathbbm{1}_{\{\eta_t\leq x_{j+1}\tilde{\gamma}_t(u)\}}- F\big(x_{j+1}\tilde{\gamma}_t(u)\big)-\mathbbm{1}_{\{\eta_t\leq x_{j+1}\}}+ F(x_{j+1})\Big)\\
&\quad -\sum_{t=1}^n\Big(\mathbbm{1}_{\{\eta_t\leq x_{j}\tilde{\gamma}_t(u)\}}- F\big(x_{j}\tilde{\gamma}_t(u)\big)-\mathbbm{1}_{\{\eta_t\leq x_{j}\}}+F(x_{j})\Big)\\
&\qquad +\sum_{t=1}^n\Big(\mathbbm{1}_{\{\eta_t\leq x_{j+1}\}}-\mathbbm{1}_{\{\eta_t\leq x_{j}\}}+F\big(x_{j+1}\tilde{\gamma}_t(u)\big)- F\big(x_{j}\tilde{\gamma}_t(u)\big)\Big)\\
 =&S_n(x_{j+1},u)-S_n(x_{j},u)+A_n.
 \end{split}
\end{align}
It follows from \eqref{eq:4.A.27} and \eqref{eq:4.A.28} that
\begin{align}
\label{eq:4.A.29}
\begin{split}
&\sup_{x_j\leq x\leq x_{j+1}}\big|S_n(x,u)-S_n(x_{j+1},u)\big|\leq  |S_n(x_{j+1},u)|+|S_n(x_{j},u)|+A_n.
\end{split}
\end{align}
Moreover, $A_n$ expands as follows: 
\begin{align}
\nonumber
A_n =&  \sum_{t=1}^n \Big(\mathbbm{1}_{\{\eta_t\leq x_{j+1}\}}-F(x_{j+1})-  \mathbbm{1}_{\{\eta_t\leq x_{j}\}}+F(x_{j})\Big) +n\big(F(x_{j+1})-F(x_j)\big)\\
\label{eq:4.A.30}
&\qquad + \sum_{t=1}^n \Big(F\big(x_{j+1} \tilde{\gamma}_t(u)\big)-F\big(x_{j} \tilde{\gamma}_t(u)\big)\Big)
\end{align}
Using equations \eqref{eq:4.A.25}, \eqref{eq:4.A.29} and \eqref{eq:4.A.30}, we establish
\begin{align}
\label{eq:4.A.31}
\sup_{x \in \mathcal{X}}\big|S_n(x,u)\big| \leq 3III+IV+V+VI+2VII,
\end{align}
where 
\begin{align*}
III=&\max_{1\leq j\leq N+1}\big|S_n(x_j,u)\big|\\
IV=&\max_{1\leq j\leq N} n \big(F(x_{j+1} )-F(x_{j} )\big)\\
V=&\max_{1\leq j\leq N} \bigg|\sum_{t=1}^n \big(\mathbbm{1}_{\{\eta_t\leq x_{j+1}\}}-F(x_{j+1})\big)- \sum_{t=1}^n  \big(\mathbbm{1}_{\{\eta_t\leq x_{j}\}}-F(x_{j})\big) \bigg|\\
VI=&\max_{1\leq j\leq N} \sum_{t=1}^n \Big(F\big(x_{j+1} \gamma_t(u)\big)-F\big(x_{j} \gamma_t(u)\big)\Big)\\
VII=&\max_{1\leq j\leq N+1} \sum_{t=1}^n \Big|F\big(x_{j} \tilde{\gamma}_t(u)\big)-F\big(x_{j} \gamma_t(u)\big)\Big|.
\end{align*}
We look at each term in turn. For each $\varepsilon>0$, Markov's inequality implies
\begin{align*}
\PP\big[III\geq \sqrt{n}\varepsilon\big]
=& \PP\Big[\max_{1\leq j\leq N+1}\big|S_n(x_j,u)\big|^4\geq n^2\varepsilon^4\Big]
\leq \frac{1}{n^2\varepsilon^4} \EE\Big[\max_{1\leq j\leq N+1}\big|S_n(x_j,u)\big|^4\Big]\\
\leq&  \sum_{j=1}^{N+1}\frac{1}{n^2\varepsilon^4}\EE\Big[\big|S_n(x_j,u)\big|^4\Big]\to 0
\end{align*}
as $N=O(n^{3/4})$ and $\EE\big[|S_n(x,u)|^4\big]=O(n)$ by \textit{Step 1}. Thus, we have $III=o_p(\sqrt{n})$. Regarding $IV$, the mean value theorem and \eqref{eq:4.A.10} yield
\begin{align}
\label{eq:4.A.32}
F(x_{j+1})-F(x_j) = f(\breve{x}_j)(x_{j+1}-x_j)\leq M n^{-3/4},
\end{align}
where $\breve{x}_j \in(x_j,x_{j+1})$. It follows that
\begin{align*}
IV\leq n M n^{-3/4}= M n^{1/4}
\end{align*}
yielding $IV=O(n^{1/4})$. Further, Theorem 4.3.1 of \cite{csorgo1981strong} implies that there exists a sequence of Brownian bridges $\{B_n(y):0\leq y\leq 1\}$ such that
\begin{align*}
&V/\sqrt{n}=  \max_{1\leq j\leq N} \Big|\sqrt{n}\big(\mathbbm{F}_n(x_{j+1})-F(x_{j+1})\big)-\sqrt{n}\big(\mathbbm{F}_n(x_{j})-F(x_{j})\big) \Big|\\
 \leq & \max_{1\leq j\leq N} \Big|B_n\big(F(x_{j+1})\big)-B_n\big(F(x_{j})\big) \Big|+\max_{1\leq j\leq N} \Big|\sqrt{n}\big(\mathbbm{F}_n(x_{j})-F(x_{j})\big)-B_n\big(F(x_{j})\big) \Big|\\
 &\qquad +\max_{1\leq j\leq N} \Big|\sqrt{n}\big(\mathbbm{F}_n(x_{j+1})-F(x_{j+1})\big)-B_n\big(F(x_{j+1})\big) \Big|\\
  \leq & \max_{1\leq j\leq N} \Big|B_n\big(F(x_{j+1})\big)-B_n\big(F(x_{j})\big) \Big|+2\sup_{x \in \R} \Big|\sqrt{n}\big(\mathbbm{F}_n(x)-F(x)\big)-B_n\big(F(x)\big) \Big|\\
 %
\overset{a.s.}{=} & \max_{1\leq j\leq N} \Big|\underbrace{B_n\big(F(x_{j+1})\big)-B_n\big(F(x_{j})\big)}_{Z_{n,j}} \Big|+o(1).
\end{align*}
Next, we show that $ \max_{1\leq j\leq N} \big|Z_{n,j}\big| =o_p(1)$. By the definition of a Brownian bridge (cf. \citeauthor{csorgo1981strong}, \citeyear{csorgo1981strong}, p. 41), $Z_{n,j}$ is Gaussian with mean $0$ and variance
\begin{align*}
 \Var[Z_{n,j}]
  =& \big(F(x_{j+1})-F(x_{j})\big)\Big(\underbrace{1-\big(F(x_{j+1})-F(x_{j})\big)}_{\leq 1}\Big)\leq M n^{-3/4}
\end{align*}
by \eqref{eq:4.A.32}. In addition, we have $\EE\big[Z_{n,j}^4\big]=3 \big(\Var[Z_{n,j}]\big)^2\leq 3 M^2 n^{-3/2}$.
Thus, for each $\varepsilon>0$, Markov's inequality implies
\begin{align*}
&\PP\Big[\max_{1\leq j\leq N} \big|Z_{n,j}\big|\geq \varepsilon\Big]= \PP\Big[\max_{1\leq j\leq N} Z_{n,j}^4\geq \varepsilon^4\Big]\leq  \frac{1}{\varepsilon^4} \EE\Big[\max_{1\leq j\leq N} Z_{n,j}^4\Big]\\
\leq&  \frac{1}{\varepsilon^4} \EE\bigg[\sum_{j=1}^N Z_{n,j}^4\bigg]\leq \frac{1}{\varepsilon^4} \sum_{j=1}^N 3 M^2 n^{-3/2}
=  \frac{3M^2}{\varepsilon^4} n^{-3/2}N\to 0
\end{align*}
as $N=O(n^{3/4})$ and we conclude $ \max_{1\leq j\leq N} |Z_{n,j}| =o_p(1)$. Thus, $V=o_p(\sqrt{n})$. In analogy to \eqref{eq:4.A.17}, we bound $VI$ by
\begin{align}
\label{eq:4.A.33}
\begin{split}
VI\leq  \underbrace{ \sum_{t=1}^n \mathbbm{1}_{\{\mathscr{A}_{n,t}^c\}}}_{VI_1}+\underbrace{\max_{1\leq j\leq N} \sum_{t=1}^n \Big(F\big(x_{j+1} \gamma_t(u)\big)-F\big(x_{j} \gamma_t(u)\big)\Big)\mathbbm{1}_{\{\mathscr{A}_{n,t}\}}}_{VI_2}.
\end{split}
\end{align}
Concerning the first subterm, for each $\varepsilon>0$, Markov's inequality and \eqref{eq:4.A.18} lead to
\begin{align}
\label{eq:4.A.34}
&\PP\big[VI_1\geq \sqrt{n} \varepsilon\big] \leq \frac{1}{\sqrt{n}\varepsilon}\EE\bigg[\sum_{t=1}^n \mathbbm{1}_{\{\mathscr{A}_{n,t}^c\}}\bigg] = \frac{1}{\sqrt{n}\varepsilon}\sum_{t=1}^n\PP[\mathscr{A}_{n,t}^c]\\
\nonumber
\leq& \frac{A^2}{\sqrt{n}\varepsilon \delta^2} \ \Big(\EE\big[T_t^4\big]\Big)^{\frac{1}{2}}\Big(\EE\big[U_t^4\big]\Big)^{\frac{1}{2}} = O(n^{-1/2}).
\end{align}
Thus, we have $VI_1=o_p(\sqrt{n})$. Regarding $VI_2$, the mean value theorem implies
\begin{align*}
VI_2=&\max_{1\leq j\leq N} \sum_{t=1}^n \gamma_t(u)f\big(\tilde{x}_j \gamma_t(u)\big)(x_{j+1}-x_{j})\mathbbm{1}_{\{\mathscr{A}_{n,t}\}},
\end{align*}
where $\tilde{x}_j$ lies between $x_j$ and $x_{j+1}$. Since $|\gamma_t(u)-1| \leq \delta$ in the event of $\mathscr{A}_{n,t}$, we have $\tilde{x}_j \gamma_t(u) \in \mathcal{X}_\delta$. Employing \eqref{eq:4.A.10} and \eqref{eq:4.A.15}, we get
\begin{align*}
VI_2\leq \sum_{t=1}^n \bigg(1+ \frac{A}{\sqrt{n}}T_t U_t\bigg) M n^{-3/4} = M n^{1/4}+ \frac{A}{n^{1/4}}\frac{1}{n}\sum_{t=1}^n T_t U_t
\end{align*}
Whereas the first term is of order $O(n^{1/4})$, the second term vanishes almost surely as
\begin{align}
\label{eq:4.A.35}
\frac{1}{n}\sum_{t=1}^n T_t U_t\leq \bigg(\underbrace{\frac{1}{n}\sum_{t=1}^n T_t^2}_{\overset{a.s.}{\to}\EE[T_t^2]<\infty} \bigg)^{\frac{1}{2}} \bigg(\underbrace{\frac{1}{n}\sum_{t=1}^n U_t^2}_{\overset{a.s.}{\to}\EE[U_t^2]<\infty} \bigg)^{\frac{1}{2}}
\end{align}
by Markov's inequality, the uniform ergodic theorem and Assumption \ref{as:4.9}.  Hence, $VI_2=O(n^{1/4})$ almost surely. Next, we show
\begin{align}
\label{eq:4.A.36}
VII^\diamond=\sup_{||u||\leq A}\sup_{x \in \mathcal{X}} \sum_{t=1}^n \Big|F\big(x \tilde{\gamma}_t(u)\big)-F\big(x \gamma_t(u)\big)\Big|=O_p(1),
\end{align}
which implies $VII=O_p(1)$. Similar to \eqref{eq:4.A.17}, we bound $VII^\diamond$ by
\begin{align*}
VII^\diamond \leq& \underbrace{\sum_{t=1}^n \mathbbm{1}_{\{\mathscr{C}_{n,t}^c\}}}_{VII_1^\diamond} + \underbrace{\sup_{||u||\leq A}\sup_{x \in \mathcal{X}} \sum_{t=1}^n \Big|F\big(x_{j} \tilde{\gamma}_t(u)\big)-F\big(x_{j} \gamma_t(u)\big)\Big|\mathbbm{1}_{\{\mathscr{C}_{n,t}\}}}_{VII_2^\diamond}
\end{align*}
where the event $\mathscr{C}_{n,t}=\mathscr{A}_{n,t}\cap\mathscr{B}_t$ is defined in \eqref{eq:4.A.22}. We show that both terms are $O_p(1)$.  Employing Markov's inequality and \eqref{eq:4.A.23}, we have for each $C>0$
\begin{align}
\label{eq:4.A.37}
\begin{split}
\PP\big[VII_1^\diamond\geq C\big]\leq& \frac{1}{C} \EE\big[VII_1^\diamond\big]=\frac{1}{C}\sum_{t=1}^n \PP\big[\mathscr{C}_{n,t}^c\big] \leq \frac{1}{C}\sum_{t=1}^n \Big(\PP\big[\mathscr{A}_{n,t}^c\big]+\PP\big[\mathscr{B}_t^c\big]\Big)\\
\leq &\frac{1}{C}\sum_{t=1}^n \bigg(\frac{ A^2}{n\delta^2} \Big(\EE\big[T_t^4\big]\Big)^{\frac{1}{2}}\Big(\EE\big[U_t^4\big]\Big)^{\frac{1}{2}} + (\rho^{s/2})^t\frac{\EE[C_1^s]}{\delta^s\underline{\omega}^{s}}\bigg)\\
\leq& \frac{1}{C} \bigg(\frac{ A^2}{\delta^2} \Big(\EE\big[T_t^4\big]\Big)^{\frac{1}{2}}\Big(\EE\big[U_t^4\big]\Big)^{\frac{1}{2}} + \frac{\EE[C_1^s]}{\underline{\omega}^{s}\delta^s(1-\rho^{s/2})}\bigg).
\end{split}
\end{align}
Choosing $C$ sufficiently large, $\PP[VII_1^\diamond\geq C]$ can be made sufficiently small and we conclude $V_1^\diamond=O_p(1)$. 
Analogously to \eqref{eq:4.A.24} we obtain
\begin{align}
\label{eq:4.A.38}
\begin{split}
VII_2^\diamond=&\sup_{||u||\leq A}\sup_{x \in \mathcal{X}}  \sum_{t=1}^n \Big|x f(x\breve{\gamma}_{t})\big(\tilde{\gamma}_t(u)-\gamma_t(u)\big)\Big|\mathbbm{1}_{\{\mathscr{C}_{n,t}\}}\\
\leq&  \sum_{t=1}^n |\underline{x}| M \underbrace{\frac{C_1\rho^t}{\underline{\omega}}\mathbbm{1}_{\{\mathscr{C}_{n,t}\}}}_{\leq \delta \rho^{t/2}}\leq \sum_{t=1}^n |\underline{x}| M \delta \rho^{t/2}\leq \frac{2|\underline{x}| M \delta }{(1-\sqrt{\rho})^2}=O(1)
\end{split}
\end{align}
and we conclude $VII^\diamond=O_p(1)$.  \textit{Step 2} is completed.

In \textit{Step 3} we divide the (hyper-)cube $[-A,A]^r$  into $L=(2N)^r$ cubes with side length $A/N$ and $N \in \N$. In case of a cube $\ell$, $u_\bullet(\ell)$ and $u^\bullet(\ell)$ denote the lower left and upper right vertex of $\ell$.\footnote{\textit{Lower left} (\textit{right}) vertex means that all coordinates of $u_\bullet(\ell)$ ($u^\bullet(\ell)$) are less (larger) than or equal to the corresponding coordinates of any elements of $\ell$.} Similar to \eqref{eq:4.A.25}, we obtain 
\begin{align}
\label{eq:4.A.39}
&\sup_{||u||\leq A}\sup_{x \in \mathcal{X}}\big|S_n(x,u)\big|
\leq   \max_{1 \leq \ell\leq L} \sup_{x \in \mathcal{X}}  \big|S_n\big(x,u^\bullet(\ell)\big)\big|\\ 
\nonumber
&\qquad \qquad \qquad\qquad \qquad  \quad  + \max_{1 \leq \ell\leq L} \sup_{u_\bullet(\ell)\leq u\leq u^\bullet(\ell)}\sup_{x \in \mathcal{X}} \big|S_n(x,u)-S_n\big(x,u^\bullet(\ell)\big)\big|.
\end{align}
We focus on the second term. Fix $\ell\in\{1\dots,L\}$ and consider $u$ satisfying $u_\bullet(\ell)\leq u\leq u^\bullet(\ell)$ (element-by-element comparison). Assumption \ref{as:4.8} implies $\tilde{\gamma}_t(u_\bullet(\ell))\leq \tilde{\gamma}_t(u)\leq \tilde{\gamma}_t(u^\bullet(\ell))$. Since $x<0$ for all $x \in \mathcal{X}$, the elementary inequality \eqref{eq:4.A.26} implies
\begin{align}
\nonumber
&  \big|S_n(x,u)-S_n\big(x,u^\bullet(\ell)\big)\big|\\
\nonumber
=&\bigg|\sum_{t=1}^n \Big(\mathbbm{1}_{\{\eta_t\leq x\tilde{\gamma}_t(u)\}}- F\big(x\tilde{\gamma}_t(u)\big)-\big(\mathbbm{1}_{\{\eta_t\leq x\}}- F(x)\big)\Big)\\
\nonumber
& \qquad \qquad\qquad  -\sum_{t=1}^n \Big(\mathbbm{1}_{\{\eta_t\leq x\tilde{\gamma}_t(u^\bullet(\ell))\}}- F\big(x\tilde{\gamma}_t(u^\bullet(\ell))\big)-\big(\mathbbm{1}_{\{\eta_t\leq x\}}- F(x)\big)\Big)\bigg|\\
\label{eq:4.A.40}
=&\bigg|\underbrace{\sum_{t=1}^n \Big(\mathbbm{1}_{\{\eta_t\leq x\tilde{\gamma}_t(u)\}}-\mathbbm{1}_{\{\eta_t\leq x\tilde{\gamma}_t(u^\bullet(\ell))\}}\Big)}_{\geq 0}-\underbrace{\sum_{t=1}^n \Big(F\big(x\tilde{\gamma}_t(u)\big)-F\big(x\tilde{\gamma}_t(u^\bullet(\ell))\big)\Big)}_{\geq 0}\bigg|\\
\nonumber
\leq& \max \bigg\{\sum_{t=1}^n \Big(\mathbbm{1}_{\{\eta_t\leq x\tilde{\gamma}_t(u)\}}-\mathbbm{1}_{\{\eta_t\leq x\tilde{\gamma}_t(u^\bullet(\ell))\}}\Big),\sum_{t=1}^n \Big(F\big(x\tilde{\gamma}_t(u)\big)-F\big(x\tilde{\gamma}_t(u^\bullet(\ell))\big)\Big)\bigg\}\\
\nonumber
\leq& \max \bigg\{\underbrace{\sum_{t=1}^n \Big(\mathbbm{1}_{\{\eta_t\leq x\tilde{\gamma}_t(u_\bullet(\ell))\}}-\mathbbm{1}_{\{\eta_t\leq x\tilde{\gamma}_t(u^\bullet(\ell))\}}\Big)}_{=C_n},\underbrace{\sum_{t=1}^n \Big(F\big(x\tilde{\gamma}_t(u_\bullet(\ell))\big)-F\big(x\tilde{\gamma}_t(u^\bullet(\ell))\big)\Big)}_{=D_n}\bigg\}.
\end{align}
Note that $C_n$ can be written as
\begin{align}
\label{eq:4.A.41}
\begin{split}
C_n =&\sum_{t=1}^n \Big(\mathbbm{1}_{\{\eta_t\leq x\tilde{\gamma}_t(u_\bullet(\ell))\}}- F\big(x\tilde{\gamma}_t(u_\bullet(\ell))\big)-\big(\mathbbm{1}_{\{\eta_t\leq x\}}- F(x)\big)\Big)\\
&\quad -\sum_{t=1}^n \Big(\mathbbm{1}_{\{\eta_t\leq x\tilde{\gamma}_t(u^\bullet(\ell))\}}- F\big(x\tilde{\gamma}_t(u^\bullet(\ell))\big)-\big(\mathbbm{1}_{\{\eta_t\leq x\}}- F(x)\big)\Big)\\
&\qquad +\sum_{t=1}^n \Big(F\big(x\tilde{\gamma}_t(u_\bullet(\ell))\big)-F\big(x\tilde{\gamma}_t(u^\bullet(\ell))\big)\Big)\\
=&S_n\big(x,u_\bullet(\ell)\big)-S_n\big(x,u^\bullet(\ell)\big)+D_n.
\end{split}
\end{align}
Combining \eqref{eq:4.A.40} and \eqref{eq:4.A.41}, we find 
\begin{align}
\label{eq:4.A.42}
\big|S_n(x,u)-S_n\big(x,u^\bullet(\ell)\big)\big| \leq \big|S_n\big(x,u_\bullet(\ell)\big)\big| + \big|S_n\big(x,u^\bullet(\ell)\big)\big|+\big|D_n\big|.
\end{align}
Moreover, $D_n$ expands as follows:
\begin{align}
\label{eq:4.A.43}
\begin{split}
D_n  = & \sum_{t=1}^n \Big(F\big(x\gamma_t(u_\bullet(\ell))\big)-F\big(x\gamma_t(u^\bullet(\ell))\big)\Big)\\
&\qquad +\sum_{t=1}^n \Big(F\big(x\tilde{\gamma}_t(u_\bullet(\ell))\big)-F\big(x\gamma_t(u_\bullet(\ell))\big)\Big)\\
&\qquad \qquad  -\sum_{t=1}^n \Big(F\big(x\tilde{\gamma}_t(u^\bullet(\ell))\big)-F\big(x\gamma_t(u^\bullet(\ell))\big)\Big)
\end{split}
\end{align}
Equations \eqref{eq:4.A.39} and \eqref{eq:4.A.43} lead to
\begin{align}
\label{eq:4.A.44}
&\sup_{||u||\leq A}\sup_{x \in \mathcal{X}}\big|S_n(x,u)\big| \leq 2VIII+IX+X+XI+XII
\end{align}
with
\begin{align*}
VIII =&  \max_{1 \leq \ell\leq L} \sup_{x \in \mathcal{X}} |S_n\big(x,u^\bullet(\ell)\big)|\\
IX =&  \max_{1 \leq \ell\leq L} \sup_{x \in \mathcal{X}} |S_n\big(x,u_\bullet(\ell)\big)|\\
X = &  \sup_{x \in \mathcal{X}} \sum_{t=1}^n \Big|F\big(x\tilde{\gamma}_t(u_\bullet(\ell))\big)-F\big(x\gamma_t(u_\bullet(\ell))\big)\Big|\\
XI = &  \sup_{x \in \mathcal{X}} \sum_{t=1}^n \Big|F\big(x\tilde{\gamma}_t(u^\bullet(\ell))\big)-F\big(x\gamma_t(u^\bullet(\ell))\big)\Big|\\
XII =&  \max_{1 \leq \ell\leq L} \sup_{x \in \mathcal{X}} \sum_{t=1}^n \Big(F\big(x\gamma_t(u_\bullet(\ell))\big)-F\big(x\gamma_t(u^\bullet(\ell))\big)\Big).
\end{align*}
$VIII$ and $IX$ are $o_p(\sqrt{n})$ for fixed $L$ by \textit{Step 2} whereas $X=O_p(1)$ and $XI=O_p(1)$ by \eqref{eq:4.A.36}. In analogy to \eqref{eq:4.A.17}, we bound $XII$ by 
\begin{align}
\label{eq:4.A.45}
XII\leq \underbrace{ \sum_{t=1}^n \mathbbm{1}_{\{\mathscr{A}_{n,t}^c\}}}_{XII_1}+\underbrace{\max_{1\leq j\leq N} \sup_{x \in \mathcal{X}} \sum_{t=1}^n \Big(F\big(x\gamma_t(u_\bullet(\ell))\big)-F\big(x\gamma_t(u^\bullet(\ell))\big)\Big)\mathbbm{1}_{\{\mathscr{A}_{n,t}\}}}_{XII_2}.
\end{align}
We have $XII_1=o_p(\sqrt{n}) $ by \eqref{eq:4.A.34}. Regarding $XII_2$, the mean value theorem implies
\begin{align*}
XII_2= & \max_{1 \leq \ell\leq L}\sup_{x \in \mathcal{X}} \sum_{t=1}^n x f(x\bar{\gamma}_t) \big(\gamma_t(u_\bullet(\ell))-\gamma_t(u^\bullet(\ell))\big)\mathbbm{1}_{\{\mathscr{A}_{n,t}\}}
\end{align*}
with $\bar{\gamma}_t$ lying between $\gamma_t(u_\bullet(\ell))$ and $\gamma_t(u^\bullet(\ell))$. Since $|\bar{\gamma}_t-1| \leq 2\delta$ in the event of $\mathscr{A}_{n,t}$, we have $x\bar{\gamma}_t \in \mathcal{X}_\delta$ for all $x \in \mathcal{X}$. Taking $n$ sufficiently large such that $\big \{\theta:||\theta-\theta_0||\leq A/\sqrt{n}\big \} \subseteq \mathscr{V}(\theta_0)$, \eqref{eq:4.A.10} and the mean value theorem imply
\begin{align*}
XII_2\leq & |\underline{x}| M \max_{1 \leq \ell\leq L}\sup_{x \in \mathcal{X}} \sum_{t=1}^n \big(\gamma_t(u^\bullet(\ell))-\gamma_t(u_\bullet(\ell))\big)\\
= & |\underline{x}| M \max_{1 \leq \ell\leq L}  \sum_{t=1}^n \frac{\sigma_t(\theta_0+n^{-1/2}u^\bullet(\ell))-\sigma_t(\theta_0+n^{-1/2}u_\bullet(\ell))}{\sigma_t(\theta_0)}\\
=& |\underline{x}| M \max_{1 \leq \ell\leq L}  \sum_{t=1}^n \frac{1}{\sigma_t(\theta_0)}\frac{\partial \sigma_t(\bar{\theta}_n)}{\partial \theta'}\frac{1}{\sqrt{n}}\big(u^\bullet(\ell)-u_\bullet(\ell)\big)\\
\leq & \frac{|\underline{x}|M}{\sqrt{n}}\max_{1 \leq \ell\leq L} \sum_{t=1}^n  \frac{\sigma_t(\bar{\theta}_n)}{\sigma_t(\theta_0)}\bigg|\bigg|\frac{1}{\sigma_t(\bar{\theta}_n)}\frac{\partial \sigma_t(\bar{\theta}_n)}{\partial \theta}\bigg|\bigg|\big|\big|u^\bullet(\ell)-u_\bullet(\ell)\big|\big|\\
\leq& \frac{rA |\underline{x}|M}{\sqrt{n}N} \sum_{t=1}^n  \sup_{||\theta -\theta_0||\leq A/\sqrt{n}}\frac{\sigma_t(\theta)}{\sigma_t(\theta_0)}\sup_{||\theta -\theta_0||\leq A/\sqrt{n}}\big|\big|D_t(\theta)\big|\big|\\
\leq& \frac{rA |\underline{x}|M}{\sqrt{n}N} \sum_{t=1}^n T_t U_t,
\end{align*}
where $\theta_0+n^{-1/2}u_\bullet(\ell)\leq \bar{\theta}_n\leq\theta_0+n^{-1/2}u^\bullet(\ell)$ (componentwise). Employing \eqref{eq:4.A.35}, we obtain $XII_2= O(\sqrt{n})/N$ almost surely, where the $O(\sqrt{n})$ term does not depend on $N$. Choosing $N$ large, we obtain $XII_2= o(\sqrt{n})$ almost surely and we conclude that $XII=o_p(\sqrt{n})$. \textit{Step 3} is completed.

Regarding \textit{Step 4} we establish the following bound:
\begin{align}
\label{eq:4.A.46}
\begin{split}
&\sup_{||u||\leq A}\sup_{x \in \mathcal{X}}\bigg|\frac{1}{\sqrt{n}}\sum_{t=1}^n\Big(F\big(\tilde{\gamma}_t(u)x\big)-F(x)\Big)-xf(x)\Omega'u\bigg|\\
\leq& \underbrace{\sup_{||u||\leq A}\sup_{x \in \mathcal{X}}\bigg|\frac{1}{\sqrt{n}}\sum_{t=1}^n\Big(F\big(x\tilde{\gamma}_t(u)\big)-F\big(x\gamma_t(u)\big)\Big)\bigg|}_{=XIII}\\
&\qquad + \underbrace{\sup_{||u||\leq A}\sup_{x \in \mathcal{X}}\bigg|xf(x)\frac{1}{n}\sum_{t=1}^n D_t' u-xf(x)\Omega'u\bigg|}_{=XIV}
\end{split}\\
\nonumber
& \qquad \qquad  +\underbrace{\sup_{||u||\leq A}\sup_{x \in \mathcal{X}}\bigg|\frac{1}{\sqrt{n}}\sum_{t=1}^n\Big(F\big(x\gamma_t(u)\big)-F(x)\Big)-xf(x)\frac{1}{n}\sum_{t=1}^nD_t' u\bigg|}_{=XV},
\end{align}
where $XIII=O_p(n^{-1/2})$ by \eqref{eq:4.A.36}. Further, \eqref{eq:4.A.10} and the ergodic theorem imply
\begin{align*}
XIV\leq\sup_{||u||\leq A} \sup_{x \in \mathcal{X}}|x|f(x)\bigg|\bigg|\frac{1}{n}\sum_{t=1}^n D_t -\Omega\bigg|\bigg|\: ||u||
\leq A|\underline{x}|M \bigg|\bigg|\frac{1}{n}\sum_{t=1}^n D_t -\Omega\bigg|\bigg| \overset{a.s.}{\to}0.
\end{align*}
Regarding the last term, we use the mean value theorem and \eqref{eq:4.A.10} to obtain
\begin{align*}
XV=& \sup_{||u||\leq A}\sup_{x \in \mathcal{X}}\bigg|\frac{1}{\sqrt{n}}\sum_{t=1}^n\Big(xf(x\bar{\gamma}_t)\big(\gamma_t(u)-1\big)-xf(x)\frac{1}{\sqrt{n}}D_t' u\Big)\bigg|\\
\leq&\sup_{||u||\leq A}\sup_{x \in \mathcal{X}}\bigg|\frac{1}{\sqrt{n}}\sum_{t=1}^n\Big(xf(x)\big(\gamma_t(u)-1\big)-xf(x)\frac{1}{\sqrt{n}}D_t' u\Big)\bigg|\\
&\qquad +\sup_{||u||\leq A}\sup_{x \in \mathcal{X}}\bigg|\frac{1}{\sqrt{n}}\sum_{t=1}^n\Big(xf(x\bar{\gamma}_t)\big(\gamma_t(u)-1\big)-xf(x)\big(\gamma_t(u)-1\big)\Big)\bigg|\\
\leq&\underbrace{\frac{|\underline{x}|M}{\sqrt{n}}\sum_{t=1}^n\sup_{||u||\leq A}\bigg|\big(\gamma_t(u)-1\big)-\frac{1}{\sqrt{n}}D_t' u\bigg|}_{XV_1}\\
&\qquad + \underbrace{\sup_{||u||\leq A}\sup_{x \in \mathcal{X}}\bigg|\frac{1}{\sqrt{n}}\sum_{t=1}^n x\big(f(x\bar{\gamma}_t)-f(x)\big)\big(\gamma_t(u)-1\big)\bigg|}_{XV_2}
\end{align*}
with $\bar{\gamma}_t$ being between $\gamma_t(u)$ and $1$. For $n$ sufficiently large such that $\big \{\theta:||\theta-\theta_0||\leq A/\sqrt{n}\big \} \subseteq \mathscr{V}(\theta_0)$, a second-order Taylor expansion gives
\begin{align*}
XV_1=& \frac{|\underline{x}| M}{\sqrt{n}}\sum_{t=1}^n \sup_{||u||\leq A}\frac{1}{\sigma_t(\theta_0)}\bigg|\sigma_t(\theta_0+n^{-1/2}u)-\sigma_t(\theta_0)-\frac{1}{\sqrt{n}}\frac{\partial \sigma_t(\theta_0)}{\partial \theta'}u\bigg|\\
=&\frac{|\underline{x}| M}{\sqrt{n}}\sum_{t=1}^n \sup_{||u||\leq A}\frac{1}{\sigma_t(\theta_0)}\bigg|\frac{1}{2n}u'\frac{\partial^2 \sigma_t(\bar{\theta}_n)}{\partial \theta \partial \theta'}u \bigg|
\leq  \frac{A^2|\underline{x}|M}{2n^{3/2}}\sum_{t=1}^n \frac{\sigma_t(\bar{\theta}_n)}{\sigma_t(\theta_0)}\bigg|\bigg|\frac{1}{\sigma_t(\bar{\theta}_n)}\frac{\partial^2 \sigma_t(\bar{\theta}_n)}{\partial \theta \partial \theta'} \bigg|\bigg|\\
\leq & \frac{A^2 |\underline{x}| M}{2n^{3/2}}\sum_{t=1}^n\sup_{||\theta-\theta_0||\leq A/\sqrt{n}}\frac{\sigma_t(\theta)}{\sigma_t(\theta_0)}\sup_{||\theta-\theta_0||\leq A/\sqrt{n}}\big|\big|H_t(\theta) \big|\big|
\leq \frac{A^2 |\underline{x}| M}{2n^{3/2}}\sum_{t=1}^n T_t V_t
\end{align*}
with $\bar{\theta}_n$ being between $\theta_0$ and $\theta_0+n^{-1/2}u$. The Cauchy-Schwarz inequality, the uniform ergodic theorem and Assumption \ref{as:4.9} yield
\begin{align*}
\frac{1}{n}\sum_{t=1}^n T_t V_t \leq \bigg(\underbrace{\frac{1}{n}\sum_{t=1}^n T_t^2}_{\overset{a.s.}{\to}\EE[T_t^2]<\infty}\bigg)^{\frac{1}{2}}\bigg(\underbrace{\frac{1}{n}\sum_{t=1}^n V_t^2}_{\overset{a.s.}{\to}\EE[V_t^2]<\infty}\bigg)^{\frac{1}{2}}
\end{align*}
and we conclude that $XV_1= O(n^{-1/2})$ almost surely. Before turning to $XV_2$, we establish two auxiliary results:
\begin{enumerate}[(i)]
	\item $\frac{1}{\sqrt{n}}\sum_{t=1}^n\sup_{||u||\leq A} \big|\gamma_t(u)-1\big|=O(1)$ almost surely; 
    \item $\sup_{||u||\leq A} \sup_{x \in \mathcal{X}} \max_{1\leq t\leq n} \big|f(x\bar{\gamma}_t)-f(x)\big|=o_p(1)$.
\end{enumerate}
Statement (i) follows from \eqref{eq:4.A.15} and \eqref{eq:4.A.35} as
\begin{align*}
\frac{1}{\sqrt{n}}\sum_{t=1}^n\sup_{||u||\leq A} \big|\gamma_t(u)-1\big|\leq \frac{A}{n} \sum_{t=1}^n T_t U_t\leq A \bigg(\underbrace{\frac{1}{n} \sum_{t=1}^n T_t^2}_{\overset{a.s.}{\to}\EE[ T_t^2]<\infty} \bigg)^{\frac{1}{2}}\bigg(\underbrace{\frac{1}{n} \sum_{t=1}^n U_t^2}_{\overset{a.s.}{\to}\EE[ U_t^2]<\infty}  \bigg)^{\frac{1}{2}}.
\end{align*}
To show (ii), we note that the Cauchy-Schwarz inequality and Assumption \ref{as:4.9} yield $\EE\big[(T_tU_t)^{3}\big]\leq \EE\big[T_t^6\big]^{\frac{1}{2}}\EE\big[U_t^6\big]^{\frac{1}{2}}<\infty$. For every $\varepsilon>0$ and for $n$ sufficiently large such that $\big \{\theta:||\theta-\theta_0||\leq A/\sqrt{n}\big \} \subseteq \mathscr{V}(\theta_0)$, we have
\begin{align*}
&\PP\bigg[\sup_{||u||\leq A} \max_{1\leq t\leq n} \big|\gamma_t(u)-1\big|\geq \varepsilon\bigg]\leq \PP\bigg[A\max_{1\leq t\leq n} T_t U_t\geq \varepsilon \sqrt{n}\bigg]\\
\leq &\PP\bigg[A^3\max_{1\leq t\leq n} (T_t U_t)^3\geq \varepsilon^3 n^{3/2}\bigg]
\leq  \frac{A^3}{n^{3/2}\varepsilon^3}\EE\big[\max_{1\leq t\leq n} (T_t U_t)^3\big]
\leq  \frac{A^3}{\sqrt{n}\varepsilon^3}\EE\big[(T_t U_t)^3\big],
\end{align*}
which converges to $0$, and thus we obtain $\sup_{||u||\leq A} \max_{1\leq t\leq n} \big|\gamma_t(u)-1\big|=o_p(1)$. Because $\bar{\gamma}_t$ lies between $\gamma_t(u)$ and $1$, it follows that $\sup_{||u||\leq A} \max_{1\leq t\leq n} \big|\bar{\gamma}_t-1\big|=o_p(1)$. Thus, for sufficiently large $n$, we have $x\bar{\gamma}_t \in \mathcal{X}_\delta$ with probability close to one. Then, statement (ii) follows from the fact that $f$ is uniformly continuous on $\mathcal{X}_\delta$. Employing both auxiliary results, we obtain
\begin{align*}
XV_2\leq&\sup_{||u||\leq A} \sup_{x \in \mathcal{X}}\frac{1}{\sqrt{n}}\sum_{t=1}^n |x|\:\big|f(x\bar{\gamma}_t)-f(x)\big|\:\big|\gamma_t(u)-1\big|\\
\leq& |\underline{x}|\sup_{||u||\leq A} \sup_{x \in \mathcal{X}}\max_{1\leq t\leq n} \big|f(x\bar{\gamma}_t)-f(x)\big| \frac{1}{\sqrt{n}}\sum_{t=1}^n\sup_{||u||\leq A} \big|\gamma_t(u)-1\big|=o_p(1).
\end{align*}
Thus $XV$ is $o_p(1)$, which completes \textit{Step 4}.

Concerning \textit{Step 5} we obtain for each $\varepsilon>0$
\begin{align*}
&\PP\bigg[\sup_{x \in \mathcal{X}}\bigg|\frac{1}{\sqrt{n}}\sum_{t=1}^n\mathbbm{1}_{\{\hat{\eta}_t\leq x\}}-\frac{1}{\sqrt{n}}\sum_{t=1}^n\mathbbm{1}_{\{\eta_t\leq x\}}-xf(x)\Omega'\sqrt{n}\big(\hat{\theta}_n-\theta_0\big)\bigg|\geq \varepsilon\bigg]\\
\leq & \PP\bigg[\sup_{||u||\leq A}\sup_{x \in \mathcal{X}}\bigg|\frac{1}{\sqrt{n}}\sum_{t=1}^n\mathbbm{1}_{\{\eta_t\leq \tilde{\gamma}_t(u) x\}}-\frac{1}{\sqrt{n}}\sum_{t=1}^n\mathbbm{1}_{\{\eta_t\leq x\}}-xf(x)\Omega'u\bigg|\geq \varepsilon\bigg]\\
 &\qquad +\PP\Big[\sqrt{n}||\hat{\theta}_n-\theta_0||> A\Big]\\
 \leq & \PP\bigg[\sup_{||u||\leq A}\sup_{x \in \mathcal{X}}\bigg|\frac{1}{\sqrt{n}}\sum_{t=1}^n\Big(F\big(\tilde{\gamma}_t(u)x\big)-F(x)\Big)-xf(x)\Omega'u\bigg|\geq \frac{\varepsilon}{2}\bigg]\\
 &\qquad + \PP\bigg[\sup_{||u||\leq A}\sup_{x \in \mathcal{X}}\big|S_n(x,u)/\sqrt{n}\big|\geq \frac{\varepsilon}{2}\bigg] +\PP\Big[\sqrt{n}||\hat{\theta}_n-\theta_0||> A\Big].
\end{align*}
Since $\sqrt{n}||\hat{\theta}_n-\theta_0||=O_p(1)$ by Theorem \ref{thm:4.2}, the third term can be made arbitrarily small for large $n$ by choosing $A$ sufficiently large. Given $A$, the first two terms converge to zero by \textit{Step 3} and \textit{Step 4}, which completes \textit{Step 5}.

Regarding \textit{Step 6} we refer to \citeauthor{bahadur1966note} (\citeyear{bahadur1966note}, Lemma 1). Replacing $\xi$ by $\xi_\alpha$ in the proof and choosing the sequences $a_n$ and $b_n$ to satisfy $a_n \sim n^{-\varrho} \log n$ and $b_n \sim n^\psi$ as $n\to \infty$, where $\psi =(1-\varrho)/2$, it follows that
\begin{align*}
\mathbbm{H}_{n,\alpha}=\sup\limits_{x\in \mathcal{I}_n} \Big|  \big(\mathbbm{F}_n(x)-\mathbbm{F}_n(\xi_\alpha)\big)-  \big(F(x)-F(\xi_\alpha)\big)\Big|=O\big(n^{-(\varrho+\psi)}\log n\big)
\end{align*}
almost surely as $n \to \infty$. Inserting the definition of $\psi$ and inflating the term by $\sqrt{n}$ leads to $\sqrt{n}\:\mathbbm{H}_{n,\alpha}=O\big(n^{-\varrho/2}\log n\big)$ almost surely as $n \to \infty$. Together with the triangle inequality, we establish 
\begin{align*}
\sup\limits_{x,y\in \mathcal{I}_n} \Big| \sqrt{n} \big(\mathbbm{F}_n(x)-\mathbbm{F}_n(y)\big)- \sqrt{n} \big(F(x)-F(y)\big)\Big|\leq 2 \sqrt{n} \: \mathbbm{H}_{n,\alpha}=O\big(n^{-\varrho/2}\log n\big),
\end{align*}
which completes \textit{Step 6}.

Regarding \textit{Step 7} we bound
\begin{align*}
&\sup_{x,y \in \mathcal{I}_n} \Big| \sqrt{n} \big(\hat{\mathbbm{F}}_n(x)-\hat{\mathbbm{F}}_n(y)\big)- \sqrt{n} \big(F(x)-F(y)\big)\Big|\\
\leq & 2\sup_{x \in \mathcal{I}_n}\Big|\sqrt{n}\big(\hat{\mathbbm{F}}_n(x)-\mathbbm{F}_n(x)\big)-xf(x)\Omega'\sqrt{n}\big(\hat{\theta}_n-\theta_0\big)\Big|\\
& \quad  + \sup_{x,y \in \mathcal{I}_n} \Big| \sqrt{n} \big(\mathbbm{F}_n(x)-\mathbbm{F}_n(y)\big)-\sqrt{n} \big(F(x)-F(y)\big)\Big|\\
 & \qquad  + \sup_{x,y \in \mathcal{I}_n} \Big| \big(xf(x)-yf(y)\big)\Omega'\sqrt{n}\big(\hat{\theta}_n-\theta_0\big)\Big|.
\end{align*}
Taking $n$ sufficiently large such that $\mathcal{I}_n \subset \mathcal{X}$, the first term on the right-hand side vanishes in probability by \textit{Step 5}. The second term vanishes almost surely by \textit{Step 6}. The last term can be bounded as follows:
\begin{align*}
\sup_{x,y \in \mathcal{I}_n} \Big| \big(xf(x)-yf(y)\big)\Omega'\sqrt{n}\big(\hat{\theta}_n-\theta_0\big)\Big|\leq \sup_{x,y \in \mathcal{I}_n} \big|xf(x)-yf(y)\big|\:||\Omega||\:\sqrt{n}\big|\big|\hat{\theta}_n-\theta_0\big|\big|.
\end{align*}
Since $f(x)$, and hence $xf(x)$, is continuous in a neighborhood of $\xi_\alpha$ by Assumption \ref{as:4.5}(\ref{as:4.5.2}) and $\mathcal{I}_n$ shrinks to $\xi_\alpha$ we have $\sup_{x,y \in \mathcal{I}_n} \big|xf(x)-yf(y)\big|\to 0$. Together with $\sqrt{n}\big|\big|\hat{\theta}_n-\theta_0\big|\big|=O_p(1)$  (Theorem \ref{thm:4.2}) we find that the last term converges in probability to $0$, which completes \textit{Step 7}.

To verify that replacing any $\hat{\mathbbm{F}}_n(\cdot)$ by $\hat{\mathbbm{F}}_n(\cdot\:-)$  does not alter the result, we note that
$\hat{\mathbbm{F}}_n\big(x-n^{-1}\big)\leq \hat{\mathbbm{F}}_n(x-)\leq \hat{\mathbbm{F}}_n(x)\leq \hat{\mathbbm{F}}_n\big(x+n^{-1}\big)$ for all $x \in \mathcal{I}_n$ (similarly for $y$). Setting $\bar{\mathcal{I}}_n=(\xi_\alpha - \bar{a}_n,\xi_\alpha + \bar{a}_n)$ with $\bar{a}_n =a_n +n^{-1}$, we can bound $\sup\limits_{x,y \in \mathcal{I}_n} \big| \sqrt{n} \big(\hat{\mathbbm{F}}_n(x-)-\hat{\mathbbm{F}}_n(y)\big)- \sqrt{n} \big(F(x)-F(y)\big)\big|$ and $\sup\limits_{x,y \in \mathcal{I}_n} \big| \sqrt{n} \big(\hat{\mathbbm{F}}_n(x-)-\hat{\mathbbm{F}}_n(y-)\big)- \sqrt{n} \big(F(x)-F(y)\big)\big|$ by
\begin{align}
\label{eq:4.A.47}
\begin{split}
&\sup_{x,y \in \bar{\mathcal{I}}_n} \Big| \sqrt{n} \big(\hat{\mathbbm{F}}_n(x)-\hat{\mathbbm{F}}_n(y)\big)- \sqrt{n} \big(F(x)-F(y)\big)\Big|\\
&\qquad \qquad \qquad \qquad+2 \sup_{y \in \mathcal{I}_n} \sqrt{n} \Big(F\big(y+n^{-1}\big)-F\big(y-n^{-1}\big)\Big).
\end{split}
\end{align}
The first term in \eqref{eq:4.A.47} vanishes in probability by \textit{Step 7} as $\bar{a}_n \sim a_n$. Regarding the second term, the mean value theorem implies
\begin{align*}
2\sup_{y \in \mathcal{I}_n} \sqrt{n} \bigg(F\Big(y+\frac{1}{n}\Big)-F\Big(y-\frac{1}{n}\Big)\bigg) = \frac{4}{\sqrt{n}} \sup_{y \in \mathcal{I}_n}  f\big(y+\varepsilon_n\big),
\end{align*}
where $|\varepsilon_n|\leq n^{-1}$. Since $\frac{4}{\sqrt{n}}\to 0$ and $\sup_{y \in \mathcal{I}_n}  f(y+\varepsilon_n)\to f(\xi_\alpha)$ the term vanishes, which completes the proof.
\end{proof}

\subsection{Bootstrap Lemmas}
\label{app:4.A.2}

Henceforth we use $\PP^*$, $\EE^*$, $\Var^*$ and $\Cov^*$ to denote the probability, expectation, variance and covariance conditional on $\mathcal{F}_n$.

\begin{lemma}
\label{lem:4.4}
Suppose Assumptions \ref{as:4.1}--\ref{as:4.3}, \ref{as:4.4}(\ref{as:4.4.1}), \ref{as:4.5}(\ref{as:4.5.1}) and \ref{as:4.5}(\ref{as:4.5.3}) hold.
\begin{enumerate}[(i)]

\item If in addition Assumption \ref{as:4.9}(i) holds with $a=4$, then  $\EE^*[\eta_t^{*m}]\overset{a.s.}{\to}\EE[\eta_t^m]$ for  $ m \in \{1,2,3,4\}$.

    \item If in addition Assumptions \ref{as:4.6}, \ref{as:4.7} and \ref{as:4.9}(i) hold with $a=-1,4$, then we have $\EE^*[\eta_t^{*m}\mathbbm{1}_{\{\eta_t^*< \hat{\xi}_{n,\alpha}\}}]\overset{a.s.}{\to}\EE[\eta_t^{m}\mathbbm{1}_{\{\eta_t< \xi_\alpha\}}]$ for  $ m \in \{0,1,2,3,4\}$.
\end{enumerate}
\end{lemma}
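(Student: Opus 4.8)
The plan is to exploit the fact that, conditional on $\mathcal{F}_n$, the bootstrap innovations $\eta_1^*,\dots,\eta_n^*$ are i.i.d.\ draws from the empirical distribution $\hat{\mathbbm{F}}_n$ of the residuals $\hat\eta_1,\dots,\hat\eta_n$. Hence for any measurable function $\phi$ one has the exact identity $\EE^*[\phi(\eta_t^*)] = \int \phi\, d\hat{\mathbbm{F}}_n = \frac{1}{n}\sum_{s=1}^n \phi(\hat\eta_s)$. So both statements reduce to almost-sure convergence of sample averages of (transformed) residuals, which is precisely the content of Lemma \ref{lem:4.2}(iv). For the first statement, take $\phi(x)=x^m$, i.e.\ $l=-\infty$, $u=\infty$, and $m\in\{1,2,3,4\}$: Lemma \ref{lem:4.2}(iv) (which requires Assumptions \ref{as:4.5}(\ref{as:4.5.3}) and \ref{as:4.9}(i) with $a=4$, exactly the hypotheses assumed here together with the ambient Assumptions \ref{as:4.1}--\ref{as:4.3}, \ref{as:4.4}(\ref{as:4.4.1}), \ref{as:4.5}(\ref{as:4.5.1})) gives $\frac{1}{n}\sum_{s=1}^n \hat\eta_s^m \overset{a.s.}{\to}\EE[\eta_t^m]$, hence $\EE^*[\eta_t^{*m}]\overset{a.s.}{\to}\EE[\eta_t^m]$.

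For the second statement I would write, using again the i.i.d.-from-$\hat{\mathbbm{F}}_n$ structure,
\begin{align*}
\EE^*\big[\eta_t^{*m}\mathbbm{1}_{\{\eta_t^*<\hat{\xi}_{n,\alpha}\}}\big]
= \frac{1}{n}\sum_{s=1}^n \hat\eta_s^m\,\mathbbm{1}_{\{\hat\eta_s<\hat{\xi}_{n,\alpha}\}}.
\end{align*}
The target is $\EE[\eta_t^m\mathbbm{1}_{\{\eta_t<\xi_\alpha\}}]$. The difficulty compared with the first statement is that the indicator's threshold $\hat{\xi}_{n,\alpha}$ is random and only converges to $\xi_\alpha$, so Lemma \ref{lem:4.2}(iv) cannot be applied with a fixed $u$. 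The natural device is a sandwich argument: fix $\varepsilon>0$; by Theorem \ref{thm:4.1} (whose hypotheses include \ref{as:4.6} and \ref{as:4.9}(i) with $a=-1$, which is why those assumptions are listed) we have $\hat{\xi}_{n,\alpha}\overset{a.s.}{\to}\xi_\alpha$, so eventually $\xi_\alpha-\varepsilon\le \hat{\xi}_{n,\alpha}\le\xi_\alpha+\varepsilon$ almost surely. For $m$ even the integrand $x^m\mathbbm{1}_{\{x<c\}}$ is monotone non-decreasing in $c$, giving directly
\begin{align*}
\frac{1}{n}\sum_{s=1}^n \hat\eta_s^m\mathbbm{1}_{\{\hat\eta_s<\xi_\alpha-\varepsilon\}}
\le \frac{1}{n}\sum_{s=1}^n \hat\eta_s^m\mathbbm{1}_{\{\hat\eta_s<\hat{\xi}_{n,\alpha}\}}
\le \frac{1}{n}\sum_{s=1}^n \hat\eta_s^m\mathbbm{1}_{\{\hat\eta_s<\xi_\alpha+\varepsilon\}};
\end{align*}
for $m$ odd (or to handle all $m$ uniformly) I would instead bound the difference by $\frac{1}{n}\sum_s |\hat\eta_s|^m\mathbbm{1}_{\{|\hat\eta_s-\xi_\alpha|\le\varepsilon\}}$, i.e.\ sandwich between the averages over $\{\hat\eta_s<\xi_\alpha-\varepsilon\}$ and $\{\hat\eta_s<\xi_\alpha+\varepsilon\}$ after adding and subtracting. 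In either case Lemma \ref{lem:4.2}(iv) applies to the two outer sums (fixed thresholds $l=-\infty$, $u=\xi_\alpha\pm\varepsilon$, $m\in\{0,\dots,4\}$), yielding almost-sure limits $\EE[\eta_t^m\mathbbm{1}_{\{\eta_t<\xi_\alpha\pm\varepsilon\}}]$. Finally, since $\EE[\eta_t^m\mathbbm{1}_{\{\eta_t<c\}}]=\int_{-\infty}^c x^m f(x)\,dx$ is continuous in $c$ (Assumption \ref{as:4.5}(\ref{as:4.5.2}) and finiteness of the moments from \ref{as:4.5}(\ref{as:4.5.3})), letting $\varepsilon\downarrow 0$ squeezes both bounds to $\EE[\eta_t^m\mathbbm{1}_{\{\eta_t<\xi_\alpha\}}]$, which completes the argument.

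The main obstacle is the randomness of the truncation point $\hat{\xi}_{n,\alpha}$; everything else is a direct appeal to the already-established Lemma \ref{lem:4.2}(iv) via the elementary observation that $\EE^*$ of a function of $\eta_t^*$ is the empirical average of that function over the residuals. One small point to be careful about is that Lemma \ref{lem:4.2}(iv) is stated with the half-open set $\{l\le\eta_t<u\}$, whereas the bootstrap quantity has strict inequality $\{\eta_t^*<\hat{\xi}_{n,\alpha}\}$; taking $l=-\infty$ makes this match exactly, so no extra care with boundary points is needed, and the continuity of $F$ (Assumption \ref{as:4.5}(\ref{as:4.5.1})) anyway renders the distinction immaterial in the limit.
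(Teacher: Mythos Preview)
Your proposal is correct and follows essentially the same route as the paper: reduce to Lemma \ref{lem:4.2}(iv) via the identity $\EE^*[\phi(\eta_t^*)]=\frac{1}{n}\sum_s\phi(\hat\eta_s)$, take $u=\infty$ for part 1, and for part 2 combine $\hat{\xi}_{n,\alpha}\overset{a.s.}{\to}\xi_\alpha$ (Theorem \ref{thm:4.1}) with continuity of $u\mapsto\EE[\eta_t^m\mathbbm{1}_{\{\eta_t<u\}}]$. The paper compresses your sandwich argument into a single sentence; one minor point is that the continuity you invoke already follows from Assumption \ref{as:4.5}(\ref{as:4.5.1}) (continuity of $F$) together with \ref{as:4.5}(\ref{as:4.5.3}), so you need not appeal to \ref{as:4.5}(\ref{as:4.5.2}), which is not among the lemma's hypotheses.
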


\begin{proof}
Lemma \ref{lem:4.2} gives 
$\EE^*[\eta_t^{*m}\mathbbm{1}_{\{\eta_t^*< u\}}]=\frac{1}{n}\sum_{t=1}^n\hat{\eta}_t^{m}\mathbbm{1}_{\{\hat{\eta}_t< u\}} \overset{a.s.}{\to}\EE[\eta_t^{m}\mathbbm{1}_{\{\eta_t< u\}}]$. Taking $u= \infty$ proves the first claim, whereas the second claim follows from $\EE[\eta_t^{m}\mathbbm{1}_{\{\eta_t< u\}}]$ being continuous in $u$ and $\hat{\xi}_{n,\alpha} \overset{a.s.}{\to}\xi_\alpha$ by Theorem \ref{thm:4.1}.
\end{proof}

\begin{lemma}
\label{lem:4.5}
Suppose Assumptions \ref{as:4.1}--\ref{as:4.3}, \ref{as:4.4}(\ref{as:4.4.1}), \ref{as:4.5}(\ref{as:4.5.1}), \ref{as:4.5}(\ref{as:4.5.3}), \ref{as:4.6} and \ref{as:4.9}(i)--(ii) hold with $a=\pm 4$. Then, we have $\hat{\theta}_n^* \overset{p^*}{\to}\theta_0$ almost surely.
\end{lemma}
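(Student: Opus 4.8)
The plan is to follow the classical consistency proof for the Gaussian QMLE of GARCH-type models (\citeauthor{francq2011garch}, \citeyear{francq2011garch}, Ch.~7; \citeauthor{francq2004maximum}, \citeyear{francq2004maximum}), replacing the ordinary ergodic theorem by a \emph{conditional} law of large numbers valid along $\PP$-almost every sample path. Substituting $\epsilon_t^*=\tilde\sigma_t(\hat\theta_n)\eta_t^*$ gives $\ell_t^*(\theta)=-\tfrac{1}{2}\eta_t^{*2}\,\tilde\sigma_t^2(\hat\theta_n)/\tilde\sigma_t^2(\theta)-\log\tilde\sigma_t(\theta)$, so $L_n^*$ has the same shape as the non-bootstrap criterion with the data weights $\hat\eta_t^2=\epsilon_t^2/\tilde\sigma_t^2(\hat\theta_n)$ replaced by the conditionally i.i.d.\ draws $\eta_t^{*2}$. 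Let $L(\theta)=-\tfrac{1}{2}\EE[\sigma_t^2(\theta_0)/\sigma_t^2(\theta)]-\EE[\log\sigma_t(\theta)]$ denote the target limit (with values in $[-\infty,-\log\underline\omega]$), and $\ell_t(\theta)=-\tfrac{1}{2}\sigma_t^2(\theta_0)/\sigma_t^2(\theta)-\log\sigma_t(\theta)$. Using the elementary inequality $\tfrac{1}{2}(1/x^2-1)+\log x\ge 0$, with equality iff $x=1$, together with the identifiability part of Assumption~\ref{as:4.3} ($\sigma_t(\theta_0)/\sigma_t(\theta)=1$ a.s.\ iff $\theta=\theta_0$), $L$ is finite at $\theta_0$ and uniquely maximised there, and for each $\theta^{\dagger}\neq\theta_0$ one has $\EE[\sup_{\theta'\in V}\ell_t(\theta')]\downarrow\EE[\ell_t(\theta^{\dagger})]<L(\theta_0)$ as the neighbourhood $V\downarrow\{\theta^{\dagger}\}$.

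First I would split $L_n^*(\theta)=\EE^*[L_n^*(\theta)]+\big(L_n^*(\theta)-\EE^*[L_n^*(\theta)]\big)$. For the conditional mean, $\EE^*[L_n^*(\theta)]=-\tfrac{1}{2}m_n^*\,\tfrac{1}{n}\sum_{t=1}^n\tilde\sigma_t^2(\hat\theta_n)/\tilde\sigma_t^2(\theta)-\tfrac{1}{n}\sum_{t=1}^n\log\tilde\sigma_t(\theta)$ with $m_n^*=\EE^*[\eta_t^{*2}]=\tfrac{1}{n}\sum_{s=1}^n\hat\eta_s^2\overset{a.s.}{\to}1$ by Lemma~\ref{lem:4.4}. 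These are exactly the averages handled in the Francq--Zakoïan consistency argument: one passes from $\tilde\sigma_t$ to $\sigma_t$ via Assumption~\ref{as:4.4}(\ref{as:4.4.1}), replaces the random argument $\hat\theta_n$ in the numerator by a bound over a shrinking neighbourhood of $\theta_0$ (legitimate since $\hat\theta_n\overset{a.s.}{\to}\theta_0$ by Theorem~\ref{thm:4.1}), and applies the uniform ergodic theorem with the integrability of Assumption~\ref{as:4.9}(i)--(ii) (in both directions, $a=\pm4$), together with the standard truncation device for the GARCH quasi-likelihood to cope with the possible non-integrability of $\sigma_t^2(\theta_0)/\sigma_t^2(\theta)$ away from $\theta_0$. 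This yields $\EE^*[L_n^*(\theta)]\overset{a.s.}{\to}L(\theta)$ pointwise and, for every small ball $V$, $\limsup_n\sup_{\theta'\in V}\EE^*[L_n^*(\theta')]\le\EE[\sup_{\theta'\in V}\ell_t(\theta')]$ a.s.

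For the centred part, $L_n^*(\theta)-\EE^*[L_n^*(\theta)]=-\tfrac{1}{2n}\sum_{t=1}^n(\eta_t^{*2}-m_n^*)\,\tilde\sigma_t^2(\hat\theta_n)/\tilde\sigma_t^2(\theta)$ is a sum of conditionally independent, conditionally centred terms, so by the conditional Chebyshev inequality it suffices to bound its conditional variance $\tfrac{1}{4n^2}\Var^*[\eta_t^{*2}]\sum_{t=1}^n\tilde\sigma_t^4(\hat\theta_n)/\tilde\sigma_t^4(\theta)$; since $\Var^*[\eta_t^{*2}]=\EE^*[\eta_t^{*4}]-(m_n^*)^2\overset{a.s.}{\to}\kappa-1<\infty$ (Lemma~\ref{lem:4.4}) and $\tfrac{1}{n}\sum_{t=1}^n\tilde\sigma_t^4(\hat\theta_n)/\tilde\sigma_t^4(\theta)=O(1)$ a.s.\ (same ingredients as above, now at the fourth power, using $a=\pm4$), the conditional variance is $O(n^{-1})\to0$ a.s., whence $L_n^*(\theta)-\EE^*[L_n^*(\theta)]=o_{p^*}(1)$ a.s.; replacing $\theta$ by a $\sup$ over a small ball $V$ inside the weight only changes the constant and, combined with the truncation step, gives $\sup_{\theta'\in V}L_n^*(\theta')\le\EE[\sup_{\theta'\in V}\ell_t(\theta')]+o_{p^*}(1)$ a.s. Together with $L_n^*(\theta_0)=L(\theta_0)+o_{p^*}(1)$ a.s.\ (using $\theta_0\in\mathring\Theta$, Assumption~\ref{as:4.6}), these are the two inputs needed for the last step.

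The last step is the Wald covering argument. Fix $\varepsilon>0$; by compactness of $\Theta$ (Assumption~\ref{as:4.1}) cover the closed set $\{\theta\in\Theta:\|\theta-\theta_0\|\ge\varepsilon\}$ by finitely many balls $V_1,\dots,V_K$, chosen small enough that $\EE[\sup_{\theta'\in V_k}\ell_t(\theta')]<L(\theta_0)$ for each $k$ (possible by the identification step). Then, on the $\PP$-full set on which the previous convergences hold, for each $k$ the quantity $\sup_{\theta'\in V_k}L_n^*(\theta')$ is eventually strictly below $L_n^*(\theta_0)\le\sup_{\|\theta-\theta_0\|\le\varepsilon}L_n^*(\theta)$ with $\PP^*$-probability tending to one; since $\hat\theta_n^*$ maximises $L_n^*$, this forces $\PP^*[\|\hat\theta_n^*-\theta_0\|>\varepsilon]\to0$ a.s., i.e.\ $\hat\theta_n^*\overset{p^*}{\to}\theta_0$ a.s. I expect the main obstacle to be bookkeeping rather than conceptual: the GARCH quasi-log-likelihood is neither concave nor globally integrable, so there is no clean uniform LLN and the truncation-plus-covering machinery of Francq--Zakoïan must be transcribed carefully to the bootstrap criterion, keeping track of which bounds hold $\PP$-almost surely and which only in conditional probability, and controlling the unbounded bootstrap weights $\eta_t^{*2}$ through their a.s.-convergent low-order conditional moments (Lemma~\ref{lem:4.4}) rather than through any uniform bound.
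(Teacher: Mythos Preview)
Your overall strategy---conditional mean plus centred fluctuation, the Francq--Zako\"ian truncation device for the mean, and a Wald covering to conclude---is the same as the paper's. The paper centres at $\hat\theta_n$ rather than at $\theta_0$ (working with $L_n^*(\theta)-L_n^*(\hat\theta_n)$ throughout), but this is cosmetic.

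There is, however, a genuine gap in your variance step. You assert that
\[
\tfrac{1}{n}\sum_{t=1}^n \tilde\sigma_t^4(\hat\theta_n)/\tilde\sigma_t^4(\theta)=O(1)\ \text{a.s.}
\]
``using $a=\pm4$'', but Assumption~\ref{as:4.9}(i) controls the ratio $\sigma_t(\theta_0)/\sigma_t(\theta)$ only over the \emph{neighbourhood} $\mathscr{V}(\theta_0)$, not over all of $\Theta$. For $\theta\notin\mathscr{V}(\theta_0)$ you can only bound the denominator by $\underline\omega^4$, which leaves essentially $\tfrac{1}{n}\sum_t\sigma_t^4(\theta_0)T_t^4$; and Assumption~\ref{as:4.3} gives merely $\EE[\sigma_t^s(\theta_0)]<\infty$ for some possibly tiny $s>0$---no fourth moment of the volatility is assumed. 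So the conditional variance of your centred piece need not be $O(n^{-1})$ at all. Your phrase ``combined with the truncation step'' suggests you sense the difficulty, but as written the truncation comes too late: it must be applied \emph{before} the Chebyshev bound, not tacked on afterwards.

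The paper resolves this by interchanging the order. It splits on the event $\{\sigma_t^2(\hat\theta_n)/\sigma_t^2(\theta)\le M\}$ \emph{inside} the criterion, so that the only term carrying the bootstrap randomness $\eta_t^{*2}$ has weights bounded by $M$, and the Chebyshev variance bound is then trivially $M^2\Var^*[\eta_t^{*2}]/n\to 0$ a.s. On the complementary event the non-negative term $\eta_t^{*2}\,\sigma_t^2(\hat\theta_n)/\sigma_t^2(\theta)$ is simply dropped, and the residual $1+\log\big(\sigma_t^2(\hat\theta_n)/\sigma_t^2(\theta)\big)$ is controlled via $\log x\le x$, Cauchy--Schwarz, and the small-moment condition $\EE[\sigma_t^s(\theta_0)]<\infty$, letting $M\to\infty$ at the end. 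Once you move the truncation inside the variance calculation in this way, your argument and the paper's coincide.
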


\begin{proof}
The proof is inspired by \citeauthor{francq2004maximum} (\citeyear{francq2004maximum},  Theorem 2.1). 
Let $\nu>0$ and set $\mathscr{B} =\{\theta \in \Theta: ||\theta-\theta_0||\geq \nu\}$;
We establish the result in three steps:

\begin{enumerate}
	\item[] \textit{Step 1:} we obtain $L_n^*(\theta)-L_n^*(\hat{\theta}_n)= \frac{1}{2n}\sum_{t=1}^n\Big(1-\frac{\sigma_t^2(\hat{\theta}_n)}{\sigma_t^2(\theta)}\eta_t^{*2}+\log \frac{\sigma_t^2(\hat{\theta}_n)}{\sigma_t^2(\theta)}\Big)+R_n^*(\theta)$ with $\sup_{\theta \in \Theta} \big|R_n^*(\theta)\big|\overset{p^*}{\to}0 $ almost surely;

    \item[] \textit{Step 2:} There exists a $\zeta<0$ such that $\sup_{\theta \in \mathscr{B}} L_n^*(\theta) -L_n^*(\hat{\theta}_n)<\zeta/2+S_n^*$ with $S_n^*\overset{p^*}{\to}0 $ almost surely;
    
    \item[] \textit{Step 3:} we show $\PP^*\big[\hat{\theta}_n^* \in \mathscr{B}\big]\overset{a.s.}{\to}0$. 
\end{enumerate}
Regarding \textit{Step 1} we  find
\begin{align*}
 L_n^*(\theta)-L_n^*(\hat{\theta}_n)=\frac{1}{2n}\sum_{t=1}^n\bigg\{\eta_t^{*2}-\frac{\tilde{\sigma}_t^2(\hat{\theta}_n)}{\tilde{\sigma}_t^2(\theta)}\eta_t^{*2}+\log \frac{\tilde{\sigma}_t^2(\hat{\theta}_n)}{\tilde{\sigma}_t^2(\theta)}\bigg\},
\end{align*}
where $ \frac{1}{n}\sum_{t=1}^n \eta_t^{*2}\overset{p^*}{\to}1$ almost surely since
\begin{align*}
\EE^*\bigg[\frac{1}{n}\sum_{t=1}^n \eta_t^{*2}\bigg] = \EE^*\big[ \eta_t^{*2}\big]\overset{a.s.}{\to} 1 \qquad \text{and} \qquad  \Var^*\bigg[\frac{1}{n}\sum_{t=1}^n \eta_t^{*2}\bigg] = \frac{1}{n}\Var^*\big[ \eta_t^{*2}\big]\overset{a.s.}{\to} 0
\end{align*}
by Lemma \ref{lem:4.4}. It remains to show the negligibility of the initial conditions, i.e.\ 

\begin{align}
\label{eq:4.A.48}
\sup_{\theta \in  \Theta}\bigg|\frac{1}{n}\sum_{t=1}^n\bigg\{\log \frac{\tilde{\sigma}_t^2(\hat{\theta}_n)}{\tilde{\sigma}_t^2(\theta)}-\log \frac{\sigma_t^2(\hat{\theta}_n)}{\sigma_t^2(\theta)}\bigg\}\bigg|\overset{a.s}{\to}0 
\end{align}
and
\begin{align}
\label{eq:4.A.49}
\sup_{\theta \in  \Theta}\bigg|\frac{1}{n}\sum_{t=1}^n \bigg(\frac{\sigma_t^2(\hat{\theta}_n)}{\sigma_t^2(\theta)}-\frac{\tilde{\sigma}_t^2(\hat{\theta}_n)}{\tilde{\sigma}_t^2(\theta)}\bigg)\eta_t^{*2}\bigg|\overset{p^*}{\to}0 
\end{align}
almost surely. The  inequality $\log(1+x)\leq x$ for all $x>-1$ and Assumption \ref{as:4.4}(\ref{as:4.4.1}) yield
\begin{align*}
&\sup_{\theta \in \Theta}\bigg|\frac{1}{n} \sum_{t=1}^n  \bigg(\log \frac{\sigma_t^2(\hat{\theta}_n)}{\sigma_t^2(\theta)}-\log \frac{\tilde{\sigma}_t^2(\hat{\theta}_n)}{\tilde{\sigma}_t^2(\theta)}\bigg)\bigg|= \sup_{\theta \in \Theta} \bigg| \frac{1}{n} \sum_{t=1}^n \bigg( \log \frac{\tilde{\sigma}_t^2(\theta)}{\sigma_t^2(\theta)}-\log \frac{\tilde{\sigma}_t^2(\hat{\theta}_n)}{\sigma_t^2(\hat{\theta}_n)}\bigg)\bigg|\\
\leq& \sup_{\theta \in \Theta} \frac{2}{n} \sum_{t=1}^n \bigg| \log \frac{\tilde{\sigma}_t^2(\theta)}{\sigma_t^2(\theta)}\bigg|= \sup_{\theta \in \Theta} \frac{4}{n} \sum_{t=1}^n \bigg| \log \frac{\tilde{\sigma}_t(\theta)}{\sigma_t(\theta)}\bigg|= \sup_{\theta \in \Theta} \frac{4}{n} \sum_{t=1}^n \bigg| \log \bigg(1+ \frac{\tilde{\sigma}_t(\theta)-\sigma_t(\theta)}{\sigma_t(\theta)}\bigg)\bigg|\\
\leq&  \frac{4}{n} \sum_{t=1}^n  \log \bigg(1+ \frac{C_1\rho^t}{\underline{\omega}}\bigg)\leq \frac{4}{n} \sum_{t=1}^n  \frac{C_1\rho^t}{\underline{\omega}}\leq  \frac{4C_1}{\underline{\omega}(1-\rho)n}\overset{a.s.}{\to}0
\end{align*}
verifying \eqref{eq:4.A.48}. Further, Assumption \ref{as:4.4}(\ref{as:4.4.1}) and \eqref{eq:4.A.5} imply
\begin{align*}
&\sup_{\theta \in \Theta}\bigg|\frac{1}{n} \sum_{t=1}^n  \bigg(\frac{\tilde{\sigma}_t^2(\hat{\theta}_n)}{\tilde{\sigma}_t^2(\theta)}-\frac{\sigma_t^2(\hat{\theta}_n)}{\sigma_t^2(\theta)}\bigg)\eta_t^{*2}\bigg|\leq \sup_{\theta \in \Theta}\frac{1}{n}\sum_{t=1}^n\bigg|\frac{\tilde{\sigma}_t^2(\hat{\theta}_n)}{\tilde{\sigma}_t^2(\theta)}-\frac{\sigma_t^2(\hat{\theta}_n)}{\sigma_t^2(\theta)}\bigg|\eta_t^{*2}\\
=&\sup_{\theta \in \Theta} \frac{1}{n}\sum_{t=1}^n \frac{\sigma_t^2(\hat{\theta}_n)}{\tilde{\sigma}_t^2(\theta)}\bigg|\frac{\tilde{\sigma}_t^2(\hat{\theta}_n)-\sigma_t^2(\hat{\theta}_n)}{\sigma_t^2(\hat{\theta}_n)}+\frac{\sigma_t^2(\theta)-\tilde{\sigma}_t^2(\theta)}{\sigma_t^2(\theta)}\bigg|\eta_t^{*2}\\
\leq& \sup_{\theta \in \Theta}\frac{1}{n}\sum_{t=1}^n \frac{\sigma_t^2(\hat{\theta}_n)}{\tilde{\sigma}_t^2(\theta)}\bigg(\frac{|\tilde{\sigma}_t^2(\hat{\theta}_n)-\sigma_t^2(\hat{\theta}_n)|}{\sigma_t^2(\hat{\theta}_n)}+\frac{|\sigma_t^2(\theta)-\tilde{\sigma}_t^2(\theta)|}{\sigma_t^2(\theta)}\bigg)\eta_t^{*2}\\
\leq& \sup_{\theta \in \Theta}\frac{1}{n}\sum_{t=1}^n \frac{\sigma_t^2(\hat{\theta}_n)}{\tilde{\sigma}_t^2(\theta)}\bigg(\frac{|\tilde{\sigma}_t(\hat{\theta}_n)-\sigma_t(\hat{\theta}_n)|^2}{\sigma_t^2(\hat{\theta}_n)}+2\frac{|\tilde{\sigma}_t(\hat{\theta}_n)-\sigma_t(\hat{\theta}_n)|}{\sigma_t(\hat{\theta}_n)}\\
&\qquad \qquad \qquad \qquad \qquad +\frac{|\sigma_t(\theta)-\tilde{\sigma}_t(\theta)|^2}{\sigma_t^2(\theta)}+2\frac{|\sigma_t(\theta)-\tilde{\sigma}_t(\theta)|}{\sigma_t(\theta)}\bigg)\eta_t^{*2}
\\
\leq& \frac{1}{n}\sum_{t=1}^n \frac{\sigma_t^2(\hat{\theta}_n)}{\underline{\omega}^2}\bigg(\frac{C_1^2 \rho^{2t}}{\underline{\omega}^2}+2\frac{C_1 \rho^t}{\underline{\omega}}+\frac{C_1^2 \rho^{2t}}{\underline{\omega}^2}+2\frac{C_1 \rho^t}{\underline{\omega}}\bigg)\eta_t^{*2}\\
\leq& \bigg(\frac{2C_1^2}{\underline{\omega}^4} + \frac{4C_1}{\underline{\omega}^{3}}\bigg)\frac{1}{n}\sum_{t=1}^n \rho^t \sigma_t^2 (\hat{\theta}_n)\eta_t^{*2}.
\end{align*}
To verify \eqref{eq:4.A.49} we are left to show that $\frac{1}{n}\sum_{t=1}^n \rho^t \sigma_t^2 (\hat{\theta}_n)\eta_t^{*2}\overset{p^*}{\to}0$ almost surely. For every $\varepsilon>0$, Markov's inequality implies
\begin{align*}
\PP^*\bigg[\frac{1}{n}\sum_{t=1}^n \rho^t \sigma_t^2 (\hat{\theta}_n)\eta_t^{*2} \geq \varepsilon\bigg] \leq \frac{1}{\varepsilon} \frac{1}{n}\sum_{t=1}^n \rho^t \sigma_t^2 (\hat{\theta}_n)\EE^*\big[\eta_t^{*2}\big]
\end{align*}
As $\EE^*\big[\eta_t^{*2}\big]\overset{a.s.}{\to}1$ (Lemma \ref{lem:4.4}), it remains to show that $\frac{1}{n}\sum_{t=1}^n \rho^{t} \sigma_t^{2} (\hat{\theta}_n)\overset{a.s.}{\to}0$. We have
\begin{align*}
\frac{1}{n}\sum_{t=1}^n \rho^{t} \sigma_t^{2} (\hat{\theta}_n)=\frac{1}{n}\sum_{t=1}^n \rho^{t} \sigma_t^{2} (\theta_0)\frac{\sigma_t^{2} (\hat{\theta}_n)}{\sigma_t^{2}(\theta_0)}\leq \bigg(\frac{1}{n}\sum_{t=1}^n \rho^{2 t} \sigma_t^{4} (\theta_0)\bigg)^{\frac{1}{2}} \bigg(\frac{1}{n}\sum_{t=1}^n  \frac{\sigma_t^{4} (\hat{\theta}_n)}{\sigma_t^{4}(\theta_0)}\bigg)^{\frac{1}{2}}
\end{align*}
by the Cauchy-Schwarz inequality. Since $\hat{\theta}_n\overset{a.s.}{\to} \theta_0$ (Theorem \ref{thm:4.1}) such that $\hat{\theta}_n \in  \mathscr{V}(\theta_0)$ almost surely, the uniform ergodic theorem and  Assumption \ref{as:4.9}(i) result in
\begin{align*}
\frac{1}{n}\sum_{t=1}^n \frac{\sigma_t^{4} (\hat{\theta}_n)}{\sigma_t^{4} (\theta_0)} \overset{a.s.}{\leq} \frac{1}{n}\sum_{t=1}^n T_t^{4}  \overset{a.s.}{\to} \EE \big[T_t^{4}\big]<\infty.
\end{align*}
In addition, we have for $\delta>0$
\begin{align*}
\sum_{t=1}^\infty \PP\big[\rho^{2 t} \sigma_t^{4} (\theta_0)>\delta \big]  \leq \sum_{t=1}^\infty \frac{\rho^{st/2}\EE[\sigma_t^s (\theta_0)]}{\delta^{s/(4)}}=   \frac{\EE[\sigma_t^s (\theta_0)]}{\delta^{s/(4)}(1-\rho^{s/2})}<\infty
\end{align*}
such that the Borel-Cantelli Lemma implies $\rho^{2 t} \sigma_t^{4}(\theta_0)\overset{a.s.}{\to} 0$ as $t\to \infty$. Therefore, $\frac{1}{n}\sum_{t=1}^n  \rho^{2 t} \sigma_t^{4} (\theta_0)\overset{a.s.}{\to}0$ follows by Ces\'aro's lemma. Combining results, we establish  $\frac{1}{n}\sum_{t=1}^n \rho^{t} \sigma_t^{2} (\hat{\theta}_n)\overset{a.s.}{\to}0$, which verifies  \eqref{eq:4.A.49} and completes \textit{Step 1}.

Consider \textit{Step 2}; by compactness of $\mathscr{B}$ the Heine-Borel theorem entails that there exists a finite number of neighborhoods of size smaller than $1/k$, i.e.\ $\mathscr{V}_k(\theta_1),\dots,\mathscr{V}_k(\theta_K)$ with $K=K(k) \in \N$, covering $\mathscr{B}$. We have
\begin{align*}
\sup_{\theta \in  \mathscr{B}} L_n^*(\theta)-L_n^*(\hat{\theta}_n)
=& \max_{i=1,\dots,K} \sup_{\theta \in \mathscr{V}_k(\theta_i)\cap \mathscr{B}} L_n^*(\theta)-L_n^*(\hat{\theta}_n).
\end{align*}
Next, we fix $i\in \{1,\dots,K\}$. With regard to \textit{Step 1}, we obtain for each $M>1$
\begin{align*}
& L_n^*(\theta)-L_n^*(\hat{\theta}_n)\\
=& \frac{1}{2n}\sum_{t=1}^n\mathbbm{1}_{\big\{\frac{\sigma_t^2(\hat{\theta}_n)}{\sigma_t^2(\theta)}> M\big\}} \bigg(1-\underbrace{\frac{\sigma_t^2(\hat{\theta}_n)}{\sigma_t^2(\theta)}\eta_t^{*2}}_{\geq 0}+\log \frac{\sigma_t^2(\hat{\theta}_n)}{\sigma_t^2(\theta)}\bigg)\\
&\qquad +\frac{1}{2n}\sum_{t=1}^n\mathbbm{1}_{\big\{\frac{\sigma_t^2(\hat{\theta}_n)}{\sigma_t^2(\theta)}\leq M\big\}} \bigg(1-\frac{\sigma_t^2(\hat{\theta}_n)}{\sigma_t^2(\theta)}\eta_t^{*2}+\log \frac{\sigma_t^2(\hat{\theta}_n)}{\sigma_t^2(\theta)}\bigg)+R_n^*(\theta)\\ 
\leq& \frac{1}{2n}\sum_{t=1}^n\mathbbm{1}_{\big\{\frac{\sigma_t^2(\hat{\theta}_n)}{\sigma_t^2(\theta)}> M\big\}} \bigg(1+\log \frac{\sigma_t^2(\hat{\theta}_n)}{\sigma_t^2(\theta)}\bigg)+\frac{1}{2n}\sum_{t=1}^n\mathbbm{1}_{\big\{\frac{\sigma_t^2(\hat{\theta}_n)}{\sigma_t^2(\theta)}\leq M\big\}} \frac{\sigma_t^2(\hat{\theta}_n)}{\sigma_t^2(\theta)}\big(1-\eta_t^{*2}\big)\\
&\qquad +\frac{1}{2n}\sum_{t=1}^n\mathbbm{1}_{\big\{\frac{\sigma_t^2(\hat{\theta}_n)}{\sigma_t^2(\theta)}\leq M\big\}} \bigg(1-\frac{\sigma_t^2(\hat{\theta}_n)}{\sigma_t^2(\theta)}+\log \frac{\sigma_t^2(\hat{\theta}_n)}{\sigma_t^2(\theta)}\bigg)+R_n^*(\theta)
\end{align*}
such that
\begin{align*}
&\sup_{\theta \in \mathscr{V}_k(\theta_i)\cap \mathscr{B}} L_n^*(\theta)-L_n^*(\hat{\theta}_n)\\
\overset{a.s.}{\leq} &\frac{1}{2}\underbrace{\frac{1}{n}\sum_{t=1}^n\sup_{\substack{||\dot{\theta}-\theta_0||\leq 1/k \\ ||\theta-\theta_i||\leq 1/k}}\mathbbm{1}_{\big\{\frac{\sigma_t^2(\dot{\theta})}{\sigma_t^2(\theta)}> M\big\}} \bigg(1+\log \frac{\sigma_t^2(\dot{\theta})}{\sigma_t^2(\theta)}\bigg)}_{I}\\
&\quad +\frac{1}{2}\underbrace{\frac{1}{n}\sum_{t=1}^n \sup_{\substack{||\dot{\theta}-\theta_0||\leq 1/k \\ ||\theta-\theta_i||\leq 1/k}}\mathbbm{1}_{\big\{\frac{\sigma_t^2(\dot{\theta})}{\sigma_t^2(\theta)}\leq M\big\}} \frac{\sigma_t^2(\dot{\theta})}{\sigma_t^2(\theta)}\big(1-\eta_t^{*2}\big)}_{II}\\
&\qquad +\frac{1}{2}\underbrace{\frac{1}{n}\sum_{t=1}^n\sup_{\substack{||\dot{\theta}-\theta_0||\leq 1/k \\ ||\theta-\theta_i||\leq 1/k}}\mathbbm{1}_{\big\{\frac{\sigma_t^2(\dot{\theta})}{\sigma_t^2(\theta)}\leq M\big\}} \bigg(1-\frac{\sigma_t^2(\dot{\theta})}{\sigma_t^2(\theta)}+\log \frac{\sigma_t^2(\dot{\theta})}{\sigma_t^2(\theta)}\bigg)}_{III}+\underbrace{\sup_{\theta \in \Theta}\big|R_n^*(\theta)\big|}_{IV}.
\end{align*}
Subsequently, we consider each term in turn. 
Regarding $I$, take $k$ sufficiently large such that $\dot{\theta}$ satisfying $||\dot{\theta}-\theta_0||\leq 1/k$ yields $\dot{\theta} \in \mathscr{V}(\theta_0)$. The uniform ergodic theorem, the inequality $\log(x) \leq x$ for all $x>0$ and the Cauchy-Schwarz inequality imply
\begin{align*}
I\overset{a.s.}{\to} & \EE\Bigg[\sup_{\substack{||\dot{\theta}-\theta_0||\leq 1/k \\ ||\theta-\theta_i||\leq 1/k}}\mathbbm{1}_{\big\{\frac{\sigma_t^2(\dot{\theta})}{\sigma_t^2(\theta)}> M\big\}} \bigg(1+\log \frac{\sigma_t^2(\dot{\theta})}{\sigma_t^2(\theta)}\bigg)\Bigg] \leq \EE\bigg[  \mathbbm{1}_{\{\sigma_t^2T_t^2>M \underline{\omega}^2\}}\bigg(1+\log \frac{\sigma_t^2T_t^2}{\underline{\omega}^2}\bigg)\bigg]\\
=& \EE\bigg[  \mathbbm{1}_{\{\sigma_t^2T_t^2>M \underline{\omega}^2\}}\Big(1-2\log \underline{\omega} +\frac{4}{s}\log \sigma_t^{s/2} +  2\log T_t \Big)\bigg]\\
\leq & \EE\bigg[  \mathbbm{1}_{\{\sigma_t^2T_t^2>M \underline{\omega}^2\}}\Big(1-2\log \underline{\omega} +\frac{4}{s} \sigma_t^{s/2} +  2 T_t \Big)\bigg]\\
\leq & \Bigg(\underbrace{\EE\bigg[\Big(1-2\log \underline{\omega} +\frac{4}{s} \sigma_t^{s/2} +  2 T_t \Big)^2\bigg]}_{I_1}\Bigg)^{\frac{1}{2}} \bigg(\underbrace{\PP\Big[ \sigma_t^2T_t^2>M \underline{\omega}^2\Big]}_{I_2}\bigg)^{\frac{1}{2}}
\end{align*}
with $\sigma_t = \sigma_t (\theta_0)$. Employing  \eqref{eq:4.A.12}  we find that 
\begin{align*}
I_1 \leq 4 \bigg(1+\big(2\log \underline{\omega}\big)^2 +\frac{16}{s^2} \EE\big[\sigma_t^{s}\big] +  4 \EE\big[T_t^2\big] \bigg)<\infty
\end{align*}
and using Markov's inequality the second subterm can be bounded by
\begin{align*}
I_2 \leq \PP\Big[ T_t^2>M \underline{\omega}^2/2\Big] + \PP\Big[ \sigma_t^2>M \underline{\omega}^2/2\Big] \leq \frac{2}{M\underline{\omega}^2}\EE\big[T_t^2\big]+ \bigg(\frac{2}{\sqrt{M}\underline{\omega}}\bigg)^s\EE\big[\sigma_t^s\big].
\end{align*}
Since $I_1$ can be made arbitrarily small by the choice of $M$ we get $I=o(1)$ almost surely. Further, for given $M$, Lemma \ref{lem:4.4} entails
\begin{align*}
\Big|\EE^*\big[II\big]\Big|\leq M \Big|1-\EE^*\big[\eta_t^{*2}\big]\Big|\overset{a.s.}{\to}0 \quad \text{and} \quad 
\Var^*\big[II\big]\leq \frac{M^2}{n}\Var^*\big[\eta_t^{*2}\big]\overset{a.s.}{\to}0
\end{align*}
such that $II\overset{p^*}{\to}0$ almost surely. Consider $III$; the uniform ergodic theorem yields
\begin{align*}
III\overset{a.s.}{\to} \EE\Bigg[  \sup_{\substack{||\dot{\theta}-\theta_0||\leq 1/k \\ ||\theta-\theta_i||\leq 1/k}}\mathbbm{1}_{\big\{\frac{\sigma_t^2(\dot{\theta})}{\sigma_t^2(\theta)}\leq M\big\}} \bigg(1-\frac{\sigma_t^2(\dot{\theta})}{\sigma_t^2(\theta)}+\log \frac{\sigma_t^2(\dot{\theta})}{\sigma_t^2(\theta)}\bigg)\Bigg]
\end{align*}
and the right-hand side approaches
\begin{align}
\label{eq:4.A.50}
\EE\bigg[1-\frac{\sigma_t^2(\theta_0)}{\sigma_t^2(\theta_i)}+\log \frac{\sigma_t^2(\theta_0)}{\sigma_t^2(\theta_i)}\bigg]
\end{align}
as $M$ and $k$ grow large.  Thus, almost surely, $III$ can be made arbitrarily close to \eqref{eq:4.A.50} by choosing $M$ and $k$ sufficiently large. Further, since $\theta_i \in \mathscr{B}$, we have $\theta_i \neq \theta_0$ and  Assumption \ref{as:4.3} implies $\frac{\sigma_t^2(\theta_0)}{\sigma_t^2(\theta_i)}\neq 1$ almost surely. The elementary inequality $1-x+\log x \leq 0$ for $x>0$, which  holds with equality if and only if $x=1$, implies that \eqref{eq:4.A.50} is strictly smaller than $0$. We conclude that there exists a $\zeta_i<0$ such that $III<\zeta_i$ holds for sufficiently large $M$ and $k$ and $n$ almost surely. Set $\zeta = \max_{i=1,\dots,K} \zeta_i$, which satisfies $\zeta<0$. Combining results we complete \textit{Step 2}. 

Consider \textit{Step 3}; if $\hat{\theta}_n^* \in \mathscr{B}$, then \eqref{eq:4.4.1}  yields
\begin{align*}
\sup_{\theta \in \mathscr{B}} L_n^*(\theta) = L_n^*(\hat{\theta}_n^*) \geq L_n^*(\hat{\theta}_n).
\end{align*}
and by monotonicity of the probability measure $\PP^*$ we obtain
\begin{align*}
\PP^*\big[\hat{\theta}_n^* \in \mathscr{B}\big]\leq& \PP^*\bigg[ \sup_{\theta \in \mathscr{B}} L_n^*(\theta) - L_n^*(\hat{\theta}_n)\geq 0  \bigg].
\end{align*}
Together with \textit{Step 2} we obtain
\begin{align*}
 \PP^*\big[\hat{\theta}_n^* \in \mathscr{B}\big]\leq \PP^*\big[\zeta/2+S_n^*> 0 \big]+o(1)\leq  \PP^*\big[|S_n^*|> -\zeta/2\big]+o(1)=o(1)
\end{align*}
almost surely, which completes  \textit{Step 3} and establishes the lemma's claim.
\end{proof}

\begin{lemma}
\label{lem:4.6}
If Assumptions \ref{as:4.1}--\ref{as:4.4}, \ref{as:4.5}(\ref{as:4.5.1}), \ref{as:4.5}(\ref{as:4.5.3}), \ref{as:4.6} and \ref{as:4.9} hold with $a=\pm 12$, $b=12$ and $c=6$, then $\frac{1}{n}\sum_{t=1}^n \frac{\partial^2}{\partial \theta \partial \theta'} \ell_t^*(\breve{\theta}_n)\overset{p^*}{\to}-2J$ almost surely for $\breve{\theta}_n$ between $\hat{\theta}_n^*$ and  $\hat{\theta}_n$.
\end{lemma}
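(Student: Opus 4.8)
The plan is to follow the now-standard QMLE Hessian argument (as in \citeauthor{francq2004maximum}, \citeyear{francq2004maximum}, and adapted to the bootstrap by \citeauthor{cavaliere2018fixed}, \citeyear{cavaliere2018fixed}), splitting the proof into three pieces: (i) compute the exact form of $\frac{\partial^2}{\partial\theta\partial\theta'}\ell_t^*(\theta)$ and show the initial-value approximation $\tilde\sigma_t$ versus $\sigma_t$ contributes negligibly; (ii) show that near $\theta_0$ the Hessian, evaluated at a consistent sequence $\breve\theta_n$, is close to the Hessian at $\theta_0$, uniformly in a shrinking neighborhood; and (iii) identify the conditional limit of $\frac{1}{n}\sum_t \frac{\partial^2}{\partial\theta\partial\theta'}\ell_t^*(\theta_0)$ as $-2J$, almost surely.

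First I would differentiate $\ell_t^*(\theta)=-\frac12\big(\epsilon_t^{*}/\tilde\sigma_t(\theta)\big)^2-\log\tilde\sigma_t(\theta)$ twice. Using $\epsilon_t^*=\tilde\sigma_t(\hat\theta_n)\eta_t^*$ and the notation $\tilde D_t(\theta)$, $\tilde H_t(\theta)$ from Appendix A.1, one gets an expression of the schematic form
\begin{align*}
\frac{\partial^2}{\partial\theta\partial\theta'}\ell_t^*(\theta)
= \eta_t^{*2}\frac{\tilde\sigma_t^2(\hat\theta_n)}{\tilde\sigma_t^2(\theta)}\Big(a_1\tilde D_t(\theta)\tilde D_t(\theta)'+a_2\tilde H_t(\theta)\Big)
+ b_1\tilde D_t(\theta)\tilde D_t(\theta)' + b_2\tilde H_t(\theta)
\end{align*}
for fixed constants (after collecting terms one checks $\frac{\partial^2}{\partial\theta\partial\theta'}\ell_t^*(\theta_0)= \eta_t^{*2}(2 D_tD_t'- H_t)+(H_t - D_tD_t')$ up to the $\tilde\sigma$ vs.\ $\sigma$ discrepancy, so its conditional expectation is $(2\EE^*[\eta_t^{*2}]-1)\cdot\frac1n\sum D_tD_t' + \dots$, whose limit is $2J$ with a sign giving $-2J$ for the Hessian of the \emph{negative} log-likelihood convention used here). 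I would then argue, exactly as in the proof of Lemma \ref{lem:4.5}, that replacing $\tilde\sigma_t,\tilde D_t,\tilde H_t$ by $\sigma_t,D_t,H_t$ costs only a term bounded by $C_1\rho^t$ times a stationary variable with finite moment, so that its Ces\`aro average times $\eta_t^{*2}$ tends to $0$ in $\PP^*$-probability, almost surely (Borel--Cantelli plus Markov's conditional inequality plus $\EE^*[\eta_t^{*2}]\to1$ by Lemma \ref{lem:4.4}); this is where the higher moment assumption $a=\pm12$, $b=12$, $c=6$ enters, through repeated Cauchy--Schwarz/Hölder bounds on products such as $T_t^2U_t^2$, $T_t^2 V_t$, etc.

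Next I would handle the evaluation at $\breve\theta_n$ rather than at $\theta_0$. By Proposition \ref{prop:4.1} (or more simply Lemma \ref{lem:4.5}) $\hat\theta_n^*\overset{p^*}{\to}\theta_0$ a.s., and $\hat\theta_n\overset{a.s.}{\to}\theta_0$ by Theorem \ref{thm:4.1}, so $\breve\theta_n\overset{p^*}{\to}\theta_0$ a.s.\ as well; hence for any fixed neighborhood $\mathscr{V}_\varepsilon(\theta_0)$ we have $\breve\theta_n\in\mathscr{V}_\varepsilon(\theta_0)$ with $\PP^*$-probability tending to $1$, a.s. On that event the summand is dominated, uniformly in $\theta\in\mathscr{V}_\varepsilon(\theta_0)$, by $\eta_t^{*2}S_t^{?}\big(\sup\|D_t(\theta)\|^2+\sup\|H_t(\theta)\|\big)$ plus the deterministic analogue, and the uniform ergodic theorem (\citeauthor{francq2011garch}, \citeyear{francq2011garch}, p.\ 181) applied to the $\theta$-suprema and $\theta$-infima of $e_i'\frac{\partial^2}{\partial\theta\partial\theta'}\ell_t(\theta)e_j$-type quantities — together with continuity in $\theta$ and dominated convergence to squeeze the limits as $\varepsilon\searrow0$ — yields $\frac1n\sum_t\big(\frac{\partial^2}{\partial\theta\partial\theta'}\ell_t^*(\breve\theta_n)-\frac{\partial^2}{\partial\theta\partial\theta'}\ell_t^*(\theta_0)\big)\overset{p^*}{\to}0$ a.s. This mirrors verbatim the $\mathscr{V}_\varepsilon(\theta_0)$ argument used for $I$, $III$, $V$ in the proof of Lemma \ref{lem:4.2}.

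Finally I would compute the limit of $\frac1n\sum_t\frac{\partial^2}{\partial\theta\partial\theta'}\ell_t^*(\theta_0)$ directly: condition on $\mathcal F_n$, so the $\eta_t^{*2}$ are i.i.d.\ draws from $\hat{\mathbbm F}_n$ independent of everything in the coefficients; by Lemma \ref{lem:4.4}, $\EE^*[\eta_t^{*2}]\to1$ and $\EE^*[\eta_t^{*4}]\to\kappa<\infty$ a.s., so the conditional mean is $\frac1n\sum_t(2D_tD_t'-H_t)\EE^*[\eta_t^{*2}]+\frac1n\sum_t(H_t-D_tD_t')+o_{a.s.}(1)$, which by the ergodic theorem and $\EE[D_tD_t']=J$, $\EE[H_t]=\EE[H_t]$ collapses to $2J$ (the $\EE[H_t]$ terms cancel exactly because the coefficient of $H_t$ is $-1\cdot\EE^*[\eta_t^{*2}]+1\to0$), i.e.\ the Hessian of $\ell_t^*$ converges to $-2J$ in the paper's sign convention; the conditional variance is $O(1/n)$ times a convergent average of $(D_tD_t')^{\otimes2}$-type terms (finite by $b=12$), so a conditional Chebyshev argument upgrades the mean convergence to convergence in $\PP^*$-probability, a.s. The main obstacle I anticipate is not any single step but the bookkeeping in step (i): writing out the second derivative cleanly and verifying, term by term, that every stationary multiplier appearing alongside the geometrically decaying $\rho^t$ (or alongside the $\breve\theta_n$-suprema) has the finite moment guaranteed by Assumption \ref{as:4.9} with $a=\pm12$, $b=12$, $c=6$ — in particular the cross terms involving $\eta_t^{*2}$, $S_t$ or $T_t$, $U_t^2$ and $V_t$ simultaneously, which is precisely why the moment requirements here are stronger than the $a=b=4$, $c=2$ used for Theorem \ref{thm:4.2}.
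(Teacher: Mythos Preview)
Your overall strategy --- sandwich over $\mathscr{V}_\varepsilon(\theta_0)$ for the random evaluation point, Borel--Cantelli/Ces\`aro for the initial-value discrepancy, conditional mean--variance via Lemma~\ref{lem:4.4} --- is exactly the paper's. Two concrete corrections are needed, however.

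First, the second derivative is miscalculated. Differentiating $\ell_t^*(\theta)=-\tfrac12\epsilon_t^{*2}/\tilde\sigma_t^2(\theta)-\log\tilde\sigma_t(\theta)$ twice gives
\[
\frac{\partial^2\ell_t^*}{\partial\theta\partial\theta'}(\theta)
=\Big(\frac{\epsilon_t^{*2}}{\tilde\sigma_t^2(\theta)}-1\Big)\tilde H_t(\theta)-\Big(3\,\frac{\epsilon_t^{*2}}{\tilde\sigma_t^2(\theta)}-1\Big)\tilde D_t(\theta)\tilde D_t(\theta)',
\]
so the $\tilde D\tilde D'$ coefficient is $-(3r-1)$, not $2r-1$. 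With your formula the conditional mean in step~(iii) collapses to $J$, not $-2J$: your own $H_t$-cancellation argument leaves a $D_tD_t'$ coefficient of $2\cdot1-1=1$, and no sign convention fixes this (the paper's $\ell_t^*$ is already the log-likelihood, not its negative). The paper writes the display above as $I-II$ and shows $I\overset{p^*}{\to}0$ and $II\overset{p^*}{\to}2J$ separately.

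Second, evaluating at $\theta_0$ does \emph{not} make $\epsilon_t^{*2}/\tilde\sigma_t^2(\theta_0)$ equal to $\eta_t^{*2}$; it equals $\eta_t^{*2}\,\tilde\sigma_t^2(\hat\theta_n)/\tilde\sigma_t^2(\theta_0)$, so a $\hat\theta_n$-dependent factor survives in your step~(iii) and must itself be squeezed. The paper sidesteps this detour by running the sandwich directly at $\breve\theta_n$ with \emph{two} parameters ranging over $\mathscr{V}_\varepsilon(\theta_0)$ --- $\theta_1=\hat\theta_n$ in the numerator of the ratio and $\theta_2=\breve\theta_n$ in the denominator and in $H_t(\cdot),D_t(\cdot)$ --- and letting $\varepsilon\searrow0$. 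This two-parameter envelope is also where the $a=\pm12$, $b=12$, $c=6$ moments are actually spent: the conditional variance of the upper/lower bounds involves $\frac1n\sum_t S_t^4T_t^4V_t^2$ and $\frac1n\sum_t S_t^4T_t^4U_t^4$, and a three-factor H\"older on these forces the twelfth and sixth moments.
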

\begin{proof}
We have
\begin{align*}
\frac{1}{n}\sum_{t=1}^n \frac{\partial^2}{\partial \theta \partial \theta'} \ell_t^*(\breve{\theta}_n)= \underbrace{\frac{1}{n}\sum_{t=1}^n \bigg(\frac{\epsilon_t^{*2}}{\tilde{\sigma}_t^2(\breve{\theta}_n)}-1\bigg)\tilde{H}_t(\breve{\theta}_n)}_{I} - \underbrace{\frac{1}{n}\sum_{t=1}^n\bigg(3\frac{\epsilon_t^{*2}}{\tilde{\sigma}_t^2(\breve{\theta}_n)}-1\bigg)\tilde{D}_t(\breve{\theta}_n)\tilde{D}_t'(\breve{\theta}_n)}_{II}.
\end{align*}
Employing $\epsilon_t^*=\tilde{\sigma}_t(\hat{\theta}_n)\eta_t^*$ the first term can be expanded as follows:
\begin{align*}
I = \underbrace{\frac{1}{n}\sum_{t=1}^n \frac{\sigma_t^2(\hat{\theta}_n)}{\sigma_t^2(\breve{\theta}_n)}H_t(\breve{\theta}_n)\eta_t^{*2}}_{I_1}+\underbrace{\frac{1}{n}\sum_{t=1}^n \bigg(\frac{\tilde{\sigma}_t^2(\hat{\theta}_n)}{\tilde{\sigma}_t^2(\breve{\theta}_n)}\tilde{H}_t(\breve{\theta}_n)-\frac{\sigma_t^2(\hat{\theta}_n)}{\sigma_t^2(\breve{\theta}_n)}H_t(\breve{\theta}_n)\bigg)\eta_t^{*2}}_{I_2}-\underbrace{\frac{1}{n}\sum_{t=1}^n \tilde{H}_t(\breve{\theta}_n)}_{I_3}.
\end{align*}
Consider $I_1$; we take $\varepsilon>0$ and denote the unit vectors spanning $\R^r$ by $e_1,\dots,e_r$. 
\textcolor{black}{
Since $\frac{\sigma_t^2(\theta_1)}{\sigma_t^2(\theta_2)}H_t(\theta_2)$ is continuous in $\theta_1$ and $\theta_2$ we can take $\mathscr{V}_{\varepsilon}(\theta_0)\subseteq \mathscr{V}(\theta_0)$ such that 
}
\begin{align*}
\EE\big[e_i'H_t e_j\big]-\varepsilon <&\EE\bigg[\inf_{\theta_1,\theta_2 \in \mathscr{V}_\varepsilon(\theta_0)}\frac{\sigma_t^2(\theta_1)}{\sigma_t^2(\theta_2)}e_i'H_t(\theta_2)e_j\bigg]\\
\leq& \EE\bigg[\sup_{\theta_1,\theta_2 \in \mathscr{V}_\varepsilon(\theta_0)}\frac{\sigma_t^2(\theta_1)}{\sigma_t^2(\theta_2)}e_i'H_t(\theta_2)e_j\bigg]<\EE\big[e_i'H_t e_j\big]+\varepsilon
\end{align*}
for all $i,j = 1,\dots,r$. Since $\breve{\theta}_n$ lies between $\hat{\theta}_n^*$ and $\hat{\theta}_n$, Theorem \ref{thm:4.1} and Lemma \ref{lem:4.5} imply $\breve{\theta}_n\overset{p^*}{\to}\theta_0$ almost surely.
Since $\hat{\theta}_n \overset{a.s.}{\to}\theta_0$ and $\breve{\theta}_n\overset{p^*}{\to}\theta_0$ almost surely, we have $\hat{\theta}_n \in \mathscr{V}_\varepsilon(\theta_0)$ almost surely and $\breve{\theta}_n \in \mathscr{V}_\varepsilon(\theta_0)$ with conditional probability close to one almost surely. In such case, we have for all pairs $(i,j)$ 
\begin{align*}
 L_n^*(i,j)\leq& \frac{1}{n}\sum_{t=1}^n \frac{\sigma_t^2(\hat{\theta}_n)}{\sigma_t^2(\breve{\theta}_n)}e_i'H_t(\breve{\theta}_n)e_j \eta_t^{*2} \leq U_n^*(i,j)
\end{align*}
with
\begin{align*}
 L_n^*(i,j) =& \frac{1}{n}\sum_{t=1}^n \inf_{\theta_1,\theta_2 \in \mathscr{V}_\varepsilon(\theta_0)}\frac{\sigma_t^2(\theta_1)}{\sigma_t^2(\theta_2)}e_i'H_t(\theta_2)e_j \eta_t^{*2}\\
 U_n^*(i,j) =& \frac{1}{n}\sum_{t=1}^n \sup_{\theta_1,\theta_2 \in \mathscr{V}_\varepsilon(\theta_0)}\frac{\sigma_t^2(\theta_1)}{\sigma_t^2(\theta_2)}e_i'H_t(\theta_2)e_j \eta_t^{*2}.
\end{align*}
Using the uniform ergodic theorem, the conditional mean of the upper bound satisfies
\begin{align*}
\EE^*\big[U_n^*(i,j)\big]=&\EE^*\big[\eta_t^{*2}\big]\frac{1}{n}\sum_{t=1}^n \sup_{\theta_1,\theta_2 \in \mathscr{V}_\varepsilon(\theta_0)}\frac{\sigma_t^2(\theta_1)}{\sigma_t^2(\theta_2)}e_i'H_t(\theta_2)e_j\\
\overset{a.s.}{\to}& \EE\bigg[\sup_{\theta_1,\theta_2 \in \mathscr{V}_\varepsilon(\theta_0)}\frac{\sigma_t^2(\theta_1)}{\sigma_t^2(\theta_2)}e_i'H_t(\theta_2)e_j\bigg]<\EE\big[e_i'H_t e_j\big]+\varepsilon.
\end{align*}
whereas its conditional variance vanishes:
\begin{align*}
\Var^*\big[U_n^*(i,j)\big]=&\Var^*\big[\eta_t^{*2}\big]\frac{1}{n^2}\sum_{t=1}^n \Big(\sup_{\theta_1,\theta_2 \in \mathscr{V}_\varepsilon(\theta_0)}\frac{\sigma_t^2(\theta_1)}{\sigma_t^2(\theta_2)}e_i'H_t(\theta_2)e_j\Big)^2\leq \Var^*\big[\eta_t^{*2}\big]\frac{1}{n^2}\sum_{t=1}^n S_t^4T_t^4 V_t^2\\
\leq& \Var^*\big[\eta_t^{*2}\big]\frac{1}{n}\bigg(\underbrace{\frac{1}{n}\sum_{t=1}^n S_t^{12}}_{\overset{a.s.}{\to}\EE[S_t^{12}]<\infty}\bigg)^{\frac{1}{3}}\bigg(\underbrace{\frac{1}{n}\sum_{t=1}^n T_t^{12}}_{\overset{a.s.}{\to}\EE[T_t^{12}]<\infty}\bigg)^{\frac{1}{3}} \bigg(\underbrace{\frac{1}{n}\sum_{t=1}^n V_t^{6}}_{\overset{a.s.}{\to}\EE[V_t^{6}]<\infty}\bigg)^{\frac{1}{3}}\overset{a.s.}{\to}0.
\end{align*}
Similarly, we obtain for the lower bound 
\begin{align*}
\EE^*\big[L_n^*(i,j)\big]\overset{a.s.}{\to} \EE\bigg[\inf_{\theta_1,\theta_2 \in \mathscr{V}_\varepsilon(\theta_0)}\frac{\sigma_t^2(\theta_1)}{\sigma_t^2(\theta_2)}e_i'H_t(\theta_2)e_j\bigg]>\EE\big[e_i'H_t e_j\big]-\varepsilon
\end{align*}
and $\Var^*\big[L_n^*(i,j)\big]\overset{a.s.}{\to}0$. Taking $\varepsilon \searrow 0$ subsequently, we get $\frac{1}{n}\sum_{t=1}^n \frac{\sigma_t^2(\hat{\theta}_n)}{\sigma_t^2(\breve{\theta}_n)}e_i'H_t(\breve{\theta}_n)e_j'\eta_t^{*2}\overset{p^*}{\to}\EE\big[e_i'H_t e_j\big]$ almost surely for all pairs ($i,j$), which in turn yields $I_1 \overset{p^*}{\to}\EE[H_t]$ almost surely. Regarding $I_2$, we combine \eqref{eq:4.A.51b} and the elementary inequalities \eqref{eq:4.A.5} with $m=1$, which yields
\begin{align}
\label{eq:4.A.52}
\begin{split}
&\bigg|\frac{\tilde{\sigma}_t^2(\theta_1)}{\tilde{\sigma}_t^2(\theta_2)}-\frac{\sigma_t^2(\theta_1)}{\sigma_t^2(\theta_2)}\bigg|\leq \bigg|\frac{\tilde{\sigma}_t(\theta_1)}{\tilde{\sigma}_t(\theta_2)}-\frac{\sigma_t(\theta_1)}{\sigma_t(\theta_2)}\bigg|^2 + 2 \bigg|\frac{\tilde{\sigma}_t(\theta_1)}{\tilde{\sigma}_t(\theta_2)}-\frac{\sigma_t(\theta_1)}{\sigma_t(\theta_2)}\bigg| \frac{\sigma_t(\theta_1)}{\sigma_t(\theta_2)}\\
\leq& \frac{C_1^2\rho^{2t}}{\underline{\omega}^2}\bigg(1+\frac{\sigma_t(\theta_1)}{\sigma_t(\theta_2)}\bigg)^2 +  \frac{2 C_1\rho^t}{\underline{\omega}}\bigg(1+\frac{\sigma_t(\theta_1)}{\sigma_t(\theta_2)}\bigg) \frac{\sigma_t(\theta_1)}{\sigma_t(\theta_2)}\\
\leq& \bigg(\frac{C_1^2}{\underline{\omega}^2}+ \frac{2 C_1}{\underline{\omega}}\bigg)\rho^t\bigg(1+\frac{\sigma_t(\theta_1)}{\sigma_t(\theta_2)}\bigg)^2\leq \bigg(\frac{2C_1^2}{\underline{\omega}^2}+ \frac{4 C_1}{\underline{\omega}}\bigg)\rho^t\bigg(1+\frac{\sigma_t^2(\theta_1)}{\sigma_t^2(\theta_2)}\bigg)
\end{split}
\end{align}
for any $\theta_1,\theta_2 \in \Theta$. It follows that
\begin{align*}
||I_2||\leq& \frac{1}{n}\sum_{t=1}^n \bigg|\bigg|\frac{\tilde{\sigma}_t^2(\hat{\theta}_n)}{\tilde{\sigma}_t^2(\breve{\theta}_n)}\tilde{H}_t(\breve{\theta}_n)-\frac{\sigma_t^2(\hat{\theta}_n)}{\sigma_t^2(\breve{\theta}_n)}H_t(\breve{\theta}_n)\bigg|\bigg|\eta_t^{*2}\\
=& \frac{1}{n}\sum_{t=1}^n \bigg|\bigg|\frac{\tilde{\sigma}_t^2(\hat{\theta}_n)}{\tilde{\sigma}_t^2(\breve{\theta}_n)}\Big(\tilde{H}_t(\breve{\theta}_n)-H_t(\breve{\theta}_n)\Big)+\bigg(\frac{\tilde{\sigma}_t^2(\hat{\theta}_n)}{\tilde{\sigma}_t^2(\breve{\theta}_n)}-\frac{\sigma_t^2(\hat{\theta}_n)}{\sigma_t^2(\breve{\theta}_n)}\bigg)H_t(\breve{\theta}_n)\bigg|\bigg|\eta_t^{*2}\\
\leq& \frac{1}{n}\sum_{t=1}^n\bigg\{\frac{\tilde{\sigma}_t^2(\hat{\theta}_n)}{\tilde{\sigma}_t^2(\breve{\theta}_n)}\Big|\Big|\tilde{H}_t(\breve{\theta}_n)-H_t(\breve{\theta}_n)\Big|\Big|+\bigg|\frac{\tilde{\sigma}_t^2(\hat{\theta}_n)}{\tilde{\sigma}_t^2(\breve{\theta}_n)}-\frac{\sigma_t^2(\hat{\theta}_n)}{\sigma_t^2(\breve{\theta}_n)}\bigg| \:\big|\big|H_t(\breve{\theta}_n)\big|\big|\bigg\}\eta_t^{*2}\\
\leq& \frac{1}{n}\sum_{t=1}^n\Bigg\{\Bigg(\frac{\sigma_t^2(\hat{\theta}_n)}{\sigma_t^2(\breve{\theta}_n)}+\bigg(\frac{2C_1^2}{\underline{\omega}^2}+ \frac{4 C_1}{\underline{\omega}}\bigg)\rho^t\bigg(1+\frac{\sigma_t^2(\hat{\theta}_n)}{\sigma_t^2(\breve{\theta}_n)}\bigg)\Bigg)\:\frac{C_1\rho^t}{\underline{\omega}}\Big(1+\big|\big|H_t(\breve{\theta}_n)\big|\big|\Big) \\
&\qquad +\bigg(\frac{2C_1^2}{\underline{\omega}^2}+ \frac{4 C_1}{\underline{\omega}}\bigg)\rho^t\bigg(1+\frac{\sigma_t^2(\hat{\theta}_n)}{\sigma_t^2(\breve{\theta}_n)}\bigg) \:\big|\big|H_t(\breve{\theta}_n)\big|\big|\Bigg\}\eta_t^{*2}\\
\leq & \bigg(\frac{5 C_1}{\underline{\omega}}+\frac{6 C_1^2}{\underline{\omega}^2}+ \frac{2 C_1^3}{\underline{\omega}^3}\bigg)\frac{1}{n}\sum_{t=1}^n\rho^t\bigg(1+\frac{\sigma_t^2(\hat{\theta}_n)}{\sigma_t^2(\breve{\theta}_n)}\bigg) \Big(1+\big|\big|H_t(\breve{\theta}_n)\big|\big|\Big)\eta_t^{*2},
\end{align*}
where the third inequality comes from \eqref{eq:4.A.8} and \eqref{eq:4.A.52}. In the case of $\hat{\theta}_n \in \mathscr{V}(\theta_0)$ and $\breve{\theta}_n \in \mathscr{V}(\theta_0)$, we get
\begin{align*}
\frac{1}{n}\sum_{t=1}^n\rho^t\bigg(1+\frac{\sigma_t^2(\hat{\theta}_n)}{\sigma_t^2(\breve{\theta}_n)}\bigg) \:\Big(1+||H_t(\breve{\theta}_n)||\Big)\eta_t^{*2} \leq \frac{1}{n}\sum_{t=1}^n\rho^t\big(1+S_t^2 T_t^2\big) \:(1+V_t)\eta_t^{*2}.
\end{align*}
For any $\delta>0$ we find
\begin{align*}
\PP^*\bigg[\frac{1}{n}\sum_{t=1}^n\rho^t\big(1+S_t^2 T_t^2\big)(1+V_t)\eta_t^{*2}\geq \delta\bigg]  =& \frac{\EE^*[\eta_t^{*2}]}{\delta}\frac{1}{n}\sum_{t=1}^n\rho^t\big(1+S_t^2 T_t^2\big) (1+V_t).
\end{align*}
using Markov's inequality. Moreover, for $\varepsilon>0$ we have
\begin{align*}
\sum_{t=1}^\infty \PP\Big[\rho^{t}\big(1+S_t^2 T_t^2\big) (1+V_t)>\varepsilon\Big]\leq& \sum_{t=1}^\infty \rho^{t}\frac{\EE\big[(1+S_t^2 T_t^2) (1+V_t)\big]}{\varepsilon}\\
=& \frac{\EE\big[(1+S_t^2 T_t^2) (1+V_t)\big]}{\varepsilon(1-\rho)}<\infty
\end{align*}
such that the Borel-Cantelli Lemma implies $\rho^{t}\big(1+S_t^2 T_t^2\big) (1+V_t)\overset{a.s.}{\to}0$ as $t\to \infty$. Therefore, $\frac{1}{n}\sum_{t=1}^n\rho^{t}\big(1+S_t^2 T_t^2\big) (1+V_t)\overset{a.s.}{\to}0$ follows by C\'esaro's lemma and we get $\frac{1}{n}\sum_{t=1}^n\rho^t\big(1+S_t^2 T_t^2\big) \:(1+V_t)\eta_t^{*2}\overset{p^*}{\to}0$ almost surely. Combining results gives $||I_2||\overset{p^*}{\to}0$ almost surely. Similar to the proof of Lemma \ref{lem:4.2}(iii), we establish  $I_3\overset{p^*}{\to}\EE[H_t]$ almost surely using $\breve{\theta}_n\overset{p^*}{\to}\theta_0$ almost surely. Combining results we establish that $I=I_1+I_2-I_3\overset{p^*}{\to}0$ almost surely. Consider the second term and expand 
\begin{align*}
II =& 3\underbrace{\frac{1}{n}\sum_{t=1}^n \frac{\sigma_t^2(\hat{\theta}_n)}{\sigma_t^2(\breve{\theta}_n)}D_t(\breve{\theta}_n)D_t'(\breve{\theta}_n)\eta_t^{*2}}_{II_1}+3\underbrace{\frac{1}{n}\sum_{t=1}^n \bigg(\frac{\tilde{\sigma}_t^2(\hat{\theta}_n)}{\tilde{\sigma}_t^2(\breve{\theta}_n)}D_t(\breve{\theta}_n)D_t'(\breve{\theta}_n)-\frac{\sigma_t^2(\hat{\theta}_n)}{\sigma_t^2(\breve{\theta}_n)}D_t(\breve{\theta}_n)D_t'(\breve{\theta}_n)\bigg)\eta_t^{*2}}_{II_2}\\
&\qquad -\underbrace{\frac{1}{n}\sum_{t=1}^n D_t(\breve{\theta}_n)D_t'(\breve{\theta}_n)}_{II_3}.
\end{align*}
We treat the subterms of $II$ analogously to the subterms of $I$. We begin with $II_1$ and take $\varepsilon>0$. 
\textcolor{black}{
Since $\frac{\sigma_t^2(\theta_1)}{\sigma_t^2(\theta_2)}D_t(\theta_2)D_t'(\theta_2)$ is continuous in $\theta_1$ and $\theta_2$ we can take $\mathscr{V}_{\varepsilon}(\theta_0)\subseteq \mathscr{V}(\theta_0)$ such that 
}
\begin{align*}
\EE\big[e_i'D_t D_t' e_j\big]-\varepsilon <&\EE\bigg[\inf_{\theta_1,\theta_2 \in \mathscr{V}_\varepsilon(\theta_0)}\frac{\sigma_t^2(\theta_1)}{\sigma_t^2(\theta_2)}e_i'D_t(\theta_2) D_t'(\theta_2)e_j\bigg]\\
\leq& \EE\bigg[\sup_{\theta_1,\theta_2 \in \mathscr{V}_\varepsilon(\theta_0)}\frac{\sigma_t^2(\theta_1)}{\sigma_t^2(\theta_2)}e_i'D_t(\theta_2) D_t'(\theta_2)e_j\bigg]<\EE\big[e_i'D_t D_t'e_j\big]+\varepsilon
\end{align*}
for all $i,j = 1,\dots,r$. Since $\hat{\theta}_n \overset{a.s.}{\to}\theta_0$ and $\breve{\theta}_n\overset{p^*}{\to}\theta_0$ almost surely, we have $\hat{\theta}_n \in \mathscr{V}_\varepsilon(\theta_0)$ almost surely and $\breve{\theta}_n \in \mathscr{V}_\varepsilon(\theta_0)$ with conditional probability close to one almost surely. In such case, we have for all pairs $(i,j)$ 
\begin{align*}
 \bar{L}_n^*(i,j)\leq& \frac{1}{n}\sum_{t=1}^n \frac{\sigma_t^2(\hat{\theta}_n)}{\sigma_t^2(\breve{\theta}_n)}e_i'D_t(\breve{\theta}_n) D_t'(\breve{\theta}_n) e_j'\eta_t^{*2} \leq \bar{U}_n^*(i,j)
\end{align*}
with
\begin{align*}
 \bar{L}_n^*(i,j) =& \frac{1}{n}\sum_{t=1}^n \inf_{\theta_1,\theta_2 \in \mathscr{V}_\varepsilon(\theta_0)}\frac{\sigma_t^2(\theta_1)}{\sigma_t^2(\theta_2)}e_i'D_t(\theta_2) D_t'(\theta_2) e_j \eta_t^{*2}\\
 \bar{U}_n^*(i,j) =& \frac{1}{n}\sum_{t=1}^n \sup_{\theta_1,\theta_2 \in \mathscr{V}_\varepsilon(\theta_0)}\frac{\sigma_t^2(\theta_1)}{\sigma_t^2(\theta_2)}e_i'D_t(\theta_2) D_t'(\theta_2) e_j \eta_t^{*2}.
\end{align*}
Using the uniform ergodic theorem, the conditional mean of the upper bound satisfies
\begin{align*}
\EE^*\big[\bar{U}_n^*(i,j)\big]=&\EE^*\big[\eta_t^{*2}\big]\frac{1}{n}\sum_{t=1}^n \sup_{\theta_1,\theta_2 \in \mathscr{V}_\varepsilon(\theta_0)}\frac{\sigma_t^2(\theta_1)}{\sigma_t^2(\theta_2)}e_i'D_t(\theta_2) D_t'(\theta_2) e_j\\
\overset{a.s.}{\to}& \EE\bigg[\sup_{\theta_1,\theta_2 \in \mathscr{V}_\varepsilon(\theta_0)}\frac{\sigma_t^2(\theta_1)}{\sigma_t^2(\theta_2)}e_i'D_t(\theta_2) D_t'(\theta_2) e_j\bigg]<\EE\big[e_i'D_t D_t' e_j\big]+\varepsilon
\end{align*}
whereas its conditional variance vanishes:
\begin{align*}
\Var^*\big[\bar{U}_n^*(i,j)\big]=&\Var^*\big[\eta_t^{*2}\big]\frac{1}{n^2}\sum_{t=1}^n \Big(\sup_{\theta_1,\theta_2 \in \mathscr{V}_\varepsilon(\theta_0)}\frac{\sigma_t^2(\theta_1)}{\sigma_t^2(\theta_2)}e_i'D_t(\theta_2) D_t'(\theta_2)e_j\Big)^2\\
\leq& \Var^*\big[\eta_t^{*2}\big]\frac{1}{n^2}\sum_{t=1}^n S_t^4T_t^4 U_t^4\\
\leq& \Var^*\big[\eta_t^{*2}\big]\frac{1}{n}\bigg(\underbrace{\frac{1}{n}\sum_{t=1}^n S_t^{12}}_{\overset{a.s.}{\to}\EE[S_t^{12}]<\infty}\bigg)^{\frac{1}{3}}\bigg(\underbrace{\frac{1}{n}\sum_{t=1}^n T_t^{12}}_{\overset{a.s.}{\to}\EE[T_t^{12}]<\infty}\bigg)^{\frac{1}{3}} \bigg(\underbrace{\frac{1}{n}\sum_{t=1}^n U_t^{12}}_{\overset{a.s.}{\to}\EE[U_t^{12}]<\infty}\bigg)^{\frac{1}{3}}\overset{a.s.}{\to}0.
\end{align*}
Similarly, we obtain for the lower bound  
\begin{align*}
\EE^*\big[\bar{L}_n^*(i,j)\big]\overset{a.s.}{\to} \EE\bigg[\inf_{\theta_1,\theta_2 \in \mathscr{V}_\varepsilon(\theta_0)}\frac{\sigma_t^2(\theta_1)}{\sigma_t^2(\theta_2)}e_i'D_t(\theta_2) D_t'(\theta_2)e_j\bigg]>\EE\big[e_i'D_t D_t' e_j\big]-\varepsilon
\end{align*}
and $\Var^*\big[\bar{L}_n^*(i,j)\big]\overset{a.s.}{\to}0$.  Next, we take $\varepsilon \searrow 0$ and get $\frac{1}{n}\sum_{t=1}^n \frac{\sigma_t^2(\hat{\theta}_n)}{\sigma_t^2(\breve{\theta}_n)}e_i'D_t(\breve{\theta}_n) D_t'(\breve{\theta}_n)e_j'\eta_t^{*2}\overset{p^*}{\to}\EE\big[e_i'D_t D_t' e_j\big]$ almost surely for all pairs ($i,j$), which in turn yields $II_1 \overset{p^*}{\to}\EE[D_tD_t']=J$ almost surely. Regarding $II_2$, we find
\begin{align*}
&||II_2||\leq \frac{1}{n}\sum_{t=1}^n \bigg|\bigg|\frac{\tilde{\sigma}_t^2(\hat{\theta}_n)}{\tilde{\sigma}_t^2(\breve{\theta}_n)}\tilde{D}_t(\breve{\theta}_n)\tilde{D}_t'(\breve{\theta}_n)-\frac{\sigma_t^2(\hat{\theta}_n)}{\sigma_t^2(\breve{\theta}_n)}D_t(\breve{\theta}_n)D_t'(\breve{\theta}_n)\bigg|\bigg|\eta_t^{*2}\\
=& \frac{1}{n}\sum_{t=1}^n \bigg|\bigg|\frac{\tilde{\sigma}_t^2(\hat{\theta}_n)}{\tilde{\sigma}_t^2(\breve{\theta}_n)}\Big(\tilde{D}_t(\breve{\theta}_n)\tilde{D}_t'(\breve{\theta}_n)-D_t(\breve{\theta}_n)D_t'(\breve{\theta}_n)\Big)+\bigg(\frac{\tilde{\sigma}_t^2(\hat{\theta}_n)}{\tilde{\sigma}_t^2(\breve{\theta}_n)}-\frac{\sigma_t^2(\hat{\theta}_n)}{\sigma_t^2(\breve{\theta}_n)}\bigg)D_t(\breve{\theta}_n)D_t'(\breve{\theta}_n)\bigg|\bigg|\eta_t^{*2}\\
\leq& \frac{1}{n}\sum_{t=1}^n\bigg\{\frac{\tilde{\sigma}_t^2(\hat{\theta}_n)}{\tilde{\sigma}_t^2(\breve{\theta}_n)}\Big|\Big|\tilde{D}_t(\breve{\theta}_n)\tilde{D}_t'(\breve{\theta}_n)-D_t(\breve{\theta}_n) D_t'(\breve{\theta}_n)\Big|\Big|+\bigg|\frac{\tilde{\sigma}_t^2(\hat{\theta}_n)}{\tilde{\sigma}_t^2(\breve{\theta}_n)}-\frac{\sigma_t^2(\hat{\theta}_n)}{\sigma_t^2(\breve{\theta}_n)}\bigg| \:\big|\big|D_t(\breve{\theta}_n)\big|\big|^2\bigg\}\eta_t^{*2}\\
\leq& \frac{1}{n}\sum_{t=1}^n\Bigg\{\Bigg(\frac{\sigma_t^2(\hat{\theta}_n)}{\sigma_t^2(\breve{\theta}_n)}+\bigg(\frac{2C_1^2}{\underline{\omega}^2}+ \frac{4 C_1}{\underline{\omega}}\bigg)\rho^t\bigg(1+\frac{\sigma_t^2(\hat{\theta}_n)}{\sigma_t^2(\breve{\theta}_n)}\bigg)\Bigg)\:\bigg(\frac{C_1^2}{\underline{\omega}^2}+\frac{2C_1}{\underline{\omega}}\bigg) \rho^{t}\Big(1+\big|\big|D_t(\breve{\theta}_n)\big|\big|^2\Big) \\
&\qquad +\bigg(\frac{2C_1^2}{\underline{\omega}^2}+ \frac{4 C_1}{\underline{\omega}}\bigg)\rho^t\bigg(1+\frac{\sigma_t^2(\hat{\theta}_n)}{\sigma_t^2(\breve{\theta}_n)}\bigg) \:\big|\big|D_t(\breve{\theta}_n)\big|\big|^2\Bigg\}\eta_t^{*2}\\
\leq & \bigg(\frac{6 C_1}{\underline{\omega}}+\frac{11 C_1^2}{\underline{\omega}^2}+ \frac{8 C_1^3}{\underline{\omega}^3}+\frac{2 C_1^4}{\underline{\omega}^4}\bigg)\frac{1}{n}\sum_{t=1}^n\rho^t\bigg(1+\frac{\sigma_t^2(\hat{\theta}_n)}{\sigma_t^2(\breve{\theta}_n)}\bigg) \Big(1+\big|\big|D_t(\breve{\theta}_n)\big|\big|^2\Big)\eta_t^{*2},
\end{align*}
where the third inequality follows from \eqref{eq:4.A.6} and \eqref{eq:4.A.52}. In the case of $\hat{\theta}_n \in \mathscr{V}(\theta_0)$ and $\breve{\theta}_n \in \mathscr{V}(\theta_0)$, we get
\begin{align*}
\frac{1}{n}\sum_{t=1}^n\rho^t\bigg(1+\frac{\sigma_t^2(\hat{\theta}_n)}{\sigma_t^2(\breve{\theta}_n)}\bigg) \:\Big(1+||D_t(\breve{\theta}_n)||^2\Big)\eta_t^{*2} \leq \frac{1}{n}\sum_{t=1}^n\rho^t\big(1+S_t^2 T_t^2\big) \big(1+U_t^2\big)\eta_t^{*2}.
\end{align*}
For any $\delta>0$ we find
\begin{align*}
\PP^*\bigg[\frac{1}{n}\sum_{t=1}^n\rho^t\big(1+S_t^2 T_t^2\big)\big(1+U_t^2\big)\eta_t^{*2}\geq \delta\bigg]  =& \frac{\EE^*[\eta_t^{*2}]}{\delta}\frac{1}{n}\sum_{t=1}^n\rho^t\big(1+S_t^2 T_t^2\big) \big(1+U_t^2\big).
\end{align*}
using Markov's inequality. Moreover, for $\varepsilon>0$ we have
\begin{align*}
\sum_{t=1}^\infty \PP\Big[\rho^{t}\big(1+S_t^2 T_t^2\big) \big(1+U_t^2\big)>\varepsilon\Big]\leq& \sum_{t=1}^\infty \rho^{t}\frac{\EE\big[(1+S_t^2 T_t^2) (1+U_t^2)\big]}{\varepsilon}\\
=& \frac{\EE\big[(1+S_t^2 T_t^2) (1+U_t^2)\big]}{\varepsilon(1-\rho)}<\infty
\end{align*}
such that the Borel-Cantelli Lemma implies $\rho^{t}\big(1+S_t^2 T_t^2\big) \big(1+U_t^2\big)\overset{a.s.}{\to}0$ as $t\to \infty$. Therefore, $\frac{1}{n^2}\sum_{t=1}^n\rho^{t}\big(1+S_t^2 T_t^2\big) \big(1+U_t^2\big)\overset{a.s.}{\to}0$ follows by C\'esaro's lemma and we get $\frac{1}{n}\sum_{t=1}^n\rho^t\big(1+S_t^2 T_t^2\big) \:\big(1+U_t^2\big)\eta_t^{*2}\overset{p^*}{\to}0$ almost surely. Combining results gives $||II_2||\overset{p^*}{\to}0$ almost surely. Similar to the proof of Lemma \ref{lem:4.2}(ii), we establish  $II_3\overset{p^*}{\to}\EE\big[D_tD_t'\big]=J$ almost surely using $\breve{\theta}_n\overset{p^*}{\to}\theta_0$ almost surely. Combining results we find $II=3II_1+3II_2-II_3\overset{p^*}{\to}3J+0-J=2J$ almost surely. In conclusion, we have
\begin{align*}
\frac{1}{n}\sum_{t=1}^n \frac{\partial^2}{\partial \theta \partial \theta'} \ell_t^*(\breve{\theta}_n)
=& I-II\overset{p^*}{\to}-2J
\end{align*}
almost surely, which completes the proof.
\end{proof}

\begin{lemma}
\label{lem:4.7}
Suppose Assumptions \ref{as:4.1}--\ref{as:4.4}, \ref{as:4.5}(\ref{as:4.5.1}), \ref{as:4.5}(\ref{as:4.5.3}), \ref{as:4.6}, \ref{as:4.9} and \ref{as:4.10} hold with $a=-1,4$, $b=4$ and $c=2$. Then, we have
\begin{align*}
  \frac{1}{\sqrt{n}}\sum_{t=1}^n    \begin{pmatrix}
      \hat{D}_t \big(\eta_t^{*2}-1\big)\\
  \mathbbm{1}_{\{\eta_t^*< \hat{\xi}_{n,\alpha}\}}-\alpha
      \end{pmatrix}
\overset{d^*}{\to}N (0,\Upsilon_\alpha) \quad \text{with}\quad \Upsilon_\alpha =\begin{pmatrix}
      (\kappa-1)J& p_\alpha \Omega\\
    p_\alpha \Omega' & \alpha(1-\alpha)
      \end{pmatrix}
\end{align*}
almost surely.
\end{lemma}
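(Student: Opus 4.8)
The plan is to establish a conditional central limit theorem (CLT) for the bivariate triangular array by verifying a Lindeberg-type condition and computing the limiting conditional covariance matrix. First I would observe that, conditional on $\mathcal{F}_n$, the summands $W_t^* = \big(\hat{D}_t(\eta_t^{*2}-1),\,\mathbbm{1}_{\{\eta_t^*<\hat{\xi}_{n,\alpha}\}}-\alpha\big)'$ form, for $t=1,\dots,n$, an independent (though not identically distributed, since $\hat{D}_t$ varies with $t$) array; the $\eta_t^*$ are i.i.d.\ draws from $\hat{\mathbbm{F}}_n$. By the Cram\'er--Wold device it suffices to prove $\frac{1}{\sqrt{n}}\sum_{t=1}^n \lambda' W_t^* \overset{d^*}{\to} N(0,\lambda'\Upsilon_\alpha\lambda)$ almost surely for every fixed $\lambda \in \R^{r+1}$. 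For this I would invoke a Lindeberg--Feller CLT for triangular arrays of independent summands (conditional on $\mathcal{F}_n$), checking (i) the conditional variance converges to $\lambda'\Upsilon_\alpha\lambda$ a.s.\ and (ii) the conditional Lindeberg condition holds a.s.

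The conditional mean is zero up to a vanishing term: $\EE^*[\mathbbm{1}_{\{\eta_t^*<\hat{\xi}_{n,\alpha}\}}] = \hat{\mathbbm{F}}_n(\hat{\xi}_{n,\alpha}-)$, which differs from $\alpha$ by at most $1/n$, so $\frac{1}{\sqrt n}\sum_t \EE^*[W_t^*]$ vanishes and can be absorbed; and $\EE^*[\eta_t^{*2}-1] = \frac1n\sum_s \hat{\eta}_s^2 - 1 = 0$ exactly by Remark \ref{rem:4.1}. For the conditional covariance, the key computations are: $\Var^*(\eta_t^{*2}-1) = \EE^*[\eta_t^{*4}] - (\EE^*[\eta_t^{*2}])^2 \overset{a.s.}{\to} \kappa - 1$ by Lemma \ref{lem:4.4}; $\frac1n\sum_t \hat{D}_t\hat{D}_t' = \hat{J}_n \overset{a.s.}{\to} J$ by Lemma \ref{lem:4.2}(ii), giving the top-left block $(\kappa-1)J$; $\Var^*(\mathbbm{1}_{\{\eta_t^*<\hat{\xi}_{n,\alpha}\}}) = \hat{\mathbbm{F}}_n(\hat{\xi}_{n,\alpha}-)(1-\hat{\mathbbm{F}}_n(\hat{\xi}_{n,\alpha}-)) \overset{a.s.}{\to} \alpha(1-\alpha)$, giving the bottom-right scalar; and the cross term $\Cov^*\big(\eta_t^{*2}, \mathbbm{1}_{\{\eta_t^*<\hat{\xi}_{n,\alpha}\}}\big) = \EE^*[\eta_t^{*2}\mathbbm{1}_{\{\eta_t^*<\hat{\xi}_{n,\alpha}\}}] - \EE^*[\eta_t^{*2}]\hat{\mathbbm{F}}_n(\hat{\xi}_{n,\alpha}-) \overset{a.s.}{\to} \EE[\eta_t^2\mathbbm{1}_{\{\eta_t<\xi_\alpha\}}] - \alpha = p_\alpha$ by Lemma \ref{lem:4.4}(2), which combined with $\frac1n\sum_t \hat{D}_t \overset{a.s.}{\to} \Omega$ (Lemma \ref{lem:4.2}(i)) yields the off-diagonal block $p_\alpha\Omega$. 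Assumption \ref{as:4.10} is what makes $\Omega'J^{-1}\Omega = 1$, but that identity is not needed here, only in translating $\Upsilon_\alpha$ into $\Sigma_\alpha$ later.

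For the Lindeberg condition, the indicator component is trivially bounded by $1$, so the only real work is controlling $\hat{D}_t(\eta_t^{*2}-1)$. I would bound, for any $\varepsilon>0$,
\begin{align*}
\frac{1}{n}\sum_{t=1}^n \EE^*\Big[\|\hat{D}_t\|^2(\eta_t^{*2}-1)^2 \mathbbm{1}_{\{\|\hat{D}_t\||\eta_t^{*2}-1|>\varepsilon\sqrt{n}\}}\Big] \le \frac{1}{n}\sum_{t=1}^n \|\hat{D}_t\|^2\, \EE^*\Big[(\eta_t^{*2}-1)^2 \mathbbm{1}_{\{(\eta_t^{*2}-1)^2 > \varepsilon^2 n/\max_s\|\hat{D}_s\|^2\}}\Big],
\end{align*}
and argue that $\max_{1\le s\le n}\|\hat{D}_s\|^2 = o(n)$ a.s.\ (from $\frac1n\sum \|\hat{D}_s\|^2 \to \mathrm{tr}(J)$, hence $\|\hat{D}_n\|^2/n\to 0$ a.s., so the max is $o(n)$), while $\EE^*[(\eta_t^{*2}-1)^2\mathbbm{1}_{\{\cdot\}}] \to 0$ a.s.\ because $\frac1n\sum_s(\hat{\eta}_s^2-1)^2$ converges (uniform integrability of the empirical fourth moments, guaranteed by Lemma \ref{lem:4.4}(1) with $a=4$ and $\EE[\eta_t^4]<\infty$). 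Combining with $\frac1n\sum_t\|\hat{D}_t\|^2 = O(1)$ a.s.\ gives the Lindeberg bound $\to 0$. The main obstacle is precisely this uniform-integrability/Lindeberg argument: one must carefully handle the interplay between the deterministic (given $\mathcal{F}_n$) weights $\hat{D}_t$ and the resampled $\eta_t^*$, ensuring all ``a.s.'' qualifiers are over the original probability space and that the truncation level, which depends on $\max_s\|\hat{D}_s\|$, genuinely diverges relative to the tail of the empirical distribution of $\eta_t^{*2}$. Once the Cram\'er--Wold combination of variance convergence and Lindeberg is in place, the conclusion follows from the standard triangular-array CLT.
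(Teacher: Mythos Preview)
Your approach is essentially identical to the paper's: center the array, apply Cram\'er--Wold, compute the conditional covariance via Lemmas \ref{lem:4.2} and \ref{lem:4.4}, and verify a Lindeberg condition using that $\max_t\|\hat{D}_t\|/\sqrt n\to 0$ a.s.\ together with a tail-truncation of $\eta_t^{*}$. The paper organizes the Lindeberg step by splitting on $\{|\eta_t^*|>C\}$ versus $\{|\eta_t^*|\le C\}$, which is the same idea as your uniform-integrability argument.

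There is one genuine gap in your argument for $\max_{s\le n}\|\hat D_s\|^2=o(n)$. You write ``from $\frac1n\sum\|\hat D_s\|^2\to\mathrm{tr}(J)$, hence $\|\hat D_n\|^2/n\to 0$, so the max is $o(n)$.'' The telescoping step $\|\hat D_n\|^2/n\to 0$ does not follow, because $\hat D_t=\tilde D_t(\hat\theta_n)$ depends on $n$ through $\hat\theta_n$: you have a triangular array, not a single sequence, so you cannot subtract the $(n-1)$-average (which involves $\hat\theta_{n-1}$, not $\hat\theta_n$). The paper avoids this by using the hypothesis $b=4$: it shows $\limsup_n\frac1n\sum_t\|\hat D_t\|^4<\infty$ a.s.\ via the uniform bound $\|\hat D_t\|\le U_t+\tfrac{C_1\rho^t}{\underline\omega}(1+U_t)$ (valid once $\hat\theta_n\in\mathscr V(\theta_0)$), and then
\[
\Big(\frac{\max_t\|\hat D_t\|}{\sqrt n}\Big)^4\le \frac{1}{n^2}\sum_{t=1}^n\|\hat D_t\|^4\overset{a.s.}{\to}0.
\]
Alternatively, you could apply your Ces\`aro argument to the non-triangular majorant $b_t=U_t+\tfrac{C_1\rho^t}{\underline\omega}(1+U_t)$, which does not depend on $n$; then $\frac1n\sum b_t^2\to\EE[U_t^2]$ gives $b_n^2/n\to 0$ and hence $\max_t b_t^2/n\to 0$. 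Either fix is short, but your argument as written does not go through.
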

\begin{proof}
Set $\alpha_n = \frac{1}{n}\sum_{t=1}^n\mathbbm{1}_{\{\hat{\eta}_t< \hat{\xi}_{n,\alpha}\}}$ and expand 
\begin{align*}
\frac{1}{\sqrt{n}} \sum_{t=1}^n  \!  \begin{pmatrix}
      \hat{D}_t \big(\eta_t^{*2}-1\big)\\
  \mathbbm{1}_{\{\eta_t^*< \hat{\xi}_{n,\alpha}\}}-\alpha
      \end{pmatrix}\! =\! \frac{1}{\sqrt{n}} \sum_{t=1}^n \!  \begin{pmatrix}
      \hat{D}_t \big(\eta_t^{*2}-\EE^*[\eta_t^{*2}]\big)\\
  \mathbbm{1}_{\{\eta_t^*< \hat{\xi}_{n,\alpha}\}}-\alpha_n
      \end{pmatrix} \!+ \! \frac{1}{\sqrt{n}} \sum_{t=1}^n  \!  \begin{pmatrix}
      \hat{D}_t \big(\EE^*[\eta_t^{*2}]-1\big)\\
  \alpha_n-\alpha
      \end{pmatrix}.
\end{align*}
Consider the second term; with regard to Remark \ref{rem:4.1} we have $\EE^*\big[\eta_t^{*2}\big]=1$ whenever $\hat{\theta}_n \in \mathring{\Theta}$ under Assumption \ref{as:4.10}. Since $\hat{\theta}_n\overset{a.s.}{\to}\theta_0 \in \mathring{\Theta}$ by Theorem \ref{thm:4.1} and Assumption \ref{as:4.6}, we have $\frac{1}{\sqrt{n}} \sum_{t=1}^n \hat{D}_t \big(\EE^*[\eta_t^{*2}]-1\big)=0$ for sufficiently large $n$ almost surely. Further, $\alpha_n \overset{a.s.}{=} \frac{\floor{n \alpha}+1}{n}=\alpha+O(n^{-1})$
and hence $\frac{1}{\sqrt{n}} \sum_{t=1}^n (\alpha_n-\alpha)\overset{a.s.}{\to}0$. Using the Cram\'er-Wold device it remains to show
that for each  $\lambda=(\lambda_1',\lambda_2)'  \in \R^{r+1}$ with $||\lambda||\neq 0$ 
\begin{align*}
 \sum_{t=1}^n \underbrace{\frac{1}{\sqrt{n}}\lambda'   \begin{pmatrix}
      \hat{D}_t \big(\eta_t^{*2}-\EE^*[\eta_t^{*2}]\big)\\
  \mathbbm{1}_{\{\eta_t^*< \hat{\xi}_{n,\alpha}\}}-\alpha_n
      \end{pmatrix}}_{Z_{n,t}^*}
    \overset{d^*}{\to}N\big(0, \lambda'\Upsilon_\alpha\lambda \big)
\end{align*}
almost surely.
By construction, we have $\EE\big[Z_{n,t}^*\big]=0$. Further, we obtain
\begin{align}
\nonumber
s_n^2 =&\sum_{t=1}^n\Var^*\big[Z_{n,t}^{*}\big] 
=\lambda'\begin{pmatrix}
      \Var^*[\eta_t^{*2}]\hat{J}_n& \Cov^*[\eta_t^{*2},\mathbbm{1}_{\{\eta_t^*< \hat{\xi}_{n,\alpha}\}}] \hat{\Omega}_n\\
    \Cov^*[\eta_t^{*2},\mathbbm{1}_{\{\eta_t^*< \hat{\xi}_{n,\alpha}\}}] \hat{\Omega}_n' & \Var^*[\mathbbm{1}_{\{\eta_t^*< \hat{\xi}_{n,\alpha}\}}] 
      \end{pmatrix}\lambda.
\end{align}
Lemma \ref{lem:4.2} states $\hat{J}_n \overset{a.s.}{\to}J$ and $\hat{\Omega}_n\overset{a.s.}{\to}\Omega$. Employing Lemma  \ref{lem:4.4} yields $\Var^*\big[\eta_t^{*2}\big]\overset{a.s.}{\to}\kappa-1$, $\Var^*[\mathbbm{1}_{\{\eta_t^*< \hat{\xi}_{n,\alpha}\}}] = \alpha_n (1-\alpha_n) \overset{a.s.}{\to} \alpha(1-\alpha)$ and $\Cov^*\big[\eta_t^{*2},\mathbbm{1}_{\{\eta_t^*< \hat{\xi}_{n,\alpha}\}}\big] = \EE^*\big[\eta_t^{*2}\mathbbm{1}_{\{\eta_t^*< \hat{\xi}_{n,\alpha}\}}\big]-\EE^*\big[\eta_t^{*2}\big]\alpha_n \overset{a.s.}{\to} p_\alpha$. It follows that
$s_n^2\overset{a.s.}{\to} \lambda'\Upsilon_\alpha \lambda$. Next, we verify Lindeberg condition. 
For an arbitrary $\varepsilon>0$ 
\begin{align*}
&  \sum_{t=1}^n  \EE^*\big[Z_{n,t}^{*2}\mathbbm{1}_{\{|Z_{n,t}^{*}|\geq s_n \varepsilon \}}\big] \leq  \underbrace{\sum_{t=1}^n\EE^*\big[Z_{n,t}^{*2}\mathbbm{1}_{\{|\eta_t^{*}|> C \}}\big]}_{I}+ \underbrace{\sum_{t=1}^n  \EE^*\big[Z_{n,t}^{*2}\mathbbm{1}_{\{|Z_{n,t}^{*}|\geq s_n \varepsilon \}}\mathbbm{1}_{\{|\eta_t^{*}|\leq C \}}\big]}_{II}
\end{align*}
holds, where $C>0$. Employing the elementary inequalities 
\begin{align}
\label{eq:4.A.53}
(x+y)^z\leq 2^z(x^z+y^z) 
\end{align}
and $|x-y|^z\leq x^z+y^z$ for all $x,y,z\geq 0$ we find that
\begin{align*}
Z_{n,t}^{*2}\leq  \frac{4}{n} \Big(\big(\lambda_1'\hat{D}_t\big)^2 \big(\eta_t^{*4}+\EE^*[\eta_t^{*2}]^2\big)+
    \lambda_2^2\Big).
\end{align*}
Hence, we obtain
\begin{align*}
I \leq &  \frac{4}{n}\sum_{t=1}^n\EE^*\bigg[ \Big(\big(\lambda_1'\hat{D}_t\big)^2 \big(\eta_t^{*4}+\EE^*[\eta_t^{*2}]^2\big)+
    \lambda_2^2\Big)\mathbbm{1}_{\{|\eta_t^{*}|> C \}}\bigg]\\
= & 4 \Big( \lambda_1' \hat{J}_n \lambda_1 \EE^*\big[\eta_t^{*4}\mathbbm{1}_{\{|\eta_t^{*}|> C \}}\big]+\big(\lambda_1' \hat{J}_n \lambda_1 \EE^*[\eta_t^{*2}]^2+
    \lambda_2^2\big)\EE^*\big[\mathbbm{1}_{\{|\eta_t^{*}|> C \}}\big]\Big)\\
    \overset{a.s.}{\to}&4 \Big( \lambda_1' J \lambda_1 \EE\big[\eta_t^{4}\mathbbm{1}_{\{|\eta_t|> C \}}\big]+\big(\lambda_1' J \lambda_1 \EE[\eta_t^{2}]^2+
    \lambda_2^2\big)\EE\big[\mathbbm{1}_{\{|\eta_t|> C \}}\big]\Big)
\end{align*}
and choosing $C$ sufficiently large yields $I\overset{a.s.}{\to}0$. Given a value of $C$, we have
\begin{align*}
II\leq  &  \frac{4}{n}  \sum_{t=1}^n \EE^*\bigg[ \Big(\big(\lambda_1'\hat{D}_t\big)^2 \big(\eta_t^{*4}+\EE^*[\eta_t^{*2}]^2\big)+
    \lambda_2^2\Big)\mathbbm{1}_{ \{ ||\lambda_1|| (\eta_t^{*2}+\EE^*[\eta_t^{*2}]) \max_{t} ||\hat{D}_t||+
    |\lambda_2|\geq \sqrt{n} s_n \varepsilon \}}\mathbbm{1}_{\{|\eta_t^{*}|\leq C \}}\bigg]\\
\leq  &  \frac{4}{n}  \sum_{t=1}^n  \Big(\big(\lambda_1'\hat{D}_t\big)^2 \big(C^{4}+\EE^*[\eta_t^{*2}]^2\big)+
    \lambda_2^2\Big)\mathbbm{1}_{ \{ ||\lambda_1|| (C^{2}+\EE^*[\eta_t^{*2}]) \max_{t} ||\hat{D}_t||+
    |\lambda_2|\geq \sqrt{n} s_n \varepsilon \}}\\
=  &   4\Big(\lambda_1' \hat{J}_n \lambda_1  \big(C^{4}+\EE^*[\eta_t^{*2}]^2\big)+
    \lambda_2^2\Big)\mathbbm{1}_{ \{ ||\lambda_1|| (C^{2}+\EE^*[\eta_t^{*2}]) \max_{t} ||\hat{D}_t||+
    |\lambda_2|\geq \sqrt{n} s_n \varepsilon \}}\\
    \overset{a.s.}{\to}& 4\Big(\lambda_1' J \lambda_1  \big(C^{4}+\EE[\eta_t^{2}]^2\big)+
    \lambda_2^2\Big)\times 0 = 0
\end{align*}
To appreciate why the indicator function converges to $0$ almost surely we employ \eqref{eq:4.A.3} as well as \eqref{eq:4.A.53} and note $\hat{\theta}_n \in \mathscr{V}(\theta_0)$ almost surely to get
\begin{align}
\label{eq:4.A.54}
\begin{split}
&\frac{1}{n} \sum_{t=1}^n \big|\big|\hat{D}_t\big|\big|^{4}\leq \frac{1}{n} \sum_{t=1}^n \bigg(\big|\big|D_t(\hat{\theta}_n)\big|\big|+\frac{C_1 \rho^t}{\underline{\omega}}\Big(1+ \big|\big|D_t(\hat{\theta}_n)\big|\big|\Big)\bigg)^{4}\\
\overset{a.s.}{\leq}& \frac{1}{n} \sum_{t=1}^n \bigg(U_t+\frac{C_1 \rho^t}{\underline{\omega}}(1+ U_t)\bigg)^{4} \leq  2^{4} \bigg(\frac{1}{n} \sum_{t=1}^n  U_t^4+\frac{C_1^4}{\underline{\omega}^4}\frac{1}{n} \sum_{t=1}^n \big\{\rho^{t}(1+ U_t)\big\}^{4}\bigg).
\end{split}
\end{align}
The uniform ergodic theorem and Assumption \ref{as:4.9}(ii) imply $\frac{1}{n} \sum_{t=1}^n  U_t^4 \overset{a.s.}{\to}\EE\big[U_t^4 \big]<\infty$.  Further, \eqref{eq:4.A.4} leads to $\rho^{t}(1+U_t)\overset{a.s.}{\to}0$ as $t\to \infty$, which in turn implies $\big\{\rho^{t}(1+ U_t)\big\}^{4}\overset{a.s.}{\to}0$ as $t\to \infty$. Ces\'aro's lemma yields $\frac{1}{n}\sum_{t=1}^n \big\{\rho^{t}(1+ U_t)\big\}^4 \overset{a.s.}{\to}0$ and we have $\lim_{n \to \infty}\frac{1}{n} \sum_{t=1}^n \big|\big|\hat{D}_t\big|\big|^4<\infty$ almost surely. Thus, $\max_t||\hat{D}_t||/\sqrt{n}\overset{a.s.}{\to}0$ as
\begin{align*}
\bigg(\frac{\max_t||\hat{D}_t||}{\sqrt{n}}\bigg)^4 \leq   \frac{1}{n^2}\sum_{t=1}^n||\hat{D}_t||^4 \overset{a.s.}{\to}0.
\end{align*}
and $\mathbbm{1}_{ \{ ||\lambda_1|| (C^{2}+\EE^*[\eta_t^{*2}]) \max_{t} ||\hat{D}_t||+|\lambda_2|\geq \sqrt{n} s_n \varepsilon \}}\overset{a.s.}{\to}0$ follows.
Combining results, establishes $\frac{1}{s_n^2}\sum_{t=1}^n  \EE^*\big[Z_{n,t}^{*2}\mathbbm{1}_{\{|Z_{n,t}^{*}|\geq s_n \epsilon \}}\big] \overset{a.s.}{\to}0$.
 The Central Limit Theorem for triangular arrays (cf.\ \citeauthor{billingsley1986probability}, \citeyear{billingsley1986probability}, Theorem 27.3) implies that $\sum_{t=1}^n Z_{n,t}^*$ converges in conditional distribution to $N\big(0,\lambda'\Upsilon_\alpha\lambda \big)$ almost surely.
\end{proof}

\noindent
\textit{Proof of Proposition \ref{prop:4.1}.}
Since $L_n^*$ is maximized at $\hat{\theta}_n^*$  its derivative is equal to zero: $\frac{\partial L_n^*(\hat{\theta}_n^*)}{\partial \theta}=0$. A Taylor expansion around $\hat{\theta}_n$ yields
\begin{align*}
0 = \sqrt{n}\frac{\partial L_n^*(\hat{\theta}_n^*)}{\partial \theta}= \frac{1}{\sqrt{n}}\sum_{t=1}^n \frac{\partial}{\partial \theta} \ell_t^*(\hat{\theta}_n) + \bigg( \frac{1}{n}\sum_{t=1}^n \frac{\partial^2}{\partial \theta \partial \theta'} \ell_t^*(\breve{\theta}_n)\bigg) \sqrt{n}\big(\hat{\theta}_n^*-\hat{\theta}_n\big)
\end{align*}
with $\breve{\theta}_n$ between $\hat{\theta}_n^*$ and $\hat{\theta}_n$. Lemma \ref{lem:4.6} establishes $\frac{1}{n}\sum_{t=1}^n \frac{\partial^2}{\partial \theta \partial \theta'} \ell_t^*(\breve{\theta}_n)\overset{p^*}{\to}-2J$ almost surely. Since $\frac{\partial}{\partial \theta} \ell_t^*(\theta)=\tilde{D}_t(\theta)\big(\frac{\epsilon_t^{*2}}{\tilde{\sigma}_t^2(\theta)}-1\big)$, the first term on the right hand side reduces to $\frac{1}{\sqrt{n}}\sum_{t=1}^n \hat{D}_t\big(\eta_t^{*2}-1\big)$.  Hence, we obtain
\begin{align}
\label{eq:4.4.4}
\sqrt{n}\big(\hat{\theta}_n^*-\hat{\theta}_n\big)=\frac{1}{2}J^{-1} \frac{1}{\sqrt{n}}\sum_{t=1}^n \hat{D}_t\big(\eta_t^{*2}-1\big)+o_{p^*}(1)
\end{align}
almost surely with $\frac{1}{\sqrt{n}}\sum_{t=1}^n \hat{D}_t\big(\eta_t^{*2}-1\big)$ converging in conditional distribution to $N\big(0,(\kappa-1)J\big)$ almost surely by Lemma \ref{lem:4.7}. The claim follows. \qed \\

\begin{lemma}
\label{lem:4.8}
Suppose Assumptions \ref{as:4.1}--\ref{as:4.9} hold with $a=\pm 6$, $b=6$ and $c=2$. Then, $I_n^*(z)$ given in \eqref{eq:219849862} satisfies $I_n^*(z)\overset{p^*}{\to} \frac{z^2}{2}f(\xi_\alpha)$ in probability.
\end{lemma}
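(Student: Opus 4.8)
The plan is to exploit that, conditionally on $\mathcal F_n$, the bootstrap innovations $\eta_1^*,\dots,\eta_n^*$ are i.i.d.\ $\hat{\mathbbm F}_n$, so that $I_n^*(z)=\sum_{t=1}^n W_{n,t}^*$ with i.i.d.\ summands $W_{n,t}^*=\int_0^{z/\sqrt n}\big(\mathbbm 1_{\{\eta_t^*\le \hat\xi_{n,\alpha}+s\}}-\mathbbm 1_{\{\eta_t^*<\hat\xi_{n,\alpha}\}}\big)\,ds$. Fixing $z\in\R$, I would first show (i) $\EE^*[I_n^*(z)]\overset{p}{\to}\tfrac{z^2}{2}f(\xi_\alpha)$ and (ii) $\Var^*[I_n^*(z)]\overset{p}{\to}0$, and then invoke the conditional Markov/Chebyshev inequality to obtain $I_n^*(z)-\EE^*[I_n^*(z)]\overset{p^*}{\to}0$ in probability, which together with (i) and the triangle inequality yields the claim. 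Throughout I would use that $\hat\xi_{n,\alpha}\overset{a.s.}{\to}\xi_\alpha$ and that $\hat\xi_{n,\alpha}$ is $\sqrt n$-consistent (Theorem \ref{thm:4.2}), and I would pin the exponent in Lemma \ref{lem:4.3} to some $\varrho\in(0,1/2)$, so that the interval $\mathcal I_n=(\xi_\alpha-a_n,\xi_\alpha+a_n)$ with $a_n\sim n^{-\varrho}\log n$ eventually contains $\hat\xi_{n,\alpha}+u/\sqrt n$ for every $u$ in a fixed compact set, with probability tending to one.

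For (i): by Fubini and the substitution $s=u/\sqrt n$,
\[
\EE^*[I_n^*(z)]=n\int_0^{z/\sqrt n}\big(\hat{\mathbbm F}_n(\hat\xi_{n,\alpha}+s)-\hat{\mathbbm F}_n(\hat\xi_{n,\alpha}-)\big)\,ds=\sqrt n\int_0^{z}\big(\hat{\mathbbm F}_n(\hat\xi_{n,\alpha}+u/\sqrt n)-\hat{\mathbbm F}_n(\hat\xi_{n,\alpha}-)\big)\,du.
\]
On the (asymptotically certain) event that $\hat\xi_{n,\alpha}$ and $\hat\xi_{n,\alpha}+u/\sqrt n$ lie in $\mathcal I_n$ for all $u$ between $0$ and $z$, Lemma \ref{lem:4.3} (and its left-limit variant) gives $\sqrt n\big(\hat{\mathbbm F}_n(\hat\xi_{n,\alpha}+u/\sqrt n)-\hat{\mathbbm F}_n(\hat\xi_{n,\alpha}-)\big)=\sqrt n\big(F(\hat\xi_{n,\alpha}+u/\sqrt n)-F(\hat\xi_{n,\alpha})\big)+o_p(1)$ uniformly in $u$; then the mean value theorem together with the continuity of $f$ around $\xi_\alpha$ (Assumption \ref{as:4.5}(\ref{as:4.5.2})) and $\hat\xi_{n,\alpha}\to\xi_\alpha$ yields $\sqrt n\big(F(\hat\xi_{n,\alpha}+u/\sqrt n)-F(\hat\xi_{n,\alpha})\big)=uf(\bar\xi_{n,u})$ with $\bar\xi_{n,u}\to\xi_\alpha$ uniformly, hence the integrand tends to $uf(\xi_\alpha)$ uniformly in $u$. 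Integrating over $u$ then gives $\EE^*[I_n^*(z)]\overset{p}{\to}f(\xi_\alpha)\int_0^z u\,du=\tfrac{z^2}{2}f(\xi_\alpha)$.

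For (ii): since $|s|\le |z|/\sqrt n$ in the defining integral and $\big|\mathbbm 1_{\{\eta_t^*\le \hat\xi_{n,\alpha}+s\}}-\mathbbm 1_{\{\eta_t^*<\hat\xi_{n,\alpha}\}}\big|\le \mathbbm 1_{\{|\eta_t^*-\hat\xi_{n,\alpha}|\le |z|/\sqrt n\}}$, one has $|W_{n,t}^*|\le \tfrac{|z|}{\sqrt n}\mathbbm 1_{\{|\eta_t^*-\hat\xi_{n,\alpha}|\le |z|/\sqrt n\}}$, so that
\[
\Var^*[I_n^*(z)]=n\Var^*[W_{n,t}^*]\le n\,\EE^*\big[(W_{n,t}^*)^2\big]\le z^2\Big(\hat{\mathbbm F}_n(\hat\xi_{n,\alpha}+|z|/\sqrt n)-\hat{\mathbbm F}_n\big((\hat\xi_{n,\alpha}-|z|/\sqrt n)-\big)\Big).
\]
Applying Lemma \ref{lem:4.3} once more (on the event that both arguments lie in $\mathcal I_n$, off which the bracket is trivially bounded by one), the bracket equals $F(\hat\xi_{n,\alpha}+|z|/\sqrt n)-F(\hat\xi_{n,\alpha}-|z|/\sqrt n)+o_p(n^{-1/2})$, which is $O_p(n^{-1/2})$ by the mean value theorem and the boundedness of $f$ near $\xi_\alpha$; hence $\Var^*[I_n^*(z)]\overset{p}{\to}0$. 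To conclude, for any $\varepsilon>0$ the conditional Markov inequality gives $\PP^*\big[|I_n^*(z)-\EE^*[I_n^*(z)]|>\varepsilon\big]\le \varepsilon^{-2}\Var^*[I_n^*(z)]\overset{p}{\to}0$, and since $|I_n^*(z)-\tfrac{z^2}{2}f(\xi_\alpha)|\le |I_n^*(z)-\EE^*[I_n^*(z)]|+|\EE^*[I_n^*(z)]-\tfrac{z^2}{2}f(\xi_\alpha)|$ with the last term $\mathcal F_n$-measurable and $o_p(1)$ by (i), we obtain $I_n^*(z)\overset{p^*}{\to}\tfrac{z^2}{2}f(\xi_\alpha)$ in probability. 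The one genuinely delicate step is (i): because $\hat{\mathbbm F}_n$ is discrete, $\EE^*[I_n^*(z)]$ cannot be treated by a naive Taylor expansion of a smooth cdf, and the whole computation rests on the uniform Bahadur-type linearization of $\hat{\mathbbm F}_n$ near $\xi_\alpha$ furnished by Lemma \ref{lem:4.3}, plus the verification that the $O_p(n^{-1/2})$-neighbourhood of $\hat\xi_{n,\alpha}$ is absorbed by the shrinking interval $\mathcal I_n$; the remaining manipulations are routine.
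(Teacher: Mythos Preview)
Your proposal is correct and follows essentially the same route as the paper: compute $\EE^*[I_n^*(z)]$ via Fubini and the substitution $s=u/\sqrt{n}$, replace $\hat{\mathbbm F}_n$-increments by $F$-increments using Lemma~\ref{lem:4.3} on a shrinking interval around $\xi_\alpha$, and then apply the mean value theorem with the continuity of $f$; finally show the conditional variance vanishes and conclude by Chebyshev. The only notable difference is in the variance step: the paper uses the inequality $\Var(Y)\le |c|\,\EE[Y]$ for $Y=\int_0^c(\mathbbm 1_{\{X\le s\}}-\mathbbm 1_{\{X<0\}})\,ds$ to obtain $\Var^*[I_n^*(z)]\le \tfrac{|z|}{\sqrt n}\,\EE^*[I_n^*(z)]\overset{p}{\to}0$, which recycles part~(i) directly, whereas you bound $|W_{n,t}^*|$ pointwise by $\tfrac{|z|}{\sqrt n}\mathbbm 1_{\{|\eta_t^*-\hat\xi_{n,\alpha}|\le |z|/\sqrt n\}}$ and then control the resulting probability; both are valid, and for the latter you do not even need Lemma~\ref{lem:4.3} again---the uniform consistency of $\hat{\mathbbm F}_n$ (Lemma~\ref{lem:4.1}) together with continuity of $F$ already gives that the bracket is $o_p(1)$.
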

\begin{proof}
Using Fubini's theorem, the conditional expectation is equal to
\begin{align*}
\EE^*\big[I_n^*(z)\big]
=& \sum_{t=1}^n \int_0^{z/\sqrt{n}}\EE^*\Big[\mathbbm{1}_{\{\eta_t^*\leq \hat{\xi}_{n,\alpha}+s\}}-\mathbbm{1}_{\{\eta_t^*< \hat{\xi}_{n,\alpha}\}}\Big]ds\\
=&  \int_0^{z}\sqrt{n}\bigg(\hat{\mathbbm{F}}_n\Big(\hat{\xi}_{n,\alpha}+\frac{u}{\sqrt{n}}\Big)-\hat{\mathbbm{F}}_n(\hat{\xi}_{n,\alpha}-)\bigg)\:du\\
=&  \underbrace{\int_0^{z}\sqrt{n}\bigg(\hat{\mathbbm{F}}_n\Big(\hat{\xi}_{n,\alpha}+\frac{u}{\sqrt{n}}\Big)-\hat{\mathbbm{F}}_n(\hat{\xi}_{n,\alpha}-)-F\Big(\hat{\xi}_{n,\alpha}+\frac{u}{\sqrt{n}}\Big)+F(\hat{\xi}_{n,\alpha})\bigg)\:du}_{I}\\
&\qquad +\underbrace{\int_0^{z}\sqrt{n}\bigg(F\Big(\hat{\xi}_{n,\alpha}+\frac{u}{\sqrt{n}}\Big)-F(\hat{\xi}_{n,\alpha})\bigg)\:du}_{II}.
\end{align*}
Regarding $I$, take $\varrho \in (0,1/2)$ and set $\bar{\mathcal{I}}_n=\big[\xi_\alpha -0.5 n^{-\varrho},\xi_\alpha +0.5 n^{-\varrho}\big]$. Since $\sqrt{n}(\hat{\xi}_{n,\alpha}-\xi_\alpha)=O_p(1)$, the probabilities of the events $\big \{\hat{\xi}_{n,\alpha}+\frac{|z|}{\sqrt{n}}\notin \bar{\mathcal{I}}_n \big\}$ and $\big \{\hat{\xi}_{n,\alpha}-\frac{|z|}{\sqrt{n}}\notin \bar{\mathcal{I}}_n \big \}$ can be made arbitrarily small for large $n$. If $\hat{\xi}_{n,\alpha}+\frac{|z|}{\sqrt{n}}\in \bar{\mathcal{I}}_n$ and $\hat{\xi}_{n,\alpha}-\frac{|z|}{\sqrt{n}}\in \bar{\mathcal{I}}_n$, then $\hat{\xi}_{n,\alpha} \in \bar{\mathcal{I}}_n$ and $\hat{\xi}_{n,\alpha}+\frac{u}{\sqrt{n}} \in \bar{\mathcal{I}}_n$ belong to $\bar{\mathcal{I}}_n$ for all $u$ between $0$ and $z$. In that case 
\begin{align*}
|I|\leq |z|\sup_{x,y \in \bar{\mathcal{I}}_n} \Big| \sqrt{n} \big(\hat{\mathbbm{F}}_n(x)-\hat{\mathbbm{F}}_n(y-)\big)- \sqrt{n} \big(F(x)-F(y)\big)\Big|\overset{p}{\to}0
\end{align*}
by Lemma \ref{lem:4.3}. Focusing on $II$, the mean value theorem implies that 
\begin{align*}
II = \int_0^{z}u f\big(\hat{\xi}_{n,\alpha}+\varepsilon_n\big)\:du = \underbrace{\int_0^{z}u \Big(f\big(\hat{\xi}_{n,\alpha}+\varepsilon_n\big)-f(\xi_\alpha)\Big)\:du}_{II_1}+ \underbrace{\int_0^{z}u f(\xi_\alpha)\:du}_{II_2}
\end{align*}
with $\varepsilon_n$ lying between $0$ and $u/\sqrt{n}$. Since $|\varepsilon_n| \leq |z|/\sqrt{n}$ and $\hat{\xi}_{n,\alpha}\overset{a.s.}{\to} \xi_\alpha$ we have
\begin{align*}
|II_1| \leq \frac{z^2}{2} \sup_{|v|\leq |z| } \Big|f\Big(\hat{\xi}_{n,\alpha}+\frac{v}{n}\Big)-f(\xi_\alpha)\Big|\overset{a.s.}{\to}0.
\end{align*}
Further, $II_2$ simplifies to $II_2=\frac{z^2}{2}f(\xi_\alpha)$ and combining results establishes
%
$\EE^*\big[I_n^*(z)\big]\overset{p}{\to}\frac{z^2}{2}f(\xi_\alpha)$.
%
The conditional variance vanishes in probability as
\begin{align*}
&\Var^*\big[I_n^*(z)\big] = \sum_{t=1}^n \Var^*\bigg[\int_0^{z/\sqrt{n}}(\mathbbm{1}_{\{\eta_t^*\leq \hat{\xi}_{n,\alpha}+s\}}-\mathbbm{1}_{\{\eta_t^*< \hat{\xi}_{n,\alpha}\}})ds\bigg]\\
\leq& \sum_{t=1}^n \frac{|z|}{\sqrt{n}}\EE^*\bigg[\int_0^{z/\sqrt{n}}(\mathbbm{1}_{\{\eta_t^*\leq \hat{\xi}_{n,\alpha}+s\}}-\mathbbm{1}_{\{\eta_t^*< \hat{\xi}_{n,\alpha}\}})ds\bigg]
=  \frac{|z|}{\sqrt{n}} \EE^*\big[I_n^*(z)\big]\overset{p}{\to}0,
\end{align*}
where the inequality follows from the fact that
\begin{align}
\label{eq:4.A.55}
\Var(Y)\leq |c|\: \EE[Y]\qquad 
\end{align}
with  $Y=\int_0^c (\mathbbm{1}_{\{X\leq s\}}-\mathbbm{1}_{\{X<0\}})ds$, $X$ is a real-valued integrable random variable and $c \in \R$ (cf.\  \citeauthor{francq2015risk}, \citeyear{francq2015risk}, p.\ 171).
\end{proof}

\begin{lemma}
\label{lem:4.9}
Suppose Assumptions \ref{as:4.1}--\ref{as:4.10} hold with $a=\pm 12$, $b=12$ and $c=6$. Then,  $J_{n,1}^*(z)$ given in \eqref{eq:4.4.6}
satisfies $J_{n,1}^*(z) \overset{d^*}{\to}\Gamma \big(\frac{r}{2},\frac{\kappa-1}{4}\xi_\alpha^2f(\xi_\alpha)\big)$ in probability, i.e.\ a Gamma distribution with shape parameter $\frac{r}{2}$ and scale parameter $\frac{\kappa-1}{4}\xi_\alpha^2f(\xi_\alpha)$.
\end{lemma}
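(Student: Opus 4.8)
\textbf{Proof plan for Lemma \ref{lem:4.9}.}

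The plan is to analyze $J_{n,1}^*(z)$ by first replacing the indicator differences with their conditional expectations (showing the centered part is negligible), then identifying the resulting deterministic-looking sum as a quadratic form in $\sqrt{n}(\hat{\theta}_n^*-\hat{\theta}_n)$, and finally invoking Proposition \ref{prop:4.1} together with the continuous mapping theorem to obtain the Gamma limit. Recall from \eqref{eq:4.4.6} that
\begin{align*}
J_{n,1}^*(z)=\sum_{t=1}^n \int_{0}^{\eta_t^*-\hat{\eta}_t^*}\big(\mathbbm{1}_{\{\eta_t^*\leq\hat{\xi}_{n,\alpha}+z/\sqrt{n}+ s\}}-\mathbbm{1}_{\{\eta_t^*<\hat{\xi}_{n,\alpha}+z/\sqrt{n}\}}\big)ds,
\end{align*}
and note that $\eta_t^*-\hat{\eta}_t^* = \eta_t^*\big(1-\tilde{\sigma}_t(\hat{\theta}_n)/\tilde{\sigma}_t(\hat{\theta}_n^*)\big)$, which by a mean value expansion of $\tilde{\sigma}_t$ around $\hat{\theta}_n$ and Assumption \ref{as:4.10} (scaling stability, so that the normalization $\tfrac1n\sum\hat{\eta}_t^2=1$ is preserved) behaves like $-\eta_t^*\, \hat{D}_t'\,(\hat{\theta}_n^*-\hat{\theta}_n)$ up to lower-order terms. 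Thus the increment length $\eta_t^*-\hat{\eta}_t^*$ is $O_{p^*}(n^{-1/2})$ uniformly in a suitable sense.

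First I would compute the conditional mean: by Fubini,
\[
\EE^*\!\big[J_{n,1}^*(z)\big]=\sum_{t=1}^n \EE^*\!\Big[\big(\eta_t^*-\hat{\eta}_t^*\big)\big(\hat{\mathbbm{F}}_n(\hat{\xi}_{n,\alpha}+z/\sqrt{n}+(\eta_t^*-\hat{\eta}_t^*))-\hat{\mathbbm{F}}_n(\hat{\xi}_{n,\alpha}+z/\sqrt{n}-)\big)\Big]
\]
(being careful with signs when $\eta_t^*-\hat{\eta}_t^*$ is negative). Using Lemma \ref{lem:4.3} to replace $\hat{\mathbbm{F}}_n$ by $F$ on the shrinking interval $\mathcal{I}_n$ around $\xi_\alpha$, then a mean value expansion of $F$, the bracket is $\approx f(\xi_\alpha)(\eta_t^*-\hat{\eta}_t^*)$, so the mean is $\approx f(\xi_\alpha)\sum_{t=1}^n(\eta_t^*-\hat{\eta}_t^*)^2$. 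Substituting $\eta_t^*-\hat{\eta}_t^*\approx -\eta_t^*\hat{D}_t'(\hat{\theta}_n^*-\hat{\theta}_n)$ gives
\[
f(\xi_\alpha)\,\big(\hat{\theta}_n^*-\hat{\theta}_n\big)'\Big(\sum_{t=1}^n \eta_t^{*2}\,\hat{D}_t\hat{D}_t'\Big)\big(\hat{\theta}_n^*-\hat{\theta}_n\big),
\]
and since $\tfrac1n\sum \eta_t^{*2}\hat{D}_t\hat{D}_t' \overset{p^*}{\to} J$ a.s.\ (an argument parallel to Lemma \ref{lem:4.6}, which already controls $\tfrac1n\sum \tfrac{\sigma_t^2(\hat\theta_n)}{\sigma_t^2(\breve\theta_n)}\hat D_t\hat D_t'\eta_t^{*2}$), this is $f(\xi_\alpha)\,\xi_\alpha^2\cdot\big(\sqrt n(\hat\theta_n^*-\hat\theta_n)/\xi_\alpha\big)'J\big(\sqrt n(\hat\theta_n^*-\hat\theta_n)/\xi_\alpha\big)+o_{p^*}(1)$; wait—more directly it equals $f(\xi_\alpha)\,\big(\sqrt n(\hat\theta_n^*-\hat\theta_n)\big)'J\big(\sqrt n(\hat\theta_n^*-\hat\theta_n)\big)+o_{p^*}(1)$. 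Note this does not depend on $z$, as claimed in the main text. Then I would show the conditional variance $\Var^*[J_{n,1}^*(z)]\to0$ in probability, using the bound \eqref{eq:4.A.55} type inequality $\Var(Y)\le|c|\EE[Y]$ applied conditionally with $c=\eta_t^*-\hat\eta_t^*=O_{p^*}(n^{-1/2})$, exactly as in Lemma \ref{lem:4.8}; this yields $\Var^*[J_{n,1}^*(z)]\lesssim n^{-1/2}\EE^*[|J_{n,1}^*(z)|]=o_{p^*}(1)$, so $J_{n,1}^*(z)-\EE^*[J_{n,1}^*(z)]\overset{p^*}{\to}0$ in probability.

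It remains to identify the limit distribution of $f(\xi_\alpha)\,\big(\sqrt n(\hat\theta_n^*-\hat\theta_n)\big)'J\big(\sqrt n(\hat\theta_n^*-\hat\theta_n)\big)$. By Proposition \ref{prop:4.1}, $\sqrt n(\hat\theta_n^*-\hat\theta_n)\overset{d^*}{\to}N(0,\tfrac{\kappa-1}{4}J^{-1})$ a.s.; writing $W=\tfrac{2}{\sqrt{\kappa-1}}J^{1/2}\sqrt n(\hat\theta_n^*-\hat\theta_n)\overset{d^*}{\to}N(0,I_r)$, we get $f(\xi_\alpha)\cdot\tfrac{\kappa-1}{4}\,W'W$, and $W'W\sim\chi^2_r=\Gamma(r/2,2)$, so the limit is $\Gamma\big(r/2,\,\tfrac{\kappa-1}{2}f(\xi_\alpha)\big)$. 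To match the stated scale $\tfrac{\kappa-1}{4}\xi_\alpha^2 f(\xi_\alpha)$ one must keep the factor $\xi_\alpha^2$ from $\hat\eta_t\approx\xi_\alpha$ inside $\eta_t^*-\hat\eta_t^*$ explicit: the correct approximation is $\eta_t^*-\hat\eta_t^*\approx -\xi_\alpha\,\hat D_t'(\hat\theta_n^*-\hat\theta_n)$ near the quantile (since the relevant $\eta_t^*$ are those close to $\hat\xi_{n,\alpha}$), whence $\sum(\eta_t^*-\hat\eta_t^*)^2$ contributes $\xi_\alpha^2\,(\sqrt n(\hat\theta_n^*-\hat\theta_n))'J(\sqrt n(\hat\theta_n^*-\hat\theta_n))$ and the scale becomes $\tfrac{\kappa-1}{4}\xi_\alpha^2 f(\xi_\alpha)$ as in the statement; I would carry out this localization carefully via the integral representation and Lemma \ref{lem:4.3}. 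Finally, applying the continuous mapping theorem (the map $u\mapsto u'Ju$ is continuous) together with the conditional-convergence transfer argument à la \citeauthor{xiong2008some} (\citeyear{xiong2008some}, Theorem 3.3) delivers $J_{n,1}^*(z)\overset{d^*}{\to}\Gamma\big(\tfrac r2,\tfrac{\kappa-1}{4}\xi_\alpha^2 f(\xi_\alpha)\big)$ in probability.

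\textbf{Main obstacle.} The delicate point is the localization that produces the factor $\xi_\alpha^2$ rather than a spurious $\tfrac1n\sum\eta_t^{*2}\hat D_t\hat D_t'$: one must show that only bootstrap innovations $\eta_t^*$ within $O(n^{-1/2})$ of $\hat\xi_{n,\alpha}$ contribute to the indicator difference, replace the corresponding $\eta_t^*$ by $\xi_\alpha$ at negligible cost, and simultaneously control the approximation $\eta_t^*-\hat\eta_t^*\approx -\eta_t^*\hat D_t'(\hat\theta_n^*-\hat\theta_n)$ uniformly — all while the integration variable $s$ ranges over a random, sign-indefinite interval of length $O_{p^*}(n^{-1/2})$. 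Handling the interplay of these two $n^{-1/2}$-scale quantities through the double sum, and justifying the exchange of $\hat{\mathbbm{F}}_n$ for $F$ via the Bahadur-type uniform bound of Lemma \ref{lem:4.3}, is where the real work lies; this is the analogue of the step in \citeauthor{francq2015risk} (\citeyear{francq2015risk}) that, as noted in Remark \ref{rem:4.5}, differs in the bootstrap world because $\hat{\mathbbm{F}}_n$ is discrete.
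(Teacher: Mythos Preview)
There is a genuine structural gap in your plan. The quantity $\eta_t^*-\hat{\eta}_t^*=\eta_t^*\big(1-\tilde{\sigma}_t(\hat{\theta}_n)/\tilde{\sigma}_t(\hat{\theta}_n^*)\big)$ depends on $\hat{\theta}_n^*$, and $\hat{\theta}_n^*$ is a function of the \emph{entire} bootstrap sample $\eta_1^*,\dots,\eta_n^*$. Hence the summands in $J_{n,1}^*(z)$ are not conditionally independent under $\PP^*$, the upper integration limit is itself $\PP^*$-random, and your ``Fubini'' expression for $\EE^*[J_{n,1}^*(z)]$ is not valid (neither is the ensuing variance bound $\sum_t\Var^*[\cdot]$). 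More fundamentally, $\EE^*[J_{n,1}^*(z)]$ is, by definition, a measurable function of the original data only; if you succeeded in showing $J_{n,1}^*(z)-\EE^*[J_{n,1}^*(z)]=o_{p^*}(1)$, the bootstrap distribution of $J_{n,1}^*(z)$ would be degenerate, which is incompatible with the non-degenerate Gamma limit you are trying to prove. The randomness in the limit comes precisely from $\sqrt{n}(\hat{\theta}_n^*-\hat{\theta}_n)$, and your mean/variance strategy averages that randomness out.

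The paper resolves this by decoupling: it introduces $T_n^*(z,u)$ obtained from $J_{n,1}^*(z)$ by replacing $\hat{\theta}_n^*$ with $\hat{\theta}_n+u/\sqrt{n}$ for a \emph{deterministic} $u$. For fixed $u$ the summands are conditionally i.i.d., so your mean/variance approach works and yields $T_n^*(z,u)\overset{p^*}{\to}\tfrac12\xi_\alpha^2 f(\xi_\alpha)\,u'Ju$ (Step~1, where the factor $\xi_\alpha^2$ emerges from the change of variables \eqref{eq:4.A.57}, not from an ad hoc localization $\eta_t^*\approx\xi_\alpha$). The crucial additional step---absent from your plan---is to upgrade this to uniformity over $\{\|u\|\le A\}$ (Step~2); this is done by a grid argument that relies on $u\mapsto T_{n,k}^*(z,u)$ being monotone, which in turn requires the monotonicity condition of Assumption~\ref{as:4.8}. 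Only after uniformity is established can one substitute $u=\sqrt{n}(\hat{\theta}_n^*-\hat{\theta}_n)$ (Step~3) and then invoke Proposition~\ref{prop:4.1} and the continuous mapping theorem to obtain the Gamma limit. As noted in Remark~\ref{rem:4.5}, this compactness/monotonicity route replaces the conditional-density argument of \citeauthor{francq2015risk}, which fails here because $\hat{\mathbbm{F}}_n$ is discrete.
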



\begin{proof}
We set $\bar{\xi}_{n,\alpha}=\hat{\xi}_{n,\alpha}+\frac{z}{\sqrt{n}}$ and define for $z \in \R$ and $u \in \R^r$
\vspace{-0.5cm}
\begin{align*}
T_{n}^*=& T_{n}^*(z,u)=\sum_{t=1}^n\tau_t^{*}\\
\tau_{t}^{*}=&\tau_{t}^{*}(z,u) =\int_{0}^{(1-\tilde{\lambda}_t^{-1}(u))\eta_t^*}(\mathbbm{1}_{\{\eta_t^*-\bar{\xi}_{n,\alpha}\leq s\}}-\mathbbm{1}_{\{\eta_t^*-\bar{\xi}_{n,\alpha}<0\}})ds\\
\tilde{\lambda}_t=&\tilde{\lambda}_t(u) = \frac{\tilde{\sigma}_t(\hat{\theta}_n+n^{-1/2}u)}{\tilde{\sigma}_t(\hat{\theta}_n)},
\end{align*}
where we suppress the dependence of $\tau_{t}^{*}$ and $\tilde{\lambda}_t$ on $n$ and drop the arguments $z$ and $u$ at times for notational simplicity. Further, we split $T_{n}^*$ into $T_{n,1}^*=\sum_{t=1}^n\mathbbm{1}_{\{\tilde{\lambda}_t>1\}}\tau_t^{*}$ and $T_{n,2}^*=\sum_{t=1}^n\mathbbm{1}_{\{\tilde{\lambda}_t<1\}}\tau_t^{*}$. Let $A>0$; We establish the lemma's claim in three steps:
\begin{enumerate}

	\item[] \textit{Step 1:} 
\vspace{-1.2cm}
\begin{align*}
T_{n,k}^*(z,u)\overset{p^*}{\to}\begin{cases}\frac{1}{2}\xi_{\alpha}^2f(\xi_\alpha)\EE\big[\mathbbm{1}_{\{D_t'u>0\}}u'D_tD_t'u\big]\quad \text{if } k=1\\
\frac{1}{2}\xi_{\alpha}^2f(\xi_\alpha)\EE\big[\mathbbm{1}_{\{D_t'u<0\}}u'D_tD_t'u\big]\quad \text{if } k=2
\end{cases}
\end{align*}
\qquad \quad \;\; in probability for all $z \in \R$ and for all $u\in \{u\in \R^r:||u||\leq A\}$; 
    
    \item[] \textit{Step 2:} $\sup_{||u||\leq A}\big|T_n^*(z,u)-\frac{1}{2}\xi_\alpha^2f(\xi_\alpha) u'Ju\big|\overset{p^*}{\to}0$ in probability for all $z \in \R$;
    
    \item[] \textit{Step 3:} $J_{n,1}^*(z) \overset{d^*}{\to}\Gamma \big(\frac{r}{2},\frac{\kappa-1}{4}\xi_\alpha^2f(\xi_\alpha)\big)$ in probability.

\end{enumerate}
Consider \textit{Step 1}; using the identity 
$\int_0^c (\mathbbm{1}_{\{x\leq s\}}-\mathbbm{1}_{\{x<0\}})ds= (x-c)(\mathbbm{1}_{\{c\leq x<0\}}-\mathbbm{1}_{\{0\leq x<c\}})$
for $c,s,x \in \R$ we rewrite $\tau_t^*$ yielding
\begin{align*}
T_{n,1}^*=\sum_{t=1}^n\mathbbm{1}_{\{\tilde{\lambda}_t>1\}}\underbrace{\tilde{\lambda}_t^{-1} \!\!\int_{0}^{(1-\tilde{\lambda}_t)\bar{\xi}_{n,\alpha}}\!\!\!\big(\mathbbm{1}_{\{\eta_t^*-\bar{\xi}_{n,\alpha}\leq s\}}-\mathbbm{1}_{\{\eta_t^*-\bar{\xi}_{n,\alpha}<0\}}\big)ds}_{=\tau_t^*}.
\end{align*}
Using Fubini's theorem and expanding, the bootstrap mean of $T_{n,1}^*$ is equal to
\begin{align}
\nonumber
&\EE^*\big[T_{n,1}^*\big]=
\sum_{t=1}^n\mathbbm{1}_{\{\tilde{\lambda}_t>1\}} \tilde{\lambda}_t^{-1}  \int_{0}^{(1-\tilde{\lambda}_t)\bar{\xi}_{n,\alpha}}\!\!\big( \hat{\mathbbm{F}}_n(\bar{\xi}_{n,\alpha}+ s)-\hat{\mathbbm{F}}_n(\bar{\xi}_{n,\alpha}-)\big)ds\\
\nonumber
=& \underbrace{\frac{1}{2}\bar{\xi}_{n,\alpha}^2f(\xi_\alpha)\frac{1}{n}\sum_{t=1}^n\mathbbm{1}_{\{\tilde{\lambda}_t>1\}} \tilde{\lambda}_t^{-1} n (\tilde{\lambda}_t-1)^2}_{I}\\
\label{eq:7821821}
 &+ \underbrace{\sum_{t=1}^n\mathbbm{1}_{\{\tilde{\lambda}_t>1\}} \tilde{\lambda}_t^{-1}  \int_{0}^{(1-\tilde{\lambda}_t)\bar{\xi}_{n,\alpha}}\!\!\big( F(\bar{\xi}_{n,\alpha}+ s)-F(\bar{\xi}_{n,\alpha})-sf(\xi_\alpha)\big)ds}_{II}\\
\nonumber
&+ \underbrace{\sum_{t=1}^n\mathbbm{1}_{\{\tilde{\lambda}_t>1\}} \tilde{\lambda}_t^{-1}\!\!\!  \int_{0}^{(1-\tilde{\lambda}_t)\bar{\xi}_{n,\alpha}}\!\!\!\!\big( \hat{\mathbbm{F}}_n(\bar{\xi}_{n,\alpha}+ s)-\hat{\mathbbm{F}}_n(\bar{\xi}_{n,\alpha}-) - F(\bar{\xi}_{n,\alpha}+ s)+F(\bar{\xi}_{n,\alpha})\big)ds}_{III}.
\end{align}
We consider each term in turn. Expanding $I$ we obtain
\begin{align*}
    I=\underbrace{\frac{1}{2}\bar{\xi}_{n,\alpha}^2f(\xi_\alpha)}_{I_1}\bigg(\underbrace{\frac{1}{n}\sum_{t=1}^n\mathbbm{1}_{\{\tilde{\lambda}_t>1\}} n (\tilde{\lambda}_t-1)^2}_{I_2} + \underbrace{\frac{1}{n}\sum_{t=1}^n\mathbbm{1}_{\{\tilde{\lambda}_t>1\}}( \tilde{\lambda}_t^{-1}-1) n (\tilde{\lambda}_t-1)^2}_{I_3}\bigg).
\end{align*}
Theorem \ref{thm:4.1} yields $\bar{\xi}_{n,\alpha}\overset{a.s.}{\to}\xi_\alpha$ such that $I_1\overset{a.s.}{\to}\frac{1}{2}\xi_{\alpha}^2f(\xi_\alpha)$. Lemma \ref{lem:4.2} implies $I_2\overset{a.s.}{\to}\EE\big[\mathbbm{1}_{\{D_t'u>0\}}u'D_t D_t'u\big]$. Further, the lemma entails
$n^{1/8}\max_{t=1,\dots,n}\big|\tilde{\lambda}_t^{-1}-1\big|\overset{a.s.}{\to}0$ as
\begin{align}
\label{eq:12479124}
    \Big(n^{1/8}\max_{t=1,\dots,n}\big|\tilde{\lambda}_t^{-1}-1\big|\Big)^3\leq \frac{1}{n^{1/8}}\underbrace{\frac{1}{n}\sum_{t=1}^{n}\big(\sqrt{n}\big|\tilde{\lambda}_t^{-1}-1\big|\big)^3}_{\overset{a.s.}{\to}\EE[|D_t'u|^3]}\overset{a.s.}{\to}0.
\end{align}
It follows that
\begin{align*}
  |I_3|\leq \max_{t=1,\dots,n}\big|\tilde{\lambda}_t^{-1}-1\big| \underbrace{\frac{1}{n}\sum_{t=1}^n\mathbbm{1}_{\{\tilde{\lambda}_t>1\}} n (\tilde{\lambda}_t-1)^2}_{=I_2}\overset{a.s.}{\to}0,
\end{align*}
which establishes $I\overset{a.s.}{\to}\frac{1}{2}\xi_{\alpha}^2f(\xi_\alpha)\EE\big[\mathbbm{1}_{\{D_t'u>0\}}u'D_t D_t'u\big]$. Consider $II$ in \eqref{eq:7821821}; we define 
\begin{align*}
\bar{\xi}_{n,\alpha}^+= \bar{\xi}_{n,\alpha}+ \max_{t=1,\dots,n}|\tilde{\lambda}_t-1|\:|\bar{\xi}_{n,\alpha}|\\
\bar{\xi}_{n,\alpha}^-= \bar{\xi}_{n,\alpha}- \max_{t=1,\dots,n}|\tilde{\lambda}_t-1|\:|\bar{\xi}_{n,\alpha}|
\end{align*}
and set $\mathcal{I}_n=[\xi_\alpha -a_n,\xi_\alpha +a_n]$ with $a_n \sim n^{-1/8} \log n$. Similar to \eqref{eq:12479124} we obtain
\begin{align}
\label{eq:4.A.64}
n^{1/8}\max_{t=1,\dots,n}\big|\tilde{\lambda}_t-1\big| \overset{a.s.}{\to}0
\end{align}
and together with $\sqrt{n}(\hat{\xi}_{n,\alpha}-\xi_\alpha)=O_p(1)$ we find that $n^{1/8}\big(\bar{\xi}_{n,\alpha}^+ -\xi_\alpha\big)\overset{p}{\to}0$ and $n^{1/8}\big(\bar{\xi}_{n,\alpha}^--\xi_\alpha\big)\overset{p}{\to}0$. Hence, the probabilities of the events $\big \{\bar{\xi}_{n,\alpha}^+\notin \mathcal{I}_n \big\}$ and $\big \{\bar{\xi}_{n,\alpha}^-\notin \mathcal{I}_n \big\}$ can be made arbitrarily small for large $n$. If $\bar{\xi}_{n,\alpha}^+$ and $\bar{\xi}_{n,\alpha}^-$ belong to $\mathcal{I}_n$, then
\begin{align*}
|II|=& \bigg|\sum_{t=1}^n\mathbbm{1}_{\{\tilde{\lambda}_t>1\}} \tilde{\lambda}_t^{-1}  \int_{0}^{(1-\tilde{\lambda}_t)\bar{\xi}_{n,\alpha}}\!\!\!s\big( f(\bar{\xi}_{n,\alpha}+\varepsilon_{t,n})-f(\xi_\alpha)\big)ds\bigg|\\
\leq& \frac{1}{2} \bar{\xi}_{n,\alpha}^2 \sup_{x \in \mathcal{I}_n}\big| f(x)-f(\xi_\alpha)\big| \underbrace{\frac{1}{n}\sum_{t=1}^n \mathbbm{1}_{\{\tilde{\lambda}_t>1\}} \tilde{\lambda}_t^{-1} n\big(\tilde{\lambda}_t-1\big)^2}_{=I_2+I_3}.
\end{align*}
with $\varepsilon_{t,n}$ between $0$ and $(1-\tilde{\lambda}_t)\bar{\xi}_{n,\alpha}$. As 
$\mathcal{I}_n$ shrinks to $\xi_\alpha$ and $f$ is continuous in a neighborhood of $\xi_\alpha$ (see Assumption \ref{as:4.4}(\ref{as:4.4.2})) we have 
$ \sup_{x \in \mathcal{I}_n}\big| f(x)-f(\xi_\alpha)\big|\to 0$. Together with $\bar{\xi}_{n,\alpha}\overset{a.s.}{\to} \xi_{\alpha}$ and $I_2+I_3\overset{a.s.}{\to}\EE[ \mathbbm{1}_{\{D_t'u>0\}} u'D_tD_t'u]$ we establish $II\overset{p}{\to}0$. Focusing on $III$ in \eqref{eq:7821821}, we only consider the case of $\hat{\xi}_{n,\alpha}^+, \hat{\xi}_{n,\alpha}^-\in \mathcal{I}_n$. In this case $\bar{\xi}_{n,\alpha}$ and $\bar{\xi}_{n,\alpha}+ s$ belong to $\mathcal{I}_n$ for all $s$ between $0$ and $(1-\tilde{\lambda}_t)\bar{\xi}_{n,\alpha}$ for all $t$. We obtain
\begin{align*}
|III|\leq & \big|\bar{\xi}_{n,\alpha}\big|\!\sup_{x,y \in \mathcal{I}_n}\!\Big| \sqrt{n}\big(\hat{\mathbbm{F}}_n(x)-\hat{\mathbbm{F}}_n(y-)\big)- \sqrt{n}\big(F(x)-F(y)\big)\Big|\frac{1}{n} \sum_{t=1}^n \sqrt{n}  \big|\tilde{\lambda}_t^{-1}-1\big|\overset{a.s.}{\to}0
\end{align*}
by $\bar{\xi}_{n,\alpha}\overset{a.s.}{\to}\xi_{\alpha}$ and Lemmas \ref{lem:4.2} and \ref{lem:4.3}. We conclude $III\overset{p}{\to}0$ and establish
\begin{align}
\label{eq:4.A.66}
\EE\big[T_{n,1}^*\big] \overset{p}{\to} \frac{1}{2}\xi_\alpha f(\xi_\alpha)\EE\big[ \mathbbm{1}_{\{D_t'u>0\}} u'D_tD_t'u\big].
\end{align}
Employing \eqref{eq:4.A.55}, the bootstrap variance of  $T_{n,1}^*$ is bounded by
\begin{align*}
\Var^*\big[T_{n,1}^*\big]
=& \sum_{t=1}^n \mathbbm{1}_{\{\tilde{\lambda}_t>1\}}\tilde{\lambda}_t^{-2} \Var^*\bigg[\!\int_{0}^{(1-\tilde{\lambda}_t)\bar{\xi}_{n,\alpha}}\!\!\!\!\!(\mathbbm{1}_{\{\eta_t^*-\bar{\xi}_{n,\alpha}\leq s\}}-\mathbbm{1}_{\{\eta_t^*-\bar{\xi}_{n,\alpha}<0\}})ds\bigg]\\
\leq & \sum_{t=1}^n \tilde{\lambda}_t^{-2} \big|\tilde{\lambda}_t-1\big|\:\big|\bar{\xi}_{n,\alpha}\big| \EE^*\bigg[\!\int_{0}^{(1-\tilde{\lambda}_t)\bar{\xi}_{n,\alpha}}\!\!\!\!(\mathbbm{1}_{\{\eta_t^*-\bar{\xi}_{n,\alpha}\leq s\}}-\mathbbm{1}_{\{\eta_t^*-\bar{\xi}_{n,\alpha}<0\}})ds\bigg]\\
= & \big|\bar{\xi}_{n,\alpha}\big|\sum_{t=1}^n \tilde{\lambda}_t^{-2} \big|\tilde{\lambda}_t-1\big|  \int_{0}^{(1-\tilde{\lambda}_t)\bar{\xi}_{n,\alpha}}\!\!\!\!\big(\hat{\mathbbm{F}}_n(\bar{\xi}_{n,\alpha}+s)-\hat{\mathbbm{F}}_n(\bar{\xi}_{n,\alpha}-\!)\big)ds\\
\leq & \bar{\xi}_{n,\alpha}^2\frac{1}{n}\sum_{t=1}^n  n\big|\tilde{\lambda}_t^{-1}-1\big|^2 \big(\hat{\mathbbm{F}}_n(\bar{\xi}_{n,\alpha}^+) -\hat{\mathbbm{F}}_n(\bar{\xi}_{n,\alpha}^-)\big).
\end{align*}
We have $\bar{\xi}_{n,\alpha}^2\overset{a.s.}{\to}\xi_\alpha^2$ and $\frac{1}{n}\sum_{t=1}^n  n\big|\tilde{\lambda}_t^{-1}-1\big|^2\overset{a.s.}{\to}\EE[u'D_tD_t'u]$ by Lemma \ref{lem:4.2}. Moreover,  $\hat{\mathbbm{F}}_n(\bar{\xi}_{n,\alpha}^+) -\hat{\mathbbm{F}}_n(\bar{\xi}_{n,\alpha}^-)\overset{p}{\to}0$ since $\bar{\xi}_{n,\alpha}^+ \overset{p}{\to}\xi_\alpha$, $\bar{\xi}_{n,\alpha}^- \overset{p}{\to}\xi_\alpha$  and  $\sup_{x \in \R}|\hat{\mathbbm{F}}_n(x)-F(x)|\overset{a.s.}{\to}0$ (Lemma \ref{lem:4.1}) and $\Var^*[T_{n,1}^*]\overset{p}{\to}0$ follows. Together with \eqref{eq:4.A.66} we establish $T_{n,1}^*\overset{p^*}{\to}\frac{1}{2}\xi_{\alpha}^2f(\xi_\alpha)\EE[\mathbbm{1}_{\{D_t'u>0\}}u'D_tD_t'u]$ in probability. The proof of $T_{n,2}^*\overset{p^*}{\to}\frac{1}{2}\xi_{\alpha}^2f(\xi_\alpha)\EE[\mathbbm{1}_{\{D_t'u<0\}}u'D_tD_t'u]$ in probability is analogous and hence omitted.

Regarding \textit{Step 2} the triangle inequality yields
\begin{align}
\label{eq:4.A.67}
\begin{split}
\sup_{||u||\leq A}\Big|T_n^*(z,u)-\plim\limits_{n\to \infty}T_{n}^*(z,u)\Big|\leq& \sup_{||u||\leq A}\Big|T_{n,1}^*(z,u)-\plim\limits_{n\to \infty}T_{n,1}^*(z,u)\big]\Big|\\
&\quad +\sup_{||u||\leq A}\Big|T_{n,2}^*(z,u)-\plim\limits_{n\to \infty}T_{n,2}^*(z,u)\big]\Big|.
\end{split}
\end{align}
Let $N\geq 1$ be an integer. We divide the (hyper-)cube $[-A,A]^r$ into $L=(2N)^r$ cubes with side length $A/N$. Let $u_\bullet(\ell)$ and $u^\bullet(\ell)$ denote the lower left and upper right vertex of cube $\ell$. For $u$ satisfying $u_\bullet(\ell)\leq u\leq u^\bullet(\ell)$ (element-by-element comparison) Assumption \ref{as:4.8} implies $\tilde{\lambda}_t(u_\bullet(\ell))\leq \tilde{\lambda}_t(u)\leq \tilde{\lambda}_t(u^\bullet(\ell))$. Further, Theorem \ref{thm:4.1} results in $\bar{\xi}_{n,\alpha}\overset{a.s.}{\to}\xi_\alpha<0$. Thus, we have for $n$ sufficiently large
\begin{align*}
T_{n,1}^*\big(z,u_\bullet(\ell)\big)\leq& T_{n,1}^*(z,u)\leq T_{n,1}^*\big(z,u^\bullet(\ell)\big)\\
T_{n,2}^*\big(z,u^\bullet(\ell)\big)\leq& T_{n,2}^*(z,u)\leq T_{n,2}^*\big(z,u_\bullet(\ell)\big).
\end{align*}
Let $k\in\{1,2\}$; we obtain
\begin{align*}
&\sup_{||u||\leq A}\Big|T_{n,k}^*(z,u)-\plim\limits_{n\to \infty} T_{n,k}^*(z,u)\Big|\\
\leq&   \max_{1 \leq \ell\leq L}  \Big|T_{n,k}^*\big(z,u^\bullet(\ell)\big)-\plim\limits_{n\to \infty}T_{n,k}^*\big(z,u^\bullet(\ell)\big)\Big|+ \underbrace{\max_{1 \leq \ell\leq L} \sup_{u_\bullet(\ell)\leq u\leq u^\bullet(\ell)} \Big|T_{n,k}^*\big(z,u^\bullet(\ell)\big)-T_{n,k}^*(z,u)\Big|}_{A_n}\\
&\:\:+ \underbrace{\max_{1 \leq \ell\leq L} \sup_{u_\bullet(\ell)\leq u\leq u^\bullet(\ell)} \Big|\plim\limits_{n\to \infty} T_{n,k}^*\big(z,u^\bullet(\ell)\big)-\plim\limits_{n\to \infty}T_{n,k}^*(z,u)\Big)\Big|}_{B_n}
\end{align*}
with
\begin{align*}
A_n\leq & \max_{1 \leq \ell\leq L}  \Big|T_{n,k}^*\big(z,u^\bullet(\ell)\big)-T_{n,k}^*\big(z,u_\bullet(\ell)\big)\Big|\\
\leq & \max_{1 \leq \ell\leq L} \Big|T_{n,k}^*\big(z,u^\bullet(\ell)\big)-\plim\limits_{n\to \infty}T_{n,k}^*\big(z,u^\bullet(\ell)\big)\Big|\\
&\qquad + \max_{1 \leq \ell\leq L}  \Big|T_{n,k}^*\big(z,u_\bullet(\ell)\big)-\plim\limits_{n\to \infty}T_{n,k}^*\big(z,u_\bullet(\ell)\big)\Big|\\
&\qquad \qquad + \max_{1 \leq \ell\leq L}  \Big|\plim\limits_{n\to \infty}T_{n,k}^*\big(z,u^\bullet(\ell)\big)-\plim\limits_{n\to \infty}T_{n,k}^*\big(z,u_\bullet(\ell)\big)\Big|\\ \\
B_n\leq& \max_{1 \leq \ell\leq L}  \Big|\plim\limits_{n\to \infty}T_{n,k}^*\big(z,u^\bullet(\ell)\big)-\plim\limits_{n\to \infty}T_{n,k}^*\big(z,u_\bullet(\ell)\big)\Big|.
\end{align*}
Hence, we establish the following bound
\begin{align*}
\sup_{||u||\leq A}\Big|T_{n,k}^*(z,u)-\plim\limits_{n\to \infty}T_{n,k}^*(z,u)\Big|
\leq& 2IV+V+2VI
\end{align*}
with
\begin{align*}
IV=&\max_{1 \leq \ell\leq L}  \Big|\plim\limits_{n\to \infty}T_{n,k}^*\big(z,u^\bullet(\ell)\big)-\plim\limits_{n\to \infty}T_{n,k}^*\big(z,u_\bullet(\ell)\big)\Big|\\
V=& \max_{1 \leq \ell\leq L}  \Big|T_{n,k}^*\big(z,u_\bullet(\ell)\big)-\plim\limits_{n\to \infty}T_{n,k}^*\big(z,u_\bullet(\ell)\big)\Big|\\
VI=& \max_{1 \leq \ell\leq L} \Big|T_{n,k}^*\big(z,u^\bullet(\ell)\big)-\plim\limits_{n\to \infty}T_{n,k}^*\big(z,u^\bullet(\ell)\big)\Big|.
\end{align*}
Regarding $IV$, we have for every $u$ satisfying $||u||\leq A$ that
\begin{align*}
\plim\limits_{n\to \infty}T_{n,k}^*(z,u)=\begin{cases}\frac{1}{2}\xi_{\alpha}^2f(\xi_\alpha)\EE\big[\mathbbm{1}_{\{D_t'u>0\}}u'D_tD_t'u\big]\quad \text{if } k=1\\
\frac{1}{2}\xi_{\alpha}^2f(\xi_\alpha)\EE\big[\mathbbm{1}_{\{D_t'u<0\}}u'D_tD_t'u\big]\quad \text{if } k=2
\end{cases}
\end{align*}
is continuous in $u$. Together with $||u^\bullet(\ell)-u_\bullet(\ell)||\leq \frac{A}{N}$ for every $\ell$, it follows that $IV$ can be made arbitrarily small by choosing $N$ sufficiently large. Given $N$ (and $L$), $V\overset{p^*}{\to}0$ in probability and $VI\overset{p^*}{\to}0$ in probability by \textit{Step 1}, which completes \textit{Step 2}.

Consider \textit{Step 3}; for each $\varepsilon>0$ we obtain
\begin{align*}
&\PP^*\bigg[\bigg|J_{n,1}^*(z)- \frac{1}{2}\xi_\alpha^2f(\xi_\alpha) \sqrt{n}\big(\hat{\theta}_n^*-\hat{\theta}_n\big)'J\sqrt{n}\big(\hat{\theta}_n^*-\hat{\theta}_n\big)\bigg|\geq \varepsilon \bigg]\\
\leq & \PP^*\bigg[\sup_{||u||\leq A}\bigg|T_n^*(u)- \frac{1}{2}\xi_\alpha^2f(\xi_\alpha) u'Ju\bigg|\geq \varepsilon \bigg]+\PP^*\Big[\sqrt{n}||\hat{\theta}_n^*-\hat{\theta}_n||> A\Big].
\end{align*}
With regard to Proposition \ref{prop:4.1}, the second term can be made arbitrarily small for large $n$ by choosing $A$ sufficiently large. Given $A$, the first term vanishes in probability by \textit{Step 2}. Expanding $\frac{1}{2} = \frac{\kappa-1}{8} \frac{4}{\kappa-1}$, we establish
\begin{align*}
J_{n,1}^*(z)= \frac{\kappa-1}{8}\xi_\alpha^2 f(\xi_\alpha) \sqrt{n}\big(\hat{\theta}_n^*-\hat{\theta}_n\big)'\frac{4}{\kappa-1} J \sqrt{n}\big(\hat{\theta}_n^*-\hat{\theta}_n\big) +o_{p^*}(1)
\end{align*}
in probability. Proposition \ref{prop:4.1} implies that $\sqrt{n}(\hat{\theta}_n^*-\hat{\theta}_n)'\frac{4}{\kappa-1} J \sqrt{n}(\hat{\theta}_n^*-\hat{\theta}_n)\overset{d^*}{\to}\chi_r^2$ almost surely, where $\chi_r^2$ denotes the Chi Square distribution with $r$ degrees of freedom. Further, note that $Y=cQ$ with $c>0$ and $Q\sim \chi_r^2$ implies $Y\sim \Gamma(r/2,2c)$. It follows that 
%
$J_{n,1}^*(z) \overset{d^*}{\to}\Gamma \big(\frac{r}{2},\frac{\kappa-1}{4}\xi_\alpha^2f(\xi_\alpha)\big)$
%
in probability, which establishes the lemma's claim.
\end{proof}

\color{black}

\begin{remark}
\label{rem:4.5}
In the preceding proof of Lemma \ref{lem:4.9} a compactness/supremum argument is employed, in which the monotonicity condition of Assumption \ref{as:4.8} plays a central role. In contrast, the proof of \citeauthor{francq2015risk} (\citeyear{francq2015risk}, p.172) rests on a conditional argument involving the density of $\eta_t$ given $\{\hat{\theta}_n-\theta_0,\eta_u : u<t\}$. This argument does not carry over to the residual bootstrap since the probability mass function of $\eta_t^*$ given $\{\hat{\theta}_n^*-\hat{\theta}_n,\eta_u^* : u<t\}$ and $\mathcal{F}_n$ has, almost surely, a single point mass.
\end{remark}

\begin{lemma}
\label{lem:4.10}
Suppose Assumptions \ref{as:4.1}--\ref{as:4.10} with $a=\pm12$, $b=12$ and $c=6$. Then, $J_{n,2}^*(z)$ given in
\eqref{eq:4.4.7} satisfies $J_{n,2}^*(z)=z\xi_\alpha f(\xi_\alpha)\Omega'\sqrt{n}\big(\hat{\theta}_n^*-\hat{\theta}_n\big)+o_{p^*}(1)$ in probability.
\end{lemma}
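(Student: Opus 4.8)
The plan is to analyze
\[
J_{n,2}^*(z)=\sum_{t=1}^n \big(\eta_t^*-\hat{\eta}_t^*\big)\big(\mathbbm{1}_{\{\eta_t^*<\hat{\xi}_{n,\alpha}+z/\sqrt{n}\}}-\mathbbm{1}_{\{\eta_t^*<\hat{\xi}_{n,\alpha}\}}\big)
\]
by first rewriting the difference $\eta_t^*-\hat{\eta}_t^*$ in a form that isolates $\sqrt{n}(\hat{\theta}_n^*-\hat{\theta}_n)$. Since $\hat{\eta}_t^*=\epsilon_t^*/\tilde{\sigma}_t(\hat{\theta}_n^*)=\eta_t^* \tilde{\sigma}_t(\hat{\theta}_n)/\tilde{\sigma}_t(\hat{\theta}_n^*)=\eta_t^*\tilde{\lambda}_t^{-1}(\hat u_n)$ with $\hat u_n=\sqrt{n}(\hat{\theta}_n^*-\hat{\theta}_n)$ in the notation of Lemma \ref{lem:4.9}, we have $\eta_t^*-\hat{\eta}_t^*=\eta_t^*\big(1-\tilde{\lambda}_t^{-1}(\hat u_n)\big)$. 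A first-order expansion as in \eqref{eq:4.A.58}--\eqref{eq:4.A.61} gives $\sqrt{n}\big(1-\tilde{\lambda}_t^{-1}(\hat u_n)\big)=\hat{D}_t'\hat u_n+R_{n,t}$ with a uniformly controllable remainder; hence $\eta_t^*-\hat{\eta}_t^*=\frac{1}{\sqrt{n}}\eta_t^*\hat{D}_t'\hat u_n+\frac{1}{\sqrt{n}}\eta_t^* R_{n,t}$. Substituting,
\[
J_{n,2}^*(z)=\hat u_n'\,\frac{1}{\sqrt{n}}\sum_{t=1}^n \eta_t^*\hat{D}_t\big(\mathbbm{1}_{\{\eta_t^*<\hat{\xi}_{n,\alpha}+z/\sqrt{n}\}}-\mathbbm{1}_{\{\eta_t^*<\hat{\xi}_{n,\alpha}\}}\big)+\text{(remainder)}.
\]

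The main step is then to show that the vector-valued statistic
\[
W_n^*(z):=\frac{1}{\sqrt{n}}\sum_{t=1}^n \eta_t^*\hat{D}_t\big(\mathbbm{1}_{\{\eta_t^*<\hat{\xi}_{n,\alpha}+z/\sqrt{n}\}}-\mathbbm{1}_{\{\eta_t^*<\hat{\xi}_{n,\alpha}\}}\big)\overset{p^*}{\to} z\,\xi_\alpha f(\xi_\alpha)\,\Omega
\]
in probability. For the conditional mean, condition on $\mathcal F_n$ and note that $\eta_t^*$ is drawn from $\hat{\mathbbm F}_n$ independently of $\hat{D}_t$ (which is $\mathcal F_n$-measurable), so $\EE^*[W_n^*(z)]=\big(\frac{1}{n}\sum_{t=1}^n \hat D_t\big)\cdot \frac{1}{\sqrt{n}}\EE^*\big[n\eta_1^*(\mathbbm{1}_{\{\eta_1^*<\hat{\xi}_{n,\alpha}+z/\sqrt{n}\}}-\mathbbm{1}_{\{\eta_1^*<\hat{\xi}_{n,\alpha}\}})\big]$. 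By Lemma \ref{lem:4.2}(i), $\frac{1}{n}\sum \hat D_t\overset{a.s.}{\to}\Omega$; for the scalar factor I would write it as $\sqrt{n}\int_{\hat{\xi}_{n,\alpha}}^{\hat{\xi}_{n,\alpha}+z/\sqrt{n}} x\, d\hat{\mathbbm F}_n(x)$, split off $\sqrt{n}\int_{\hat{\xi}_{n,\alpha}}^{\hat{\xi}_{n,\alpha}+z/\sqrt{n}} x\, dF(x)\to z\xi_\alpha f(\xi_\alpha)$ by the mean value theorem plus $\hat{\xi}_{n,\alpha}\overset{a.s.}{\to}\xi_\alpha$, and bound the $\hat{\mathbbm F}_n-F$ part using the uniform Bahadur-type control of Lemma \ref{lem:4.3} on a shrinking interval $\mathcal I_n$ around $\xi_\alpha$ (the integrand $x$ being bounded there since $\xi_\alpha<0$). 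For the conditional variance, use $\Var^*[W_n^*(z)]\le \frac{1}{n}\sum_{t=1}^n \|\hat D_t\|^2\,\EE^*\big[\eta_1^{*2}(\mathbbm{1}_{\{\eta_1^*<\hat{\xi}_{n,\alpha}+z/\sqrt{n}\}}-\mathbbm{1}_{\{\eta_1^*<\hat{\xi}_{n,\alpha}\}})^2\big]$; the edf mass in the window $(\hat{\xi}_{n,\alpha},\hat{\xi}_{n,\alpha}+z/\sqrt{n}]$ is $O_p(n^{-1/2})$ by Lemma \ref{lem:4.3} and $\eta_1^{*2}$ is bounded on that window, while $\frac1n\sum\|\hat D_t\|^2\overset{a.s.}{\to}\EE\|D_t\|^2<\infty$ (Assumption \ref{as:4.9}(ii), $b=12$), so the whole bound is $O_p(n^{-1/2})\to 0$. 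Chebyshev's inequality conditionally then yields $W_n^*(z)\overset{p^*}{\to}z\xi_\alpha f(\xi_\alpha)\Omega$ in probability.

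It remains to dispose of the remainder term $\frac{1}{\sqrt n}\sum_t \eta_t^* R_{n,t}\big(\mathbbm{1}_{\{\eta_t^*<\hat{\xi}_{n,\alpha}+z/\sqrt{n}\}}-\mathbbm{1}_{\{\eta_t^*<\hat{\xi}_{n,\alpha}\}}\big)$ and to combine. By \eqref{eq:4.A.61}, $|R_{n,t}|\le \frac{A^2}{\sqrt n}\big(S_tT_t+\tfrac{C_1\rho^t}{\underline\omega}(1+S_tT_t)\big)\big(V_t+\tfrac{C_1\rho^t}{\underline\omega}(1+V_t)\big)$ on the event $\{\sqrt n\|\hat\theta_n^*-\hat\theta_n\|\le A\}$, which by Proposition \ref{prop:4.1} has conditional probability close to one for $A$ large. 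Absolute-value bounding and H\"older as in the proof of Lemma \ref{lem:4.9} (using $a=\pm 12$, $b=12$, $c=6$) show $\frac{1}{n}\sum_t |\eta_t^*|\,\|\hat D_t\|\cdot$(that factor)$=O_{p^*}(n^{-1/2})$ in probability, so this contribution is $o_{p^*}(1)$. Finally, writing $J_{n,2}^*(z)=\hat u_n' W_n^*(z)+o_{p^*}(1)=\sqrt n(\hat\theta_n^*-\hat\theta_n)'\big(z\xi_\alpha f(\xi_\alpha)\Omega+o_{p^*}(1)\big)+o_{p^*}(1)$, and using $\sqrt n(\hat\theta_n^*-\hat\theta_n)=O_{p^*}(1)$ in probability (Proposition \ref{prop:4.1}), we obtain $J_{n,2}^*(z)=z\xi_\alpha f(\xi_\alpha)\Omega'\sqrt n(\hat\theta_n^*-\hat\theta_n)+o_{p^*}(1)$ in probability, as claimed. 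The main obstacle I anticipate is the mean computation for $W_n^*(z)$: one must carefully handle that the quantile level $\hat{\xi}_{n,\alpha}$ is itself random and $\overset{a.s.}{\to}\xi_\alpha$ only, so the Bahadur-type uniformity of Lemma \ref{lem:4.3} over a shrinking window (together with the boundedness of the weight $x$ near $\xi_\alpha<0$) is exactly what makes the $\hat{\mathbbm F}_n-F$ error negligible after multiplication by $\sqrt n$; getting that localization argument airtight is the delicate part.
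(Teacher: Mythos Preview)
Your proposal is correct and follows essentially the same route as the paper: both rewrite $\eta_t^*-\hat\eta_t^*=\eta_t^*(1-\tilde\sigma_t(\hat\theta_n)/\tilde\sigma_t(\hat\theta_n^*))$, Taylor-expand to isolate $\hat D_t'(\hat\theta_n^*-\hat\theta_n)$ plus a second-order remainder, and then show that the leading vector (your $W_n^*(z)$, the paper's term $I$) converges to $z\xi_\alpha f(\xi_\alpha)\Omega$ via a conditional mean/variance argument using Lemma~\ref{lem:4.2} and the local Bahadur control of Lemma~\ref{lem:4.3}, while the remainder vanishes because $\EE^*\big[|\eta_t^*(\mathbbm{1}_{\{\eta_t^*<\hat\xi_{n,\alpha}+z/\sqrt n\}}-\mathbbm{1}_{\{\eta_t^*<\hat\xi_{n,\alpha}\}})|\big]\to 0$ and the Hessian-type factors average to $O(1)$. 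The only cosmetic slip is the stray $\|\hat D_t\|$ in your remainder bound (it does not appear there), and note that the paper computes the scalar $\sqrt n\int x\,d\hat{\mathbbm F}_n$ via partial integration rather than your direct $\hat{\mathbbm F}_n-F$ split, but both reductions rely on the same Lemma~\ref{lem:4.3}.
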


\begin{proof}
Inserting $\hat{\eta}_t^*=\frac{\tilde{\sigma}_t(\hat{\theta}_n)}{\tilde{\sigma}_t(\hat{\theta}_n^*)}\eta_t^*$ into  \eqref{eq:4.4.7} leads to
\begin{align}
\label{eq:4.A.68}
J_{n,2}^*(z)=\sum_{t=1}^n \bigg(1-\frac{\tilde{\sigma}_t(\hat{\theta}_n)}{\tilde{\sigma}_t(\hat{\theta}_n^*)}\bigg)\underbrace{\eta_t^*\big(\mathbbm{1}_{\{\eta_t^*<\hat{\xi}_{n,\alpha}+\frac{z}{\sqrt{n}}\}}-\mathbbm{1}_{\{\eta_t^*<\hat{\xi}_{n,\alpha}\}}\big)}_{j_{n,t}^{*(2)}(z)}.
\end{align}
A Taylor expansion around $\hat{\theta}_n$ yields 
\begin{align}
\label{eq:4.A.69}
1-\frac{\tilde{\sigma}_t(\hat{\theta}_n)}{\tilde{\sigma}_t(\hat{\theta}_n^*)}=&\frac{1}{\tilde{\sigma}_t(\hat{\theta}_n)}\frac{\partial \tilde{\sigma}_t(\hat{\theta}_n)}{\partial \theta}\big(\hat{\theta}_n^*-\hat{\theta}_n\big)\\
\nonumber
& +\frac{1}{2}\big(\hat{\theta}_n^*-\hat{\theta}_n\big)'\frac{\tilde{\sigma}_t(\hat{\theta}_n)}{\tilde{\sigma}_t(\breve{\theta}_n)}\bigg(\frac{1}{\tilde{\sigma}_t(\breve{\theta}_n)}\frac{\partial^2 \tilde{\sigma}_t(\breve{\theta}_n)}{\partial \theta \partial \theta'}-\frac{2}{\tilde{\sigma}_t^2(\breve{\theta}_n)}\frac{\partial \tilde{\sigma}_t(\breve{\theta}_n)}{\partial \theta}\frac{\partial \tilde{\sigma}_t(\breve{\theta}_n)}{\partial \theta'}\bigg) \big(\hat{\theta}_n^*-\hat{\theta}_n\big)\\
\nonumber
=& \hat{D}_t'\big(\hat{\theta}_n^*-\hat{\theta}_n\big)+\frac{1}{2}\big(\hat{\theta}_n^*-\hat{\theta}_n\big)'\frac{\tilde{\sigma}_t(\hat{\theta}_n)}{\tilde{\sigma}_t(\breve{\theta}_n)}\Big(\tilde{H}_t(\breve{\theta}_n)-2\tilde{D}_t(\breve{\theta}_n)\tilde{D}_t'(\breve{\theta}_n)\Big) \big(\hat{\theta}_n^*-\hat{\theta}_n\big),
\end{align}
where $\breve{\theta}_n$ lies between $\hat{\theta}_n^*$ and $\hat{\theta}_n$. Plugging this result into \eqref{eq:4.A.68} gives 
\begin{align*}
J_{n,2}^*(z)=&\underbrace{\frac{1}{\sqrt{n}}\sum_{t=1}^n j_{n,t}^{*(2)}(z)\hat{D}_t'}_{I}\sqrt{n}\big(\hat{\theta}_n^*-\hat{\theta}_n\big)\\
&\!\! +\frac{1}{2}\sqrt{n}\big(\hat{\theta}_n^*-\hat{\theta}_n\big)'\underbrace{\frac{1}{n}\sum_{t=1}^n\frac{\tilde{\sigma}_t(\hat{\theta}_n)}{\tilde{\sigma}_t(\breve{\theta}_n)}\Big(\tilde{H}_t(\breve{\theta}_n)-2\tilde{D}_t(\breve{\theta}_n)\tilde{D}_t'(\breve{\theta}_n)\Big) j_{n,t}^{*(2)}(z)}_{II}\sqrt{n}\big(\hat{\theta}_n^*-\hat{\theta}_n\big).
\end{align*}
With regard to Proposition \ref{prop:4.1}, it suffices to show that $I\overset{p^*}{\to} \xi_{\alpha}zf(\xi_\alpha)\Omega'$ in probability and $II\overset{p^*}{\to} 0$ in probability. The conditional mean and variance of the first term are
\begin{align}
\label{eq:4.A.70}
\begin{split}
\EE^*[I]=&\sqrt{n}\EE^*\big[j_{n,t}^{*(2)}\big]\frac{1}{n}\sum_{t=1}^n\hat{D}_t'=\sqrt{n}\EE^*\big[j_{n,t}^{*(2)}(z)\big]\hat{\Omega}_n'\\
\Var^*[I]=&\Var^*\big[j_{n,t}^{*(2)}\big]\frac{1}{n}\sum_{t=1}^n\hat{D}_t\hat{D}_t'
= \Var^*\big[j_{n,t}^{*(2)}(z)\big]\hat{J}_n.
\end{split}
\end{align}
Lemma \ref{lem:4.2} states $\hat{\Omega}_n\overset{a.s.}{\to}\Omega$ and $\hat{J}_n\overset{a.s.}{\to}J$. Further, we have $\sqrt{n}\EE^*\big[j_{n,t}^{*(2)}(z)\big]\overset{p}{\to}z\xi_\alpha f(\xi_\alpha)$ and $\sqrt{n}\EE^*\Big[\big(j_{n,t}^{*(2)}(z)\big)^2\Big]\overset{p}{\to}|z| \xi_\alpha^2 f(\xi_\alpha)$, which implies $\Var^*\big[j_{n,t}^{*(2)}(z)\big]\overset{p}{\to}0$. To appreciate why, we obtain for $z\geq 0$ 
\begin{align*}
&\sqrt{n}\EE^*\big[j_{n,t}^{*(2)}(z)\big]
=\sqrt{n} \int_{\big[\hat{\xi}_{n,\alpha},\hat{\xi}_{n,\alpha}+\frac{z}{\sqrt{n}}\big)}x\:d\hat{\mathbbm{F}}_n(x)\\
=& \Big(\hat{\xi}_{n,\alpha}+\frac{z}{\sqrt{n}}\Big) \sqrt{n}\hat{\mathbbm{F}}_n\Big(\hat{\xi}_{n,\alpha}+\frac{z}{\sqrt{n}}-\!\Big)- \hat{\xi}_{n,\alpha} \sqrt{n}\hat{\mathbbm{F}}_n(\hat{\xi}_{n,\alpha}-)-\sqrt{n}\!\int_{\big[\hat{\xi}_{n,\alpha},\hat{\xi}_{n,\alpha}+\frac{z}{\sqrt{n}}\big)}\!\!\hat{\mathbbm{F}}_n(x)\:dx\\
=& \underbrace{\hat{\xi}_{n,\alpha} \sqrt{n}\bigg(\hat{\mathbbm{F}}_n\Big(\hat{\xi}_{n,\alpha}+\frac{z}{\sqrt{n}}-\!\Big)-\hat{\mathbbm{F}}_n\big(\hat{\xi}_{n,\alpha}-\big)\bigg)}_{I_1}+\underbrace{z\hat{\mathbbm{F}}_n\Big(\hat{\xi}_{n,\alpha}+\frac{z}{\sqrt{n}}-\!\Big)}_{I_2}\\
&\qquad -\underbrace{\int_{[0,z)}  \hat{\mathbbm{F}}_n \Big(\hat{\xi}_{n,\alpha}+\frac{y}{\sqrt{n}}\Big)\:dy}_{I_3}.
\end{align*}
Using Lemma \ref{lem:4.3} and the mean value theorem, we find
\begin{align*}
I_1 = \hat{\xi}_{n,\alpha} \sqrt{n}\bigg(F\Big(\hat{\xi}_{n,\alpha}+\frac{z}{\sqrt{n}}-\!\Big)-F\big(\hat{\xi}_{n,\alpha}\big)\bigg)+o_p(1)= z\hat{\xi}_{n,\alpha} f\big(\hat{\xi}_{n,\alpha}+\varepsilon_n\big)+o_p(1),
\end{align*}
where $0\leq \varepsilon_n\leq z/\sqrt{n}$, and together with Theorem \ref{thm:4.1} we establish $I_1\overset{p}{\to} z\xi_{\alpha}f(\xi_\alpha)$. Moreover, Theorem \ref{thm:4.1} and Lemma \ref{lem:4.1} imply $I_2\overset{p}{\to}zF(\xi_\alpha)$ and using additionally the dominated convergence theorem, we obtain $I_3 \overset{p}{\to}zF(\xi_\alpha)$. Hence, $\sqrt{n}\EE^*\big[j_{n,t}^{*(2)}(z)\big]\overset{p}{\to} z\xi_{\alpha}f(\xi_\alpha)$ for $z\geq 0$ and analogously one can show it to hold for $z<0$. Similarly, we find for $z\geq 0$
\begin{align*}
&\sqrt{n}\EE^*\Big[\big(j_{n,t}^{*(2)}(z)\big)^2\Big]=\sqrt{n} \int_{\big[\hat{\xi}_{n,\alpha},\hat{\xi}_{n,\alpha}+\frac{z}{\sqrt{n}}\big)}x^2\:d\hat{\mathbbm{F}}_n(x)\\
=& \Big(\hat{\xi}_{n,\alpha}+\frac{z}{\sqrt{n}}\Big)^2 \sqrt{n}\hat{\mathbbm{F}}_n\Big(\hat{\xi}_{n,\alpha}+\frac{z}{\sqrt{n}}-\!\!\Big)- \hat{\xi}_{n,\alpha}^2 \sqrt{n}\hat{\mathbbm{F}}_n(\hat{\xi}_{n,\alpha}-) -\sqrt{n}\int_{\big[\hat{\xi}_{n,\alpha},\hat{\xi}_{n,\alpha}+\frac{z}{\sqrt{n}}\big)}\!\!\hat{\mathbbm{F}}_n(x)\:d x^2\\ 
=& \bigg(\Big(\hat{\xi}_{n,\alpha}+\frac{z}{\sqrt{n}}\Big)^2-\hat{\xi}_{n,\alpha}^2\bigg)\sqrt{n}\hat{\mathbbm{F}}_n\Big(\hat{\xi}_{n,\alpha}+\frac{z}{\sqrt{n}}-\!\!\Big)+\hat{\xi}_{n,\alpha}^2 \sqrt{n}\bigg(\hat{\mathbbm{F}}_n\Big(\hat{\xi}_{n,\alpha}+\frac{z}{\sqrt{n}}-\!\!\Big)-\hat{\mathbbm{F}}_n(\hat{\xi}_{n,\alpha}-)\bigg)\\ 
&\qquad -2\int_{[0,z)}\Big(\hat{\xi}_{n,\alpha}+\frac{y}{\sqrt{n}}\Big)\hat{\mathbbm{F}}_n\Big(\hat{\xi}_{n,\alpha}+\frac{y}{\sqrt{n}}\Big)\:dy\\ 
=& \bigg(2z\hat{\xi}_{n,\alpha}+\frac{z^2}{\sqrt{n}}\bigg)\hat{\mathbbm{F}}_n\Big(\hat{\xi}_{n,\alpha}+\frac{z}{\sqrt{n}}-\!\Big)+\hat{\xi}_{n,\alpha}^2 \sqrt{n}\bigg(\hat{\mathbbm{F}}_n\Big(\hat{\xi}_{n,\alpha}+\frac{z}{\sqrt{n}}-\!\Big)-\hat{\mathbbm{F}}_n(\hat{\xi}_{n,\alpha}-)\bigg)\\ 
&\qquad -2\bigg(\hat{\xi}_{n,\alpha}\int_{[0,z)}\hat{\mathbbm{F}}_n\Big(\hat{\xi}_{n,\alpha}+\frac{y}{\sqrt{n}}\Big)\:dy+\int_{[0,z)}\frac{y}{\sqrt{n}}\hat{\mathbbm{F}}_n\Big(\hat{\xi}_{n,\alpha}+\frac{y}{\sqrt{n}}\Big)\:dy\bigg)\\
\overset{p}{\to}& 2z\xi_\alpha F(\xi_\alpha)+z \xi_\alpha^2 f(\xi_\alpha)-2z \xi_\alpha  F(\xi_\alpha) = z \xi_\alpha^2 f(\xi_\alpha)
\end{align*}
and analogously for $z<0$.
Combining results we establish $I\overset{p^*}{\to} \xi_{\alpha}zf(\xi_\alpha)\Omega'$ in probability. Consider the second term; since $\hat{\theta}_n\overset{a.s.}{\to}\theta_0$ (Theorem \ref{thm:4.1}) and $\hat{\theta}_n^*\overset{p^*}{\to}\theta_0$  almost surely (Lemma \ref{lem:4.5}), we have $\PP^*\big[\breve{\theta}_n \notin \mathscr{V}(\theta_0)\big]\overset{a.s.}{\to}0$. Thus, for every $\varepsilon>0$ we obtain 
\begin{align*}
&\PP^*\big[||II||\geq \varepsilon \big]\\
\leq & \PP^*\Bigg[\bigg|\bigg|\frac{1}{n}\sum_{t=1}^n\frac{\tilde{\sigma}_t(\hat{\theta}_n)}{\tilde{\sigma}_t(\breve{\theta}_n)}\Big(\tilde{H}_t(\breve{\theta}_n)-2\tilde{D}_t(\breve{\theta}_n)\tilde{D}_t'(\breve{\theta}_n)\Big) j_{n,t}^{*(2)}\bigg|\bigg|\geq \varepsilon \cap \breve{\theta}_n \in \mathscr{V}(\theta_0)\Bigg]+ \PP^*\Big[\breve{\theta}_n \notin \mathscr{V}(\theta_0)\Big]\\
\leq & \PP^*\Bigg[\frac{1}{n}\sum_{t=1}^n\sup_{\theta \in \mathscr{V}(\theta_0)}\frac{\tilde{\sigma}_t(\hat{\theta}_n)}{\tilde{\sigma}_t(\theta)}\bigg(\sup_{\theta \in \mathscr{V}(\theta_0)}\big|\big|\tilde{H}_t(\theta)\big|\big|+2\sup_{\theta \in \mathscr{V}(\theta_0)}\big|\big|\tilde{D}_t(\theta)\big|\big|^2\bigg) \big|j_{n,t}^{*(2)}\big|\geq \varepsilon\Bigg]+o(1)\\
\leq &\frac{1}{\varepsilon} \EE^*\Bigg[\frac{1}{n}\sum_{t=1}^n\sup_{\theta \in \mathscr{V}(\theta_0)}\frac{\tilde{\sigma}_t(\hat{\theta}_n)}{\tilde{\sigma}_t(\theta)}\bigg(\sup_{\theta \in \mathscr{V}(\theta_0)}\big|\big|\tilde{H}_t(\theta)\big|\big|+2\sup_{\theta \in \mathscr{V}(\theta_0)}\big|\big|\tilde{D}_t(\theta)\big|\big|^2\bigg) \big|j_{n,t}^{*(2)}\big|\Bigg]+o(1)\\
= &\frac{1}{\varepsilon} \EE^*\Big[\big|j_{n,t}^{*(2)}\big|\Big]\frac{1}{n}\sum_{t=1}^n\sup_{\theta \in \mathscr{V}(\theta_0)}\frac{\tilde{\sigma}_t(\hat{\theta}_n)}{\tilde{\sigma}_t(\theta)}\bigg(\sup_{\theta \in \mathscr{V}(\theta_0)}\big|\big|\tilde{H}_t(\theta)\big|\big|+2\sup_{\theta \in \mathscr{V}(\theta_0)}\big|\big|\tilde{D}_t(\theta)\big|\big|^2\bigg)+o(1)
\end{align*}
almost surely, where the third inequality follows from Markov's inequality. Because $\EE^*\Big[\big|j_{n,t}^{*(2)}\big|\Big]\leq \EE^*\Big[\big(j_{n,t}^{*(2)}\big)^2\Big]^{\frac{1}{2}}\overset{p}{\to}0$, it remains to show that 
\begin{align}
\label{eq:4.A.71}
\frac{1}{n}\sum_{t=1}^n\sup_{\theta \in \mathscr{V}(\theta_0)}\frac{\tilde{\sigma}_t(\hat{\theta}_n)}{\tilde{\sigma}_t(\theta)}\bigg(\sup_{\theta \in \mathscr{V}(\theta_0)}\big|\big|\tilde{H}_t(\theta)\big|\big|+2\sup_{\theta \in \mathscr{V}(\theta_0)}\big|\big|\tilde{D}_t(\theta)\big|\big|^2\bigg)
\end{align}
is stochastically bounded. Assumptions \ref{as:4.3} and \ref{as:4.4}\eqref{as:4.4.1} together with Theorem \ref{thm:4.1} imply
\begin{align*}
\sup_{\theta \in \mathscr{V}(\theta_0)}\frac{\tilde{\sigma}_t(\hat{\theta}_n)}{\tilde{\sigma}_t(\theta)}\leq
\sup_{\theta \in \mathscr{V}(\theta_0)}\Bigg(\frac{\sigma_t(\hat{\theta}_n)}{\sigma_t(\theta)}+\frac{C_1 \rho^t}{\underline{\omega}}\bigg(1+ \frac{\sigma_t(\hat{\theta}_n)}{\sigma_t(\theta)}\bigg)\Bigg)\overset{a.s.}{\leq} S_t T_t+\frac{C_1 \rho^t}{\underline{\omega}}\big(1+ S_t T_t\big).
\end{align*}
In addition, employing Assumptions \ref{as:4.3} and \ref{as:4.4}  
we obtain
\begin{align*}
\sup_{\theta \in \mathscr{V}(\theta_0)}\big|\big|\tilde{H}_t(\theta)\big|\big|\leq \sup_{\theta \in \mathscr{V}(\theta_0)}\bigg(\big|\big|H_t(\theta)\big|\big|+\frac{C_1 \rho^t}{\underline{\omega}}\Big(1+ \big|\big|H_t(\theta)\big|\big|\Big)\bigg)\leq V_t+\frac{C_1 \rho^t}{\underline{\omega}}\big(1+ V_t\big)
\end{align*}
and similarly we find
\begin{align*}
\sup_{\theta \in \mathscr{V}(\theta_0)}\big|\big|\tilde{D}_t(\theta)\big|\big|^2\leq& \sup_{\theta \in \mathscr{V}(\theta_0)}\bigg(\big|\big|D_t(\theta)\big|\big| +\frac{C_1 \rho^t}{\underline{\omega}}\Big(1+ \big|\big|D_t(\theta)\big|\big|\Big)\bigg)^2\\
\leq&  \sup_{\theta \in \mathscr{V}(\theta_0)}3\bigg(\big|\big|D_t(\theta)\big|\big|^2 +\frac{C_1^2 \rho^{2t}}{\underline{\omega}^2}\Big(1+ \big|\big|D_t(\theta)\big|\big|^2\Big)\bigg)\\
\leq& 3 U_t^2 +\frac{3 C_1^2 \rho^{2t}}{\underline{\omega}^2}\big(1+ U_t^2\big),
\end{align*}
where we also use the inequality $(x+y+z)^2\leq 3(x^2+y^2+z^2)$ for $x,y,z \in \R$. Hence, 
\begin{align*}
&\frac{1}{n}\sum_{t=1}^n\sup_{\theta \in \mathscr{V}(\theta_0)}\frac{\tilde{\sigma}_t(\hat{\theta}_n)}{\tilde{\sigma}_t(\theta)}\bigg(\sup_{\theta \in \mathscr{V}(\theta_0)}\big|\big|\tilde{H}_t(\theta)\big|\big|+2\sup_{\theta \in \mathscr{V}(\theta_0)}\big|\big|\tilde{D}_t(\theta)\big|\big|^2\bigg)\\
\overset{a.s.}{\leq} & \frac{1}{n}\sum_{t=1}^n\bigg(S_t T_t+\frac{C_1 \rho^t}{\underline{\omega}}\big(1+ S_t T_t\big)\bigg)\bigg(V_t+\frac{C_1 \rho^t}{\underline{\omega}}\big(1+ V_t\big)+6 U_t^2 +\frac{6C_1^2 \rho^{2t}}{\underline{\omega}^2}\big(1+ U_t^2\big)\bigg)\\
= & \underbrace{\frac{1}{n}\sum_{t=1}^n S_t T_t V_t}_{II_1}+ \underbrace{\frac{6}{n}\sum_{t=1}^n S_t T_t U_t^2}_{II_2}+ \underbrace{\frac{C_1}{\underline{\omega}}\frac{1}{n}\sum_{t=1}^n \rho^t S_t T_t}_{II_3}+ \underbrace{\frac{C_1}{\underline{\omega}}\frac{1}{n}\sum_{t=1}^n \rho^t S_t T_t V_t}_{II_4} \\
 &+ \underbrace{\frac{C_1}{\underline{\omega}}\frac{1}{n}\sum_{t=1}^n\rho^t V_t}_{II_5}+ \underbrace{\frac{C_1}{\underline{\omega}}\frac{6}{n}\sum_{t=1}^n\rho^t U_t^2}_{II_6}+ \underbrace{\frac{C_1}{\underline{\omega}}\frac{6}{n}\sum_{t=1}^n\rho^t S_t T_t U_t^2}_{II_7}+ \underbrace{\frac{C_1}{\underline{\omega}}\frac{1}{n}\sum_{t=1}^n\rho^t S_t T_t V_t}_{II_8}\\
& + \underbrace{\frac{C_1^2}{\underline{\omega}^2}\frac{1}{n}\sum_{t=1}^n\rho^{2t} V_t}_{II_9}+ \underbrace{\frac{C_1^2}{\underline{\omega}^2}\frac{1}{n}\sum_{t=1}^n\rho^{2t}S_t T_t}_{II_{10}}+ \underbrace{\frac{C_1^2}{\underline{\omega}^2}\frac{1}{n}\sum_{t=1}^n\rho^{2t} S_t T_t  V_t}_{II_{11}}+ \underbrace{\frac{6C_1^2 }{\underline{\omega}^2}\frac{1}{n}\sum_{t=1}^n \rho^{2t} S_t T_t}_{II_{12}}\\
&+ \underbrace{\frac{C_1^3}{\underline{\omega}^2}\frac{6}{n}\sum_{t=1}^n\rho^{3t} U_t^2}_{II_{13}}+ \underbrace{\frac{C_1^3}{\underline{\omega}^2}\frac{6}{n}\sum_{t=1}^n\rho^{3t}S_t T_t}_{II_{14}}+ \underbrace{\frac{6C_1^2 }{\underline{\omega}^2}\frac{1}{n}\sum_{t=1}^n \rho^{2t} S_t T_t U_t^2}_{II_{15}}+ \underbrace{\frac{C_1^3}{\underline{\omega}^2}\frac{6}{n}\sum_{t=1}^n\rho^{3t} S_t T_t U_t^2}_{II_{16}}\\
&+ \underbrace{\frac{C_1^2}{\underline{\omega}^2}\frac{1}{n}\sum_{t=1}^n\rho^{2t}}_{II_{17}}+ \underbrace{\frac{C_1^3}{\underline{\omega}^2}\frac{6}{n}\sum_{t=1}^n\rho^{3t}}_{II_{18}}
\end{align*}
From Assumption \ref{as:4.9}, the uniform ergodic theorem (in the sense of \citealp[page 181]{francq2011garch}) and H\"older's inequality, we obtain
\begin{align*}
II_1\leq \bigg(\frac{1}{n}\sum_{t=1}^n S_t^3 \bigg)^{\frac{1}{3}}\bigg(\frac{1}{n}\sum_{t=1}^n T_t^3 \bigg)^{\frac{1}{3}}\bigg(\frac{1}{n}\sum_{t=1}^n V_t^3\bigg)^{\frac{1}{3}}
\overset{a.s.}{\to} \Big(\EE\big[S_t^3\big] \Big)^{\frac{1}{3}}\Big(\EE\big[T_t^3\big] \Big)^{\frac{1}{3}}\Big(\EE\big[V_t^3\big] \Big)^{\frac{1}{3}}<\infty
\end{align*}
and similarly we can show that $\lim_{n \to \infty} II_2 <\infty$ almost surely. Consider $II_3$; for each $\varepsilon>0$, Markov's inequality and the Cauchy-Schwarz inequality yield
\begin{align*}
&\sum_{t=1}^\infty \PP\Big[\rho^{t}S_t T_t>\varepsilon\Big]  \leq \sum_{t=1}^\infty \rho^{t} \frac{1+\EE[S_t T_t]}{\varepsilon} 
= \frac{1+(\EE[S_t^2])^{\frac{1}{2}}(\EE[T_t^2])^{\frac{1}{2}}}{\varepsilon(1-\rho)}<\infty
\end{align*}
and $\frac{1}{n}\sum_{t=1}^n \rho^{t}S_t T_t\overset{a.s.}{\to}0$ follows from combining the Borel-Cantelli lemma  with Ces\'aro's lemma. Hence, $II_3 \overset{a.s.}{\to}0$. Similarly we can show that the terms $II_4,\dots,II_{16}$ vanish almost surely. Further, $II_{17}\leq \frac{1}{n}\frac{C_1^2}{\underline{\omega}^2(1-\rho^2)}\overset{a.s.}{\to} 0$ and similarly, we can prove that $II_{18}$ vanishes almost surely, which completes the proof.
\end{proof}

\section{Recursive-design Residual Bootstrap}
\label{app:4.B}

This appendix devotes attention to the recursive-design residual bootstrap. 
The bootstrap scheme described in Algorithm \ref{alg:4.3} is the recursive-design counterpart of Algorithm \ref{alg:4.1}. 
Note that the bootstrap observation $\epsilon_t^\star$ is generated recursively on the basis of its past realizations $\epsilon_{t-1}^\star,\dots, \epsilon_1^\star$.

\begin{algorithm}\textit{(Recursive-design residual bootstrap)}
\label{alg:4.3}
\begin{enumerate}

\item For $t=1,\dots, n$ generate  $\eta_t^\star \overset{iid}{\sim} \hat{\mathbbm{F}}_n$ and the bootstrap observation $\epsilon_t^\star = \sigma_t^\star \eta_t^\star$ with $\sigma_t^\star=\sigma_t^\star(\hat{\theta}_n)$ and $\sigma_t^\star(\theta) = \sigma(\epsilon_{t-1}^\star,\dots,\epsilon_1^\star,\tilde{\epsilon}_0,\tilde{\epsilon}_{-1},\dots;\theta)$
\item Calculate the bootstrap estimator 
\begin{align*}
\hat{\theta}_n^\star = \arg \max_{\theta \in \Theta}\frac{1}{n}\sum_{t=1}^n \ell_t^\star(\theta) \qquad \text{with} \qquad \ell_t^\star(\theta)=-\frac{1}{2}\bigg(\frac{\epsilon_t^{\star}}{\sigma_t^\star(\theta)}\bigg)^2-\log \tilde{\sigma}_t(\theta).
\end{align*}

\item For $t=1,\dots,n$ compute the bootstrap residual $\hat{\eta}_t^\star = \epsilon_t^\star/\sigma_t^\star(\hat{\theta}_n^\star)$
and obtain 
\begin{align*}
\hat{\xi}_{n,\alpha}^\star = \arg\min_{z \in \R} \frac{1}{n}\sum_{t=1}^n\rho_\alpha(\hat{\eta}_t^\star-z).
\end{align*}

\item Obtain the bootstrap estimator of the conditional VaR
\begin{align*}
\reallywidehat{VaR}_{n,\alpha}^{\star}=-\hat{\xi}_{n,\alpha}^{\star}\: \tilde{\sigma}_{n+1}\big(\hat{\theta}_n^{\star}\big).
\end{align*}

\end{enumerate}

\end{algorithm}

In contrast to the fixed-design bootstrap, the bootstrap sample $\epsilon_1^\star,\dots,\epsilon_n^\star$, conditional on the original sample, is a dependent sequence. Therefore one likely needs a stronger set of conditions to show the validity of the recursive-design bootstrap. Moreover, whether the recursive bootstrap scheme is valid is contingent on the specific conditional volatility model, e.g.\ GARCH($1,1$), and as such needs to be investigated on a case-by-case basis. This is therefore outside the scope of the current paper. 

\newpage 

\section{Additional tables with simulation results}\label{sec add simulation}

\bgroup
\def\arraystretch{0.95}
\begin{table}[h]
\caption{\textbf{Fixed-design} bootstrap confidence intervals and asymptotic confidence interval for \textbf{GARCH($1$,$1$)} with Gaussian innovations}\label{tab:4.2}
\centering
\resizebox{\textwidth}{!}{\begin{tabular}{rc:ccc:ccc}
\hline \hline
\multicolumn{1}{c}{\begin{tabular}[c]{@{}c@{}}Sample\\ Size\end{tabular}} &                         & \begin{tabular}[c]{@{}c@{}}Average \\ coverage\end{tabular} & \begin{tabular}[c]{@{}c@{}}Av. coverage\\ below/above\end{tabular} & \begin{tabular}[c]{@{}c@{}}Average\\ length\end{tabular} & \begin{tabular}[c]{@{}c@{}}Average\\ coverage\end{tabular} & \begin{tabular}[c]{@{}c@{}}Av. coverage\\ below/above\end{tabular} & \begin{tabular}[c]{@{}c@{}}Average\\ length\end{tabular} \\ \hline
\multicolumn{1}{c}{} & & & & & & & \\[-9pt]
\multicolumn{1}{c}{} & & \multicolumn{3}{c}{low persistence} & \multicolumn{3}{c}{high persistence} \\
250 & EP & 
81.17 & 7.80/11.03 & 0.509 & 80.40 & 7.89/11.71 & 0.654 \\
& RT & 
90.24 & 2.23/7.53 & 0.509 & 90.03 & 2.48/7.49 & 0.654 \\
& SY & 
89.40 & 2.94/7.66 & 0.531 & 88.53 & 3.29/8.18 & 0.681 \\
& AS & 
86.81 & 3.30/9.89 & 0.514 & 85.94 & 3.90/10.16 & 0.644 \\
\hdashline
500 & EP & 
84.30 & 6.79/8.91 & 0.367 & 83.84 & 7.14/9.02 & 0.468 \\
& RT & 
91.16 & 2.90/5.94 & 0.367 & 90.72 & 3.12/6.16 & 0.468 \\
& SY & 
90.03 & 3.70/6.27 & 0.378 & 89.68 & 3.76/6.56 & 0.479 \\
& AS & 
88.69 & 3.60/7.71 & 0.372 & 87.75 & 3.98/8.27 & 0.455 \\
\hdashline
1,000 & EP &
85.64 & 6.37/7.99 & 0.259 & 86.09 & 5.79/8.12 & 0.333 \\
& RT & 
90.66 & 3.54/5.80 & 0.259 & 90.94 & 3.34/5.72 & 0.333 \\
& SY &
89.66 & 4.18/6.16 & 0.264 & 89.73 & 3.96/6.31 & 0.338 \\
& AS & 
89.24 & 3.78/6.98 & 0.262 & 88.55 & 3.99/7.46 & 0.324 \\
\hdashline
5,000 & EP & 
87.75 & 5.39/6.86 & 0.120 & 88.35 & 5.23/6.42 & 0.149 \\
& RT & 
90.42 & 4.18/5.40 & 0.120 & 89.83 & 4.41/5.76 & 0.149 \\
& SY & 
89.77 & 4.45/5.78 & 0.121 & 89.84 & 4.46 5.70 & 0.150 \\
& AS & 
89.45 & 4.42/6.13 & 0.120 & 89.22 & 4.61/6.17 & 0.146 \\
\hline \hline
\end{tabular}
}
\vspace{0.15cm}
\caption*{Table \ref{tab:4.2} is exactly as Table \ref{tab:4.1} but with \textbf{Gaussian innovations} instead of Student-t innovations.
}
\end{table}
 \egroup

\bgroup
\def\arraystretch{0.95}
\begin{table}[!htbp]
\caption{\textbf{Fixed-design} bootstrap confidence intervals and asymptotic confidence interval for \textbf{T-GARCH($1$,$1$)} with Gaussian innovations}\label{tab:4.2a}
\centering
\resizebox{\textwidth}{!}{\begin{tabular}{rc:ccc:ccc}
\hline \hline
\multicolumn{1}{c}{\begin{tabular}[c]{@{}c@{}}Sample\\ Size\end{tabular}} &                         & \begin{tabular}[c]{@{}c@{}}Average \\ coverage\end{tabular} & \begin{tabular}[c]{@{}c@{}}Av. coverage\\ below/above\end{tabular} & \begin{tabular}[c]{@{}c@{}}Average\\ length\end{tabular} & \begin{tabular}[c]{@{}c@{}}Average\\ coverage\end{tabular} & \begin{tabular}[c]{@{}c@{}}Av. coverage\\ below/above\end{tabular} & \begin{tabular}[c]{@{}c@{}}Average\\ length\end{tabular} \\ \hline
\multicolumn{1}{c}{} & & & & & & & \\[-9pt]
& \multicolumn{1}{l}{} & \multicolumn{3}{c}{low persistence} & \multicolumn{3}{c}{high persistence} \\
250 & EP & 
80.91 & 7.07/12.02 & 0.116 & 80.02 & 7.53/12.45 & 0.238 \\
& RT & 
90.30 & 2.22/7.48 & 0.116 & 90.43 & 1.86/7.71 & 0.238 \\
& SY & 
88.85 & 2.84/8.31 & 0.121 & 88.97 & 2.54/8.49 & 0.249 \\
& AS & 
88.32 & 2.62/9.06 & 0.119 & 88.40 & 2.62/8.98 & 0.247 \\
\hdashline
500 & EP & 
84.34 & 6.17/9.49 & 0.085 & 82.90 & 6.93/10.17 & 0.173 \\
& RT & 
90.45 & 3.03/6.52 & 0.085 & 90.84 & 2.55/6.61 & 0.173 \\
& SY & 
89.21 & 3.68/7.11 & 0.087 & 89.13 & 3.41/7.46 & 0.178 \\
& AS & 
89.06 & 3.48/7.46 & 0.087 & 89.00 & 3.21/7.79 & 0.177 \\
\hdashline
1,000 & EP & 
85.79 & 5.93/8.28 & 0.061 & 84.88 & 6.53/8.59 & 0.124 \\
& RT & 
90.02 & 3.90/6.08 & 0.061 & 90.42 & 3.53/6.05 & 0.124 \\
& SY & 
89.74 & 4.03/6.23 & 0.062 & 89.55 & 4.02/6.43 & 0.127 \\
& AS & 
89.72 & 3.82/6.46 & 0.062 & 89.64 & 3.86/6.50 & 0.127 \\
\hdashline
5,000 & EP & 
88.60 & 4.80/6.60 & 0.028 & 88.29 & 5.12/6.59 & 0.058 \\
& RT &
90.23 & 4.13/5.64 & 0.028 & 90.31 & 4.18/5.51 & 0.058 \\
& SY &
90.07 & 4.04/5.89 & 0.028 & 90.27 & 4.05/5.68 & 0.058 \\
& AS &
90.67 & 3.76/5.57 & 0.029 & 90.65 & 3.88/5.47 & 0.059 \\
\hline \hline
\end{tabular}
}
\vspace{0.15cm}
\caption*{Table \ref{tab:4.2a} is exactly as Table \ref{alg:4.2} but with with \textbf{Gaussian innovations} instead of Student-t innovations.
}

\end{table}
 \egroup

\end{document}